\newtheorem{theorem}{Theorem}
\theoremstyle{plain}
\newtheorem{proposition}{Proposition}
\newtheorem{lemma}{Lemma}
\theoremstyle{plain}
\newtheorem{assumption}{Assumption}
\theoremstyle{plain}
\newtheorem{definition}{Definition}
\theoremstyle{remark}
\theoremstyle{definition}
\DeclareMathOperator*{\argmin}{arg\,min}
\DeclareMathOperator{\Cov}{Cov}
\DeclareMathOperator{\Tr}{Tr}
\DeclarePairedDelimiter\abs{\lvert}{\rvert}%
\DeclarePairedDelimiter\norm{\lVert}{\rVert}%
\renewcommand{\leq}{\leqslant}
\renewcommand{\geq}{\geqslant}
\newcommand{\approxP}{\mathrel{\stackrel{{\rm P}}{\mathrel{\scalebox{1.8}[1]{$\simeq$}}}}}
\newcommand{\R}{\mathbb{R}}
\newcommand{\E}{\mathbb{E}}
\newcommand{\N}{\mathbb{N}}
\newcommand{\bG}{\mathbf{G}}
\newcommand{\bA}{\mathbf{A}}
\newcommand{\bQ}{\mathbf{Q}}
\newcommand{\bX}{\mathbf{X}}
\newcommand{\cS}{\mathcal{S}}
\newcommand{\raphael}[1]{{\color{blue} R: #1}}
\newcommand{\cedric}[1]{{\color{red} CEDRIC: #1}}
\newcommand{\bm}{{\mathbf{m}}}
\newcommand{\bM}{{\mathbf{M}}}
\newcommand{\bx}{{\mathbf{x}}}
\newcommand{\bP}{{\mathbf{P}}}
\newcommand{\bH}{{\mathbf{H}}}
\newcommand{\bh}{{\mathbf{h}}}
\newcommand{\be}{{\mathbf{e}}}
\newcommand{\bZ}{{\mathbf{Z}}}
\newcommand{\bz}{{\mathbf{z}}}
\newcommand{\bb}{{\mathbf{b}}}
\newcommand{\bS}{{\mathbf{S}}}
\newcommand{\bY}{{\mathbf{Y}}}
\newcommand{\bU}{{\mathbf{U}}}
\newcommand{\bV}{{\mathbf{V}}}
\newcommand{\bu}{{\mathbf{u}}}
\newcommand{\bv}{{\mathbf{v}}}
\begin{document}
\etocdepthtag.toc{mtmain}
\etocsettagdepth{mtmain}{subsection}
\etocsettagdepth{mtappendix}{none}
\title{Graph-based Approximate Message Passing Iterations}
\date{}

\author[1]{C\'edric Gerbelot\thanks{cedric.gerbelot@ens.fr}}
\author[2]{Raphaël Berthier\thanks{raphael.berthier@inria.fr}}
\affil[1]{Laboratoire de Physique de l’Ecole Normale Supérieure, Université
  PSL, CNRS, Paris, France}
  \affil[2]{Inria - Département d'informatique de l'ENS,
  Université PSL, Paris, France}
\maketitle 

\begin{abstract}%
Approximate-message passing (AMP) algorithms have become an important element of high-dimensional statistical inference, mostly due to their adaptability and concentration properties, the state evolution (SE) equations. This is demonstrated by the growing number of new iterations proposed for increasingly complex problems, ranging from multi-layer inference to low-rank matrix estimation with elaborate priors. In this paper, we address the following questions: is there a structure underlying all AMP iterations that unifies them in a common framework? Can we use such a structure to give a modular proof of state evolution equations, adaptable to new AMP iterations without reproducing each time the full argument ? We propose an answer to both questions, showing that AMP instances can be generically indexed by an oriented graph. This enables to give a unified interpretation of these iterations, independent from the problem they solve, and a way of composing them arbitrarily. We then show that all AMP iterations indexed by such a graph admit rigorous SE equations, extending the reach of previous proofs, and proving a number of recent heuristic derivations of those equations. Our proof naturally includes non-separable functions and we show how existing refinements, such as spatial coupling or matrix-valued variables, can be combined with our framework.
\end{abstract}

\tableofcontents

\section{Introduction}

Approximate Message Passing (AMP) algorithms are iterative equations solving inference problems involving high-dimensional random variables with random interactions \cite{donoho2009message,zdeborova2016statistical}. For the typical case in which AMP iterations were initially studied, the interactions involve an i.i.d. Gaussian matrix. These algorithms are inspired from Bolthausen's iterative solution of the celebrated Thouless-Anderson-Palmer (TAP) equations of spin glass theory \cite{mezard1987spin,bolthausen2014iterative,bolthausentap}. However, they are usually derived as heuristic relaxations of the belief propagation equations \cite{pearl2014probabilistic} on dense factor graphs in a manner often encountered in the context of statistical physics of disordered systems. A central property of AMP iterations is that the distribution of their outputs can be tracked rigorously in the high-dimensional limit by low-dimensional equations called \emph{state evolution} (SE). This property can be seen as similar to the concept of density evolution from coding theory \cite{richardson2008modern}, but in the case of dense factor graphs.
\paragraph{}
In recent years, the growing interest in high-dimensional inference and learning problems has motivated the introduction of approximate-message passing algorithms as solutions to many inference problems, and as analytical tools---thanks to the SE equations---to study the statistical properties of learned estimators, notably starting with the LASSO \cite{bayati2011dynamics,krzakala2012probabilistic,donoho2009message}. A number of extensions were then proposed for inference problems of growing complexity: generalized linear modelling and robust m-estimators \cite{rangan2011generalized,donoho2016high,zdeborova2016statistical}, low-rank matrix reconstruction \cite{rangan2012iterative,lesieur2017constrained}, principal component analysis (PCA) \cite{deshpande2014information,lesieur2015phase}, inference in deep multilayer networks with random weights \cite{manoel2017multi}, matrix-valued inference problems \cite{aubin2019committee} or matrix recovery under generative priors \cite{aubin2020spiked}, among others. Interestingly, AMP algorithms can be composed with one another to solve inference problems obtained by combining factor graphs, as demonstrated in \cite{aubin2020spiked}, where each part of the factor graph represents an elaborate prior and inference process. This demonstrates the adaptability of such iterations, even more so as the state evolution equations are shown to hold, often heuristically, for these composite structures. 

\paragraph{Contributions.}As the diversity of inference problems and AMP iterations increases, it is important to identify a common structure underlying the known AMP algorithms. Such a partial unification was done in \cite{javanmard2013state,berthier2020state}: symmetric and asymmetric AMP iterations are treated in a common framework. However, these results do not apply to the more recent AMP iterations designed for more complex problems presenting multilayered structures or ones obtained by combining factor graphs. 

In this paper, our first contribution is to show how AMP algorithms are naturally indexed by a graph that determines its form. Seeing AMP algorithms as supported by this graph helps understanding the iterations, especially the multi-layer ones, in a unified way. In this regard, we hope that our framework will be used as a tool to generate new AMP iterations. Roughly speaking, the graph underlying the AMP iteration represents the interaction of the high-dimensional variables of the associated inference problem. However, this graph is not the factor graph representing the inference problem that sometimes appears in the derivation of AMP equations, see \cite{krzakala2012probabilistic} for example.  The factor graph is microscopic, in the sense that it disappears when taking the dense limit leading to the AMP equations. On the contrary, the graph that we consider here is macroscopic: it structures the AMP iteration itself. It is insensitive to the underlying inference problem that has generated the AMP equation; for instance, it can be used in both Bayes optimal or non-Bayes optimal scenarios. 

The second contribution of this paper is to use the graph framework to show that all graph-based AMP iterations admit a rigorous SE description. This generalizes the previous works of \cite{bayati2011dynamics,javanmard2013state,berthier2020state} on SE to more complex iterations. Using our result, writing and proving the state evolution equations is reduced to the identification of a specific structure in the AMP iteration, instead of heuristically deriving or reproducing the rigorous proof entirely for problems of increasing complexity. In particular, it gives a theoretical grounding for the analysis of AMP on recent multi-layer structures \cite{manoel2017multi,aubin2019committee,aubin2020spiked}. Related to \cite{manoel2017multi}, this paper proves that AMP algorithms are a rigorously grounded approach to understanding multi-layer neural networks, albeit only when the weights are random and when we perform inference with an AMP algorithm. Still, in a context where theory struggles to explain the behavior of multi-layered neural networks, it is interesting to see that this particular case can be rigorously studied, even for deep architectures. 

We illustrate the flexibility of our framework by applying it to diverse inference problems mentioned above, notably multilayer generalized linear estimation problems and low-rank matrix recovery with deep generative priors. We also show how our results can be extended to handle matrix-valued variables and combined with the spatial coupling framework introduced in \cite{krzakala2012probabilistic,javanmard2013state}.

\medskip
\paragraph{Related work.} There is a rich literature of proofs of state evolution equations, notably starting with Bolthausen's iterative scheme \cite{bolthausen2014iterative,bolthausentap} based on Gaussian conditioning. The technique was then adapted and extended to the case of a more generic AMP iteration related to the LASSO problem in \cite{bayati2011dynamics}, where it is mentioned that Gaussian conditioning methods also appear in \cite{donoho2006most} to tackle fundamental random convex geometry problems. The analysis was then extended to matrix-valued variables with block-separable non-linearities in \cite{javanmard2013state} and for vector-valued variables with non-separable non-linearities in \cite{berthier2020state}, which also show that symmetric AMP and asymmetric AMP can be treated in the same framework. Our proof is partly based on the same iterative Gaussian conditioning method but is additionally combined with an embedding specific to the graph framework. To the best of our knowledge, the latter part of the proof is novel. 

Another line of work---called VAMP (vector approximate message passing) algorithms---handles rotationnally invariant matrices \cite{rangan2019vector} with generic spectrum. This family of VAMP iterations is obtained using a Gaussian parametrization of \emph{expectation propagation} \cite{minka2013expectation,opper2005expectation}, a variational inference algorithm based on iterative moment-matching between a chosen form of probability distribution (e.g., Gaussian nodes on a factor graph) and a target distribution observed through empirical data.
These iterations also verify SE equations proven with a similar conditioning method \cite{takeuchi2017rigorous,rangan2019vector}, handling a different kind of randomness than i.i.d.~Gaussian matrices. The SE proof for VAMP iterations was then extended to multilayer inference problems and their matrix-valued counterparts in \cite{fletcher2018inference,pandit2020inference}. In these works, the conditioning method is applied in a sequential manner to each layer of the problem, making it specific to multilayer inference problems. On the contrary, our proof method is not restricted to sequential multilayer estimation as mentioned in the contributions, and does not rely on iterating through the graph. However, our proof does not apply to all rotationnally invariant matrices. We handle mostly Gaussian or GOE matrices, with extensions to correlated Gaussian matrices, products of Gaussian matrices and spatially coupled Gaussian matrices. This is discussed in greater detail in Sections \ref{sec:se-graph} and  \ref{sec:applications}.
\medskip
\paragraph{Outline of the paper.} 
The paper is organised as follows: we start by presenting the indexation of AMP iterations by an oriented graph in Section \ref{sec:graph-based-amp}. Several conceptual examples are provided. We present the state evolution equations on any graph-supported AMP iteration in Section \ref{sec:se-graph}, along with its proof, which constitutes the main technical contribution of this paper. We then move to applications to inference problems in Section \ref{sec:applications} and conclude on related open problems in Section~\ref{sec:perspectives}. All proofs of auxiliary results are deferred to the Appendix. 
\paragraph{Notations.}
We adopt similar notations to those of \cite{berthier2020state}. Differences are mainly due to the matrix variables framework.

We denote scalars with lowercase letters, vectors with bold lowercase letters and matrices with bold uppercase ones. Inner products are denoted by brackets $\langle . , . \rangle$, and the canonical inner products are chosen for vectors and matrices, i.e., $\langle \mathbf{x}, \mathbf{y} \rangle = \mathbf{x}^{\top}\mathbf{y}$, $\langle \mathbf{X}, \mathbf{Y} \rangle = \mbox{Tr} \left(\mathbf{X}^{\top}\mathbf{Y}\right)$. The associated norms are respectively denoted $\norm{.}_{2}$ and $\norm{.}_{F}$ for the Frobenius norm. 

For two random variables $X$ and $Y$, and a $\sigma$-algebra $\mathfrak{S}$, we use $X\vert_{\mathfrak{S}} \stackrel{d} = Y$ to mean that for any integrable function $\phi$ and any $\mathfrak{S}$-measurable bounded random variable $Z$, $\mathbb{E}\left[\phi(X)Z\right] = \mathbb{E}\left[\phi(Y)Z\right]$. For two sequences of random variables $X_{n},Y_{n}$, we write $X_{n} \stackrel{P}\simeq Y_{n}$ when their difference converges in probability to $0$, i.e., $X_{n}-Y_{n} \xrightarrow[]{P}0$.

 We use $\mathbf{I}_{N}$ to denote the $N \times N$ identity matrix, and $0_{N \times N}$ the $N \times N$ matrix with zero entries. We use $\sigma_{min}(\mathbf{Q})$ and $\sigma_{max}(\mathbf{Q}) = \norm{\mathbf{Q}}_{op}$ to denote the minimum and maximum singular values of a given matrix $\mathbf{Q}$. For two matrices $\mathbf{Q}$ and $\mathbf{P}$ with the same number of rows, we denote their horizontal concatenation with $\left[\mathbf{P} \vert \mathbf{Q}\right]$. The orthogonal projector onto the range of a given matrix $\mathbf{M}$ is denoted $\mathbf{P}_{\mathbf{M}}$, and let $\mathbf{P}_{\mathbf{M}}^{\perp} = \mathbf{I}-\mathbf{P}_{\mathbf{M}}$. 
 
 Let $\cS_q^+$ denote the space of positive semi-definite matrices of size $q \times q$.
For any matrix $\boldsymbol{\kappa} \in \cS_{q}^{+}$ and a random matrix $\bZ\in \mathbb{R}^{N \times q}$ we write $\bZ \sim \mathbf{N}(0,\boldsymbol{\kappa}\otimes \mathbf{I}_{N})$ if $\bZ$ is a matrix with jointly Gaussian entries such that for any $1\leqslant i,j\leqslant q$, $\mathbb{E}[\bZ^{i}(\bZ^{j})^{\top}] = \boldsymbol{\kappa}_{i,j}\mathbf{I}_{N}$, where $\bZ^{i},\bZ^{j}$ denote the i-th and j-th columns of $\bZ$. The i-th line of the matrix $\bZ$ is denoted $\bZ_{i}$. 

If $f: \R^{N \times q} \to \R^{N \times q}$ is an function and $i \in \{1, \dots N\}$, we write $f_{i}:\mathbb{R}^{N \times q} \to \mathbb{R}^{q}$  the component of $f$ generating the $i$-th line of its image, i.e., if $\bX \in \mathbb{R}^{N \times q}$, 
\begin{equation*}
    f(\bX) = \begin{bmatrix}
    f_{1}(\bX) \\
    \vdots \\
    f_{N}(\bX)\end{bmatrix} \in \mathbb{R}^{N \times q} \, .
\end{equation*}
We write $\frac{\partial f_{i}}{\partial \bX_{i}}$ the $q\times q$ Jacobian containing the derivatives of $f_{i}$ with respect to (w.r.t.) the $i$-th line $\bX_{i}\in \mathbb{R}^{q}$:
\begin{equation}
\label{eq:Ons_jacob}
    \frac{\partial f_{i}}{\partial \bX_{i}} = \begin{bmatrix}\frac{\partial (f_{i}(\bX))_{1}}{\partial \bX_{i1}} & \dots & \frac{\partial (f_{i}(\bX))_{1}}{\partial \bX_{iq}} \\
    \vdots& &\vdots \\
    \frac{\partial (f_{i}(\bX))_{q}}{\partial \bX_{i1}} & \dots & \frac{\partial (f_{i}(\bX))_{q}}{\partial \bX_{iq}}
    \end{bmatrix} \in \mathbb{R}^{q \times q} \, .
\end{equation}

\section{Graph-based AMP iterations}
\label{sec:graph-based-amp}

We start by defining the class of graphs indexing AMP iterations. 

\begin{definition}[graph notions]
A \emph{finite directed graph}---also simply called \emph{graph} in the following---is a pair $G = (V,\overrightarrow{E})$ where $V$ is a finite set, called the \emph{vertex set}, and $\overrightarrow{E}$ is a subset of $V \times V$, called the \emph{edge set}. This definition of graphs uses directed edges and allows loops. 

A graph $G = (V,\overrightarrow{E})$ is said to be \emph{symmetric} if for all $v, w \in \overrightarrow{E}$, $(v,w) \in \overrightarrow{E}$ if and only if $(w,v) \in \overrightarrow{E}$. 

The \emph{degree} $\deg v$ of a node $v \in V$ is the number of edges of which it is the end-node. In symmetric graphs, it is also the number of edges of which $v$ is the starting-node.  
\end{definition}

\paragraph{Graph notations.} Given a symmetric graph $G = (V,\overrightarrow{E})$, the following notations are useful. We sometimes write $v \rightarrow w$ to mean that $\overrightarrow{e} = (v,w)$ is an edge of the graph. We say that $v$ is the starting-node of $\overrightarrow{e}$ and $w$ the end-node of $\overrightarrow{e}$. We denote $\overleftarrow{e} = (w,v) \in \overrightarrow{E}$ the symmetric edge of~$\overrightarrow{e}$. If $\overrightarrow{e}$ is a loop, then $\overleftarrow{e} = \overrightarrow{e}$. We write $\overrightarrow{e} \to \overrightarrow{e}'$ as a shorthand to say that the end-node of $\overrightarrow{e} \in \overrightarrow{E}$ is the starting-node of $\overrightarrow{e}' \in \overrightarrow{E}$. Note that for any $\overrightarrow{e} \in \overrightarrow{E}$, $\overleftarrow{e} \rightarrow \overrightarrow{e}$.

\paragraph{Iteration.} We now fix a symmetric finite directed graph $G = (V,\overrightarrow{E})$. We associate an AMP iteration supported by the graph $G$ as follows. 
\begin{itemize}
    \item The variables $\bx^t_{\overrightarrow{e}}$ of the AMP iteration are indexed by the iteration number $t \in \N$ and the oriented edges of the graph $\overrightarrow{e} \in \overrightarrow{E}$. 
    \begin{center}
	\begin{tikzpicture}[scale = 1]
    \tikzstyle{point}=[draw,circle,minimum width=3em];
    \tikzstyle{fleche}=[->];
    \node[point] (v) at (0,0) {$v$};
    \node[point] (w) at (5,0) {$w$};
    \draw[fleche] (v) to[bend left]
    node[below,midway]{$\overrightarrow{e}$}
    node[above right, near end]{$\mathbf{x}^t_{\overrightarrow{e}} \in \R^{n_w}$}
    (w);
    \draw[fleche] (w) to[bend left]
    node[above,midway]{$\overleftarrow{e}$}
    node[below left, near end]{$\mathbf{x}^t_{\overleftarrow{e}} \in \R^{n_v}$}
    (v);
	\end{tikzpicture}
\end{center}
    \item All variables associated to edges $\overrightarrow{e} = (v,w)$ with end-node $w \in V$ have a same dimension $n_w \in \N_{>0}$, i.e., $\mathbf{x}^t_{\overrightarrow{e}} \in \R^{n_w}$. We define $N = \sum_{(v,w) \in \overrightarrow{E}} n_w$ the sum of the dimensions of all variables.
    \item Matrices of the AMP iteration are also indexed by the edges of the graph. If $\overrightarrow{e} = (v,w) \in \overrightarrow{E}$, $\bA_{\overrightarrow{e}} \in \R^{n_w \times n_v}$. These matrices must satisfy the symmetry condition $\bA_{(v,w)} = \bA_{(w,v)}^\top$. In particular, this implies that matrices $\bA_{(v,v)} \in \R^{n_v \times n_v}$ associated to loops $(v,v)\in \overrightarrow{E}$ must be symmetric.
    \begin{center}
	\begin{tikzpicture}[scale = 1]
    \tikzstyle{point}=[draw,circle,minimum width=3em];
    \tikzstyle{fleche}=[->];
    \node[point] (v) at (0,0) {$v$};
    \node[point] (w) at (5,0) {$w$};
    \draw[fleche] (v) to[bend left]
    node[above,midway]{$\mathbf{A}_{\overrightarrow{e}}$} node[below,midway]{$\overrightarrow{e}$}
    (w);
    \draw[fleche] (w) to[bend left]
    node[below,midway]{$\mathbf{A}_{\overleftarrow{e}} = \mathbf{A}_{\overrightarrow{e}}^\top$} node[above,midway]{$\overleftarrow{e}$} 
    (v);
	\end{tikzpicture}
\end{center}
    \item Non-linearities of the AMP iteration are also indexed by the edges of the graph (and possibly by the iteration number $t$). If $t \geq 0$ and $\overrightarrow{e} = (v,w) \in \overrightarrow{E}$, $f^t_{(v,w)}\left(\left(\mathbf{x}^t_{\overrightarrow{e}'}\right)_{\overrightarrow{e}':\overrightarrow{e}' \to \overrightarrow{e}}\right)$ is a function of all the variables of the edges whose end-node is the starting-node $v$ of $\overrightarrow{e}$, as denoted by the condition $\overrightarrow{e}' \to \overrightarrow{e}$. It is a function from $(\R^{n_v})^{\deg v}$ to $\R^{n_v}$. 
    \begin{center}
	\begin{tikzpicture}[scale = 1]
    \tikzstyle{point}=[draw,circle,minimum width=3em];
    \tikzstyle{fleche}=[->];
    \node[point] (v) at (0,0) {$v$};
    \node[point] (w) at (5,0) {$w$};
    \node[] (v1) at (-3,2) {};
    \node[] (v2) at (-3,-2) {};
    \draw[fleche] (v) to[bend left]
    node[below,midway]{$\overrightarrow{e}$}
    node[above,very near start]{$f^t_{\overrightarrow{e}}$}
    (w);
    \draw[fleche] (w) to[bend left]
    node[above,midway]{$\overleftarrow{e}$} 
    node[below left, very near end]{$\mathbf{x}^t_{\overleftarrow{e}}$}
    (v);
    \draw[fleche] (v1) to[bend left]
    (v);
    \draw[fleche] (v2) to[bend left]
    node[below right,midway]{$\overrightarrow{e}'$}
    node[above left, very near end]{$\mathbf{x}^t_{\overrightarrow{e}'}$}
    (v);
	\end{tikzpicture}
\end{center}
\end{itemize}
Once these parameters $\left(\mathbf{A}_{\overrightarrow{e}}\right)_{\overrightarrow{e} \in \overrightarrow{E}}$ and $\left(f^t_{\overrightarrow{e}}\right)_{t \geq 0, \overrightarrow{e} \in \overrightarrow{E}}$ are given, we can choose an arbitrary initial condition $\mathbf{x}^0_{\overrightarrow{e}} \in \R^{n_w}$ for all oriented edges ${\overrightarrow{e}} \in \overrightarrow{E}$ of the graph. We define recursively the AMP iterates $\left(\mathbf{x}^t_{\overrightarrow{e}}\right)_{t \geq 0, \overrightarrow{e} \in \overrightarrow{E}}$, by the iteration: for all $t \geq 0, \overrightarrow{e} \in \overrightarrow{E}$, 
\begin{align}
    \mathbf{x}^{t+1}_{\overrightarrow{e}} &= \bA_{\overrightarrow{e}} \mathbf{m}^t_{\overrightarrow{e}} - b^t_{\overrightarrow{e}} \mathbf{m}^{t-1}_{\overleftarrow{e}} \, , \label{eq:graph-amp-1}\\
    \mathbf{m}^t_{\overrightarrow{e}} &= f^t_{\overrightarrow{e}}\left(\left(\mathbf{x}^t_{\overrightarrow{e}'}\right)_{\overrightarrow{e}':\overrightarrow{e}' \to \overrightarrow{e}}\right) \, , \label{eq:graph-amp-2}
\end{align}
where $b^t_{\overrightarrow{e}}$ is the so-called \emph{Onsager term} 
\begin{equation}
    b^t_{\overrightarrow{e}} = \frac{1}{N} \Tr \frac{\partial f^t_{\overrightarrow{e}}}{\partial \mathbf{x}_{\overleftarrow{e}}} \left(\left(\mathbf{x}^t_{\overrightarrow{e}'}\right)_{\overrightarrow{e}':\overrightarrow{e}' \to \overrightarrow{e}}\right) \qquad \in \R \, . \label{eq:graph-amp-3}
\end{equation}
The above partial derivative makes sense as $\overleftarrow{e} \to \overrightarrow{e}$, thus $\bx_{\overleftarrow{e}}$ is a variable of $f^t_{\overrightarrow{e}}$. Note that in \eqref{eq:graph-amp-1}, the Onsager term multiplies the vector $\bm^{t-1}_{\overleftarrow{e}}$ indexed by the symmetric edge ${\overleftarrow{e}}$ of~${\overrightarrow{e}}$.
\paragraph{}
Let us derive some simple particular cases of this framework, first to recover the classical asymmetric and symmetric AMP iterations, and second to cover multi-layer AMP iterations. 

\paragraph{Asymmetric AMP.} The asymmetric AMP iteration appeared first in the literature to solve the compressed sensing problem \cite{donoho2009message} and then more generally to tackle generalized linear estimation, see, e.g., \cite{rangan2011generalized,schniter2014compressive,donoho2016high}. It corresponds to a simple underlying graph composed of two nodes and two symmetric directed edges between them. 
\begin{center}
	\begin{tikzpicture}[scale = 1]
	\label{graph:asy_AMP}
    \tikzstyle{point}=[draw,circle,minimum width=3em];
    \tikzstyle{fleche}=[->];
    \node[point] (v) at (0,0) {$v$};
    \node[point] (w) at (5,0) {$w$};
    \draw[fleche] (v) to[bend left]
    node[above,very near start]{$f^t_{\overrightarrow{e}}$}
    node[above,midway]{$\bA_{\overrightarrow{e}}$} node[below,midway]{$\overrightarrow{e}$} 
    node[above,very near end]{$\bx^t_{\overrightarrow{e}}$}
    (w);
    \draw[fleche] (w) to[bend left]
    node[below,very near start]{$f^t_{\overleftarrow{e}}$}
    node[below,midway]{$\bA_{\overrightarrow{e}}^\top$} node[above,midway]{$\overleftarrow{e}$} 
    node[below,very near end]{$\bx^t_{\overleftarrow{e}}$}
    (v);
	\end{tikzpicture}
\end{center}
In this case, the graph AMP equations \eqref{eq:graph-amp-1}-\eqref{eq:graph-amp-2} give
\begin{align}
\begin{split}
\label{eq:asy_amp}
    \mathbf{x}^{t+1}_{\overrightarrow{e}} &= \mathbf{A}_{\overrightarrow{e}} \mathbf{m}^t_{\overrightarrow{e}} - b^t_{\overrightarrow{e}} \mathbf{m}^{t-1}_{\overleftarrow{e}} \, ,   \\
    &\bm^t_{\overrightarrow{e}} = f^t_{\overrightarrow{e}}\left(\bx^t_{\overleftarrow{e}}\right) \, ,  \\
        \bx^{t+1}_{\overleftarrow{e}} &= \bA_{\overrightarrow{e}}^\top\bm^t_{\overleftarrow{e}} - b^t_{\overleftarrow{e}} \bm^{t-1}_{\overrightarrow{e}} \, ,  \\
    &\bm^t_{\overleftarrow{e}} = f^t_{\overleftarrow{e}}\left(\bx^t_{\overrightarrow{e}}\right) \, . 
    \end{split}
\end{align}
The corresponding state evolution (SE) property was proved in \cite{bayati2011dynamics} for the separable case and in \cite{berthier2020state} in the non-separable case. Note that the time indices proposed here are different from the ones appearing in these works. The time index convention adopted here generalizes better to more elaborate graphs. We show how to recover the usual time indices in Appendix \ref{app:time}.

\paragraph{Symmetric AMP.} The symmetric AMP iteration is central to our discussion as we show that all graph AMP iterations can be reduced to this case (with matrix-valued iterates, as detailed below). It is already known that the asymmetric case can be reduced to this case \cite{javanmard2013state}. The symmetric AMP iteration appears, e.g.,~when solving the low-rank matrix recovery problem \cite{rangan2012iterative,deshpande2014information}, or community detection in graphs \cite{deshpande2017asymptotic}. It corresponds to the degenerate graph with only one node and one loop. 
\begin{center}
\label{graph:sym_AMP}
	\begin{tikzpicture}[scale = 1]
    \tikzstyle{point}=[draw,circle,minimum width=3em];
    \tikzstyle{fleche}=[->];
    \node[point] (v) at (0,0) {$v$};
    \draw[fleche] (v) to[loop left, looseness = 35]
    node[below,very near start]{$f^t_{\overrightarrow{e}}$}
    node[left,midway]{$\bA_{\overrightarrow{e}}$} node[right,midway]{$\overrightarrow{e}$} 
    node[above,very near end]{$\bx^t_{\overrightarrow{e}}$}
    (v);
	\end{tikzpicture}
\end{center}
Recall that $\overleftarrow{e} = \overrightarrow{e}$ as $\overrightarrow{e}$ is a loop. In this case, the graph AMP equations \eqref{eq:graph-amp-1}-\eqref{eq:graph-amp-2} give
\begin{align}
\begin{split}
\label{eq:sym_amp}
    \bx^{t+1}_{\overrightarrow{e}} &= \bA_{\overrightarrow{e}} \bm^t_{\overrightarrow{e}} - b^t_{\overrightarrow{e}} \bm^{t-1}_{\overrightarrow{e}} \, ,  \\
    &\bm^t_{\overrightarrow{e}} = f^t_{\overrightarrow{e}}\left(\bx^t_{\overrightarrow{e}}\right) \, ,
    \end{split}
\end{align}
Here, as there is a single edge $\overrightarrow{e}$, the indexes are superfluous and could be dropped. For these equations, the SE property was proved in \cite{javanmard2013state} for the separable case and in \cite{berthier2020state} in the non-separable case. Note that the results of \cite{javanmard2013state} allow matrix-valued variables.

\paragraph{Multi-layer AMP.} The multi-layer AMP iteration appears when considering inference problems through a multi-layer random neural network, see \cite{manoel2017multi}. They correspond to a line graph whose length $l$ is the number of layers. 
\begin{center}
	\begin{tikzpicture}[scale = 1]
    \tikzstyle{point}=[draw,circle,minimum width=3em];
    \tikzstyle{fleche}=[->];
    \node[point] (v0) at (0,0) {$v_0$};
    \node[point] (v1) at (4,0) {$v_1$};
    \node[point] (v2) at (8,0) {$v_2$};
    \node[] (v3) at (11,0) {\Huge$\cdots$};
    \node[point] (vl) at (14,0) {$v_l$};
    \draw[fleche] (v0) to[bend left]
    node[above,very near start]{$f^t_{\overrightarrow{e_1}}$}
    node[above,midway]{$\bA_{\overrightarrow{e_1}}$} node[below,midway]{$\overrightarrow{e_1}$} 
    node[above,very near end]{$\bx^t_{\overrightarrow{e_1}}$}
    (v1);
    \draw[fleche] (v1) to[bend left]
    node[below,very near start]{$f^t_{\overleftarrow{e_1}}$}
    node[below,midway]{$\bA_{\overrightarrow{e_1}}^\top$} node[above,midway]{$\overleftarrow{e_1}$} 
    node[below,very near end]{$\bx^t_{\overleftarrow{e_1}}$}
    (v0);
    \draw[fleche] (v1) to[bend left]
    node[above,very near start]{$f^t_{\overrightarrow{e_2}}$}
    node[above,midway]{$\bA_{\overrightarrow{e_2}}$} node[below,midway]{$\overrightarrow{e_2}$} 
    node[above,very near end]{$\bx^t_{\overrightarrow{e_2}}$}
    (v2);
    \draw[fleche] (v2) to[bend left]
    node[below,very near start]{$f^t_{\overleftarrow{e_2}}$}
    node[below,midway]{$\bA_{\overrightarrow{e_2}}^\top$} node[above,midway]{$\overleftarrow{e_2}$} 
    node[below,very near end]{$\bx^t_{\overleftarrow{e_2}}$}
    (v1);
	\end{tikzpicture}
\end{center}
In this case, the graph AMP equations \eqref{eq:graph-amp-1}-\eqref{eq:graph-amp-2} give
	\begin{align}
	\begin{split}
	\label{eq:mlamp}
    \bx^{t+1}_{\overrightarrow{e_1}} &= \bA_{\overrightarrow{e_1}} \bm^t_{\overrightarrow{e_1}} - b^t_{\overrightarrow{e_1}} \bm^{t-1}_{\overleftarrow{e_1}} \, ,  \\
    &\bm^t_{\overrightarrow{e_1}} = f^t_{\overrightarrow{e}_1}\left(\bx^t_{\overleftarrow{e_1}}\right) \, , \\
        \bx^{t+1}_{\overleftarrow{e_1}} &= \bA_{\overrightarrow{e_1}}^\top\bm^t_{\overleftarrow{e_1}} - b^t_{\overleftarrow{e_1}} \bm^{t-1}_{\overrightarrow{e_1}} \, ,  \\
    &\bm^t_{\overleftarrow{e_1}} = f^t_{\overleftarrow{e_1}}\left(\bx^t_{\overrightarrow{e_1}},\bx^t_{\overleftarrow{e_2}}\right) \, , \\
    &\qquad \\
        \bx^{t+1}_{\overrightarrow{e_2}} &= \bA_{\overrightarrow{e_2}} \bm^t_{\overrightarrow{e_2}} - b^t_{\overrightarrow{e_2}} \bm^{t-1}_{\overleftarrow{e_2}} \, ,  \\
    &\bm^t_{\overrightarrow{e_2}} = f^t_{\overrightarrow{e}_2}\left(\bx^t_{\overrightarrow{e_1}},\bx^t_{\overleftarrow{e_2}}\right) \, , \\
        \bx^{t+1}_{\overleftarrow{e_2}} &= \bA_{\overrightarrow{e_2}}^\top\bm^t_{\overleftarrow{e_2}} - b^t_{\overleftarrow{e_2}} \bm^{t-1}_{\overrightarrow{e_2}} \, ,  \\
    &\bm^t_{\overleftarrow{e_2}} = f^t_{\overleftarrow{e_2}}\left(\bx^t_{\overrightarrow{e_2}},\bx^t_{\overleftarrow{e_3}}\right) \, , \\
    &\qquad\\
    &\qquad\quad\vdots 
    \end{split}
\end{align}
Note that the non-linearities now take several variables as inputs when there are several incoming edges at a node. 

\paragraph{Spiked matrix model under generative multi-layer priors.} Of course, the structures described above can be combined to tackle new AMP iterations. For instance, the paper \cite{aubin2019committee} studies the recovery of noisy symmetric rank-$1$ matrix when the spike comes from a known multi-layer generative prior. The associated AMP iteration corresponds to the following graph, where the loop corresponds to the spike recovery and the other edges correspond to multi-layer prior on the spike. 
\begin{center}
\label{graph:spike_MLAMP}
	\begin{tikzpicture}[scale = 1]
    \tikzstyle{point}=[draw,circle,minimum width=3em];
    \tikzstyle{fleche}=[->];
    \node[point] (v0) at (0,0) {$v_0$};
    \node[point] (v1) at (4,0) {$v_1$};
    \node[point] (v2) at (8,0) {$v_2$};
    \node[] (v3) at (10,0) {\Huge$\cdots$};
    \node[point] (vl) at (12,0) {$v_l$};
    \draw[fleche] (v0) to[loop left, looseness = 20]
    node[below,very near start]{$f^t_{\overrightarrow{e_0}}$}
    node[left,midway]{$A_{\overrightarrow{e_0}}$} node[right,midway]{$\overrightarrow{e_0}$} 
    node[above,very near end]{$\bx^t_{\overrightarrow{e_0}}$}
    (v0);
    \draw[fleche] (v0) to[bend left]
    node[above,very near start]{$f^t_{\overrightarrow{e_1}}$}
    node[above,midway]{$\bA_{\overrightarrow{e_1}}$} node[below,midway]{$\overrightarrow{e_1}$} 
    node[above,very near end]{$\bx^t_{\overrightarrow{e_1}}$}
    (v1);
    \draw[fleche] (v1) to[bend left]
    node[below,very near start]{$f^t_{\overleftarrow{e_1}}$}
    node[below,midway]{$\bA_{\overrightarrow{e_1}}^\top$} node[above,midway]{$\overleftarrow{e_1}$} 
    node[below,very near end]{$\bx^t_{\overleftarrow{e_1}}$}
    (v0);
    \draw[fleche] (v1) to[bend left]
    node[above,very near start]{$f^t_{\overrightarrow{e_2}}$}
    node[above,midway]{$\bA_{\overrightarrow{e_2}}$} node[below,midway]{$\overrightarrow{e_2}$} 
    node[above,very near end]{$\bx^t_{\overrightarrow{e_2}}$}
    (v2);
    \draw[fleche] (v2) to[bend left]
    node[below,very near start]{$f^t_{\overleftarrow{e_2}}$}
    node[below,midway]{$\bA_{\overrightarrow{e_2}}^\top$} node[above,midway]{$\overleftarrow{e_2}$} 
    node[below,very near end]{$\bx^t_{\overleftarrow{e_2}}$}
    (v1);
	\end{tikzpicture}
\end{center}
In this case, the graph AMP equations \eqref{eq:graph-amp-1}-\eqref{eq:graph-amp-2} give
\begin{align}
	\begin{split}
	\label{eq:spike_mlamp}
	    \bx^{t+1}_{\overrightarrow{e_0}} &= \bA_{\overrightarrow{e_0}} \bm^t_{\overrightarrow{e_0}} - b^t_{\overrightarrow{e_0}} \bm^{t-1}_{\overrightarrow{e_0}} \, ,  \\
    &\bm^t_{\overrightarrow{e_0}} = f^t_{\overrightarrow{e_0}}\left(\bx^t_{\overrightarrow{e_0}},\bx^t_{\overleftarrow{e_1}}\right) \, , \\
    &\qquad\\
    \bx^{t+1}_{\overrightarrow{e_1}} &= \bA_{\overrightarrow{e_1}} \bm^t_{\overrightarrow{e_1}} - b^t_{\overrightarrow{e_1}} \bm^{t-1}_{\overleftarrow{e_1}} \, ,  \\
    &\bm^t_{\overrightarrow{e_1}} = f^t_{\overrightarrow{e}_1}\left(\bx^t_{\overrightarrow{e_0}},\bx^t_{\overleftarrow{e_1}}\right) \, , \\
        \bx^{t+1}_{\overleftarrow{e_1}} &= \bA_{\overrightarrow{e_1}}^\top\bm^t_{\overleftarrow{e_1}} - b^t_{\overleftarrow{e_1}} \bm^{t-1}_{\overrightarrow{e_1}} \, ,  \\
    &\bm^t_{\overleftarrow{e_1}} = f^t_{\overleftarrow{e_1}}\left(\bx^t_{\overrightarrow{e_1}},\bx^t_{\overleftarrow{e_2}}\right) \, , \\
    &\qquad \\
    &\qquad\quad\vdots
    \end{split}
\end{align}
\section{State evolution for graph-based AMP iterations}
\label{sec:se-graph}
In this section, we start by presenting the most straightforward form of our result, and show afterwards how several refinements can be added.
\subsection{Main theorem}
AMP algorithms admit a state evolution description under two major assumptions: that the interactions matrices $\bA_{\overrightarrow{e}}$ are sufficiently random---in our case Gaussian or GOE---and that the dimensions $n = (n_v)_{v \in V}$ of all the variables converge to infinity with fixed ratios. 

\paragraph{Assumptions.} We make the following assumptions:
\begin{enumerate}[font={\bfseries},label={(A\arabic*)}]
\item \label{ass:main1} The matrices $(\bA_{\overrightarrow{e}})_{\overrightarrow{e} \in \overrightarrow{E}}$ are random and independent, up to the symmetry condition $\bA_{\overleftarrow{e}} = \bA_{\overrightarrow{e}}^\top$. Moreover, if $(v,w) \in \overrightarrow{E}$ is not a loop in $G$, i.e., $v \neq w$, then $\bA_{(v,w)}$ has independent centered Gaussian entries with variance $1/N$. If $(v,v) \in \overrightarrow{E}$ is a loop in $G$, then $\bA_{(v,v)}$ has independent entries (up to the symmetry $\bA_{(v,v)} = \bA_{(v,v)}^\top$), centered Gaussian with variance $2/N$ on the diagonal and variance $1/N$ off the diagonal. 
\item For all $v \in V$, $n_v \to \infty$ and $n_v/N$ converges to a well-defined limit $\delta_v \in [0,1]$. We denote by $n \to \infty$ the limit under this scaling.
\item For all $t \in \mathbb{N}$ and $\overrightarrow{e} \in \overrightarrow{E}$, the non-linearity $f^{t}_{\overrightarrow{e}}$ is pseudo-Lipschitz of finite order, uniformly with respect to the problem dimensions $n = (n_v)_{v \in V}$ (see Definition \ref{def:pseudo-lip} in Appendix \ref{app:sec_conc}).
\item For all $\overrightarrow{e}\in E$, $\norm{\bx^{0}_{\overrightarrow{e}}}_{2}/\sqrt{N}$
converges to a finite constant as $n \to \infty$.
\item For all $\overrightarrow{e} \in E$, the following limit exists and is finite:
\begin{equation*}
    \lim_{n\to\infty} \frac{1}{N}  \left\langle f^0_{\overrightarrow{e}} \left(\left(\bx^0_{\overrightarrow{e}'}\right)_{ {\overrightarrow{e}'}:{\overrightarrow{e}'} \to {\overrightarrow{e}}}\right), f^0_{\overrightarrow{e}} \left(\left(\bx^0_{\overrightarrow{e}'}\right)_{ {\overrightarrow{e}'}:{\overrightarrow{e}'} \to {\overrightarrow{e}}}\right) \right\rangle
\end{equation*}
\item Let $(\kappa_{\overrightarrow{e}})_{\overrightarrow{e}\in E}$ be an array of bounded non-negative reals and $\bZ_{\overrightarrow{e}} \sim \mathbf{N}(0,\kappa_{\overrightarrow{e}}\mathbf{I}_{n_{w}})$ independent random variables for all $\overrightarrow{e}$. For all $\overrightarrow{e}\in E$, for any $t \in \mathbb{N}_{>0}$, the following limit exists and is finite:
\begin{equation*}
    \lim_{n \to \infty} \frac{1}{N}  \mathbb{E}\left[\left \langle f^0_{\overrightarrow{e}}  \left(\left(\bx^0_{\overrightarrow{e}'}\right)_{ {\overrightarrow{e}'}:{\overrightarrow{e}'} \to {\overrightarrow{e}}}\right), f^{t}_{\overrightarrow{e}}\left(\left(\bZ^{t}_{\overrightarrow{e}'}\right)_{ {\overrightarrow{e}'}:{\overrightarrow{e}'} \to {\overrightarrow{e}}}\right)\right \rangle\right].
\end{equation*}

\item \label{ass:main7} Consider any array of $2\times 2$ positive definite matrices $(\boldsymbol{S}_{\overrightarrow{e}})_{\overrightarrow{e}\in E}$ and the collection of random variables $(\bZ_{\overrightarrow{e}},\bZ^{'}_{\overrightarrow{e}}) \sim \mathbf{N}(0,\boldsymbol{S}_{\overrightarrow{e}}\otimes\mathbf{I}_{n_{w}}))$ defined independently for each edge $\overrightarrow{e}$. Then for any $\overrightarrow{e}\in E$ and $s,t >0$, the following limit exists and is finite:
\begin{equation*}
    \lim_{n \to \infty} \frac{1}{N}  \mathbb{E}\left[\left \langle f^{s}_{\overrightarrow{e}}\left(\left(\bZ^{s}_{\overrightarrow{e}'}\right)_{ {\overrightarrow{e}'}:{\overrightarrow{e}'} \to {\overrightarrow{e}}}\right), f^{t}_{\overrightarrow{e}}\left(\left(\tilde{\bZ}^{t}_{\overrightarrow{e}'}\right)_{ {\overrightarrow{e}'}:{\overrightarrow{e}'} \to {\overrightarrow{e}}}\right)\right \rangle \right].
\end{equation*}
\end{enumerate}
\paragraph{Remark on the assumptions.}
In the literature, the random matrices $\bA_{(v,w)}$ of AMP iterations are often scaled with variances $1/n_{w}$. To recover the desired scaling, it is sufficient to rescale the non-linearity on which a given matrix acts with the corresponding aspect ratio $\delta_{w}$.
\begin{definition}[State evolution iterates] The state evolution iterates are composed of one infinite-dimensional array $(\boldsymbol{\kappa}_{\overrightarrow{e}}^{s,r})_{r,s > 0}$ of real values for each edge $\overrightarrow{e} \in \overrightarrow{E}$. These arrays are generated as follows. Define the first state evolution iterates 
\begin{equation*}
    \boldsymbol{\kappa}^{1,1}_{\overrightarrow{e}} = \lim_{n\to\infty} \frac{1}{N} \left\Vert f^0_{\overrightarrow{e}} \left(\left(\bx^0_{\overrightarrow{e}'}\right)_{ {\overrightarrow{e}'}:{\overrightarrow{e}'} \to {\overrightarrow{e}}}\right)\right\Vert_2^2  \, , \qquad \overrightarrow{e} \in \overrightarrow{E} \, .
\end{equation*}

Recursively, once $(\boldsymbol{\kappa}^{s,r}_{\overrightarrow{e}})_{s,r \leq t,\overrightarrow{e} \in \overrightarrow{E}}$ are defined for some $t \geq 1$, define independently for each $\overrightarrow{e} \in \overrightarrow{E}$, $\bZ^0_{\overrightarrow{e}} = \bx^0_{\overrightarrow{e}}$ and $(\bZ^1_{\overrightarrow{e}}, \dots, \bZ^t_{\overrightarrow{e}})$ a centered Gaussian random vector of covariance $(\boldsymbol{\kappa}^{r,s}_{\overrightarrow{e}})_{r,s \leq t} \otimes I_{n_w}$. We then define new state evolution iterates 
\begin{align*}
    &\boldsymbol{\kappa}^{t+1, s+1}_{\overrightarrow{e}} = \boldsymbol{\kappa}^{s+1, t+1}_{\overrightarrow{e}} = \lim_{n \to \infty} \frac{1}{N} \E\left[ \left\langle f^s_{\overrightarrow{e}} \left(\left(\bZ^s_{\overrightarrow{e}'}\right)_{ {\overrightarrow{e}'}:{\overrightarrow{e}'} \to {\overrightarrow{e}}}\right), f^t_{\overrightarrow{e}} \left(\left(\bZ^t_{\overrightarrow{e}'}\right)_{ {\overrightarrow{e}'}:{\overrightarrow{e}'} \to {\overrightarrow{e}}}\right) \right\rangle \right] \, \\
    &\mbox{for all} \quad s \in \{1, \dots, t \} \, ,  \overrightarrow{e} \in \overrightarrow{E} \, . 
\end{align*}
\end{definition}
\begin{theorem}
\label{thm:graph-AMP}
Assume \ref{ass:main1}-\ref{ass:main7}. Define, as above, independently for each $\overrightarrow{e} = (v,w) \in \overrightarrow{E}$, $\bZ^0_{\overrightarrow{e}}= \bx^0_{\overrightarrow{e}}$ and $(\bZ^1_{\overrightarrow{e}}, \dots, \bZ^t_{\overrightarrow{e}})$ a centered Gaussian random vector of covariance $(\boldsymbol{\kappa}^{r,s}_{\overrightarrow{e}})_{r,s \leq t} \otimes \mathbf{I}_{n_{w}}$. Then for any sequence of uniformly (in $n$) pseudo-Lipschitz function $\Phi:\R^{(t+1)N} \to \R $, 
\begin{equation*}
    \Phi\left(\left(\bx^s_{\overrightarrow{e}}\right)_{0 \leq s \leq t, \overrightarrow{e} \in \overrightarrow{E}}\right) \approxP \E \left[ \Phi\left(\left(\bZ^s_{\overrightarrow{e}}\right)_{0 \leq s \leq t, \overrightarrow{e} \in \overrightarrow{E}}\right) \right]
\end{equation*}
\end{theorem}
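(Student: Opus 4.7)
The plan is to reduce the graph-based AMP iteration to a single symmetric AMP iteration driven by a block-sparse symmetric Gaussian matrix, and then to prove state evolution for this embedded iteration via the iterative Gaussian-conditioning method of Bolthausen \cite{bolthausen2014iterative}, Bayati--Montanari \cite{bayati2011dynamics}, Javanmard--Montanari \cite{javanmard2013state}, and Berthier et al.\ \cite{berthier2020state}, adapted to the block structure. The reduction is natural in view of the remark in Section~\ref{sec:graph-based-amp} that symmetric matrix-valued AMP is the central case to which all graph AMP instances should collapse.

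First I would set up the embedding. For each directed edge $\overrightarrow{e} = (v,w)$, assign a block of dimension $n_w$ in a big ambient space of dimension $N$. Stack the iterates into $\tilde{\bX}^t,\tilde{\bM}^t \in \R^{N}$ by placing $\bx^t_{\overrightarrow{e}}$ in the $\overrightarrow{e}$-block of $\tilde{\bX}^t$ and $\bm^t_{\overleftarrow{e}}$ in the $\overrightarrow{e}$-block of $\tilde{\bM}^t$ (the latter lives in $\R^{n_w}$ because $\overleftarrow{e}$ starts at $w$). Define a symmetric matrix $\tilde{\bA} \in \R^{N \times N}$ whose $(\overrightarrow{e},\overleftarrow{e})$-block equals $\bA_{\overrightarrow{e}}$ and whose other blocks vanish; the pairing $\bA_{\overleftarrow{e}} = \bA_{\overrightarrow{e}}^\top$ from \ref{ass:main1} ensures that $\tilde{\bA}$ is symmetric. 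With an aggregated non-linearity $\tilde{f}^t$ that implements the per-edge $f^t_{\overrightarrow{e}}$ on the appropriate sub-blocks, and a block-diagonal Onsager matrix $\tilde{\bB}^t$ with scalar blocks $b^t_{\overrightarrow{e}} \mathbf{I}_{n_w}$, the graph AMP equations \eqref{eq:graph-amp-1}--\eqref{eq:graph-amp-3} become the single global update $\tilde{\bX}^{t+1} = \tilde{\bA}\tilde{\bM}^t - \tilde{\bB}^t \tilde{\bM}^{t-1}$, $\tilde{\bM}^t = \tilde{f}^t(\tilde{\bX}^t)$.

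I would then prove SE for the embedded iteration by induction on $t$ via Gaussian conditioning. At each step, conditioning on the $\sigma$-algebra generated by the past iterates constrains each block $\bA_{\overrightarrow{e}}$ only through its action on a finite-dimensional subspace, and, because the different $\bA_{\overrightarrow{e}}$ are independent modulo the pairing $\bA_{\overrightarrow{e}} = \bA_{\overleftarrow{e}}^\top$, the conditional law decouples across edge-pairs. Each $\bA_{\overrightarrow{e}}$ then admits the standard decomposition into a deterministic projection (computable from the past iterates) plus an independent Gaussian copy; the Onsager term $b^t_{\overrightarrow{e}} \bm^{t-1}_{\overleftarrow{e}}$ in \eqref{eq:graph-amp-1} is precisely what is needed to cancel the leading deterministic contribution edge by edge. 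Concentration of the remaining Gaussian quadratic forms, combined with the pseudo-Lipschitz continuity of the $f^t_{\overrightarrow{e}}$ and the limits postulated in \ref{ass:main1}--\ref{ass:main7}, identifies the limit covariance as $\boldsymbol{\kappa}^{r,s}_{\overrightarrow{e}}$. Promotion from inner products to an arbitrary pseudo-Lipschitz test function $\Phi$ of all iterates follows by a standard density argument.

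The main obstacle, in my view, is that $\tilde{\bA}$ has a very sparse block pattern and is therefore not a GOE matrix, so the existing symmetric matrix-valued SE theorems cannot be invoked as a black box. One has to redo the conditioning by hand, checking that (i) the projection / independent-copy decomposition respects the block-sparse symmetry pattern, (ii) the decoupling across distinct edges holds in the limit---this is what produces the per-edge independent Gaussian description of the theorem statement---and (iii) the scalar Onsager coefficient defined in \eqref{eq:graph-amp-3}, with the trace over the single component $\bm^{t-1}_{\overleftarrow{e}}$, matches exactly the conditional-mean correction that the block-wise conditioning demands. The varying dimensions $n_v$, handled via the aspect ratios $\delta_v$, and the non-separable multi-input functions $f^t_{\overrightarrow{e}}$ add bookkeeping but should be manageable by importing the non-separable conditioning framework of \cite{berthier2020state} once the graph-specific embedding is in place.
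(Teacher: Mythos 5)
Your reduction is not the one the paper uses, and the difference matters. You embed the graph iteration into a \emph{vector-valued} symmetric AMP on $\R^{N}$ driven by a block-sparse symmetric matrix $\tilde{\bA}$ whose only nonzero blocks are the $(\overrightarrow{e},\overleftarrow{e})$ blocks $\bA_{\overrightarrow{e}}$; as you yourself observe, this $\tilde{\bA}$ is not GOE, so you are forced to redo the entire Gaussian-conditioning machinery (conditional decomposition, non-degeneracy, concentration, the perturbation argument) for a structured random matrix, and your proposal only sketches that redo. The paper's proof (Section \ref{ap:reduction-graph} and Appendix \ref{ap:proof-graph-amp}) avoids exactly this obstacle by a different embedding: the iterate is an $N\times q$ \emph{matrix} with $q=|\overrightarrow{E}|$, one column group per edge, the interaction matrix is a genuine full $N\times N$ GOE matrix whose ``unused'' blocks are filled with fresh Gaussian entries, and the aggregated non-linearity \eqref{eq:def-non-linearity} has a zero pattern that (as the reduction lemma shows by induction) guarantees these fill-in entries never contaminate the tracked blocks. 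One can then invoke a single general result, Theorem \ref{thm:symmetric} on matrix-valued symmetric AMP with non-separable non-linearities (itself proved by extending \cite{javanmard2013state,berthier2020state}), as a black box; the per-edge independence and the covariances $\boldsymbol{\kappa}^{r,s}_{\overrightarrow{e}}$ in Theorem \ref{thm:graph-AMP} then fall out of the block-diagonal structure of the resulting state evolution iterates.

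So the comparison is: your route keeps the ambient dimension $N$ and pays for it by having to prove a new conditioning theorem for a block-sparse symmetric ensemble, including the bookkeeping for each matrix appearing twice (as $\bA_{\overrightarrow{e}}$ and $\bA_{\overrightarrow{e}}^\top$), for unequal block dimensions, and for the joint law across edges needed because $f^t_{\overrightarrow{e}}$ takes several edge-variables as input; the paper's route pays a factor $q$ in the width of the iterates but reuses an off-the-shelf GOE-based theorem, which is precisely the modularity it advertises. Your plan is plausible in principle---direct conditioning on the family $\{\bA_{\overrightarrow{e}}\}$ should decouple across independent edge pairs---but the step you flag as ``the main obstacle'' is the actual technical content of such a proof, and as written it is asserted rather than established (in particular the analogue of the non-degeneracy/perturbation argument and the identification of the joint Gaussian limit across edges with several incoming edges per node are missing). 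If you want to complete your argument along your own lines, those are the pieces to supply; alternatively, note that filling the unused blocks of your $\tilde{\bA}$ with independent Gaussian noise and widening the iterates to $N\times q$ converts your embedding into the paper's, after which no new conditioning argument is needed.
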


\subsection{Reduction of graph-based AMP iterations to the matrix-valued, non-separable symmetric case}
\label{ap:reduction-graph}

The core strategy in the proof of Theorem \ref{thm:graph-AMP} is to reduce the graph AMP iteration \eqref{eq:graph-amp-1}-\eqref{eq:graph-amp-3} into a symmetric AMP iteration with matrix-valued iteration, i.e., an iteration of the form 
\begin{align}
\bX^{t+1} &= \mathbf{A}\bM^{t}-\bM^{t-1}(\bb^{t})^{\top} && \in \R^{N\times q} \, , \label{eq:sym-amp-iteration-1} \\
\bM^{t} &=f^{t}(\bX^{t}) && \in \R^{N\times q} \, , \\ 
\bb_t &= \frac{1}{N} \sum_{i=1}^N \frac{\partial f^t_i}{\partial \bX_i}(\bX^t) && \in 
\R^{q\times q}\, . \label{eq:sym-amp-iteration-2}
\end{align}
Here, $\bA$ is a $N \times N$ GOE matrix, the iterates $\bX^t, \bM^t$ are $N \times q$ matrices, and $f^t:\R^{N \times q} \to \R^{N \times q}$ are non-separable non-linearities. A rigorous SE description for this iteration is established in Appendix \ref{SE_sym_AMP}; it is an extension of the results of \cite{javanmard2013state,berthier2020state}. 

In this section, we show that the graph AMP iteration \eqref{eq:graph-amp-1}-\eqref{eq:graph-amp-3} can be formulated as a symmetric AMP iteration \eqref{eq:sym-amp-iteration-1}-\eqref{eq:sym-amp-iteration-2} with matrix iterates. In Appendix \ref{ap:proof-graph-amp}, this reduction is used to show that Theorem \ref{thm:graph-AMP} follows from its equivalent on symmetric iterations.

\medskip
Let $q = | \overrightarrow{E} |$,
$\overrightarrow{e}_1, \dots, \overrightarrow{e}_l$ be the loops of $G$ and $\overrightarrow{e}_{l+1}, \overleftarrow{e}_{l+1}, \dots, \overrightarrow{e}_{m}, \overleftarrow{e}_{m}$ be the other edges of the graph. Define 
\begin{equation*}
    \bX^0 = \begin{pmatrix}
    \bx^0_{\overrightarrow{e}_1} & & & & & & &\\
    & \ddots & & & & & \ast &\\
    & & \bx^0_{\overrightarrow{e}_l} & & & & &\\
    & & & \bx^0_{\overrightarrow{e}_{l+1}} & & & &\\
    & & & &\bx^0_{\overleftarrow{e}_{l+1}} & & &\\
    & & & & & \ddots & & \\
     & \ast & & & & & \bx^0_{\overrightarrow{e}_{m}} & \\
     & & & & & & & \bx^0_{\overleftarrow{e}_{m}}
    \end{pmatrix} \in \R^{N \times q} \, .
\end{equation*}
where $\ast$ denotes entries whose values do not matter for what follows. 
Let $\bA$ be a $N \times N$ GOE matrix such that 
\begin{equation*}
    \bA = 
    \begin{pmatrix}
    \bA_{\overrightarrow{e}_1} & & & & & & & \\
    & \ddots& & & & & \ast & \\
    & & \bA_{\overrightarrow{e}_l} & & & & & \\
    & & & \ast & \bA_{\overrightarrow{e}_{l+1}} & & & \\
    & & & \bA_{\overleftarrow{e}_{l+1}} & \ast & & & \\
    & & & & & \ddots & & \\
    & \ast& & & &  & \ast & \bA_{\overrightarrow{e}_{m}} \\
    & & & & & & \bA_{\overleftarrow{e}_{m}} & \ast 
    \end{pmatrix} \, .
\end{equation*}
Finally, define the non-linearities $f_t: \R^{N \times q} \to  \R^{N \times q}$ as 

\begin{align}
    &f^t
    \begin{pmatrix}
    \bx_{\overrightarrow{e}_1} & & & & & & & \\
    & \ddots & & & & &\ast  & \\
    & & \bx_{\overrightarrow{e}_l} & & & & & \\
    & & & \bx_{\overrightarrow{e}_{l+1}} & & & & \\
    & & & &\bx_{\overleftarrow{e}_{l+1}} & & & \\
    & & & & & \ddots & & \\
     & \ast & & & & & \bx_{\overrightarrow{e}_{m}}  & \\
     & & & & & & &\bx_{\overleftarrow{e}_{m}}
    \end{pmatrix} \label{eq:def-non-linearity}\\
    &\qquad= 
    \begin{pmatrix}
    f^t_{\overrightarrow{e}_1}\left(\left(\bx_{\overrightarrow{e}}\right)_{\overrightarrow{e}:\overrightarrow{e} \rightarrow \overrightarrow{e}_1}\right) & & & & & & & \\
    & \ddots & & & & & 0 & \\
    & & f^t_{\overrightarrow{e}_l}\left(\dots\right) & & & & &\\
    & & & 0 & f^t_{\overleftarrow{e}_{l+1}}(\dots) & & & \\
    & & & f^t_{\overrightarrow{e}_{l+1}}(\dots) & 0 & & & \\
    & & & & & \ddots & & \\
     & 0 & & & & &  0 & f^t_{\overleftarrow{e}_{m}}(\dots)\\
     & & & & & & f^t_{\overrightarrow{e}_{m}}(\dots) & 0
    \end{pmatrix} \nonumber
\end{align}

\begin{lemma}
    \label{lemma:reduction}
Define $\bX^0$, $\bA$ and $f^t$ as above. Then the iterates $\bX^t$ of the symmetric AMP iteration \eqref{eq:sym-amp-iteration-1}-\eqref{eq:sym-amp-iteration-2} are of the form 
\begin{equation*}
    \bX=  \begin{pmatrix}
    \bx^t_{\overrightarrow{e}_1} & & & & & & &\\
    & \ddots & & & & & \ast &\\
    & & \bx^t_{\overrightarrow{e}_l} & & & & &\\
    & & & \bx^t_{\overrightarrow{e}_{l+1}} & & & &\\
    & & & &\bx^t_{\overleftarrow{e}_{l+1}} & & &\\
    & & & & & \ddots & & \\
     & \ast & & & & & \bx^t_{\overrightarrow{e}_{m}} & \\
     & & & & & & & \bx^t_{\overleftarrow{e}_{m}}
    \end{pmatrix} \in \R^{N \times q} \, ,
\end{equation*}
where $\bx^t_{\overrightarrow{e}}$ denote the iterates of the graph-AMP iteration \eqref{eq:graph-amp-1}-\eqref{eq:graph-amp-3}. 
\end{lemma}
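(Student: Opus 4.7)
The plan is to proceed by induction on $t$ with the following hypothesis: for every $t'\leq t$ and every $\overrightarrow{e}\in\overrightarrow{E}$, the block of $\bX^{t'}$ at position (row-block $\overrightarrow{e}$, column $\overrightarrow{e}$) is exactly the graph-AMP iterate $\bx^{t'}_{\overrightarrow{e}}$, while the blocks at the $\ast$ positions may be arbitrary. The base case $t=0$ is the definition of $\bX^0$. For the inductive step I first observe that $\bM^t = f^t(\bX^t)$ is clean: by the construction in \eqref{eq:def-non-linearity}, $f^t$ only reads the diagonal blocks of its argument and discards the $\ast$-garbage, so the induction hypothesis yields $\bm^t_{\overrightarrow{e}} := f^t_{\overrightarrow{e}}((\bx^t_{\overrightarrow{e}'})_{\overrightarrow{e}'\to\overrightarrow{e}})$ located at (row-block $\overleftarrow{e}$, column $\overrightarrow{e}$), with \emph{explicit} zeros at all other positions.

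Next I compute the two terms in \eqref{eq:sym-amp-iteration-1}. Since column $\overrightarrow{e}$ of $\bM^t$ has a single non-zero block $\bm^t_{\overrightarrow{e}}$ in row-block $\overleftarrow{e}$, we have $(\bA\bM^t)[\,\cdot\,,\overrightarrow{e}] = \bA[\,\cdot\,,\overleftarrow{e}]\,\bm^t_{\overrightarrow{e}}$. Restricting to row-block $\overrightarrow{e}$ selects precisely $\bA[\overrightarrow{e},\overleftarrow{e}]\,\bm^t_{\overrightarrow{e}} = \bA_{\overrightarrow{e}}\,\bm^t_{\overrightarrow{e}}$ by the block form of $\bA$, while the other row-blocks receive the garbage produced by the $\ast$-entries of $\bA$: these will form the $\ast$ entries of $\bX^{t+1}$. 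The only delicate point is the identification of the Onsager matrix $\bb^t\in\R^{q\times q}$. I would show that it is sparse, with $(\bb^t)_{\overrightarrow{e},\overleftarrow{e}} = b^t_{\overrightarrow{e}}$ and all other entries zero. The argument is: each row index $i$ lies in a unique row-block $r$, the only diagonal-block entry of $\bX$ touched by the row $\bX_i$ is $\bX_{i,r}$, and the only possibly non-zero output at row $i$ is $f^t(\bX)_{i,\overleftarrow{r}}$; hence the $q\times q$ Jacobian $\partial f^t_i/\partial \bX_i$ carries at most one non-zero entry, at position $(\overleftarrow{r},r)$. Summing over $i$ in row-block $r$ and dividing by $N$ reproduces $\tfrac{1}{N}\Tr(\partial f^t_{\overleftarrow{r}}/\partial \bx_r) = b^t_{\overleftarrow{r}}$, which is exactly \eqref{eq:graph-amp-3}.

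Equipped with $\bb^t$, I compute $(\bM^{t-1}(\bb^t)^\top)[\,\cdot\,,\overrightarrow{e}] = b^t_{\overrightarrow{e}}\,\bM^{t-1}[\,\cdot\,,\overleftarrow{e}]$, which has $b^t_{\overrightarrow{e}}\,\bm^{t-1}_{\overleftarrow{e}}$ in row-block $\overrightarrow{e}$ and exact zeros elsewhere (since $\bM^{t-1}$ is built with explicit zeros off its prescribed positions). Combining the two contributions, the block of $\bX^{t+1}$ at (row-block $\overrightarrow{e}$, column $\overrightarrow{e}$) equals $\bA_{\overrightarrow{e}}\bm^t_{\overrightarrow{e}} - b^t_{\overrightarrow{e}}\bm^{t-1}_{\overleftarrow{e}}$, which is exactly $\bx^{t+1}_{\overrightarrow{e}}$ by \eqref{eq:graph-amp-1}-\eqref{eq:graph-amp-3}, closing the induction. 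There is no deep obstacle here: the lemma is essentially a bookkeeping statement, and the whole argument hinges on the single structural observation that the engineered $f^t$ depends only on the diagonals of $\bX$, which simultaneously prevents the $\ast$-garbage from ever reaching $\bM^t$ and forces the sparse form of $\bb^t$.
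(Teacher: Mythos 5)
Your proof is correct and follows essentially the same route as the paper: induction on $t$, using that the engineered $f^t$ reads only the diagonal blocks (so $\bM^t$ has explicit zeros off the prescribed positions), that the Onsager matrix $\bb^t$ is sparse with $(\bb^t)_{\overrightarrow{e},\overleftarrow{e}}=b^t_{\overrightarrow{e}}$, and then identifying the diagonal block of $\bA\bM^t-\bM^{t-1}(\bb^t)^{\top}$ with $\bA_{\overrightarrow{e}}\bm^t_{\overrightarrow{e}}-b^t_{\overrightarrow{e}}\bm^{t-1}_{\overleftarrow{e}}=\bx^{t+1}_{\overrightarrow{e}}$. The only difference is presentational: you obtain the structure of $\bb^t$ from a per-row Jacobian sparsity argument instead of writing out the block matrices as the paper does, which amounts to the same bookkeeping.
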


\begin{proof}
We proceed by induction. Assume that $\bX^t$ and $\bX^{t-1}$ are indeed of this form and we show the claim for $\bX^{t+1}$. We use equations \eqref{eq:sym-amp-iteration-1}-\eqref{eq:sym-amp-iteration-2} to compute $\bX^{t+1}$; we start by computing the Onsager term $\bb_t = \frac{1}{N} \sum_{i=1}^N \frac{\partial f^t_i}{\partial \bX_i}(\bX^t) \, \in \R^{q\times q}$. From the formula for $f^t$, we compute
\begin{align*}
    \bb_t &= \frac{1}{N} \begin{pmatrix}
    \Tr \frac{\partial f^t_{\overrightarrow{e}_1}}{\partial \bx_{\overrightarrow{e}_1}}(\dots)  & & & & &  \\
    & \ddots & & & 0 & \\
    & & \Tr \frac{\partial f^t_{\overrightarrow{e}_l}}{\partial \bx_{\overrightarrow{e}_l}}(\dots) & & &  \\
    & & & 0 & \Tr \frac{\partial f^t_{\overrightarrow{e}_{l+1}}}{\partial \bx_{\overleftarrow{e}_{l+1}}}(\dots) &  \\
    & 0 & & \Tr \frac{\partial f^t_{\overleftarrow{e}_{l+1}}}{\partial \bx_{\overrightarrow{e}_{l+1}}}(\dots) & 0 & \\
    & & & & & \ddots & & 
    \end{pmatrix} \\
    &= \begin{pmatrix}
    \bb^t_{\overrightarrow{e}_1} & & & & &  \\
    & \ddots & & & 0 &  \\
    & & \bb^t_{\overrightarrow{e}_l} & & &  \\
    & & & 0 & \bb^t_{\overrightarrow{e}_{l+1}} &  \\
    & 0 & & \bb^t_{\overleftarrow{e}_{l+1}} & 0 &  \\
    & & & & & \ddots  \\
    \end{pmatrix} \, .
\end{align*}
Then we can now compute
\begin{align*}
    \bX^{t+1} &= \bA \bM^t-  \bM^{t-1} \bb_t^\top\, .
\end{align*}
First, 
\begin{align*}
    \bA \bM&=         \begin{pmatrix}
    \bA_{\overrightarrow{e}_1} & & & & &  \\
    & \ddots& & & & \\
    & & \bA_{\overrightarrow{e}_l} & & & \\
    & & & \ast & \bA_{\overrightarrow{e}_{l+1}} & \\
    & & & \bA_{\overleftarrow{e}_{l+1}} & \ast &\\
    & & & & & \ddots 
    \end{pmatrix}
        \begin{pmatrix}
    f^t_{\overrightarrow{e}_1}\left(
    .\right) & & & & & \\
    & \ddots & & & &  \\
    & & f^t_{\overrightarrow{e}_l}\left(.\right) & & & \\
    & & & 0 & f^t_{\overleftarrow{e}_{l+1}}(.) &  \\
    & & & f^t_{\overrightarrow{e}_{l+1}}(.) & 0 & \\
    & & & & & \ddots
    \end{pmatrix} \\
    &= \begin{pmatrix}
    \bA_{\overrightarrow{e}_1} f^t_{\overrightarrow{e}_1}\left(\left(\bx^t_{\overrightarrow{e}}\right)_{\overrightarrow{e}:\overrightarrow{e}\to\overrightarrow{e_1}}\right) & & & & &  \\
    & \ddots& & & \ast & \\
    & & \bA_{\overrightarrow{e}_l}f^t_{\overrightarrow{e}_l}\left(.\right) & & & \\
    & & & \bA_{\overrightarrow{e}_{l+1}}f^t_{\overrightarrow{e}_{l+1}}(.) & & \\
    & \ast & & & \bA_{\overleftarrow{e}_{l+1}}f^t_{\overleftarrow{e}_{l+1}}(.)  &\\
    & & & & & \ddots 
    \end{pmatrix}\, .
\end{align*} 
Second, 
\begin{align*}
    \bM^{t-1}\bb_t&= 
            \begin{pmatrix}
    f^{t-1}_{\overrightarrow{e}_1}\left(
    .\right) & & & & & \\
    & \ddots & & & &  \\
    & & f^{t-1}_{\overrightarrow{e}_l}\left(.\right) & & & \\
    & & & 0 & f^{t-1}_{\overleftarrow{e}_{l+1}}(.) &  \\
    & & & f^{t-1}_{\overrightarrow{e}_{l+1}}(.) & 0 & \\
    & & & & & \ddots
    \end{pmatrix}
    \begin{pmatrix}
    \bb^t_{\overrightarrow{e}_1} & & & & &  \\
    & \ddots & & & & \\
    & & \bb^t_{\overrightarrow{e}_l} & & & \\
    & & & 0 & \bb^t_{\overleftarrow{e}_{l+1}} &  \\
    & & & \bb^t_{\overrightarrow{e}_{l+1}} & 0 & \\
    & & & & & \ddots 
    \end{pmatrix} \\
    &= 
    \begin{pmatrix}
    \bb^t_{\overrightarrow{e}_1} f^{t-1}_{\overrightarrow{e}_1}\left(\left(\bx^t_{\overrightarrow{e}}\right)_{\overrightarrow{e}:\overrightarrow{e}\to\overrightarrow{e_1}}\right) & & & & &  \\
    & \ddots& & & 0 & \\
    & & \bb^t_{\overrightarrow{e}_l}f^{t-1}_{\overrightarrow{e}_l}\left(.\right) & & & \\
    & & & \bb^t_{\overrightarrow{e}_{l+1}}f^{t-1}_{\overleftarrow{e}_{l+1}}(.) & & \\
    & 0 & & & \bb^t_{\overleftarrow{e}_{l+1}}f^{t-1}_{\overrightarrow{e}_{l+1}}(.)  &\\
    & & & & & \ddots 
    \end{pmatrix}
    \, .
\end{align*}
Thus, combining the above equations, we obtain 
\begin{align*}
    \bX^{t+1} &= \bA \bM-  \bM^{t-1} \bb_t^\top\\
    &= \begin{pmatrix}
    \bA_{\overrightarrow{e}_1} f^t_{\overrightarrow{e}_1}\left(.\right) -b^t_{\overrightarrow{e}_1} f^{t-1}_{\overrightarrow{e}_1}\left(.\right)  & & & &   \\
    & \ddots& & & \ast \\
    & & \bA_{\overrightarrow{e}_l}f^t_{\overrightarrow{e}_l}\left(.\right) -b^t_{\overrightarrow{e}_l} f^{t-1}_{\overrightarrow{e}_l}\left(.\right) & &  \\
    & & & \bA_{\overrightarrow{e}_{l+1}}f^t_{\overrightarrow{e}_{l+1}}(.) - b^t_{\overrightarrow{e}_{l+1}}f^{t-1}_{\overleftarrow{e}_{l+1}}(.) & \\
    & & & & \ddots 
    \end{pmatrix} \\
    &= \begin{pmatrix}
    \bx^{t+1}_{\overrightarrow{e}_1}  & & & &   \\
    & \ddots& & \ast & \\
    & & \bx^{t+1}_{\overrightarrow{e}_l}  & &  \\
    & \ast& & \bx^{t+1}_{\overrightarrow{e}_{l+1}}  & \\
    & & & & \ddots 
    \end{pmatrix} \, .
\end{align*}
This proves the induction. 
\end{proof}

\subsection{Useful extensions} 
\label{sec:extensions}
Here we present several refinements of Theorem \ref{thm:graph-AMP} that can be obtained in a straightforward fashion and appear often in statistical inference problems.
\paragraph{Matrix-valued variables.}
The variables $\bx_{\overrightarrow{e}},\bm_{\overrightarrow{e}}$ initially defined as vectors can be extended to matrices with a finite number of columns, and the non-linearities $f^{t}_{\overrightarrow{e}}$ are then matrix-valued functions of matrix-valued variables. 
\begin{itemize}
    \item $n_v \in \N_{>0}$ is now the number of lines  of the variables coming in node $v \in V$. The definition $N = \sum_{(v,w) \in \overrightarrow{E}} n_w$ remains the same.
    \item Let $q_{\overrightarrow{e}} \in \N_{>0}$ be the number of columns of $\bx^{t}_{\overrightarrow{e}}$. We assume that, for all $\overrightarrow{e} \in E$, $q_{\overrightarrow{e}} = q_{\overleftarrow{e}}$, and the $q_{\overrightarrow{e}}$ remain constant, independently of $n \to \infty$.
    \item The initial condition becomes $\bx^0_{(v,w)} \in \R^{n_w \times q_{(v,w)}}$, for all edges $\overrightarrow{e} = (v,w)$.
    \item Non-linearities $f_{t}$ indexed by the edge $\overrightarrow{e} = (v,w) \in \overrightarrow{E}$, $f^t_{(v,w)}(x^t_{\overrightarrow{e}'}, \overrightarrow{e}' \to \overrightarrow{e})$ are now functions from  $\times_{\overrightarrow{e}' \to \overrightarrow{e}}\R^{n_v \times q_{\overrightarrow{e}'}}$ to $\R^{n_v \times q_{(v,w)}}$.
\end{itemize}
The AMP iterates are then recursively defined with:
\begin{align}
    \mathbf{x}^{t+1}_{\overrightarrow{e}} &= \bA_{\overrightarrow{e}} \mathbf{m}^t_{\overrightarrow{e}} -  \mathbf{m}^{t-1}_{\overleftarrow{e}}(\mathbf{b}^t_{\overrightarrow{e}})^{\top} \quad \in \mathbb{R}^{n_{w}\times q_{\overrightarrow{e}}} \, ,  \\
    \mathbf{m}^t_{\overrightarrow{e}} &= f^t_{\overrightarrow{e}}\left(\left(\mathbf{x}^t_{\overrightarrow{e}'}\right)_{\overrightarrow{e}':\overrightarrow{e}' \to \overrightarrow{e}}\right) \, ,
\end{align}
where each Onsager term is now a matrix given by:
\begin{equation*} 
    \mathbf{b}^t_{\overrightarrow{e}} = \frac{1}{N} \sum_{i=1}^{n_v} \frac{\partial f^t_{\overrightarrow{e},i}}{\partial \mathbf{x}_{\overleftarrow{e},i}} \left(\left(\mathbf{x}^t_{\overrightarrow{e}'}\right)_{\overrightarrow{e}':\overrightarrow{e}' \to \overrightarrow{e}}\right) \qquad \in \mathbb{R}^{q_{\overrightarrow{e}}\times q_{\overrightarrow{e}}} \, .
\end{equation*}
where we used the notation from Eq.\eqref{eq:Ons_jacob}.
The state evolution equations then read
\begin{equation*}
    \boldsymbol{\kappa}^{1,1}_{\overrightarrow{e}} = \lim_{n\to\infty} \frac{1}{N}  f^0_{\overrightarrow{e}} (\bx^0_{\overrightarrow{e}'}, {\overrightarrow{e}'} \to {\overrightarrow{e}})^{\top} f^0_{\overrightarrow{e}} (\bx^0_{\overrightarrow{e}'}, {\overrightarrow{e}'} \to {\overrightarrow{e}}) \quad \in \R^{q_{\overrightarrow{e}}\times q_{\overrightarrow{e}}} \, , \quad \overrightarrow{e} \in \overrightarrow{E} \, .
\end{equation*}
\begin{align*}
   &\boldsymbol{\kappa}^{t+1, s+1}_{\overrightarrow{e}} = \boldsymbol{\kappa}^{s+1, t+1}_{\overrightarrow{e}} = \lim_{n \to \infty} \frac{1}{N} \E\left[  f^s_{\overrightarrow{e}} (\bZ^s_{\overrightarrow{e}'}, {\overrightarrow{e}'} \to {\overrightarrow{e}})^{\top} f^t_{\overrightarrow{e}} (\bZ^t_{\overrightarrow{e}'}, {\overrightarrow{e}'} \to {\overrightarrow{e}})  \right] \quad \in \R^{q_{\overrightarrow{e}}\times q_{\overrightarrow{e}}} \, \\
   &\mbox{for all}\quad 1\leqslant s \leqslant t \, ,  \overrightarrow{e} \in \overrightarrow{E} \, . 
\end{align*}
where the Gaussian fields generalize straightforwardly to  $\bZ^{t}_{\overrightarrow{e}} \sim \mathbf{N}(0,\boldsymbol{\kappa}^{t,t}_{\overrightarrow{e}}\otimes\mathbf{I}_{n_w})  \in \mathbb{R}^{n_{w}\times q_{\overrightarrow{e}}}$ for each edge. Using these generalized definitions, the above statement of Theorem \ref{thm:graph-AMP} and its proof can be adapted easily. We give examples throughout Section \ref{sec:applications}.

\paragraph{Additional random variables in the non-linearities.}
Many inference problems are formulated with a ``planted'' signal, i.e., a ground truth signal parametrizing the function the statistician tries to reconstruct, sometimes called \emph{teacher} in statistical physics. This often leads to the dependence of certain non-linearities on additional random variables. As long as they appropriately concentrate and are independent on the rest of the problem, they can be treated in straightforward fashion with an additional average in the SE equations as done in \cite{javanmard2013state}, where the summability is reduced to second-order moments conditions due to the separability of the update functions.
However it is not always straightforward to isolate the independent contribution in the teacher which is often generated using the matrices found in the AMP algorithm, effectively introducing a correlation between the matrices and non-linearities. In appendix \ref{app:low_rank_pert}, we propose a generic way to deal 
with such dependencies with two additional results in the form of Lemmas \ref{lemma:spike_SE} and Lemma \ref{lemma:proj_SE}. These two lemmas may be combined at will to deal with a wide range of perturbations relevant to inference problems. We now give an example of graph to which we apply those results, recovering the full SE equations of 
\cite{manoel2017multi,aubin2020spiked}: consider any instance of the family of AMP iterations presented in Section \ref{sec:graph-based-amp}, indexed on a given oriented graph $G = (V,E)$, i.e.
\begin{align}
    \label{eq:AMP_teach1}
    \mathbf{x}^{t+1}_{\overrightarrow{e}} &= \hat{\bA}_{\overrightarrow{e}} \mathbf{m}^t_{\overrightarrow{e}} - b^t_{\overrightarrow{e}} \mathbf{m}^{t-1}_{\overleftarrow{e}} \, , \\
    \mathbf{m}^t_{\overrightarrow{e}} &= \tilde{f}^t_{\overrightarrow{e}}\left(\left(\mathbf{x}^t_{\overrightarrow{e}'}\right)_{\overrightarrow{e}':\overrightarrow{e}' \to \overrightarrow{e}}\right)\label{eq:AMP_teach2} \, , 
\end{align}
where, in the notation of Lemma \ref{lemma:reduction}, for any symmetric edge $\overrightarrow{e}$ from the set $\left\{\overrightarrow{e}_{1},...,\overrightarrow{e}_{l}\right\}$, $\hat{\mathbf{A}}_{\overrightarrow{e}} = \mathbf{A}_{\overrightarrow{e}}+\frac{1}{N}\mathbf{v}_{\overrightarrow{e}}\mathbf{v}_{\overrightarrow{e}}^{\top}$, and $\tilde{f}^{t}_{\overrightarrow{e}}(.) = f^{t}_{\overrightarrow{e}}(.)$. Furthermore, for any asymmetric edge 
$\overrightarrow{e}$ from the set $\left\{\overrightarrow{e}_{l+1},...,\overrightarrow{e}_{m}\right\}$, $\hat{\mathbf{A}}_{\overrightarrow{e}} = \mathbf{A}_{\overrightarrow{e}}$ and $\tilde{f}^{t}_{\overrightarrow{e}}(.) = f^{t}(\varphi_{\overrightarrow{e}}(\mathbf{A}_{\overrightarrow{e}}\mathbf{w}_{\overrightarrow{e}}),.)$. The following lemma then gives the SE equations for this iteration:
\begin{lemma}
    \label{lemma:teacher_SE}
    Assume that \ref{ass:main1}-\ref{ass:main7} are verified. Further assume that, for any $\overrightarrow{e} \in \overrightarrow{E}$, $\frac{1}{\sqrt{N}}\norm{\mathbf{v}_{\overrightarrow{e}}}_{F}$ and $\frac{1}{\sqrt{N}}\norm{\mathbf{w}_{\overrightarrow{e}}}_{F}$ converge to finite constants as $N \to \infty$.
    For any symmetric edge $\overrightarrow{e}$ from the set $\left\{\overrightarrow{e}_{1},...,\overrightarrow{e}_{l}\right\}$, define the following SE recursion:
\begin{align}
    \boldsymbol{\mu}_{\overrightarrow{e}}^{0}, \thickspace \boldsymbol{\kappa}_{\overrightarrow{e}}^{1,1} &= \lim_{N \to \infty} \frac{1}{N}  f^{0}_{\overrightarrow{e}}(\left(\mu^{0}_{\overrightarrow{e}'}\mathbf{v}_{\overrightarrow{e}'}+\bx^{0}_{\overrightarrow{e}'}\right)_{\overrightarrow{e}':\overrightarrow{e}'\to \overrightarrow{e}})^{\top}f^{0}_{\overrightarrow{e}}(\left(\mu^{0}_{\overrightarrow{e}'}\mathbf{v}_{\overrightarrow{e}'}+\bx^{0}_{\overrightarrow{e}'}\right)_{\overrightarrow{e}':\overrightarrow{e}'\to \overrightarrow{e}})\\
    \boldsymbol{\mu}^{s+1}_{\overrightarrow{e}} &= \lim_{N \to +\infty} \frac{1}{N}\mathbb{E}\left[(\mathbf{v}_{\overrightarrow{e}})^{\top}f^{s}_{\overrightarrow{e}}(\left(\mu^{s}_{\overrightarrow{e}'}\mathbf{v}_{\overrightarrow{e}'}+\bx^{s}_{\overrightarrow{e}'}\right)_{\overrightarrow{e}':\overrightarrow{e}'\to \overrightarrow{e}})\right]\\
    \boldsymbol{\kappa}^{t+1, s+1}_{\overrightarrow{e}} &= \boldsymbol{\kappa}^{s+1, t+1}_{\overrightarrow{e}} = \lim_{N \to \infty} \frac{1}{N} \E\left[ f^{s}_{\overrightarrow{e}}(\left(\mu^{s}_{\overrightarrow{e}'}\mathbf{v}_{\overrightarrow{e}'}+\bx^{s}_{\overrightarrow{e}'}\right)_{\overrightarrow{e}':\overrightarrow{e}'\to \overrightarrow{e}})^\top f^{t}_{\overrightarrow{e}}(\left(\mu^{t}_{\overrightarrow{e}'}\mathbf{v}_{\overrightarrow{e}'}+\bx^{t}_{\overrightarrow{e}'}\right)_{\overrightarrow{e}':\overrightarrow{e}'\to \overrightarrow{e}}) \right], \notag \\ 
    &\qquad s \in \{ 0, \dots, t \} \, .
\end{align}
where $(\bZ^1_{\overrightarrow{e}}, \dots, \bZ^t_{\overrightarrow{e}})$ is a centered Gaussian random vector of covariance $(\boldsymbol{\kappa}^{r,s}_{\overrightarrow{e}})_{r,s \leq t} \otimes \mathbf{I}_{n_{w}}$. Then, for any sequence of uniformly (in n) pseudo-Lipschitz function $\Phi : \mathbb{R}^{(t+1)n_{w}} \to \mathbb{R}$ :
\begin{equation}
    \Phi\left(\left(\mathbf{x}_{\overrightarrow{e}}^{s}\right)_{0\leqslant s \leqslant t, \overrightarrow{e}\in \overrightarrow{E}_{sym}}\right) \approxP \mathbb{E}\left[\Phi\left((\mu^{s}_{\overrightarrow{e}}\mathbf{v}_{\overrightarrow{e}}+\bZ^{s}_{\overrightarrow{e}})_{0\leqslant s \leqslant t, \overrightarrow{e}\in \overrightarrow{E}_{sym}}\right)\right]
\end{equation}
For any asymmetric edge $\overrightarrow{e}$ from the set $\left\{\overrightarrow{e}_{l+1},...,\overrightarrow{e}_{m}\right\}$, define the following SE recursion :
\begin{align}
    &\boldsymbol{\nu}_{\overrightarrow{e}}^{0}, \hat{\boldsymbol{\nu}}_{\overrightarrow{e}}^{0}, \boldsymbol{\kappa}_{\overrightarrow{e}}^{1,1} = \frac{1}{N}f^{0}_{\overrightarrow{e}}((\bx^{0}_{\overrightarrow{e}'})_{\overrightarrow{e}':\overrightarrow{e}'\to \overrightarrow{e}})^{\top}f^{0}_{\overrightarrow{e}}((\bx^{0}_{\overrightarrow{e}'})_{\overrightarrow{e}':\overrightarrow{e}'\to \overrightarrow{e}}) \\
    &\boldsymbol{\nu}^{t+1}_{\overrightarrow{e}} = \lim_{N \to \infty} \frac{1}{N}\mathbb{E}\left[\mathbf{w}_{\overrightarrow{e}}^{\top}f_{\overrightarrow{e}}^{t}\left(\varphi_{\overrightarrow{e}}(\mathbf{z}_{\mathbf{w}_{\overrightarrow{e}}}), \left(\mathbf{z}_{\mathbf{w}_{\overrightarrow{e}'}}\rho_{\mathbf{w}_{\overrightarrow{e}'}}^{-1}\boldsymbol{\nu}_{\overrightarrow{e}'}^{t}+\mathbf{w}_{\overleftarrow{e}'}\hat{\boldsymbol{\nu}}_{\overrightarrow{e}'}^{t}+\mathbf{Z}_{\overrightarrow{e}'}^{t}\right)_{\overrightarrow{e}':\overrightarrow{e}'\to \overrightarrow{e}}\right)\right] \\
    &\hat{\boldsymbol{\nu}}^{t+1}_{\overrightarrow{e}} = \lim_{N \to \infty} \frac{1}{N}\mathbb{E}\left[\sum_{i=1}^{N}\frac{\partial f_{\overrightarrow{e},i}^{t}}{\partial \mathbf{z}_{\mathbf{w}_{\overrightarrow{e}},i},\varphi_{\overrightarrow{e}}}\left(\varphi_{\overrightarrow{e}}(\mathbf{z}_{\mathbf{w}_{\overrightarrow{e}}}), \left(\mathbf{z}_{\mathbf{w}_{\overrightarrow{e}'}}\rho_{\mathbf{w}_{\overrightarrow{e}'}}^{-1}\boldsymbol{\nu}_{\overrightarrow{e}'}^{t}+\mathbf{w}_{\overleftarrow{e}'}\hat{\boldsymbol{\nu}}_{\overrightarrow{e}'}^{t}+\mathbf{Z}_{\overrightarrow{e}'}^{t}\right)_{\overrightarrow{e}':\overrightarrow{e}'\to \overrightarrow{e}}\right)\right] \\
    &\boldsymbol{\kappa}_{\overrightarrow{e}}^{t+1, s+1} = \boldsymbol{\kappa}_{\overrightarrow{e}}^{s+1, t+1} = \notag \\
    &\lim_{N \to \infty} \frac{1}{N}\mathbb{E}\bigg[\left(f_{\overrightarrow{e}}^{s}\left(\varphi_{\overrightarrow{e}}(\mathbf{z}_{\mathbf{w}_{\overrightarrow{e}}}), \left(\mathbf{z}_{\mathbf{w}_{\overrightarrow{e}'}}\rho_{\mathbf{w}_{\overrightarrow{e}'}}^{-1}\mathbf{m}_{\overrightarrow{e}'}^{s}+\mathbf{w}_{\overleftarrow{e}'}\hat{\boldsymbol{\nu}}_{\overrightarrow{e}'}^{s}+\mathbf{Z}_{\overrightarrow{e}'}^{s}\right)_{\overrightarrow{e}':\overrightarrow{e}'\to \overrightarrow{e}}\right)-\mathbf{w}_{\overrightarrow{e}}\rho_{\mathbf{w}_{\overrightarrow{e}}}^{-1}\boldsymbol{\nu}_{\overrightarrow{e}}^{s+1}\right)^{\top} \notag \\
    &\left(f_{\overrightarrow{e}}^{t}\left(\varphi_{\overrightarrow{e}}(\mathbf{z}_{\mathbf{w}_{\overrightarrow{e}}}), \left(\mathbf{z}_{\mathbf{w}_{\overrightarrow{e}'}}\rho_{\mathbf{w}_{\overrightarrow{e}'}}^{-1}\boldsymbol{\nu}_{\overrightarrow{e}'}^{t}+\mathbf{w}_{\overleftarrow{e}'}\hat{\boldsymbol{\nu}}_{\overrightarrow{e}'}^{t}+\mathbf{Z}_{\overrightarrow{e}'}^{t}\right)_{\overrightarrow{e}':\overrightarrow{e}'\to \overrightarrow{e}}\right)-\mathbf{w}_{\overrightarrow{e}}\rho_{\mathbf{w}_{\overrightarrow{e}}}^{-1}\boldsymbol{\nu}_{\overrightarrow{e}}^{t+1}\right)\bigg]
\end{align}
where $(\bZ^1_{\overrightarrow{e}}, \dots, \bZ^t_{\overrightarrow{e}})$ is a centered Gaussian random vector of covariance $(\boldsymbol{\kappa}^{r,s}_{\overrightarrow{e}})_{r,s \leq t} \otimes \mathbf{I}_{n_{w}}$. Then, for any sequence of uniformly (in n) pseudo-Lipschitz function $\Phi : \mathbb{R}^{(t+1)n_{w}} \to \mathbb{R}$ :
\begin{equation}
    \Phi\left(\left(\mathbf{x}_{\overrightarrow{e}}^{s}\right)_{0\leqslant s \leqslant t, \overrightarrow{e}\in \overrightarrow{E}_{sym}}\right) \approxP \mathbb{E}\left[\Phi\left((\mathbf{z}_{\mathbf{w}_{\overrightarrow{e}}}\rho_{\mathbf{w}_{\overrightarrow{e}}}^{-1}\boldsymbol{\nu}_{\overrightarrow{e}}^{t}+\mathbf{w}_{\overleftarrow{e}}\hat{\boldsymbol{\nu}}_{\overrightarrow{e}}^{t}+\mathbf{Z}_{\overrightarrow{e}}^{t})_{0\leqslant s \leqslant t, \overrightarrow{e}\in \overrightarrow{E}_{asym}}\right)\right]
\end{equation}
\end{lemma}
Note the dependence on $\mathbf{w}_{\overleftarrow{e}}$ of the SE quantities indexed by $\overrightarrow{e}$, which comes from evaluating the matrix products defining the terms in $\mathbf{m}^{t}, \hat{\mathbf{m}}^{t}$. In the AMP litterature, non-linearities often take the form $\tilde{f}^{t}_{\overrightarrow{e}}(.) = f^{t}(\varphi_{\overrightarrow{e}}(\mathbf{A}_{\overleftarrow{e}}\mathbf{w}_{\overrightarrow{e}}),.)$, i.e. with a dependence on the random matrix of the opposite edge. This only changes 
the arrows in $\mathbf{W}_{0}$ i.e.
\begin{equation}
    \mathbf{W}_{0} = 
        \begin{pmatrix}
        0 & & & & & & & \\
        & \ddots& & & & & 0 & \\
        & & 0 & & & & & \\
        & & & 0 & \mathbf{w}_{\overrightarrow{e}_{l+1}} & & & \\
        & & & \mathbf{w}_{\overleftarrow{e}_{l+1}} & 0 & & & \\
        & & & & & \ddots & & \\
        & 0 & & & &  & 0 & \mathbf{w}_{\overrightarrow{e}_{m}} \\
        & & & & & & \mathbf{w}_{\overleftarrow{e}_{m}} & 0 \\
        \end{pmatrix},
\end{equation}
and the corresponding arrows in the SE equations above. It is indeed what is observed in, e.g. \cite{rangan2011generalized,manoel2017multi,aubin2020spiked}.
Examples are given throughout Section \ref{sec:applications}.

\paragraph{Structured and correlated matrices.}
Products of Gaussian matrices can be considered by choosing identities as non-linearities on given edges of the graph. This was done heuristically in \cite{manoel2017multi} to study structured inference problems. Gaussian matrices with generic covariances can also be considered, i.e., $\bA = \bZ\boldsymbol{\Sigma}^{1/2}$ where $\bZ$ is an i.i.d. $\mathbf{N}(0,\frac{1}{d})$ matrix and $\boldsymbol{\Sigma} \in \mathbb{R}^{d\times d}$ is a positive definite matrix. Indeed, the covariance matrix can be absorbed in the non-linearity as a non-separable component. Depending on the non-linearity, expressions may simplify as functions of the spectral distribution of $\boldsymbol{\Sigma}$. Examples are given in Section \ref{subsec:corr_mat}.

\paragraph{Spatial coupling.}
\emph{Spatial coupling} was introduced and studied in \cite{krzakala2012probabilistic,krzakala2012probabilistic,javanmard2013state,donoho2013information} as a mean to reach information theoretic limits in compressed sensing. The idea is to write the state evolution equations when the random matrices have a block structure of the form 
\begin{equation*}
    \bA = \begin{bmatrix}\bA_{11}& \bA_{12} & \dots &\bA_{1l}\\ \bA_{21}& \bA_{22}& \dots & \bA_{2l} \\ \vdots & \vdots & \ddots & \vdots \\ \bA_{k1}& \bA_{k2} & \dots &\bA_{kl}\end{bmatrix} \in \mathbb{R}^{N \times d} \, , 
\end{equation*}
each $\bA_{ij} \in \mathbb{R}^{N_{i} \times d_{j}}$ has i.i.d. $\mathbf{N}(0,\frac{\sigma_{ij}}{d})$ entries and $N_{i}/N, d_{j}/d$ are constant aspect ratios, where $\sum_{i} N_{i} = N$ and $\sum_{j} d_{j} = d$. The proof of SE equations with this kind of matrices was proposed in \cite{javanmard2013state} and relies on a matrix-valued symmetric AMP iteration similar to the one used in our proof, with a family of non-linearities acting on blocks of variables, with a separable effect on each block. Since our proof extends the matrix-valued, symmetric AMP iteration to the fully non-separable case, the same ideas can be applied to our framework to include spatially coupled matrices on each edge of the oriented graph presented in the previous section (with the added possibility of non-separable effects on each block). We now give an example in Section \ref{subsec:spat_coup}.
\section{Applications to inference problems}
\label{sec:applications}
In this section we illustrate our main theorem by showing how several AMP iterations established heuristically in the literature are included in our framework, in particular \cite{manoel2017multi,aubin2019committee,aubin2020spiked,loureiro2021learning}, and how straightforward generalizations can be considered.
We adopt an optimization viewpoint for each problem, omitting the probabilistic inference formulation at the origin of these iterations for simplicity.
\subsection{A building block: AMP for generalized linear models}
\label{subsec:GAMP}
We start with a known AMP iteration for which the state evolution equations were already proven, and build upon the intuition it gives to present more elaborate iterations. 
Consider the task of optimizing a penalized cost functions of the form
\begin{equation}
\label{opti_GLM}
 \hat{\mathbf{x}}\in   \min_{\mathbf{x} \in \mathbb{R}^{d}} g(\bA\mathbf{x},\mathbf{y})+f(\mathbf{x})
\end{equation}
where the vector of labels $\mathbf{y}$ is typically assumed to be generated from another process as
\begin{equation*}
    \mathbf{y} = \phi(\bA\mathbf{x}_{0}),
\end{equation*}
with $\mathbf{x}_{0} \in \mathbb{R}^{d}$ generated from a given distribution $p_{\mathbf{x}_{0}}$ independent from the matrix $\bA$, $\bA \in \mathbb{R}^{N \times d}$ is a matrix with i.i.d. $\mathbf{N}(0,\frac{1}{d})$ elements, and $\phi$ a given function.
The goal is then to reconstruct the vector $\mathbf{x}_{0}$. This formulation is at the basis of many of the fundamental estimation methods in machine learning: least-squares, LASSO, logistic regression, etc. Approximate-message passing algorithms were proposed for this task, notably in $\cite{donoho2009message,bayati2011dynamics,rangan2011generalized,krzakala2012probabilistic,javanmard2013state}$, and take the generic form of the asymmetric AMP iteration \eqref{eq:asy_amp}
where $\bA_{\overrightarrow{e}} = \bA$.
Intuitively, the functions $f^{t}_{\overrightarrow{e}}, f^{t}_{\overleftarrow{e}}$ each correspond to one of the functions $g,f$ from \eqref{opti_GLM} and respectively output an estimate of the quantities $\bA\hat{\mathbf{x}}, \hat{\mathbf{x}}$. As prescribed by the form of the generative model, we expect the update function associated to the loss $g(.,\mathbf{y})$ to be correlated with the matrix $\mathbf{A}$, thus preventing a direct application of the SE equations of Theorem \ref{thm:graph-AMP}, and 
requiring the results of Lemma\ref{lemma:teacher_SE}.
\subsection{Multilayer generalized linear estimation}
\label{subsec:MLAMP}
Consider now the problem of recovering a vector $\mathbf{x}_{0}$ from a more complex generative model involving a multilayer neural network with random weights:
\begin{equation*}
    \mathbf{y} = \phi_{L}(\bA_{L}\phi_{L-1}\left(\bA_{L-1}(...\phi_{1}(\bA_{1}\mathbf{x}_{0}))\right)
\end{equation*}
where one has access to the final output $\mathbf{y}$ and would like to reconstruct the intermediate ones and input $\mathbf{x}_{0}$. For each layer $1\leqslant l \leqslant L$ the matrix $\bA_{l} \in \mathbb{R}^{N_{l+1} \times N_{l}}$ has i.i.d. $\mathbf{N}(0,\frac{1}{N_{l}})$ with $N_{l+1}/N_{l} = \delta_{l}$.
The idea is to solve this sequentially using asymmetric AMP iterations similar to the one presented in the previous section. This approach was originally proposed in \cite{manoel2017multi} under the name multilayer AMP (MLAMP). 
For any $1 \leqslant l \leqslant L+1$, define 
\begin{align*}
    \mathbf{x}_{l} &= \phi_{l-1}(\bA_{l-1}\phi_{l-2}(..\phi_{1}(\bA_{1}\mathbf{x}_{0}))), \\
    \mbox{such that} \quad\mathbf{x}_{l+1} &= \phi_{l}(\bA_{l}\mathbf{x}_{l}) \quad \mbox{and} \quad \mathbf{x}_{L+1} = \mathbf{y}
\end{align*}
The intuition is the following : each $\mathbf{x}_{l}$ is then estimated using the asymmetric AMP corresponding to the problem
\begin{equation*}
    \hat{\mathbf{x}}_{l} = \argmin_{\mathbf{x} \in \mathbb{R}^{N_{l}}} \thickspace g_{l}(\bA_{l}\mathbf{x},\mathbf{y}_{l})+f_{l}(\mathbf{x})
\end{equation*}
the output of which is used to estimate the next, i.e., $\mathbf{y}_{l} = \hat{\mathbf{x}}_{l+1}$, whose statistical properties are given by the SE equations. The complete derivation of the iteration involves writing the belief-propagation equations on the factor graph corresponding to the multilayer inference problem, capturing all the interactions between the different iterates. These SE equations were derived heuristically in \cite{manoel2017multi} for Bayes-optimal inference, and this paper proves them in the generic case.
\subsection{Spiked matrix with generative prior}
\label{subsec:spikeML}
In the same spirit as the composition of generalized linear models defining MLAMP, different tasks can be composed to obtain richer instances of inference problems. For instance in \cite{aubin2019committee}, the reconstruction of a low-rank matrix under a generative prior is considered using an AMP iteration. A rank-one matrix is observed, blurred by Gaussian noise:
\begin{align*}
    \mathbf{Y} = \sqrt{\frac{\lambda}{d}}\mathbf{v}_{0}\mathbf{v}_{0}^{\top}+\mathbf{W}
\end{align*}
where $\mathbf{W} \in GOE(N)$, and the vector $\mathbf{v}_{0}\in \mathbb{R}^{N}$ is assumed to be generated from a multilayer neural network with random weights
\begin{equation*}
    \mathbf{v}_{0} = \phi_{L}(\bA_{L}\phi_{L-1}\left(\bA_{L-1}(...\phi_{1}(\bA_{1}\mathbf{x}_{0}))\right)
\end{equation*}
for a given ground truth vector $\mathbf{x}_{0} \in \mathbb{R}^{N_{1}}$, matrices $\{\bA_{l} \in \mathbb{R}^{N_{l+1} \times N_{l}}\}_{1 \leqslant l \leqslant L}$ and non-linearities $\{\phi_{l}\}_{1 \leqslant l \leqslant L}$.
The AMP iteration to estimate $\mathbf{v}_{0}$ from $\mathbf{Y}$ was first proposed in \cite{rangan2012iterative,deshpande2014information}, and takes the form of a symmetric AMP \eqref{eq:sym_amp}. Similarly to MLAMP, the output of this iteration can then be used as input, leading to the AMP iteration proposed in \cite{aubin2020spiked}, which corresponds to the AMP iteration \eqref{eq:spike_mlamp}. This paper proves the state evolution equations for this iteration.
\subsection{An example with matrix-valued variables}
\label{subsec:mat_val}
Matrix valued variables are encountered in scenarios such as committee machines \cite{aubin2019committee} or multiclass learning problems \cite{loureiro2021learning}, or more generically when a finite ensemble of predictors is learned. Consider the matrix-valued extension of the generalized linear estimation problem Eq.\eqref{opti_GLM}.
\begin{align*}
&\hat{\mathbf{X}} \in \argmin_{\mathbf{X} \in \mathbb{R}^{N \times q}} g(\bA\mathbf{X},\mathbf{Y})+f(\mathbf{X}) \\
    &\mbox{where} \quad \mathbf{Y} = \phi(\bA\mathbf{X}_{0}))
\end{align*}
where $\mathbf{X}_{0} \in \mathbb{R}^{N\times q}$ and $q \in \mathbb{N}$ is kept finite. The SE equations for the asymmetric AMP with matrix valued-variables are included in the result of  \cite{javanmard2013state}.
This can be directly generalized to a multilayer matrix inference problem by considering a generative model of the form
\begin{equation*}
    \mathbf{Y} = \phi_{L}(\bA_{L}\phi_{L-1}\left(\bA_{L-1}(...\phi_{1}(\bA_{1}\mathbf{X}_{0}))\right)
\end{equation*}
and successive application of the matrix-valued asymmetric AMP as proposed for MLAMP in Section \ref{subsec:MLAMP}. The state evolution equations for this problem is included in our framework using the results from Section \ref{sec:extensions}.
\subsection{An example with structured random matrices}
\label{subsec:corr_mat}
Consider a generalized linear inference task where the data is now represented by a Gaussian matrix with a covariance $\boldsymbol{\Sigma} \neq \mathbf{I}_{d}$. This can be dealt with using the non-separable framework.
Assuming the covariance matrix is full-rank, we can equivalently work with the variable $\tilde{\mathbf{x}} = \boldsymbol{\Sigma}^{1/2}\mathbf{x}$, and solve
\begin{equation*}
    \argmin_{\tilde{\mathbf{x}}} g(\tilde{\mathbf{A}}\tilde{\mathbf{x}},\mathbf{y})+f(\boldsymbol{\Sigma}^{-1/2}\tilde{\mathbf{x}}).
\end{equation*}
where $\tilde{A}$ is now ana i.i.d. Gaussian matrix.
This will modify the update function associated to $f$, becoming $f(\boldsymbol{\Sigma}^{-1/2}.)$,
which is non-separable, even if the function $f$ is initially assumed to be separable. The validity of the SE equations for this case follows from the results of \cite{berthier2020state}. This manipulation can also be done on any layer of MLAMP, for a given set of covariance matrices $\boldsymbol{\Sigma}_{1},...,\boldsymbol{\Sigma_{L}}$ associated to each random matrix $\bA_{1},...,\bA_{L}$, with vector or matrix-valued variables. The validity of the SE equations in this case follows from the results of this paper. In the convex GLM case (2-layer), the fixed point of the state evolution equations with a generic covariance gives the same result as (a particular case of) the exact asymptotics recently proposed in \cite{loureiro2021capturing} to study different feature maps in generalized linear models.
\subsection{An example of spatial coupling with non-separable non-linearities}
\label{subsec:spat_coup}
Here we briefly describe an inference problem recently studied in \cite{loureiro2021learning} that can be solved using spatial coupling on a non-separable AMP iteration. Consider the problem of classifying a high-dimensional Gaussian mixture with a finite number $K$ of clusters, described by the joint density
\begin{equation*}
    P(\mathbf{x}\vert \mathbf{y}) = \sum_{k=1}^{K}y_{k} \pi_{k}\mathbf{N}(\boldsymbol{\mu}_{k},\boldsymbol{\Sigma}_{k})
\end{equation*}
where $\mathbf{x} \in \mathbb{R}^{d}$ is a sample, $\mathbf{y} \in \mathbb{R}^{K}$ is a binary label vector, $\{\pi_{k}\}_{k}$ are the cluster probabilities such that $\sum_{k=1}^{K} \pi_{k} = 1$, $\{\boldsymbol{\mu}_{k}\}_{1 \leqslant k \leqslant K}$ are the means and $\{\boldsymbol{\Sigma}_{k}\}_{1 \leqslant k \leqslant K}$ are positive definite covariances,
using a convex generalized linear model, i.e.,
\begin{equation*}
    \mathbf{X} \in \argmin_{\mathbf{X} \in \mathbb{R}^{d \times K}} g(\mathbf{A}\mathbf{X},\mathbf{Y})+f(\mathbf{X})
\end{equation*}
where $\mathbf{Y} \in \mathbb{R}^{N \times K}$ is the concatenated matrix of one-hot encoded labels.
The matrix $\mathbf{A}$ representing $N$ samples of the Gaussian mixture can be written as a block diagonal matrix
\begin{equation*}
    \bA = \begin{bmatrix}\bZ_{1}\boldsymbol\Sigma_{1}^{1/2}&&&&\\ & \bZ_{2}\boldsymbol{\Sigma_{2}}^{1/2}&&&\\ &&...&&\\ &&&&\bZ_{K}\boldsymbol{\Sigma_{K}}^{1/2}\end{bmatrix} \in \mathbb{R}^{N \times Kd}
\end{equation*}
where the $\mathbf{Z}_{k} \in \mathbb{R}^{N_{k} \times d}$ are i.i.d. $\mathbf{N}(0,\frac{1}{d})$ independent matrices, with $N_{k}$ the number of samples coming from each cluster.
This type of matrix can be embedded into an AMP iteration using the spatial coupling technique to handle the block structure and the non-separable framework to deal with the covariances on each block. The validity of the SE equations for the combination of spatial coupling and non-separable effects is proven by this paper. This is also an example where the teacher distribution is independent of the Gaussian matrices that will appear in the AMP iteration, as the multinomial distribution prescribing cluster membership is independent of the Gaussian cloud of each cluster.
\section{Perspectives}
\label{sec:perspectives}
\paragraph{}
We have shown that AMP algorithms can be unified in an intuitive way by means of an oriented graph, and that this representation leads to a modular, effective and extended proof of state evolution equations. Several problems follow from the results presented here.

\paragraph{Connecting back to the factor graph.} We do not relate our proposed graphical representation of the AMP iterations with the factor graphs of the probabilistic inference problems that generated them. Understanding this relation would clarify the statistical inference problems that can be solved using AMP iterations. The applications that motivated this paper use our framework with only very simple graphs---line graphs, sometimes with a loop. However, the framework accepts much more complicated graphs, potentially with more loops. In future work, we hope to explore the new statistical problems and AMP iterations that can be analyzed using these graphs. 

\paragraph{Rotationally invariant matrices.}
As shown in \cite{rangan2019vector,fletcher2018inference,pandit2020inference,fan2020approximate}, the Gaussian conditioning method at the core of AMP proofs can be reproduced with right rotationally invariant matrices with generic spectrum. Extending the results of the present paper to this family of matrices requires finding the appropriate form of the graph iteration and is an open problem.

\paragraph{Universality and finite size corrections.}
State evolution proofs are amenable to both finite size analysis \cite{rush2018finite,ma2017analysis} and universality proofs \cite{bayati2015universality,chen2021universality}.  Although both problems were tackled in simpler settings in these papers, their techniques could be combined with the embedding proposed in the proof of Theorem \ref{thm:graph-AMP} to prove finite size rates and universality properties for any graph supported AMP.
\section{Acknowledgments}
We thank Florent Krzakala and Lenka Zdeborova for suggesting this project, insightful discussions, and for organizing the Ecole des Houches Summer Workshop on Statistical Physics and Machine Learning where this work was initiated. Rapha\"el Berthier acknowledges support from the DGA. 
\bibliographystyle{alpha}
\bibliography{References}
\newpage
\appendix
\etocdepthtag.toc{mtappendix}
\etocsettagdepth{mtmain}{none}
\etocsettagdepth{mtappendix}{subsection}
\tableofcontents
\section{Changing time indices}
\label{app:time}
Here we show how the time index convention usually encountered in earlier instances of the asymmetric AMP iteration can be recovered from the one used in this proof. Consider two successive iterations of the asymmetric AMP \eqref{eq:asy_amp}:
\begin{align}
\begin{split}
\begin{aligned}
    \mathbf{x}^{t+1}_{\overrightarrow{e}} &= \mathbf{A}_{\overrightarrow{e}} \mathbf{m}^t_{\overrightarrow{e}} - b^t_{\overrightarrow{e}} \mathbf{m}^{t-1}_{\overleftarrow{e}} \, ,  && \mathbf{x}^{t}_{\overrightarrow{e}} = \mathbf{A}_{\overrightarrow{e}} \mathbf{m}^{t-1}_{\overrightarrow{e}} - b^{t-1}_{\overrightarrow{e}} \mathbf{m}^{t-2}_{\overleftarrow{e}} \, ,  \\
    &\bm^t_{\overrightarrow{e}} = f^t_{\overrightarrow{e}}\left(\bx^t_{\overleftarrow{e}}\right) \, , && \hspace{0.7cm}\bm^{t-1}_{\overrightarrow{e}} = f^{t-1}_{\overrightarrow{e}}\left(\bx^{t-1}_{\overleftarrow{e}}\right) \, , \\
        \bx^{t+1}_{\overleftarrow{e}} &= \bA_{\overrightarrow{e}}^\top\bm^t_{\overleftarrow{e}} - b^t_{\overleftarrow{e}} \bm^{t-1}_{\overrightarrow{e}} \, ,  && \bx^{t}_{\overleftarrow{e}} = \bA_{\overrightarrow{e}}^\top\bm^{t-1}_{\overleftarrow{e}} - b^{t-1}_{\overleftarrow{e}} \bm^{t-2}_{\overrightarrow{e}} \, ,  \\
    &\bm^t_{\overleftarrow{e}} = f^t_{\overleftarrow{e}}\left(\bx^t_{\overrightarrow{e}}\right) && \hspace{0.7cm} \bm^{t-1}_{\overleftarrow{e}} = f^{t-1}_{\overleftarrow{e}}\left(\bx^{t-1}_{\overrightarrow{e}}\right) \\ 
\end{aligned}
 \end{split}
\end{align}
which requires initializing both $\mathbf{x}_{\overrightarrow{e}}$ and $\mathbf{x}_{\overleftarrow{e}}$, and updates them simultaneously at each iteration.
We see that to evaluate $\mathbf{x}^{t+1}_{\overrightarrow{e}}$ (resp. $\mathbf{x}^{t+1}_{\overleftarrow{e}}$), we only need the previous value of $\bx^{t}_{\overleftarrow{e}}$ (resp. $\bx^{t}_{\overrightarrow{e}}$) and $\bx_{\overrightarrow{e}^{t-1}}$ (resp. $\bx^{t-1}_{\overleftarrow{e}}$). Thus only half of the iterates can be computed, independently of the other half, using the following formulae (setting the other update functions to zero):
\begin{align}
\begin{split}
\bx^{2t+1}_{\overleftarrow{e}} &= \bA_{\overrightarrow{e}}^\top\bm^{2t}_{\overleftarrow{e}} - b^{2t}_{\overleftarrow{e}} \bm^{2t-1}_{\overrightarrow{e}} \, ,  \\
    &\bm^{2t}_{\overleftarrow{e}} = f^{2t}_{\overleftarrow{e}}\left(\bx^{2t}_{\overrightarrow{e}}\right) \, , \\
    \mathbf{x}^{2t}_{\overrightarrow{e}} &= \mathbf{A}_{\overrightarrow{e}} \mathbf{m}^{2t-1}_{\overrightarrow{e}} - b^{2t-1}_{\overrightarrow{e}} \mathbf{m}^{2t-2}_{\overleftarrow{e}} \, ,   \\
    &\bm^{2t-1}_{\overrightarrow{e}} = f^{2t-1}_{\overrightarrow{e}}\left(\bx^{2t-1}_{\overleftarrow{e}}\right) \, 
    \end{split}
\end{align}
which only requires one value at initialization and at each iteration. The usual time indices found in , e.g., \cite{berthier2020state} are then recovered with the following mapping:
\begin{align*}
    \bx^{2t+1}_{\overleftarrow{e}} &= \bu^{t+1} \\
    \bx^{2t}_{\overrightarrow{e}} &= \bv^{t} \\
    f^{2t}_{\overleftarrow{e}}(.) &= g_{t}(.) \\
    f^{2t-1}_{\overrightarrow{e}}(.) &= e_{t}(.)
\end{align*}
Note that this simplification is specific to the graph structure underlying the asymmetric AMP iteration.
\section{Matrix-valued symmetric AMP iterations with non-separable non-linearities}
\label{SE_sym_AMP}
\subsection{State evolution description}

In this section, we present the state evolution equations for a symmetric AMP iteration with non-separable non-linearities and matrix-valued variables. This is an extension of the results of \cite{javanmard2013state,berthier2020state}. This result underlies the proof of state evolution equations for graph-based AMP iterations.

Consider an initial (deterministic) matrix $\bX^{0} \in \mathbb{R}^{N \times q}$ and a sequence of deterministic functions $\{f^{t} : \mathbb{R}^{N \times q} \to \mathbb{R}^{N \times q}\}_{t \in \mathbb{N}}$. For the reader's convenience, we recall here the symmetric AMP iteration \eqref{eq:sym-amp-iteration-1}-\eqref{eq:sym-amp-iteration-2}.

\paragraph{Symmetric AMP iteration.} Let $\bX^0 \in \R^{N\times q}$ and define recursively,
\begin{align}
\label{eq:sym_AMP1}
\bX^{t+1} &= \mathbf{A}\bM^{t}-\bM^{t-1}(\bb^{t})^{\top} && \in \R^{N\times q} \, ,  \\
\bM^{t} &=f^{t}(\bX^{t}) && \in \R^{N\times q} \, , \\
\bb^t &= \frac{1}{N} \sum_{i=1}^N \frac{\partial f^t_i}{\partial \bX_i}(\bX^t) && \in 
\R^{q\times q}\, . 
\label{eq:sym_AMP2}
\end{align}
where $\bb^{t}$ is the Onsager correction term.
\noindent 
We now list the necessary assumptions.
\paragraph{Assumptions.}
\begin{enumerate}[font={\bfseries},label={(B\arabic*)}]
\item\label{it:ass-sym-1} $\mathbf{A} \in \mathbb{R}^{N \times N}$ is a GOE(N) matrix, i.e., $\mathbf{A} = \mathbf{G}+\mathbf{G}^{\top}$ for $\mathbf{G} \in \mathbb{R}^{N \times N}$ with i.i.d.~entries $G_{ij} \sim \mathbf{N}(0,1/(2N))$. 
\item For each $t \in \mathbb{N}, f^{t} : \mathbb{R}^{N \times q} \to \mathbb{R}^{N \times q}$ is pseudo-Lipschitz of order $k$, uniformly in $N$.
\item \label{it:ass-sym-3} $\norm{\bX^{0}}_{F}/\sqrt{N}$
converges to a finite constant as $N \to \infty$.
\item\label{it:ass-sym-4}  The following limit exists and is finite:
\begin{equation}
    \lim_{N \to \infty} \frac{1}{N}  f^{0}(\bX^{0})^{\top}f^{0}(\bX^{0}) \in \mathbb{R}^{q \times q}
\end{equation}
\item\label{it:ass-sym-5} For any $t \in \mathbb{N}_{>0}$ and any $\boldsymbol{\kappa} \in \cS_{q}^{+}$, the following limit exists and is finite:
\begin{equation}
    \lim_{N \to \infty} \frac{1}{N}  \mathbb{E}\left[f^{0}(\bX^{0})^{\top}f^{t}(\bZ)\right] \in \mathbb{R}^{q \times q}
\end{equation}
where $\mathbf{Z} \in \mathbb{R}^{N \times q}$, $\mathbf{Z} \sim \mathbf{N}\left(0,\boldsymbol{\kappa} \otimes \mathbf{I}_{N} \right)$.
\item\label{it:ass-sym-6} For any $s,t \in \mathbb{N}_{>0}$ and any $\boldsymbol{\kappa} \in \cS_{2q}^{+}
$, the following limit exists and is finite:
\begin{equation}
    \lim_{N \to \infty} \frac{1}{N}  \mathbb{E}\left[f^{s}(\bZ^{s})^{\top}f^{t}(\bZ^{t})\right] \in \mathbb{R}^{q \times q}
\end{equation}
where $(\mathbf{Z}^{s},\mathbf{Z}^{t}) \in (\mathbb{R}^{N \times q})^{2}$,$(\mathbf{Z}^{s},\mathbf{Z}^{t}) \sim \mathbf{N}(0,\boldsymbol{\kappa} \otimes \mathbf{I}_{N})$.
\end{enumerate}
Under these assumptions, we define the $\emph{state evolution}$ iteration related to the AMP iteration \eqref{eq:sym_AMP1}-\eqref{eq:sym_AMP2}.
\begin{definition}[state evolution iterates]
\label{def:se-sym}
The state evolution iterates are composed of one infinite-dimensional array $(\boldsymbol{\kappa}^{s,r})_{r,s>0}$ of real matrices. This array is generated as follows. Define the first state evolution iterate
\begin{equation}
    \boldsymbol{\kappa}^{1,1} = \lim_{N \to \infty} \frac{1}{N}  f^{0}(\bX^{0})^{\top}f^{0}(\bX^{0}) 
\end{equation}
Recursively, once $\boldsymbol{\kappa}^{s,r}, 0\leq s,r \leq t$ are defined for some $t \geq 1$, take $\bZ^0 = \bX^0$ and $(\bZ^1, \dots, \bZ^t) \in (\R^{n \times q})^t$ a centered Gaussian vector of covariance $(\boldsymbol{\kappa}^{s,r})_{s,r\leqslant t}\otimes \mathbf{I}_{N}$. We then define new state evolution iterates
\begin{equation*}
    \boldsymbol{\kappa}^{t+1, s+1} = \boldsymbol{\kappa}^{s+1, t+1} = \lim_{N \to \infty} \frac{1}{N} \E\left[ f^s(\bZ^s)^\top f^t(\bZ^t) \right] \, , \qquad s \in \{ 0, \dots, t \} \, .
\end{equation*}
\end{definition}
The following property then holds for the AMP iteration \eqref{eq:sym_AMP1}-\eqref{eq:sym_AMP2}. 
\begin{theorem}
\label{thm:symmetric}
Assume \ref{it:ass-sym-1}-\ref{it:ass-sym-6}. Define, as above, $\bZ^0 = \bX^0$ and $(\bZ^1, \dots, \bZ^t) \in (\R^{N \times q})^t$ a centered Gaussian vector of covariance $ (\boldsymbol{\kappa}^{s,r})_{s,r\leqslant t}\otimes \mathbf{I}_{N}$. Then for any sequence $\Phi_N : (\R^{N \times q})^{t+1} \to \R$ of pseudo-Lipschitz functions, 
\begin{equation*}
  \Phi_N\left(\bX^0, \bX^1, \dots, \bX^t\right) \approxP \E\left[ \Phi_N\left(\bZ^0, \bZ^1, \dots, \bZ^t\right) \right] \, . 
\end{equation*}
\end{theorem}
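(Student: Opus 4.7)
The plan is to follow Bolthausen's iterative Gaussian conditioning technique, as implemented in \cite{bayati2011dynamics} and extended to matrix-valued iterates in \cite{javanmard2013state} and to non-separable non-linearities in \cite{berthier2020state}. The novelty here is the simultaneous combination of these two extensions. The proof proceeds by induction on $t$, the induction hypothesis being a joint pseudo-Lipschitz convergence statement for $(\bX^0, \bX^1, \dots, \bX^t)$ together with the concentration of all empirical inner products $\tfrac{1}{N} f^{s}(\bX^s)^\top f^{r}(\bX^r)$ towards the state-evolution values $\boldsymbol{\kappa}^{s+1,r+1}$, for $s,r \leqslant t$.

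First, I would set up the conditioning framework. Let $\mathfrak{S}_t$ denote the $\sigma$-algebra generated by $\bX^{0}, \bM^{0}, \dots, \bX^{t}, \bM^{t}$. Because the $\bM^s$ depend on $\bA$ only through $\bX^1,\dots,\bX^t$, and each $\bX^{s+1}$ depends on $\bA$ only through the linear equation $\bA \bM^s = \bX^{s+1}+\bM^{s-1}(\bb^s)^\top$, conditioning on $\mathfrak{S}_t$ amounts to fixing the value of $\bA$ on the subspace spanned by the columns of $\bM^{0:t-1} = [\bM^0 | \cdots | \bM^{t-1}]$. Since $\bA$ is GOE, standard Gaussian conditioning gives the decomposition
\begin{equation*}
\bA \big|_{\mathfrak{S}_t} \stackrel{d}{=} \bE_t + \bP^{\perp}_{\bM^{0:t-1}} \widetilde{\bA}\, \bP^{\perp}_{\bM^{0:t-1}} + \text{(symmetrization terms on the complementary subspace)},
\end{equation*}
where $\widetilde{\bA}$ is an independent GOE$(N)$ matrix and $\bE_t$ is an explicit deterministic affine function of $\mathfrak{S}_t$. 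In the matrix-valued setting each $\bM^s \in \mathbb{R}^{N\times q}$ contributes $q$ column constraints, but the same decomposition applies after using the Gram matrix of the stacked columns.

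Second, I would plug this decomposition into $\bX^{t+1} = \bA \bM^t - \bM^{t-1}(\bb^t)^\top$. The term $\bE_t \bM^t$ expands, via the normal equations for the orthogonal projection, as a linear combination $\sum_{s<t} \bM^{s}\, \boldsymbol{\alpha}^s_t$ plus a contribution involving derivatives of the past non-linearities (after a Stein/integration-by-parts computation applied to the coefficients $\boldsymbol{\alpha}^s_t$). Using the induction hypothesis, the concentration of these Jacobians against the state-evolution limit allows one to identify the coefficient of $\bM^{t-1}$ precisely with the Onsager matrix $\bb^t$, producing the cancellation that is the whole point of AMP. What remains, $\bP^{\perp} \widetilde{\bA}\, \bP^{\perp} \bM^t$, is a Gaussian vector conditionally on $\mathfrak{S}_t$ with covariance $(\tfrac{1}{N}(\bM^t)^\top \bM^t)\otimes \mathbf{I}_N$ up to a rank-$qt$ projection correction whose operator norm is $O(1/\sqrt{N})$, hence negligible in pseudo-Lipschitz tests. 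The induction hypothesis identifies $\tfrac{1}{N}(\bM^t)^\top\bM^t$ with $\boldsymbol{\kappa}^{t+1,t+1}$ and, jointly with the past Gaussians $\bZ^1,\dots,\bZ^t$, produces the announced covariance structure.

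Third, to close the induction I would establish the concentration of $\tfrac{1}{N} f^s(\bX^s)^\top f^r(\bX^r)$ and, more generally, of $\tfrac{1}{N}\Phi_N(\bX^0,\dots,\bX^{t+1})$ around its expectation against the Gaussian surrogate. Because the $f^s$ are non-separable, simple scalar laws of large numbers no longer suffice; instead one applies Gaussian concentration of measure (Tsirelson--Ibragimov--Sudakov or log-Sobolev) to pseudo-Lipschitz functions of the joint Gaussian vector $(\bZ^0,\dots,\bZ^t)$, exactly as in \cite{berthier2020state}, together with assumptions \ref{it:ass-sym-4}--\ref{it:ass-sym-6} to upgrade expectations to deterministic limits.

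The main obstacle is the entanglement between matrix-valued variables and non-separability. In the matrix-valued case the Onsager correction is a $q\times q$ matrix obtained by summing $N$ small $q\times q$ Jacobian blocks $\partial f^t_i/\partial \bX_i$; one must verify, in the conditioning step, that the coefficient of $\bM^{t-1}$ in $\bE_t \bM^t$ equals exactly this matrix-valued Onsager term, and this requires a careful Stein-type identity in $q$ dimensions applied to non-separable pseudo-Lipschitz functions rather than to separable ones as in \cite{javanmard2013state}. This combined bookkeeping of $q$-dimensional inner products across non-separable functions is where the two previous proof techniques must be genuinely merged, and I expect it to be the technically heaviest part of the argument.
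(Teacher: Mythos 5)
Your overall strategy is the direct Bolthausen--Bayati--Montanari conditioning applied to the AMP iterates themselves, whereas the paper never conditions the original recursion: it introduces an auxiliary ``LoAMP'' sequence $\{\bH^{t},\bQ^{t}\}$ in which the projections $\mathbf{P}^{\perp}_{\boldsymbol{\mathcal{Q}}_{t-1}}\mathbf{A}\mathbf{P}^{\perp}_{\boldsymbol{\mathcal{Q}}_{t-1}}$ are built into the iteration (so that the conditional law of Lemma \ref{lemma:cond_dist_LoAMP} is exact and no Onsager cancellation has to be extracted from a conditional mean), proves state evolution for that sequence (Lemma \ref{lemma:LoAMP_SE}), and then shows the LoAMP orbit approximates the AMP orbit (Lemmas \ref{lemma:LoAMP_approx1}--\ref{lemma:LoAMP_approx2}), finally passing from the expected to the empirical Onsager term. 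Your route is viable in principle (it is essentially Javanmard--Montanari merged with the non-separable machinery), and your identification of the matrix Stein identity and the $q\times q$ Onsager bookkeeping as the place where the two extensions interact is accurate; the paper's Lemma \ref{matrix-stein} plays exactly that role.

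The genuine gap is that every step of your plan that uses ``the normal equations for the orthogonal projection'' --- the definition of the coefficients $\boldsymbol{\alpha}^{s}_{t}$, the expansion of the conditional mean $\bE_t\bM^t$, and the claim that the rank-$qt$ projection correction is negligible --- requires the Gram matrix of the stacked past iterates $[\bM^{0}\vert\cdots\vert\bM^{t-1}]$ to be invertible with smallest singular value of order $\sqrt{N}$, uniformly in $N$. In the separable setting this follows from a positivity condition on the state evolution variances, but for general non-separable $f^t$ (and a fortiori for the matrix-valued case, where $qt$ columns must stay uniformly linearly independent) it simply cannot be assumed, and nothing in your induction supplies it: if the Gram matrix degenerates, the coefficients $\boldsymbol{\alpha}^{s}_{t}$ are ill-defined or unbounded and the concentration argument identifying the coefficient of $\bM^{t-1}$ with $\bb^t$ collapses. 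The paper devotes roughly half of its proof to precisely this point: it first proves everything under a non-degeneracy assumption, then removes it by perturbing the non-linearities with independent Gaussian matrices $f^{t}_{\epsilon\bY}=f^{t}+\epsilon\bY^{t}$, showing the perturbed orbit is non-degenerate (Lemma \ref{lemma:pert_full_rank}), that its state evolution converges to the unperturbed one as $\epsilon\to0$ (Lemmas \ref{lemma:unif_conv}--\ref{lemma:conv_SE}), and that the perturbed orbit tracks the original one uniformly in $N$ (Lemma \ref{lemma:conv_AMP}). Your proposal needs either this perturbative device or an explicit non-degeneracy hypothesis; as written, the induction step does not go through.
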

Given the above result, we can expect the Onsager correction $\bb^{t}$ to verify
\begin{equation}
\label{eq:emp_ons}
\bb^{t} \stackrel{P}\simeq \frac{1}{N} \mathbb{E}\left[\sum_{i=1}^N \frac{\partial f^t_i}{\partial \bZ_i}(\bZ^t)\right] \in 
\R^{q\times q}\, .
\end{equation}
where $\bZ^{t} \sim \mathbf{N}(0,\boldsymbol{\kappa}_{t,t}\otimes \mathbf{I}_{n})$. In fact, similarly to \cite{berthier2020state}, Theorem \ref{thm:symmetric} can be shown to hold for the AMP iteration (\eqref{eq:sym_AMP1}-\eqref{eq:sym_AMP2}) with any estimator $\hat{\bb}^{t}$ satisfying
\begin{equation}
    \hat{\bb}^{t}(\bX^{0},\hat{\bM}^{0},...,\hat{\bM}^{t-1},\hat{\bX}^{t}) \stackrel{P}\simeq \frac{1}{N} \mathbb{E}\left[\sum_{i=1}^N \frac{\partial f^t_i}{\partial \bZ_i}(\bZ^t)\right] \in 
\R^{q\times q}\, .
\end{equation}.
\subsection{Application: proof of Theorem \ref{thm:graph-AMP}}
\label{ap:proof-graph-amp}

In Section \ref{ap:reduction-graph}, we have seen that the graph AMP iteration \eqref{eq:graph-amp-1}-\eqref{eq:graph-amp-3} can be rewritten as a symmetric AMP iteration of the form \eqref{eq:sym-amp-iteration-1}-\eqref{eq:sym-amp-iteration-2}. Here, we check that applying Theorem \ref{thm:symmetric} on the symmetric iteration after performing the reduction indeed gives Theorem \ref{thm:graph-AMP}.

Define the state evolution iterates as in Definition \ref{def:se-sym}. Here, due to the expression \eqref{eq:def-non-linearity} of the non-linearities, the state evolution iterates are diagonal: 
\begin{equation}
    \boldsymbol{\kappa}^{1,1} = \lim_{N \to \infty} \frac{1}{N}  \begin{pmatrix}
    \left\Vert f^0_{\overrightarrow{e}_1}((\bx^0_{\overrightarrow{e}})_{\overrightarrow{e}:\overrightarrow{e} \rightarrow \overrightarrow{e}_1}) \right\Vert^2 & & 0 \\
    & \ddots &  \\
     0 & & \left\Vert f^0_{\overleftarrow{e}_m}((\bx^0_{\overrightarrow{e}})_{\overrightarrow{e}:\overrightarrow{e} \rightarrow \overleftarrow{e}_m}) \right\Vert^2  
    \end{pmatrix}
\end{equation}
and 
\begin{equation*}
    \boldsymbol{\kappa}^{t+1, s+1} = \boldsymbol{\kappa}^{s+1, t+1} = \lim_{N \to \infty} \frac{1}{N}  \begin{pmatrix}
    \E  f^s_{\overrightarrow{e}_1}(\dots)^\top f^t_{\overrightarrow{e}_1}(\dots)  & & 0 \\
    & \ddots &  \\
     0 & & \E  f^s_{\overleftarrow{e}_m}(\dots)^\top f^t_{\overleftarrow{e}_m}(\dots) 
    \end{pmatrix} \, .
\end{equation*}
Let $\bZ^t\in \R^{N\times q}$ be the variable from Definition \ref{def:se-sym}. Decompose 
\begin{align*}
    \bZ^t= 
    \begin{pmatrix}
    \bZ^t_{\overrightarrow{e}_1} & & * \\
    & \ddots & \\
    * & & \bZ^t_{\overleftarrow{e}_m}
    \end{pmatrix} \, .
\end{align*}
where $\bZ^t_{(v,w)} \in \R^{n_w}$. The diagonal structure of the state evolution iterates means that $\bZ^t_{\overrightarrow{e}}$ and $\bZ^t_{\overrightarrow{e}'}$ are independent when $\overrightarrow{e} \neq \overrightarrow{e}'$. We thus find that 
\begin{align*}
    \boldsymbol{\kappa}^{s,t} = \begin{pmatrix}
    \boldsymbol{\kappa}^{s,t}_{\overrightarrow{e}_1} & & 0 \\
    & \ddots &  \\
     0 & & \boldsymbol{\kappa}^{s,t}_{\overleftarrow{e}_m} \, ,
    \end{pmatrix}
\end{align*}
where the $\boldsymbol{\kappa}^{s,t}_{\overrightarrow{e}}$ are those defined in Section \ref{sec:se-graph} and the variables $\bZ^t_{\overrightarrow{e}} $ are the same as those defined in Section \ref{sec:se-graph}. 

These elements show that Theorem \ref{thm:graph-AMP} follows from the application of Theorem \ref{thm:symmetric}.
 
\section{Proof of Theorem \ref{thm:symmetric}}
\label{proof:sym_AMP}
Once the concentration lemmas of Appendix \ref{app:sec_conc} are established for matrix valued-variables, the proof follows closely that of \cite{berthier2020state}. We include the main steps (with minor changes) for completeness nonetheless. \\
As an intermediate step, we introduce the following AMP iteration initialized with $X^{0} \in \mathbb{R}^{N \times q}$ :
\begin{align}
\bX^{t+1} &= \mathbf{A}\bM^{t}-\bM^{t-1}(\bb^{t})^{\top} && \in \R^{N\times q} \label{eq:sym-esp-1}\, \\
{\bM}^{t} &=f^{t}({\bX}^{t}) && \in \R^{N\times q} \, , \\ 
\bb_t &= \frac{1}{N} \mathbb{E}\left[\sum_{i=1}^N \frac{\partial f^t_i}{\partial \bZ_i}(\bZ^t)\right] && \in 
\R^{q\times q} \, . \label{eq:sym-esp-3}
\end{align}
where the Onsager term has been replaced by the expectation in Eq.\eqref{eq:emp_ons} using the state evolution recursion, i.e., $\bZ^{t} \in \mathbb{R}^{N \times q} \sim \mathbf{N}(0, \boldsymbol{\kappa}_{t,t}\otimes \mathbf{I}_{N})$. We denote this recursion with the shorthand $\{\bX^{t},\bM^{t} \vert f^{t},\bX^{0}\}$. The following lemma is an analog of Theorem \ref{thm:symmetric} for the iteration \eqref{eq:sym-esp-1}-\eqref{eq:sym-esp-3}.
\begin{lemma}
\label{lemma:symmetric_asy}
Define, as above, $\bZ^0 = \bX^0$ and $(\bZ^1, \dots, \bZ^t) \in (\R^{N \times q})^t$ a centered Gaussian vector of covariance $ \begin{pmatrix} \boldsymbol{\kappa}^{1,1} & \cdots & \boldsymbol{\kappa}^{1,t} \\
\vdots & \ddots & \vdots \\
\boldsymbol{\kappa}^{t,1} & \cdots & \boldsymbol{\kappa}^{t,t}
\end{pmatrix}\otimes \mathbf{I}_N $. Then for any sequence $\Phi_N : (\R^{N \times q})^{t+1} \to \R$ of pseudo-Lipschitz functions, the iterates of \eqref{eq:sym-esp-1}-\eqref{eq:sym-esp-3} satisfy
\begin{equation*}
  \Phi_N\left(\bX^0, \bX^1, \dots, \bX^t\right) \approxP \E\left[ \Phi_N\left(\bZ^0, \bZ^1, \dots, \bZ^t\right) \right] \, . 
\end{equation*}
\end{lemma}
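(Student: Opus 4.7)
The plan is to adapt Bolthausen's iterative Gaussian-conditioning technique, extending the vector-valued, non-separable treatment of \cite{berthier2020state} to matrix-valued iterates. I would proceed by strong induction on $t$. The inductive hypothesis at step $t$ bundles together (i) the claimed pseudo-Lipschitz approximation of $(\bX^0,\dots,\bX^t)$ by the Gaussian field $(\bZ^0,\dots,\bZ^t)$; (ii) concentration of the empirical Gram $\frac{1}{N}(\bM^{[t]})^\top\bM^{[t]}$, where $\bM^{[t]}=[\bM^0|\cdots|\bM^{t-1}]$, at a deterministic block matrix dictated by the SE iterates $\boldsymbol{\kappa}^{r,r'}$; and (iii) a high-probability lower bound on the smallest eigenvalue of that Gram, which makes the conditioning step below well-defined.

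For the inductive step, let $\mathfrak{S}_t$ be the sigma-algebra generated by the iterates up to time $t$. The relations $\bA\bM^s=\bX^{s+1}+\bM^{s-1}(\bb^s)^\top$ for $s=0,\dots,t-1$ are linear constraints on the GOE matrix $\bA$, so Gaussian conditioning in the GOE ensemble yields a decomposition
\begin{equation*}
\bA\vert_{\mathfrak{S}_t}\stackrel{d}{=}\mathbf{E}_t+\bP^\perp_{\bM^{[t]}}\tilde{\bA}\bP^\perp_{\bM^{[t]}},
\end{equation*}
where $\mathbf{E}_t$ is $\mathfrak{S}_t$-measurable and explicit in terms of the $\bY^{s+1}:=\bX^{s+1}+\bM^{s-1}(\bb^s)^\top$ and the inverse Gram, and $\tilde{\bA}$ is an independent GOE matrix. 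Plugging this into $\bX^{t+1}=\bA\bM^t-\bM^{t-1}(\bb^t)^\top$ splits the update into an orthogonal Gaussian piece $\bP^\perp_{\bM^{[t]}}\tilde{\bA}\bP^\perp_{\bM^{[t]}}\bM^t$, a deterministic combination $\sum_{s<t}\bM^s\boldsymbol{\alpha}^{s,t}$ arising from $\mathbf{E}_t\bM^t$, and the Onsager term. A matrix Stein identity applied to the SE covariance shows that $(\bb^t)^\top$ is exactly the value that cancels $\boldsymbol{\alpha}^{t-1,t}$ to leading order, and that the remaining $\boldsymbol{\alpha}^{s,t}$ for $s<t-1$ identify with the SE cross-covariances $\boldsymbol{\kappa}^{s+1,t+1}$---which is what must happen for the Gaussian structure to close. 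The orthogonal piece, by hypotheses (ii)-(iii), is asymptotically Gaussian with covariance $\boldsymbol{\kappa}^{t+1,t+1}\otimes\mathbf{I}_N$; the rank-$tq$ contribution lost to the projector is negligible against pseudo-Lipschitz tests. Converting distributional identities to in-probability statements uses the matrix-valued concentration tools of Appendix \ref{app:sec_conc}.

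The main obstacle is the interaction between non-separability and matrix structure. The Stein identity underlying the Onsager cancellation takes the form $\frac{1}{N}\E[(\bZ^t)^\top f^t(\bZ^t)]=\boldsymbol{\kappa}^{t,t}(\bb^t)^\top+o(1)$ and, because $f^t$ is only pseudo-Lipschitz rather than $C^1$, requires a mollification argument that exploits the uniformity built into assumptions \ref{it:ass-sym-4}-\ref{it:ass-sym-6}. Equally delicate is securing (iii): the simple pointwise arguments available in the separable setting are no longer applicable, so the lower bound on $\sigma_{\min}\bigl(\frac{1}{N}(\bM^{[t]})^\top\bM^{[t]}\bigr)$ must be read off the SE covariance itself, via the generic-perturbation trick of \cite{berthier2020state}---perturb each $f^s$ by an independent Gaussian matrix of small variance so that non-degeneracy becomes automatic, run the induction for the perturbed iteration, and then remove the perturbation at the end using continuity of pseudo-Lipschitz test functions.
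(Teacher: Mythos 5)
Your proof is viable in outline, but it takes a genuinely different route from the paper's. You condition the GOE matrix directly on the $\sigma$-algebra generated by the AMP iterates, in the classical style of Bolthausen and of \cite{bayati2011dynamics,javanmard2013state}, so the Onsager cancellation (via the matrix Stein identity), the identification of the cross-covariances, and the control of the inverse Gram of the past iterates must all be carried inside a single induction; non-degeneracy is then forced by the Gaussian-perturbation device. The paper never conditions the original recursion: it introduces the auxiliary LoAMP iteration \eqref{Lo-AMP}, whose update already contains the projectors $\mathbf{P}^{\perp}_{\boldsymbol{\mathcal{Q}}_{t-1}}$, so its conditional law is exact and clean (Lemma \ref{lemma:cond_dist_LoAMP}); state evolution is proved for LoAMP (Lemma \ref{lemma:LoAMP_SE}), transferred to the iteration \eqref{eq:sym-esp-1}--\eqref{eq:sym-esp-3} by the comparison Lemmas \ref{lemma:LoAMP_approx1} and \ref{lemma:LoAMP_approx2} (this is where Lemma \ref{matrix-stein} enters), and the non-degeneracy assumption is removed only at the end by perturbation (Lemmas \ref{lemma:SE_pert_def}--\ref{lemma:conv_AMP}). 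Your route is shorter and closer to the separable matrix-valued proof of \cite{javanmard2013state}; the paper's route buys modularity---the conditioning step is valid by construction and the in-probability error control is isolated in the comparison lemmas---which is what keeps the non-separable, matrix-valued bookkeeping manageable. One point in your sketch needs correcting: the orthogonal piece of your decomposition is asymptotically Gaussian with covariance given by the limiting Gram of the component of $\bM^t$ orthogonal to the span of the previous iterates, not $\boldsymbol{\kappa}^{t+1,t+1}$; the covariance $\boldsymbol{\kappa}^{t+1,t+1}$ is recovered only after adding the measurable combination of earlier iterates, which by the induction hypothesis mimics the same combination of the $\bZ^s$ (this is exactly the covariance-matching step that closes the proof of Lemma \ref{lemma:LoAMP_SE}), and attributing $\boldsymbol{\kappa}^{t+1,t+1}$ to the orthogonal part while also adding that combination would double-count. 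Finally, the mollification you invoke for Stein's identity is unnecessary: pseudo-Lipschitz maps are locally Lipschitz, hence almost everywhere differentiable, which is the regularity at which Lemma \ref{matrix-stein} is stated and which the definition of the Onsager term \eqref{eq:sym-esp-3} already presupposes.
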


\subsection{Proof outline and intermediate lemmas}
 The main idea is to analyze an iteration that behaves well under Gaussian conditioning and that asymptotically approximates \eqref{eq:sym-esp-1}-\eqref{eq:sym-esp-3}.
\paragraph{Matrix LoAMP.} We consider the following iteration, a matrix-valued version of the LoAMP iteration introduced in \cite{berthier2020state}. The sequence of functions $f^{t}$ and initialization $\bX^{0}$ are the same as for the AMP orbit $\{\bX^{t},\bM^{t}\vert f^{t},\bX^{0}\}$. Initialize $\bQ^{0} = f^{0}(\bX^{0})$, and recursively define
\begin{align}
\label{Lo-AMP}
    \bH^{t+1} &= \mathbf{P}^{\perp}_{\boldsymbol{\mathcal{Q}}_{t-1}}\mathbf{A}\mathbf{P}^{\perp}_{\boldsymbol{\mathcal{Q}}_{t-1}}\bQ^{t}+\boldsymbol{\mathcal{H}}_{t-1}\boldsymbol{\alpha}^{t} \quad \in \mathbb{R}^{N \times q} \, , \\
    \bQ^{t} &= f^{t}(\bH^{t}) \quad \in \mathbb{R}^{N \times q} \, ,
\end{align}

where at each step, the matrices $\boldsymbol{\mathcal{Q}}_{t-1}, \boldsymbol{\alpha}^{t}, \boldsymbol{\mathcal{H}}_{t-1}$ are defined as

\begin{align}
    \boldsymbol{\mathcal{Q}}_{t-1} &= \left[\bQ^{0} \vert \bQ^{1} \vert ... \vert \bQ^{t-1}\right] \quad \in \mathbb{R}^{N \times tq} \, ,  \\
    \boldsymbol{\alpha}^{t} &= (\boldsymbol{\mathcal{Q}}_{t-1}^{\top}\boldsymbol{\mathcal{Q}}_{t-1})^{-1}\boldsymbol{\mathcal{Q}}^{\top}_{t-1}\bQ^{t} \quad \in \mathbb{R}^{tq \times q} \, , \\
    \boldsymbol{\mathcal{H}}_{t-1} &= \left[\bH^{1} \vert \bH^{2} \vert ... \vert \bH^{t}\right] \quad \in \mathbb{R}^{N \times tq} \, , 
\end{align}
 $\mathbf{P}_{\boldsymbol{\mathcal{Q}}_{t-1}} = \boldsymbol{\mathcal{Q}}_{t-1}(\boldsymbol{\mathcal{Q}}_{t-1}^{\top}\boldsymbol{\mathcal{Q}}_{t-1})^{-1}\boldsymbol{\mathcal{Q}}_{t-1}^{\top}$ is the orthogonal projector on the subspace spanned by the columns of $\boldsymbol{\mathcal{Q}}_{t-1}$, and $\mathbf{P}^{\perp}_{\boldsymbol{\mathcal{Q}}_{t-1}} = \mathbf{I}_{N}-\mathbf{P}_{\boldsymbol{\mathcal{Q}}_{t-1}}$. We denote this recursion with the shorthand $\{\bH^{t},\bQ^{t} \vert f^{t},\bX^{0}\}$.
The inverse $(\boldsymbol{\mathcal{Q}}_{t-1}^{\top}\boldsymbol{\mathcal{Q}}_{t-1})^{-1}$ in the projector may not always be properly defined if $\boldsymbol{\mathcal{Q}}_{t-1}$ is either rank-deficient or has vanishing singular values. We thus introduce the following assumption as in \cite{berthier2020state}, which ensures the proper definition of the projector. \\ 
\quad \\
\begin{assumption}[Non-degeneracy]
We say that the LoAMP iterates satisfy the non-degeneracy assumption if :
\begin{itemize}
    \item almost surely, for all $t$ and all $N \geqslant t$, $\boldsymbol{\mathcal{Q}}_{t-1}$ has full column rank.
    \item for all $t $, there exists some constant $c_{t}>0$---independent of N---such that almost surely, there exists $N_{0}$ (random) such that, for $N \geqslant N_{0}$, $\sigma_{\min}(\boldsymbol{\mathcal{Q}}_{t-1})/\sqrt{N}\geqslant c_{t} >0$.
\end{itemize}
\end{assumption}
We now study the LoAMP iteration, starting with the non-degenerate case.
\paragraph{The non-degenerate case.}
The following lemma gives the distribution of the Long-AMP iterates when conditioned on the previous ones. 
\begin{lemma} 
\label{lemma:cond_dist_LoAMP}
Consider the LoAMP iteration $\{\bH^{t},\bQ^{t} \vert f_{t}, \bX^{0}\}$ and assume it satisfies the non-degeneracy assumption. For any $t \in \mathbb{N}$, let $\mathfrak{S}_{t}$ be the $\sigma$-algebra generated by the collection of random variables $\bH^{1},\bH^{2}, ... , \bH^{t}$. Then
\begin{equation}
\bH^{t+1}\vert_{\mathfrak{S}_{t}} \stackrel{d}= \mathbf{P}^{\perp}_{\mathcal{Q}_{t-1}}\tilde{\mathbf{A}}\mathbf{P}^{\perp}_{\mathcal{Q}_{t-1}}\bQ^{t}+\boldsymbol{\mathcal{H}}_{t-1}\boldsymbol{\alpha}^{t} 
\end{equation}
where $\tilde{\mathbf{A}}$ is a copy of $\mathbf{A}$ independent of $\mathfrak{S}_{t}$.
\end{lemma}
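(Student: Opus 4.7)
The plan is to apply the classical Gaussian conditioning argument of Bolthausen, extended to the GOE setting. The key fact to exploit is that, when $\mathbf{A}$ is a GOE matrix and $\bQ$ is a fixed (deterministic) matrix, the residual block $\mathbf{P}^{\perp}_{\bQ}\mathbf{A}\mathbf{P}^{\perp}_{\bQ}$ is independent of the linear observations $\mathbf{A}\bQ$ and has the same law as $\mathbf{P}^{\perp}_{\bQ}\tilde{\mathbf{A}}\mathbf{P}^{\perp}_{\bQ}$ for any independent GOE copy $\tilde{\mathbf{A}}$. The lemma is then obtained by freezing all $\mathfrak{S}_t$-measurable quantities and substituting $\tilde{\mathbf{A}}$ for $\mathbf{A}$ inside the defining equation of $\bH^{t+1}$.

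First I would show by induction on $s \leq t$ that the iterate $\bH^s$ is a measurable function of $\mathbf{A}\mathcal{Q}_{s-1}$ together with the initial data $\bX^0$. Indeed, from the identity
\begin{equation*}
\mathbf{P}^{\perp}_{\mathcal{Q}_{s-2}}\mathbf{A}\mathbf{P}^{\perp}_{\mathcal{Q}_{s-2}}\bQ^{s-1} = \mathbf{P}^{\perp}_{\mathcal{Q}_{s-2}}\mathbf{A}\bQ^{s-1} - \mathbf{P}^{\perp}_{\mathcal{Q}_{s-2}}\mathbf{A}\mathcal{Q}_{s-2}(\mathcal{Q}_{s-2}^\top\mathcal{Q}_{s-2})^{-1}\mathcal{Q}_{s-2}^\top \bQ^{s-1},
\end{equation*}
each iterate $\bH^s$ depends on $\mathbf{A}$ only through the columns $\mathbf{A}\bQ^r$ for $r \leq s-1$, so $\mathfrak{S}_t \subseteq \sigma(\mathbf{A}\mathcal{Q}_{t-1})$. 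The non-degeneracy assumption ensures that all projectors involved are well defined throughout the induction.

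Next I would pick an orthonormal basis of $\R^N$ adapted to the decomposition $\R^N = \mathrm{Im}(\mathcal{Q}_{t-1}) \oplus \mathrm{Im}(\mathcal{Q}_{t-1})^{\perp}$ and write $\mathbf{A}$ in this basis as a $2 \times 2$ symmetric block matrix. Because $\mathbf{A}$ is GOE, the three independent (up to symmetry) blocks are jointly Gaussian with independent entries, and only the top-left and off-diagonal blocks determine $\mathbf{A}\mathcal{Q}_{t-1}$. Hence the bottom-right block $\mathbf{P}^{\perp}_{\mathcal{Q}_{t-1}}\mathbf{A}\mathbf{P}^{\perp}_{\mathcal{Q}_{t-1}}$ is independent of $\sigma(\mathbf{A}\mathcal{Q}_{t-1})$ and therefore of $\mathfrak{S}_t$, and has the same law as $\mathbf{P}^{\perp}_{\mathcal{Q}_{t-1}}\tilde{\mathbf{A}}\mathbf{P}^{\perp}_{\mathcal{Q}_{t-1}}$. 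Since the matrices $\mathcal{Q}_{t-1}$, $\bQ^t$, $\boldsymbol{\mathcal{H}}_{t-1}$ and $\boldsymbol{\alpha}^t$ are all $\mathfrak{S}_t$-measurable, substituting this distributional identity into the defining equation of $\bH^{t+1}$ and freezing the $\mathfrak{S}_t$-measurable quantities yields the claimed conditional identity.

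The main obstacle is the symmetry constraint. Because $\mathbf{A}$ is GOE, conditioning on $\mathbf{A}\mathcal{Q}_{t-1}$ automatically conditions on $\mathcal{Q}_{t-1}^\top \mathbf{A}$, so one must carefully check that no information leaks into the residual block through this redundancy. The cleanest way is the basis-adapted block decomposition outlined above, where the independence of distinct upper-triangular GOE entries makes the argument transparent. All remaining steps are linear-algebraic manipulations enabled by the non-degeneracy of $\mathcal{Q}_{t-1}$.
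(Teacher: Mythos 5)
Your proposal is correct and follows essentially the same route as the paper: an induction showing that $\mathfrak{S}_{t}$ is generated by the linear observations $\mathbf{A}\boldsymbol{\mathcal{Q}}_{t-1}$, followed by the Gaussian conditioning step that replaces $\mathbf{A}$ by $\mathbb{E}[\mathbf{A}\vert\mathfrak{S}_{t}]$ plus the independent residual $\mathbf{P}^{\perp}_{\boldsymbol{\mathcal{Q}}_{t-1}}\tilde{\mathbf{A}}\mathbf{P}^{\perp}_{\boldsymbol{\mathcal{Q}}_{t-1}}$, with all $\mathfrak{S}_{t}$-measurable quantities frozen. The only difference is presentational: you derive the conditional law of the GOE matrix via an explicit basis-adapted block decomposition, whereas the paper cites this decomposition from \cite{bayati2011dynamics,javanmard2013state}.
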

The next lemma characterizes the high-dimensional geometry and distribution of the LoAMP iterates, notably that they verify the state evolution equations.
\begin{lemma}
\label{lemma:LoAMP_SE}
Consider the LoAMP recursion $\{\bH^{t},\bQ^{t} \vert f_{t}, \bX^{0}\}$ and suppose it satisfies the non-degeneracy assumption. Then
\begin{enumerate}[label=\alph*)]
\item for all $0 \leqslant s,r \leqslant t$ ,
\begin{equation}
    \frac{1}{N} (\bH^{s+1})^{\top}\bH^{r+1}  \stackrel{P}\simeq \frac{1}{N} (\bQ^{s})^{\top}\bQ^{r} \quad \in \mathbb{R}^{q \times q} \, ,
\end{equation}
\item for any $t \in \mathbb{N}$, for any sequence of uniformly order-k pseudo-Lipschitz functions $\{\phi_{N}:(\mathbb{R}^{N \times q})^{t+2} \to \mathbb{R}\}$,
\begin{equation}
    \Phi_{N}(\bX^{0},\bH^{1},...,\bH^{t+1})\stackrel{P}\simeq \mathbb{E}[\Phi_{N}(\bX^{0},\bZ^{1},...,\bZ^{t+1})]
\end{equation}
where 
\begin{equation}
    (\bZ^{1},...,\bZ^{t+1}) \sim \mathbf{N}(0, (\boldsymbol{\kappa}^{s,r})_{s,r\leqslant t} \otimes \mathbf{I}_{N})
\end{equation}
 \end{enumerate}
\end{lemma}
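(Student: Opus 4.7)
The plan is a strong induction on $t$, combining the conditional-distribution identity of Lemma \ref{lemma:cond_dist_LoAMP} with concentration of quadratic forms in GOE entries and with the pseudo-Lipschitz concentration tools developed in Appendix \ref{app:sec_conc}. At each step, both parts (a) and (b) are proved simultaneously, as (a) furnishes precisely the statistics required to identify the conditional Gaussian law that drives (b), while (b) is used in the next step to control the new coefficients $\boldsymbol\alpha^{t+1}$ and projectors $\bP^{\perp}_{\boldsymbol{\mathcal{Q}}_{t}}$.

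\textbf{Base case $t=0$.} Here $\bQ^{0}=f^{0}(\bX^{0})$ is deterministic, and $\bH^{1}=\mathbf{A}\bQ^{0}$ is exactly Gaussian with covariance $\tfrac{1}{N}(\bQ^{0})^{\top}\bQ^{0}\otimes \mathbf{I}_{N}$. Assumption \ref{it:ass-sym-4} gives convergence of this covariance to $\boldsymbol{\kappa}^{1,1}$, and standard concentration of Gaussian quadratic forms yields both (a) (via $\tfrac{1}{N}(\bH^{1})^{\top}\bH^{1}\approxP \tfrac{1}{N}(\bQ^{0})^{\top}\bQ^{0}$) and (b) (pseudo-Lipschitz concentration against $\bZ^{1}\sim \mathbf{N}(0,\boldsymbol{\kappa}^{1,1}\otimes \mathbf{I}_{N})$).

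\textbf{Inductive step.} Suppose (a) and (b) hold through iteration $t$. Condition on $\mathfrak{S}_{t}$ and apply Lemma \ref{lemma:cond_dist_LoAMP}:
\begin{equation*}
\bH^{t+1}\vert_{\mathfrak{S}_{t}} \stackrel{d}{=} \mathbf{P}^{\perp}_{\boldsymbol{\mathcal{Q}}_{t-1}}\tilde{\mathbf{A}}\mathbf{P}^{\perp}_{\boldsymbol{\mathcal{Q}}_{t-1}}\bQ^{t} + \boldsymbol{\mathcal{H}}_{t-1}\boldsymbol{\alpha}^{t}.
\end{equation*}
The first step is to identify the asymptotic value of $\boldsymbol{\alpha}^{t}$. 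By the induction hypothesis (a), $\tfrac{1}{N}\boldsymbol{\mathcal{Q}}_{t-1}^{\top}\boldsymbol{\mathcal{Q}}_{t-1}$ converges in probability to the $tq\times tq$ Gram matrix $\boldsymbol{K}_{t}=(\boldsymbol{\kappa}^{s,r})_{s,r\leq t}$, and $\tfrac{1}{N}\boldsymbol{\mathcal{Q}}_{t-1}^{\top}\bQ^{t}$ to the column $\boldsymbol{k}_{t}=(\boldsymbol{\kappa}^{s,t+1})_{s\leq t}$; the non-degeneracy assumption ensures $\boldsymbol{K}_{t}$ is invertible and $\sigma_{\min}(\boldsymbol{\mathcal{Q}}_{t-1})$ is bounded below, so $\boldsymbol{\alpha}^{t}\approxP \boldsymbol{K}_{t}^{-1}\boldsymbol{k}_{t}$, i.e., the Gaussian regression coefficients of $\bZ^{t+1}$ on $\bZ^{1},\dots,\bZ^{t}$.

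\textbf{Gaussian part.} Conditional on $\mathfrak{S}_{t}$, the term $\mathbf{P}^{\perp}_{\boldsymbol{\mathcal{Q}}_{t-1}}\tilde{\mathbf{A}}\mathbf{P}^{\perp}_{\boldsymbol{\mathcal{Q}}_{t-1}}\bQ^{t}$ is a mean-zero Gaussian matrix whose covariance structure is driven by $\tfrac{1}{N}(\bQ^{t})^{\top}\mathbf{P}^{\perp}_{\boldsymbol{\mathcal{Q}}_{t-1}}\bQ^{t}$. Using (a) and the identity
\begin{equation*}
\tfrac{1}{N}(\bQ^{t})^{\top}\mathbf{P}^{\perp}_{\boldsymbol{\mathcal{Q}}_{t-1}}\bQ^{t} = \tfrac{1}{N}(\bQ^{t})^{\top}\bQ^{t} - (\boldsymbol{\alpha}^{t})^{\top}\tfrac{1}{N}\boldsymbol{\mathcal{Q}}_{t-1}^{\top}\bQ^{t},
\end{equation*}
this concentrates on $\boldsymbol{\kappa}^{t+1,t+1}-\boldsymbol{k}_{t}^{\top}\boldsymbol{K}_{t}^{-1}\boldsymbol{k}_{t}$, precisely the conditional covariance of $\bZ^{t+1}$ given $\bZ^{1},\dots,\bZ^{t}$. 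Since $\mathbf{P}^{\perp}_{\boldsymbol{\mathcal{Q}}_{t-1}}$ differs from $\mathbf{I}_{N}$ by a rank-$tq$ (hence $O(1)$) correction, the concentration lemmas of Appendix \ref{app:sec_conc}, applied to quadratic forms of GOE matrices against pseudo-Lipschitz test functions, imply that conditionally on $\mathfrak{S}_{t}$ the iterate $\bH^{t+1}$ is indistinguishable from a Gaussian with mean $\boldsymbol{\mathcal{H}}_{t-1}\boldsymbol{\alpha}^{t}$ and the above conditional covariance. Combined with the induction hypothesis (b) applied to the $\mathfrak{S}_{t}$-measurable part, this is exactly the joint law of $(\bZ^{1},\dots,\bZ^{t+1})$, yielding (b) at step $t+1$; taking $\Phi_{N}$ of the form $\tfrac{1}{N}\langle \bH^{s+1},\bH^{r+1}\rangle$ (entrywise) then yields (a) at step $t+1$, where the key identity $\tfrac{1}{N}(\bH^{t+1})^{\top}\bH^{t+1}\approxP \tfrac{1}{N}(\bQ^{t})^{\top}\bQ^{t}$ comes from reassembling the conditional covariance with the mean contribution $(\boldsymbol{\alpha}^{t})^{\top}\boldsymbol{\mathcal{H}}_{t-1}^{\top}\boldsymbol{\mathcal{H}}_{t-1}\boldsymbol{\alpha}^{t}/N$ and using (a) at step $t$.

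\textbf{Main obstacle.} The delicate part is controlling the interaction of the rank-deficient projector $\mathbf{P}^{\perp}_{\boldsymbol{\mathcal{Q}}_{t-1}}$ with the random GOE matrix $\tilde{\mathbf{A}}$ in the matrix-valued setting: one needs uniform (in $N$) concentration of $q$-block quadratic forms $\tfrac{1}{N}(\mathbf{u})^{\top}\tilde{\mathbf{A}}\mathbf{P}^{\perp}\tilde{\mathbf{A}}\mathbf{v}$ where $\mathbf{u},\mathbf{v}$ have $q$ columns and depend, through the induction, on previous iterates. The non-degeneracy assumption keeps the projector well-conditioned, but verifying that the error terms vanish under only pseudo-Lipschitz (not Lipschitz) regularity and non-separable non-linearities is what requires the matrix-valued extension of the concentration lemmas of Appendix \ref{app:sec_conc}, mirroring the scalar case treated in \cite{berthier2020state}.
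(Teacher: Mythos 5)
Your sketch follows essentially the same route as the paper's proof: an induction combining the Gaussian conditioning identity of Lemma \ref{lemma:cond_dist_LoAMP}, negligibility of the rank-$tq$ projector corrections, and the matrix pseudo-Lipschitz concentration lemmas, with your conditional-covariance/regression-coefficient identification being a rephrasing of the paper's explicit decomposition $\bZ^{t+1}=\bZ^{t+1}_{\perp}+\sum_{i}\bZ^{i}(\boldsymbol{\alpha}^{t,*})_{i}$ (and deriving (a) from (b) via quadratic test functions is a harmless reordering). One small imprecision: $\mathbf{A}\bQ^{0}$ is not exactly $\mathbf{N}\left(0,\tfrac{1}{N}(\bQ^{0})^{\top}\bQ^{0}\otimes\mathbf{I}_{N}\right)$ because the GOE symmetry adds an extra low-rank term to its covariance, but as shown in Lemma \ref{conv_lemmas_app} this correction vanishes asymptotically, so your base case goes through unchanged.
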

The next two lemmas show that the iterates of the Long-AMP recursion are arbitrary close to those of the original symmetric AMP in the high-dimensional limit.
\begin{lemma}
\label{lemma:LoAMP_approx1}
    For each iteration t of the LoAMP iteration $\{\bH^{t},\bQ^{t} \vert f^{t}, \bX^{0}\}$, consider the recursion
    \begin{align}
    \label{approx_AMP}
        \hat{\bH}^{t+1} = \mathbf{A}\bQ^{t}-\bQ^{t-1}(\bb^{t})^{\top} \quad \mbox{where} \quad \bb^t &= \frac{1}{N}\mathbb{E}\left[ \sum_{i=1}^N \frac{\partial f^t_i}{\partial \bZ_i}(\bZ^t)\right] \quad \in 
\R^{q\times q}  \\
        \bQ^{t} = f^{t}(\bH^{t})
    \end{align}
where we take $\hat{\bH}^{1} = \mathbf{A}\bQ^{0}$ and $\bZ^{t} \sim \mathbf{N}(0,\mathbf{K}_{t,t}\otimes\mathbf{I}_{N})$with $\mathbf{K}_{t,t}$ defined by the state evolution. Then for any $t \in \mathbb{N}$, $\frac{1}{\sqrt{N}}\norm{\bH^{t+1}-\hat{\bH}^{t+1}}_{F} \xrightarrow[N \to \infty]{P} 0$.
\end{lemma}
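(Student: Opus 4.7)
The plan is to proceed by induction on the iteration index $t$, leveraging the projection structure of the LoAMP iterate together with the explicit form of the Onsager term to match the symmetric AMP update $\hat{\bH}^{t+1}$.

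\textbf{Base case ($t=0$).} By convention $\boldsymbol{\mathcal{Q}}_{-1}$ is empty, so $\mathbf{P}^\perp_{\boldsymbol{\mathcal{Q}}_{-1}} = \mathbf{I}_N$ and the history correction $\boldsymbol{\mathcal{H}}_{-1}\boldsymbol{\alpha}^0$ vanishes. The LoAMP update reduces to $\bH^1 = \bA\bQ^0 = \hat{\bH}^1$ and the claim holds exactly, with difference zero.

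\textbf{Inductive step.} Write $\mathbf{P} = \mathbf{P}_{\boldsymbol{\mathcal{Q}}_{t-1}}$, expand $\mathbf{P}^\perp = \mathbf{I}_N - \mathbf{P}$ in the LoAMP definition, and subtract $\hat{\bH}^{t+1} = \bA\bQ^t - \bQ^{t-1}(\bb^t)^\top$ to obtain
$$
\bH^{t+1} - \hat{\bH}^{t+1} = T_1 + T_2,
$$
where
$$
T_1 = -\mathbf{P}\bA\bQ^t + \mathbf{P}\bA\mathbf{P}\bQ^t, \qquad T_2 = \boldsymbol{\mathcal{H}}_{t-1}\boldsymbol{\alpha}^t - \bA\mathbf{P}\bQ^t + \bQ^{t-1}(\bb^t)^\top.
$$
The term $T_1$ has columns living in the range of $\mathbf{P}$, a subspace of dimension at most $tq$ which is negligible compared to $N$. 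Using the conditional representation of $\bA$ provided by Lemma \ref{lemma:cond_dist_LoAMP} and standard GOE operator-norm concentration, together with the control on $\|\bQ^t\|_F/\sqrt{N}$ inherited from Lemma \ref{lemma:LoAMP_SE} and the pseudo-Lipschitz regularity of $f^t$, one shows that $\tfrac{1}{\sqrt{N}}\|T_1\|_F \xrightarrow[]{P} 0$.

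For $T_2$, substitute $\mathbf{P}\bQ^t = \boldsymbol{\mathcal{Q}}_{t-1}\boldsymbol{\alpha}^t = \sum_{s=0}^{t-1}\bQ^s (\boldsymbol{\alpha}^t)_s$, where $(\boldsymbol{\alpha}^t)_s \in \R^{q \times q}$ is the $s$-th block of $\boldsymbol{\alpha}^t$. Applying the inductive hypothesis $\bA\bQ^s \approxP \bH^{s+1} + \bQ^{s-1}(\bb^s)^\top$ for each $s \leq t-1$ (with the convention $\bQ^{-1}(\bb^0)^\top = 0$) yields
$$
\bA\mathbf{P}\bQ^t \approxP \boldsymbol{\mathcal{H}}_{t-1}\boldsymbol{\alpha}^t + \sum_{s=1}^{t-1}\bQ^{s-1}(\bb^s)^\top(\boldsymbol{\alpha}^t)_s,
$$
so that, after division by $\sqrt{N}$, $T_2$ reduces to
$$
\bQ^{t-1}(\bb^t)^\top - \sum_{s=1}^{t-1}\bQ^{s-1}(\bb^s)^\top(\boldsymbol{\alpha}^t)_s + o_P(\sqrt{N}).
$$
To show this residual is negligible, use Lemma \ref{lemma:LoAMP_SE} to replace the empirical Gram matrices $\tfrac{1}{N}\boldsymbol{\mathcal{Q}}_{t-1}^\top\boldsymbol{\mathcal{Q}}_{t-1}$ and $\tfrac{1}{N}\boldsymbol{\mathcal{Q}}_{t-1}^\top\bQ^t$ defining $\boldsymbol{\alpha}^t$ by the corresponding Gaussian covariance blocks $(\boldsymbol{\kappa}^{r+1,s+1})$ and $(\boldsymbol{\kappa}^{s+1,t+1})$. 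Gaussian integration by parts (Stein's identity) applied to the joint Gaussian vector $(\bZ^1,\ldots,\bZ^t)$ expresses each cross-covariance $\tfrac{1}{N}\mathbb{E}[\bZ^{s\top}f^t(\bZ^t)]$ in terms of $\boldsymbol{\kappa}^{s,t}$ multiplied by the derivative expectation $\tfrac{1}{N}\mathbb{E}[\sum_i \partial f^t_i/\partial \bZ_i(\bZ^t)] = \bb^t$. A direct block-matrix computation, using that $\boldsymbol{\alpha}^t$ is defined by inverting the Gram structure, identifies the sum with $\bQ^{t-1}(\bb^t)^\top$ in the limit, producing the desired cancellation.

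\textbf{Main obstacle.} The crux is the Onsager-cancellation step. Carrying out the Stein's-identity reduction in the matrix-valued setting entails non-trivial $q \times q$ block algebra that does not appear in the scalar case of \cite{berthier2020state}: one must keep track of the precise ordering of the blocks, the right-multiplication by $(\bb^s)^\top$ versus the left action of the projection, and the effect of the pseudo-Lipschitz regularity on every cross-term. Ensuring these approximations hold uniformly in $N$, and that the errors accumulated at each inductive step remain controlled, is the principal technical burden; it requires the full strength of the matrix concentration tools of Appendix \ref{app:sec_conc} together with Lemma \ref{lemma:LoAMP_SE}(a).
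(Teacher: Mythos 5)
Your decomposition of $\bH^{t+1}-\hat{\bH}^{t+1}$ into $T_1=-\mathbf{P}_{\boldsymbol{\mathcal{Q}}_{t-1}}\bA\bQ^t_{\perp}$ and $T_2$ is fine algebraically, but the two sub-claims you make about these terms are both false, and this breaks the proof at its central step. The term $T_1$ does \emph{not} vanish: its columns lie in a low-dimensional subspace, but they have Frobenius norm of order $\sqrt{N}$, because $\bA$ is correlated with $\boldsymbol{\mathcal{Q}}_{t-1}$ and $\bQ^t_{\perp}$ (they were built from $\bA$), so neither Lemma \ref{lemma:cond_dist_LoAMP} (which only replaces the \emph{conditionally independent} part $\mathbf{P}^{\perp}\bA\mathbf{P}^{\perp}$ by a fresh copy) nor the low-rank projection lemma (which requires independence) applies to $\mathbf{P}_{\boldsymbol{\mathcal{Q}}_{t-1}}\bA\bQ^t_{\perp}$. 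Concretely, at $t=1$ with $q=1$ one has
\begin{equation*}
T_1=-\bQ^{0}\,\frac{(\bA\bQ^{0})^{\top}\bQ^{1}_{\perp}}{\norm{\bQ^{0}}_2^{2}}=-\bQ^{0}\,\frac{(\bH^{1})^{\top}\bQ^{1}_{\perp}}{\norm{\bQ^{0}}_2^{2}}\ \stackrel{P}{\simeq}\ -\bQ^{0}\,\bb^{1},
\end{equation*}
by Stein's lemma and the state evolution of Lemma \ref{lemma:LoAMP_SE}, which is not negligible whenever $\bb^{1}\neq 0$. Correspondingly, your claimed cancellation \emph{inside} $T_2$ cannot hold: at $t=1$ the sum $\sum_{s=1}^{t-1}\bQ^{s-1}(\bb^{s})^{\top}(\boldsymbol{\alpha}^{t})_{s}$ is empty while $T_2=\bQ^{0}(\bb^{1})^{\top}$ exactly (since $\bH^{1}=\bA\bQ^{0}$), so $T_2$ alone does not vanish either. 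The true mechanism is that $T_1$ and $T_2$ cancel \emph{against each other}: the projected term $\mathbf{P}_{\boldsymbol{\mathcal{Q}}_{t-1}}\bA\bQ^t_{\perp}$ is precisely the source of the Onsager correction.

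The paper's proof handles this by never discarding $\mathbf{P}_{\boldsymbol{\mathcal{Q}}_{t-1}}\bA\bQ^t_{\perp}$. Instead it uses the exact identity $\bA\boldsymbol{\mathcal{Q}}_{t-1}=\hat{\boldsymbol{\mathcal{H}}}_{t-1}+\left[0_{N\times q}\vert\boldsymbol{\mathcal{Q}}_{t-2}\right]\boldsymbol{\mathcal{B}}_{t-1}^{\top}$ (which follows from the definition of the $\hat{\bH}$ recursion), together with $\left[0_{N\times q}\vert\boldsymbol{\mathcal{Q}}_{t-2}\right]^{\top}\mathbf{P}^{\perp}_{\boldsymbol{\mathcal{Q}}_{t-1}}=0$, to rewrite $\boldsymbol{\mathcal{Q}}_{t-1}^{\top}\bA\bQ^{t}_{\perp}=\hat{\boldsymbol{\mathcal{H}}}_{t-1}^{\top}\bQ^{t}_{\perp}$; it then swaps $\hat{\boldsymbol{\mathcal{H}}}_{t-1}$ for $\boldsymbol{\mathcal{H}}_{t-1}$ using the induction hypothesis, and identifies the limit of $\frac{1}{N}\boldsymbol{\mathcal{H}}_{t-1}^{\top}\bQ^{t}$ via Stein's lemma (Lemma \ref{matrix-stein}) and Lemma \ref{lemma:LoAMP_SE}; the resulting limit lies in the span of $\boldsymbol{\mathcal{Q}}_{t-1}$, on which the projector acts as the identity, and this produces the exact cancellation with $-\bQ^{t-1}(\bb^{t})^{\top}$ and the history terms. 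Your write-up does invoke Stein's identity, but attaches it to the wrong term ($T_2$ instead of $T_1$); as written, both halves of your argument fail, so the proof has a genuine gap rather than a mere presentational difference.
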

\begin{lemma}
\label{lemma:LoAMP_approx2}
    Consider the symmetric AMP iteration $\{\bX^{t},\bM^{t} \vert f_{t}, \bX^{0}\}$ and the LongAMP iteration $\{\bH^{t},\bQ^{t} \vert f_{t},\bX^{0}\}$. Suppose that LongAMP satisfies the non-degeneracy assumption. Then for any $t\in \mathbb{N}$,
    \begin{equation}
        \frac{1}{\sqrt{N}}\norm{\bH^{t+1}-\bX^{t+1}}_{F} \xrightarrow[N \to \infty]{P} 0 \quad \mbox{and} \quad \frac{1}{\sqrt{N}}\norm{\bQ^{t}-\bM^{t}}_{F} \xrightarrow[N \to \infty]{P} 0 
    \end{equation}
\end{lemma}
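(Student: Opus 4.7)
The natural approach is induction on $t$, combining Lemma \ref{lemma:LoAMP_approx1} (which replaces the projected iteration by the genuine AMP update with the expected Onsager coefficient) with the uniform pseudo-Lipschitz assumption on $f^t$ and standard operator-norm control on $\bA$. The base case is immediate: since $\boldsymbol{\mathcal{Q}}_{-1}$ is empty, $\mathbf{P}^{\perp}_{\boldsymbol{\mathcal{Q}}_{-1}} = \mathbf{I}_N$ and $\boldsymbol{\mathcal{H}}_{-1}\boldsymbol{\alpha}^0 = 0$, so $\bH^1 = \bA f^0(\bX^0) = \bX^1$ and $\bQ^0 = f^0(\bX^0) = \bM^0$ exactly.

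For the inductive step, assume that $\tfrac{1}{\sqrt{N}}\|\bH^s - \bX^s\|_F \to 0$ and $\tfrac{1}{\sqrt{N}}\|\bQ^{s-1} - \bM^{s-1}\|_F \to 0$ in probability for every $s \leq t$. I would first establish the closeness at the ``$\bM$-level'' at time $t$: writing $\bQ^t - \bM^t = f^t(\bH^t) - f^t(\bX^t)$ and applying the uniform pseudo-Lipschitz bound of order $k$ gives
\begin{equation*}
\frac{\|\bQ^t - \bM^t\|_F}{\sqrt{N}} \; \leq \; L\left(1 + \Big(\tfrac{\|\bH^t\|_F}{\sqrt{N}}\Big)^{k-1} + \Big(\tfrac{\|\bX^t\|_F}{\sqrt{N}}\Big)^{k-1}\right)\frac{\|\bH^t - \bX^t\|_F}{\sqrt{N}}.
\end{equation*}
The factor $\|\bH^t\|_F/\sqrt{N}$ is $O_P(1)$ by Lemma \ref{lemma:LoAMP_SE} applied to the pseudo-Lipschitz function $\bh \mapsto \|\bh\|_F^2/N$, and $\|\bX^t\|_F/\sqrt{N}$ is then also $O_P(1)$ by the triangle inequality and the induction hypothesis. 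The last factor vanishes in probability by the induction hypothesis, so $\tfrac{1}{\sqrt{N}}\|\bQ^t - \bM^t\|_F \xrightarrow{P} 0$.

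Next I would pass to the ``$\bH$-level'' at time $t+1$. By Lemma \ref{lemma:LoAMP_approx1} it suffices to compare $\hat{\bH}^{t+1} = \bA \bQ^t - \bQ^{t-1}(\bb^t)^\top$ with $\bX^{t+1} = \bA \bM^t - \bM^{t-1}(\bb^t)^\top$, which yields
\begin{equation*}
\hat{\bH}^{t+1} - \bX^{t+1} = \bA(\bQ^t - \bM^t) - (\bQ^{t-1} - \bM^{t-1})(\bb^t)^\top.
\end{equation*}
For the first term, $\|\bA\|_{\mathrm{op}} = O_P(1)$ by standard GOE concentration, so the already-established closeness at the $\bM$-level controls it in $\|\cdot\|_F/\sqrt{N}$. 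For the second term, $\bb^t$ is deterministic and, by assumptions \ref{it:ass-sym-4}--\ref{it:ass-sym-6} together with the pseudo-Lipschitz structure of $f^t$, has bounded operator norm uniformly in $N$; combined with the induction hypothesis at time $t-1$ this also gives $o_P(\sqrt{N})$. A final triangle inequality with Lemma \ref{lemma:LoAMP_approx1} closes the induction.

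The main obstacle, and the only nontrivial analytic point, is verifying that the pseudo-Lipschitz estimate for $f^t$ can be pushed through with the correct $1/\sqrt{N}$ scaling \emph{without} the Frobenius norms of $\bH^t$ or $\bX^t$ blowing up. This requires using Lemma \ref{lemma:LoAMP_SE} to show tightness of $\|\bH^t\|_F^2/N$ and carefully propagating this tightness to $\|\bX^t\|_F/\sqrt{N}$ through the induction, since a priori $\bX^t$ is defined by a different (unprojected) iteration. Everything else is a routine combination of Lemma \ref{lemma:LoAMP_approx1} with operator-norm bounds on $\bA$ and $\bb^t$.
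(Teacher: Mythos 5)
Your proposal is correct and follows essentially the same route as the paper's proof: induction on $t$, the pseudo-Lipschitz bound to transfer closeness from $\bH^t-\bX^t$ to $\bQ^t-\bM^t$, then the decomposition $\hat{\bH}^{t+1}-\bX^{t+1}=\bA(\bQ^t-\bM^t)-(\bQ^{t-1}-\bM^{t-1})(\bb^t)^\top$ controlled via the GOE operator-norm bound and the finiteness of $\bb^t$, concluded by a triangle inequality with Lemma \ref{lemma:LoAMP_approx1}. Your extra care about tightness of $\|\bH^t\|_F/\sqrt{N}$ and $\|\bX^t\|_F/\sqrt{N}$ (via Lemma \ref{lemma:LoAMP_SE} and the induction hypothesis) is a detail the paper handles implicitly, not a different method.
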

Combining the previous results, and assuming the non-degeneracy is verified, Lemma \ref{lemma:symmetric_asy} holds true.

\paragraph{Relaxing the non-degeneracy hypothesis}
This paragraph shows how the non-degeneracy assumption is relaxed using a perturbative argument as done in \cite{berthier2020state}. Define the randomly perturbed functions 
\begin{equation}
\label{pert_func}
    f^{t}_{\epsilon \bY^{t}} = f^{t}(.)+\epsilon \bY^{t} 
\end{equation}
where $\bY^{t}\in \mathbb{R}^{N\times q}$ is a matrix with i.i.d.~$\mathbf{N}(0,1)$ entries independent of the original matrix $\mathbf{A}$. We denote $\mathbf{Y}$ the set of random matrices $(\bY^{0},\bY^{1},...,\bY^{t}) \in (\mathbb{R}^{N\times q})^{t+1}$. 
\begin{lemma}
\label{lemma:SE_pert_def}
The AMP iteration defined with the functions $f^{t}_{\epsilon \mathbf{Y}}$ and initialized with $\mathbf{X}^{0}$ verifies Assumptions $\ref{it:ass-sym-4}-\ref{it:ass-sym-6}$.
Furthermore, define the associated state evolution iteration $\{\boldsymbol{\kappa}^{s,t}_{\epsilon}\vert f^{t}_{\epsilon \bY},\bX^{0}\}$, initialized with
\begin{equation}
\label{eq:pert_SE}
    \boldsymbol{\kappa}^{1,1}_{\epsilon} = \lim_{N \to \infty} \frac{1}{N}(f_{e\bY}^{0}(\bX^{0}))^{\top}(f_{e\bY}^{0}(\bX^{0}))
\end{equation}
and
\begin{equation}
    \boldsymbol{\kappa}^{s+1,t+1}_{\epsilon} = \lim_{N \to \infty} \frac{1}{N} \mathbb{E}\left[(f_{\epsilon \bY}^{s}(\bZ^{\epsilon,s})^{\top}f_{\epsilon \bY}^{t}(\bZ^{\epsilon,t})\right]
\end{equation}
where $(\bZ^{\epsilon,1},...,\bZ^{\epsilon,t}) \sim \mathbf{N}(0,(\boldsymbol{\kappa}^{s,r})_{s,r\leqslant t}^{\epsilon}\otimes \mathbf{I}_{N})$ and the expectations are taken w.r.t.~$\bZ^{\epsilon,1},...,\bZ^{\epsilon,t}$ but not on $\bY$.
Then the state evolution $\{\boldsymbol{\kappa}^{s,t}_{\epsilon}\vert f^{t}_{\epsilon \bY},\bX^{0}\}$ is almost surely non-random.
\end{lemma}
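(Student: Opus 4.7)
The plan is a single induction on $t$, which simultaneously establishes that the perturbed iteration verifies \ref{it:ass-sym-4}--\ref{it:ass-sym-6} and that the state evolution iterates $\boldsymbol{\kappa}^{s,t}_\epsilon$ are almost surely deterministic. The key structural observation is that $f^t_{\epsilon\bY}(\cdot) = f^t(\cdot) + \epsilon \bY^t$ is affine in $\bY^t$, so any bilinear form in the perturbed functions decomposes into four pieces: an unperturbed bilinear form in $f^s, f^t$; two cross terms linear in $\bY$ with $\bY$-independent coefficient matrices; and a pure noise quadratic term $\epsilon^2 (\bY^s)^\top \bY^t / N$. The main obstacle is bookkeeping: the induction hypothesis produces an $\epsilon$-dependent deterministic covariance $(\boldsymbol{\kappa}^{s,r}_\epsilon)_{s,r\leq t}$, and we must check that the original \ref{it:ass-sym-6} can be applied at this new covariance---which it can, since \ref{it:ass-sym-6} is uniform over positive semi-definite covariance matrices.

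For the base case, expand
\begin{align*}
\frac{1}{N}\bigl(f^0(\bX^0)+\epsilon\bY^0\bigr)^\top &\bigl(f^0(\bX^0)+\epsilon\bY^0\bigr) \\
&= \frac{1}{N}f^0(\bX^0)^\top f^0(\bX^0) + \frac{\epsilon}{N}\bigl(f^0(\bX^0)^\top\bY^0+(\bY^0)^\top f^0(\bX^0)\bigr) + \frac{\epsilon^2}{N}(\bY^0)^\top\bY^0.
\end{align*}
The first term converges by \ref{it:ass-sym-4}. For the cross terms, pseudo-Lipschitzness of $f^0$ combined with \ref{it:ass-sym-3} gives $\|f^0(\bX^0)\|_F^2 = O(N)$, so each entry of $\frac{\epsilon}{N}f^0(\bX^0)^\top\bY^0$ is a centered Gaussian of variance $O(1/N)$ and tends to $0$ almost surely by Borel--Cantelli. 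The last term is Wishart-type and converges a.s.\ to $\epsilon^2\mathbf{I}_q$ by the strong law. Hence $\boldsymbol{\kappa}^{1,1}_\epsilon = \boldsymbol{\kappa}^{1,1} + \epsilon^2\mathbf{I}_q$ almost surely. The same decomposition applied to $\mathbb{E}_\bZ[f^0_{\epsilon\bY}(\bX^0)^\top f^t_{\epsilon\bY}(\bZ)]$ proves \ref{it:ass-sym-5} for the perturbed iteration: the expectation does not touch $\bY$, and the cross terms have coefficient matrices of Frobenius norm $O(\sqrt N)$ by pseudo-Lipschitzness plus $\mathbb{E}\|\bZ\|_F^2 = O(N)$, so they vanish a.s.

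For the inductive step, assume $(\boldsymbol{\kappa}^{s,r}_\epsilon)_{s,r\leq t}$ are almost surely deterministic. Then $(\bZ^{\epsilon,1},\dots,\bZ^{\epsilon,t})$ has a fixed law independent of $\bY$. Applying the same four-piece decomposition to
\[
\frac{1}{N}\mathbb{E}_{\bZ^\epsilon}\bigl[\bigl(f^s(\bZ^{\epsilon,s}) + \epsilon \bY^s\bigr)^\top \bigl(f^t(\bZ^{\epsilon,t}) + \epsilon \bY^t\bigr)\bigr],
\]
the $(f^s)^\top f^t$ piece converges to a deterministic limit by \ref{it:ass-sym-6} applied with covariance $(\boldsymbol{\kappa}^{s,r}_\epsilon)$---admissible because this covariance is deterministic by induction---and this application simultaneously proves \ref{it:ass-sym-6} for the perturbed iteration. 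The two cross terms carry $\bY$-independent coefficient matrices $\mathbb{E}_{\bZ^\epsilon}[f^s(\bZ^{\epsilon,s})]$ and $\mathbb{E}_{\bZ^\epsilon}[f^t(\bZ^{\epsilon,t})]$, both of Frobenius norm $O(\sqrt N)$, so they vanish via the same Borel--Cantelli argument. The noise piece $\epsilon^2(\bY^s)^\top\bY^t/N$ tends a.s.\ to $\epsilon^2 \delta_{s,t}\mathbf{I}_q$ by the strong law. Summing, $\boldsymbol{\kappa}^{s+1,t+1}_\epsilon$ is almost surely equal to a deterministic quantity, closing the induction.
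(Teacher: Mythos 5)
Your proposal is correct and follows essentially the same route as the paper's proof: the same four-term decomposition exploiting that $f^t_{\epsilon\bY}$ is affine in $\bY^t$, with the unperturbed term handled by Assumptions \ref{it:ass-sym-4}--\ref{it:ass-sym-6}, the cross terms killed almost surely via Gaussian tails and Borel--Cantelli, the noise term handled by the strong law giving $\epsilon^2\delta_{s,t}\mathbf{I}_q$, and a short induction yielding that the perturbed state evolution is almost surely non-random. If anything, you are slightly more explicit than the paper about why \ref{it:ass-sym-6} may be invoked at the $\epsilon$-dependent covariance (its uniformity over positive semi-definite covariances plus the induction hypothesis that $(\boldsymbol{\kappa}^{s,r}_\epsilon)_{s,r\leq t}$ is deterministic), which the paper leaves implicit.
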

\begin{lemma}
\label{lemma:pert_full_rank}
Denote $\boldsymbol{\mathcal{Q}}_{t-1}^{\epsilon \bY}$ the $N \times tq$ matrix associated with the LoAMP iterates $\{\bH^{\epsilon \bY,t},\bQ^{\epsilon \bY,t} \vert f^{t}_{\epsilon\bY},\bX^{0}\}$. Assume $\epsilon>0$. Then for $N \geqslant t$, the matrix $\boldsymbol{\mathcal{Q}}_{t-1}^{\epsilon \bY}$ almost surely has full column-rank. Furthermore, there exists a constant $c_{t, \epsilon}$, independent of n, such that, almost surely, there exists $N_{0}$ (random) such that, for $N \geqslant N_{0}$, $\sigma_{min}(\boldsymbol{\mathcal{Q}}_{t-1}^{\epsilon \bY})/\sqrt{N} \geqslant c_{t, \epsilon}>0$.
\end{lemma}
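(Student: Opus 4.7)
The plan is to proceed by induction on $t$, establishing both the full column-rank property and the singular value bound simultaneously. For the base case $t=1$, $\boldsymbol{\mathcal{Q}}_{0}^{\epsilon\bY} = f^{0}(\bX^{0}) + \epsilon\bY^{0}$ has full column rank almost surely (for $N\geq q$), since the set of rank-deficient $N\times q$ matrices has Lebesgue measure zero and the distribution of $\bY^{0}$ has a density with respect to Lebesgue measure on $\mathbb{R}^{N\times q}$. For the singular value bound,
\begin{equation*}
    \tfrac{1}{N}(\bQ^{\epsilon\bY,0})^{\top}\bQ^{\epsilon\bY,0} = \tfrac{1}{N}f^{0}(\bX^{0})^{\top}f^{0}(\bX^{0}) + \tfrac{2\epsilon}{N}f^{0}(\bX^{0})^{\top}\bY^{0} + \tfrac{\epsilon^{2}}{N}(\bY^{0})^{\top}\bY^{0}
\end{equation*}
converges almost surely to a PSD matrix with smallest eigenvalue at least $\epsilon^{2}$: the first term converges by \ref{it:ass-sym-4}, the cross term is $O(N^{-1/2})$ almost surely by Gaussian concentration, and the last term converges to $\epsilon^{2}\mathbf{I}_{q}$ by the strong law of large numbers. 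Hence $\sigma_{\min}(\bQ^{\epsilon\bY,0})/\sqrt{N}\geq \epsilon/2$ almost surely for $N$ large.

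For the inductive step, let $\mathfrak{G}_{t-2}=\sigma(\bA,\bX^{0},\bY^{0},\ldots,\bY^{t-2})$ and observe that $\boldsymbol{\mathcal{Q}}_{t-2}^{\epsilon\bY}$, $\bH^{\epsilon\bY,t-1}$ and the orthogonal projector $\mathbf{P}^{\perp}_{t-2}$ onto the orthogonal complement of the column span of $\boldsymbol{\mathcal{Q}}_{t-2}^{\epsilon\bY}$ are all $\mathfrak{G}_{t-2}$-measurable, whereas $\bY^{t-1}$ is independent of $\mathfrak{G}_{t-2}$. For any unit vector $\mathbf{w}=(\mathbf{w}',\mathbf{w}'')\in\mathbb{R}^{(t-1)q}\times\mathbb{R}^{q}$, a Pythagorean decomposition along the column span of $\boldsymbol{\mathcal{Q}}_{t-2}^{\epsilon\bY}$ gives
\begin{equation*}
    \|\boldsymbol{\mathcal{Q}}_{t-1}^{\epsilon\bY}\mathbf{w}\|^{2} = \|\boldsymbol{\mathcal{Q}}_{t-2}^{\epsilon\bY}(\mathbf{w}'+\mathbf{c}(\mathbf{w}''))\|^{2} + \|\mathbf{P}^{\perp}_{t-2}\bQ^{\epsilon\bY,t-1}\mathbf{w}''\|^{2},
\end{equation*}
where the regression coefficient $\mathbf{c}(\mathbf{w}'')=[(\boldsymbol{\mathcal{Q}}_{t-2}^{\epsilon\bY})^{\top}\boldsymbol{\mathcal{Q}}_{t-2}^{\epsilon\bY}]^{-1}(\boldsymbol{\mathcal{Q}}_{t-2}^{\epsilon\bY})^{\top}\bQ^{\epsilon\bY,t-1}\mathbf{w}''$ is linear in $\mathbf{w}''$ with operator norm bounded by $K:=\|\boldsymbol{\mathcal{Q}}_{t-2}^{\epsilon\bY}\|_{\mathrm{op}}\|\bQ^{\epsilon\bY,t-1}\|_{\mathrm{op}}/\sigma_{\min}(\boldsymbol{\mathcal{Q}}_{t-2}^{\epsilon\bY})^{2}$. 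A case split on whether $\|\mathbf{w}''\|\leq 1/(2K)$ (in which case $\|\mathbf{w}'+\mathbf{c}(\mathbf{w}'')\|\geq \sqrt{3}/2-1/2$ and the induction hypothesis controls the first term) or not (in which case the second term dominates) reduces the proof to two sub-claims: $K=O(1)$ almost surely, and $\sigma_{\min}(\mathbf{P}^{\perp}_{t-2}\bQ^{\epsilon\bY,t-1})/\sqrt{N}\geq \epsilon/2$ almost surely for $N$ large.

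The crucial estimate is the second sub-claim. Conditionally on $\mathfrak{G}_{t-2}$, $\mathbf{P}^{\perp}_{t-2}\bQ^{\epsilon\bY,t-1}=\bM+\epsilon\mathbf{P}^{\perp}_{t-2}\bY^{t-1}$ with $\bM:=\mathbf{P}^{\perp}_{t-2}f^{t-1}(\bH^{\epsilon\bY,t-1})$ measurable and $\mathbf{P}^{\perp}_{t-2}$ of rank $N-(t-1)q$. For any fixed unit $\mathbf{u}\in\mathbb{R}^{q}$, $(\bM+\epsilon\mathbf{P}^{\perp}_{t-2}\bY^{t-1})\mathbf{u}$ is Gaussian with mean $\bM\mathbf{u}$ and covariance $\epsilon^{2}\mathbf{P}^{\perp}_{t-2}$, so $\|(\bM+\epsilon\mathbf{P}^{\perp}_{t-2}\bY^{t-1})\mathbf{u}\|^{2}$ is a non-central chi-square with $N-(t-1)q$ degrees of freedom that concentrates above $\epsilon^{2}(N-(t-1)q)$. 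Since $q$ is fixed, a finite net argument on the unit sphere of $\mathbb{R}^{q}$ combined with the almost sure $O(\sqrt{N})$ operator norm bound on $\bM+\epsilon\mathbf{P}^{\perp}_{t-2}\bY^{t-1}$ upgrades this to the uniform bound $\sigma_{\min}(\mathbf{P}^{\perp}_{t-2}\bQ^{\epsilon\bY,t-1})/\sqrt{N}\geq \epsilon/2$ almost surely for $N$ large, via Borel--Cantelli on a suitably small failure probability.

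The sub-claim $K=O(1)$ and the needed $O(\sqrt{N})$ operator norm bounds on $\bQ^{\epsilon\bY,t-1}$, $\boldsymbol{\mathcal{Q}}_{t-2}^{\epsilon\bY}$ and $\bH^{\epsilon\bY,t-1}$ follow by a parallel induction using the pseudo-Lipschitz property of $f^{t}_{\epsilon\bY}$, the almost sure $O(1)$ operator norm of the GOE matrix $\bA$, and the almost sure $O(\sqrt{N})$ operator norm of each independent Gaussian perturbation $\bY^{s}$. The main obstacle is careful measurability bookkeeping: one must verify that every object entering the Pythagorean decomposition is $\mathfrak{G}_{t-2}$-measurable while the freshness of $\bY^{t-1}$ is preserved---which hinges precisely on the fact that $\bY^{t-1}$ enters $\bQ^{\epsilon\bY,t-1}$ only through the additive term $\epsilon\bY^{t-1}$, and not through $\bH^{\epsilon\bY,t-1}$---so that the conditional Gaussian computation is valid and transfers to an unconditional almost sure statement.
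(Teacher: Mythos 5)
Your proof is correct, and it rests on the same probabilistic mechanism as the paper's: since $\bY^{t-1}$ enters $\bQ^{\epsilon\bY,t-1}$ only through the additive term $\epsilon\bY^{t-1}$ and is independent of everything generated up to that point, the component of the new block orthogonal to $\boldsymbol{\mathcal{Q}}_{t-2}^{\epsilon\bY}$ is, conditionally, a nondegenerate Gaussian supported on a subspace of dimension at least $N-(t-1)q$, so its squared norm is a (non-central) chi-square of order $N$ and is bounded below. The execution differs, though. The paper argues column by column: it writes the conditional law $(\bQ^{\epsilon\bY,t}_{\perp})_{j}\vert_{\mathcal{F}_t}\sim\mathbf{N}\bigl(\mathbf{P}^{\perp}_{\boldsymbol{\mathcal{Q}}_{t-1}^{\epsilon\bY}}(f^t(\bH^{\epsilon\bY,t}))_j,\epsilon^2\mathbf{P}^{\perp}_{\boldsymbol{\mathcal{Q}}_{t-1}^{\epsilon\bY}}\bigr)$, deduces full rank for $N\geq tq$ from nondegeneracy of this conditional density, and then outsources the passage from "each column has orthogonal component of norm $\gtrsim\sqrt{N}$" to "$\sigma_{\min}(\boldsymbol{\mathcal{Q}}_{t-1}^{\epsilon\bY})\gtrsim\sqrt{N}$" to the reduction of Bayati--Montanari / Berthier et al.\ (Lemma 9 there), with chi-square moments plus Borel--Cantelli for the per-column lower bound. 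You instead prove a block-level bound $\sigma_{\min}(\mathbf{P}^{\perp}_{t-2}\bQ^{\epsilon\bY,t-1})\gtrsim\epsilon\sqrt{N}$ via an $\varepsilon$-net over the unit sphere of $\R^{q}$ together with non-central chi-square concentration, and you re-derive the reduction yourself through the Pythagorean decomposition, the case split on $\Vert\mathbf{w}''\Vert$, and a parallel induction giving the needed $O(\sqrt{N})$ operator-norm and $O(1)$ regression-coefficient bounds (which is not circular, since only the non-degeneracy up to step $t-2$ is invoked). What your route buys is self-containedness and an explicit treatment of the matrix-valued ($q>1$) case, which the paper dispatches with "extends straightforwardly since $q$ is kept finite"; what the paper's route buys is brevity, at the cost of leaning on cited lemmas. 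Two small points to tidy up: in the inductive step you should also record the finite-$N$ full-column-rank claim (for all $N\geq tq$), which follows immediately from the same conditional nondegenerate Gaussian law of $\mathbf{P}^{\perp}_{t-2}\bQ^{\epsilon\bY,t-1}$ rather than from the asymptotic singular-value bound; and in the case split one should take $K\geq 1$ without loss of generality (or phrase the threshold as $\min(1,1/(2K))$) so that the bound $\Vert\mathbf{w}'+\mathbf{c}(\mathbf{w}'')\Vert\geq\sqrt{3}/2-1/2$ is legitimate. Neither affects the validity of the argument.
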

The next two lemmas show uniform convergence of the perturbed state evolution averages to the original one when the perturbation vanishes.
\begin{lemma}
\label{lemma:unif_conv}
Let $\{\Phi_{N} :\mathbb{R}^{N \times tq} \to \mathbb{R}^{q\times q}\}_{N >0}$ be a sequence of uniformly pseudo-Lipschitz functions of order k. Let $\boldsymbol{\kappa},\tilde{\boldsymbol{\kappa}}$ be two $tq \times tq$ covariance matrices and $\mathbf{Z} \sim \mathbf{N}(0,\boldsymbol{\kappa} \otimes \mathbf{I}_{N})$, $\tilde{\mathbf{Z}} \sim \mathbf{N}(0,\tilde{\boldsymbol{\kappa}} \otimes \mathbf{I}_{N})$.
Then 
\begin{equation}
    \lim_{\tilde{\boldsymbol{\kappa}} \to \boldsymbol{\kappa}} \sup_{N \geqslant 1} \mathbb{E}[\Phi_{N}(\bZ)]-\mathbb{E}[\Phi_{N}(\tilde{\bZ})] = 0 \, .
\end{equation}
\end{lemma}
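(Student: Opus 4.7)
The plan is to build a natural coupling between $\bZ$ and $\tilde{\bZ}$, apply the uniform pseudo-Lipschitz property of order $k$ to convert closeness of the covariances into closeness of the expectations, and control the resulting Gaussian moments uniformly in $N$ using the $1/\sqrt{N}$ normalizations built into the pseudo-Lipschitz definition.

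First I introduce a coupling. Let $\mathbf{W} \in \R^{N \times tq}$ have i.i.d.\ standard Gaussian entries, and set $\bZ = \mathbf{W}\,\boldsymbol{\kappa}^{1/2}$ and $\tilde{\bZ} = \mathbf{W}\,\tilde{\boldsymbol{\kappa}}^{1/2}$ using the principal (symmetric) square roots on $\cS_{tq}^+$. These matrices have the required marginal laws, and $\bZ - \tilde{\bZ} = \mathbf{W}(\boldsymbol{\kappa}^{1/2} - \tilde{\boldsymbol{\kappa}}^{1/2})$. Applying the uniform pseudo-Lipschitz hypothesis of order $k$ in Frobenius norm yields
\[
\|\Phi_N(\bZ) - \Phi_N(\tilde{\bZ})\|_F \;\leq\; L\!\left(1 + \left(\tfrac{\|\bZ\|_F}{\sqrt{N}}\right)^{k-1} + \left(\tfrac{\|\tilde{\bZ}\|_F}{\sqrt{N}}\right)^{k-1}\right) \tfrac{\|\bZ - \tilde{\bZ}\|_F}{\sqrt{N}}.
\]

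Taking expectations, using $\|\E[\Phi_N(\bZ)] - \E[\Phi_N(\tilde\bZ)]\|_F \leq \E\|\Phi_N(\bZ)-\Phi_N(\tilde\bZ)\|_F$, and applying Cauchy--Schwarz splits the problem into two separate estimates. For the moment factor, $\|\bZ\|_F^2$ is a Gaussian quadratic form with $\E\|\bZ\|_F^2 = N\,\Tr\boldsymbol{\kappa}$; standard Gaussian moment bounds give $\E\|\bZ\|_F^{2(k-1)} \leq C_k\, (N\,\Tr\boldsymbol{\kappa})^{k-1}$, so $\E(\|\bZ\|_F/\sqrt{N})^{2(k-1)}$ is bounded uniformly in $N$, and uniformly in $\tilde{\boldsymbol{\kappa}}$ over a bounded neighborhood of $\boldsymbol{\kappa}$. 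For the increment factor, direct computation gives
\[
\E\|\bZ - \tilde{\bZ}\|_F^2 \;=\; N\,\bigl\|\boldsymbol{\kappa}^{1/2} - \tilde{\boldsymbol{\kappa}}^{1/2}\bigr\|_F^2,
\]
so $\E\|\bZ - \tilde{\bZ}\|_F^2 / N \to 0$ as $\tilde{\boldsymbol{\kappa}} \to \boldsymbol{\kappa}$ by continuity of the matrix square root on $\cS_{tq}^+$.

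Combining the two bounds by Cauchy--Schwarz yields $\|\E[\Phi_N(\bZ)] - \E[\Phi_N(\tilde{\bZ})]\|_F \leq C(\boldsymbol{\kappa})\,\|\boldsymbol{\kappa}^{1/2} - \tilde{\boldsymbol{\kappa}}^{1/2}\|_F$ for all $\tilde{\boldsymbol{\kappa}}$ in a neighborhood of $\boldsymbol{\kappa}$, uniformly in $N$, which gives the claim. There is no serious obstacle; the only point requiring mild care is ensuring that the $1/\sqrt{N}$ normalization in the pseudo-Lipschitz definition exactly cancels the $N^{k-1}$ growth of $\E\|\bZ\|_F^{2(k-1)}$, and confirming continuity of $\boldsymbol{\kappa} \mapsto \boldsymbol{\kappa}^{1/2}$ on $\cS_{tq}^+$, which is standard (one may alternatively replace the square root by Cholesky on the positive-definite part, where the factorization is smooth).
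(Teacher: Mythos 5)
Your proof is correct: the coupling $\bZ = \mathbf{W}\boldsymbol{\kappa}^{1/2}$, $\tilde{\bZ} = \mathbf{W}\tilde{\boldsymbol{\kappa}}^{1/2}$ through a common standard Gaussian, combined with the pseudo-Lipschitz bound, Cauchy--Schwarz, the identity $\E\|\bZ-\tilde{\bZ}\|_F^2 = N\|\boldsymbol{\kappa}^{1/2}-\tilde{\boldsymbol{\kappa}}^{1/2}\|_F^2$, and moment bounds uniform in $N$ (and in $\tilde{\boldsymbol{\kappa}}$ near $\boldsymbol{\kappa}$) is exactly the standard argument; the paper itself gives no independent proof but defers to the $q=1$ case of \cite{berthier2020state}, whose proof proceeds along the same lines, so you have essentially written out in full (for general finite $q$) what the paper outsources, up to the immaterial omission of the $1/\sqrt{q}$ normalization in the pseudo-Lipschitz definition.
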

\begin{lemma}
\label{lemma:conv_SE}
For any $s,t \geqslant 1$, $\boldsymbol{\kappa}_{\epsilon}^{s,t} \xrightarrow[\epsilon \to 0]{} \boldsymbol{\kappa}^{s,t}$.
\end{lemma}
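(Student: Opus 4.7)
The plan is to induct on $T := \max(s,t)$, combining Gaussian concentration of the perturbation variables $\bY$ with the uniform continuity of pseudo-Lipschitz Gaussian averages in the covariance, as provided by Lemma \ref{lemma:unif_conv}.

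For the base case $s=t=1$, I would expand
\[\tfrac{1}{N}(f^0(\bX^0)+\epsilon \bY^0)^\top(f^0(\bX^0)+\epsilon \bY^0)\]
into four pieces: the square term converges to $\boldsymbol{\kappa}^{1,1}$ by \ref{it:ass-sym-4}; conditionally on $\bX^0$, the two cross terms $\tfrac{\epsilon}{N}(f^0(\bX^0))^\top \bY^0$ are centered Gaussian $q\times q$ matrices whose entries have variance $\epsilon^2\|f^0(\bX^0)\|_F^2/N^2 = O(\epsilon^2/N)$ thanks to \ref{it:ass-sym-3}, hence tend almost surely to $0$; and the quadratic piece $\tfrac{\epsilon^2}{N}(\bY^0)^\top \bY^0$ tends to $\epsilon^2 \mathbf{I}_q$ by the law of large numbers. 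Sending $\epsilon\to 0$ gives $\boldsymbol{\kappa}^{1,1}_\epsilon \to \boldsymbol{\kappa}^{1,1}$.

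For the inductive step, I would assume $\boldsymbol{\kappa}^{s',r'}_\epsilon \to \boldsymbol{\kappa}^{s',r'}$ for all $(s',r')$ with $\max(s',r')\leq T$ and establish it at level $T+1$. Writing
\[(f^s_{\epsilon\bY}(\bZ^{\epsilon,s}))^\top f^t_{\epsilon\bY}(\bZ^{\epsilon,t}) = (f^s(\bZ^{\epsilon,s}))^\top f^t(\bZ^{\epsilon,t}) + \epsilon\bigl[(f^s(\bZ^{\epsilon,s}))^\top \bY^t + (\bY^s)^\top f^t(\bZ^{\epsilon,t})\bigr] + \epsilon^2 (\bY^s)^\top \bY^t,\]
I would treat each piece separately. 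The covariance of $(\bZ^{\epsilon,1},\ldots,\bZ^{\epsilon,t})$ converges to that of $(\bZ^{1},\ldots,\bZ^{t})$ by the inductive hypothesis, so applying Lemma \ref{lemma:unif_conv} to the map $\bZ \mapsto \tfrac{1}{N}(f^s(\bZ^s))^\top f^t(\bZ^t)$ (uniformly pseudo-Lipschitz in $N$ because $f^s,f^t$ are) justifies interchanging $\lim_{\epsilon\to 0}$ with $\lim_N$, yielding $\boldsymbol{\kappa}^{s+1,t+1}$ for the leading piece. The cross terms, after taking $\mathbb{E}_{\bZ^\epsilon}$, become $\tfrac{\epsilon}{N}\mathbb{E}_{\bZ^\epsilon}[f^s(\bZ^{\epsilon,s})]^\top \bY^t$ and its symmetric counterpart; since $\bY^t$ is independent of $\bZ^{\epsilon}$ and the pseudo-Lipschitz bound gives $\|\mathbb{E}_{\bZ^\epsilon}[f^s(\bZ^{\epsilon,s})]\|_F = O(\sqrt{N})$, the same Gaussian-concentration argument as in the base case makes them vanish. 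The quadratic term contributes $\epsilon^2 \delta_{s,t}\mathbf{I}_q$, which goes to $0$.

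The main obstacle is the limit interchange in the leading term, which is precisely what Lemma \ref{lemma:unif_conv} is designed for: it transfers covariance-level convergence of the Gaussian inputs into convergence of the associated pseudo-Lipschitz averages, uniformly in $N$. Lemma \ref{lemma:SE_pert_def} ensures throughout that the quantities $\boldsymbol{\kappa}^{s,t}_\epsilon$ are almost surely non-random, so the whole argument can be read as an equality of deterministic matrices and no further measurability concerns in $\bY$ arise.
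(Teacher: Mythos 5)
Your proof is correct and takes essentially the same route as the paper: induction on the time indices, the $\epsilon$-expansion of the perturbed Gaussian averages (which the paper simply quotes as Eq.~\eqref{eq:induc_pert_SE} and \eqref{eq:pertubed-SE-init} from Lemma~\ref{lemma:SE_pert_def} instead of re-deriving the cross and quadratic terms inline as you do), and Lemma~\ref{lemma:unif_conv} combined with the pseudo-Lipschitz product property (Lemma~\ref{lemma:pseudo-lip_product}) to transfer the inductive covariance convergence to the leading term. The only cosmetic slip is attributing the bound on $\Vert f^0(\bX^0)\Vert_F/\sqrt{N}$ to \ref{it:ass-sym-3} alone, whereas it follows from \ref{it:ass-sym-4} (or from pseudo-Lipschitzness of $f^0$ together with \ref{it:ass-sym-3}); this changes nothing in the argument.
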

This last lemma shows that the iterates of the AMP orbit defined with the randomly perturbed functions (\ref{pert_func}), denoted $\{\bX^{\epsilon\mathbf{Y},t},\bM^{\epsilon\mathbf{Y},t}\vert f_{\epsilon \bY}^{t},\bX^{0}\}$, is arbitrarily close to the original AMP orbit $\{\bX^{t},\bM^{t}\vert f^{t},\bX^{0}\}$ when the perturbation is taken to zero.
\begin{lemma}
\label{lemma:conv_AMP}
Consider the symmetric AMP orbit defined by $\{\bX^{t},\bM^{t}\vert f^{t},\bX^{0}\}$ and the corresponding perturbed orbit defined by $\{\bX^{\epsilon\mathbf{Y},t},\bM^{\epsilon\mathbf{Y},t}\vert f^{t}_{\epsilon \bY},\bX^{0}\}$. Assume that, for some $t\in \mathbb{N}$. Then there exist functions $h_{t}(\epsilon)$, $h'_{t}(\epsilon)$, independent of $N$, such that
\begin{equation}
    \lim_{\epsilon \to 0} h_{t}(\epsilon) = \lim_{\epsilon \to 0} h'_{t}(\epsilon) = 0
\end{equation}
and for all $\epsilon \leqslant 1$, with high probability,
\begin{align}
    \frac{1}{\sqrt{N}}\norm{\bM^{\epsilon\mathbf{Y} ,t}-\bM^{t}}_{F} &\leqslant h'_{t}(\epsilon) \, , \\
    \frac{1}{\sqrt{N}}\norm{\bX^{\epsilon\mathbf{Y} ,t+1}-\bX^{t+1}}_{F} &\leqslant h_{t}(\epsilon) \, .
\end{align}
\end{lemma}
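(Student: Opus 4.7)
The plan is to proceed by induction on $t$, proving the two bounds simultaneously: at each level, first deduce the bound on $\bX^{\epsilon \bY,t+1} - \bX^{t+1}$ from the bounds on the $\bM$-iterates at steps $t$ and $t-1$, and then close the loop by using the pseudo-Lipschitz property of $f^{t+1}$ to pass back to the bound on $\bM^{\epsilon \bY, t+1} - \bM^{t+1}$. The base case is immediate: since $f^0_{\epsilon\bY}(\bX^0) - f^0(\bX^0) = \epsilon \bY^0$ and $\bY^0$ has i.i.d.\ $\mathbf{N}(0,1)$ entries, $\tfrac{1}{\sqrt N}\norm{\bM^{\epsilon\bY,0}-\bM^0}_F = \epsilon \tfrac{\norm{\bY^0}_F}{\sqrt N}$ concentrates around $\epsilon\sqrt q$, so one may take $h'_0(\epsilon) = 2\epsilon\sqrt q$ on a high-probability event.

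For the step from $\bM$-bounds to an $\bX$-bound, I would expand
\begin{equation*}
\bX^{\epsilon\bY,t+1} - \bX^{t+1} = \bA\bigl(\bM^{\epsilon\bY,t}-\bM^t\bigr) - \bigl(\bM^{\epsilon\bY,t-1}-\bM^{t-1}\bigr)(\bb^{\epsilon\bY,t})^\top - \bM^{t-1}\bigl(\bb^{\epsilon\bY,t}-\bb^t\bigr)^\top,
\end{equation*}
and control each term separately in Frobenius norm. The first is bounded by $\norm{\bA}_{op}\cdot \tfrac{1}{\sqrt N}\norm{\bM^{\epsilon\bY,t}-\bM^t}_F$, where $\norm{\bA}_{op}$ is of order $1$ w.h.p.\ by standard GOE spectral bounds and the second factor is $\leq h'_t(\epsilon)$ by induction. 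The second term is bounded using $h'_{t-1}(\epsilon)$ together with $\norm{\bb^{\epsilon\bY,t}}_F \leq C_t$, which holds because $\bb^{\epsilon\bY,t}$ is an explicit expectation involving $\partial f^t/\partial \bX$ under a covariance $\boldsymbol{\kappa}^{t,t}_\epsilon$ that stays bounded (Lemma \ref{lemma:conv_SE}). The third term is bounded using that $\tfrac{1}{\sqrt N}\norm{\bM^{t-1}}_F$ is bounded by Theorem \ref{thm:symmetric} applied to the original iteration. Going back from $\bX^{\epsilon\bY,t+1} - \bX^{t+1}$ to $\bM^{\epsilon\bY,t+1} - \bM^{t+1}$ uses
\begin{equation*}
\bM^{\epsilon\bY,t+1}-\bM^{t+1} = f^{t+1}(\bX^{\epsilon\bY,t+1}) - f^{t+1}(\bX^{t+1}) + \epsilon \bY^{t+1},
\end{equation*}
and the pseudo-Lipschitz estimate, combined with w.h.p.\ boundedness of $\tfrac{1}{\sqrt N}\norm{\bX^{t+1}}_F^{k-1}$ (from SE for the unperturbed orbit) and $\tfrac{1}{\sqrt N}\norm{\bX^{\epsilon\bY,t+1}}_F^{k-1}$ (from the just-established $\bX$-bound plus the same SE applied to the perturbed orbit, which satisfies the assumptions by Lemma \ref{lemma:SE_pert_def}). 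The $\epsilon\bY^{t+1}$ contribution is again of order $\epsilon$ by concentration of $\norm{\bY^{t+1}}_F/\sqrt N$.

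The main technical obstacle is controlling the Onsager difference $\bb^{\epsilon\bY,t}-\bb^t$. Because the perturbation $\epsilon\bY^t$ is independent of $\bX$, the derivatives $\partial f^t_{\epsilon\bY}/\partial \bX$ equal $\partial f^t/\partial \bX$, so the only dependence on $\epsilon$ in $\bb^{\epsilon\bY,t}$ is through the Gaussian covariance $\boldsymbol{\kappa}^{t,t}_\epsilon$ at which the expectation is evaluated. Lemma \ref{lemma:conv_SE} gives $\boldsymbol{\kappa}^{t,t}_\epsilon \to \boldsymbol{\kappa}^{t,t}$ as $\epsilon \to 0$, and a uniform-in-$N$ continuity of the Onsager in the covariance, in the spirit of Lemma \ref{lemma:unif_conv} applied to (appropriate regularizations of) the Jacobian entries, then yields $\norm{\bb^{\epsilon\bY,t}-\bb^t}_F \to 0$ as $\epsilon \to 0$, uniformly in $N$. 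Collecting these pieces, one obtains a recursive relation $h'_{t+1}(\epsilon) \leq C_t\bigl(h'_t(\epsilon) + h'_{t-1}(\epsilon) + \omega_t(\epsilon) + \epsilon\bigr)$ for some function $\omega_t(\epsilon)\to 0$, which transports the induction hypothesis forward and furnishes the claimed functions $h_t,h'_t$ vanishing at $\epsilon = 0$.
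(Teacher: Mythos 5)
Your overall strategy coincides with the paper's proof: the same induction on $t$, the same base case via the Gaussian norm bound on $\epsilon\bY^0$ and the GOE operator norm, the same pseudo-Lipschitz step to pass from the $\bX$-difference to the $\bM$-difference (using the state evolution of the perturbed orbit, guaranteed by Lemma \ref{lemma:SE_pert_def}, to bound $\norm{\bX^{\epsilon\bY,t}}_F/\sqrt{N}$), and the same splitting of the product $\bM^{\epsilon\bY,t-1}(\bb^{\epsilon\bY,t})^\top-\bM^{t-1}(\bb^t)^\top$ into a term controlled by $h'_{t-1}(\epsilon)$ times a bounded Onsager norm and a term controlled by $\norm{\bb^{\epsilon\bY,t}-\bb^t}_F$ times a bounded $\norm{\bM^{t-1}}_F/\sqrt{N}$.

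The one step where your sketch does not go through as written is precisely the one you flag as the main obstacle: the continuity of the Onsager term in the covariance. Invoking Lemma \ref{lemma:unif_conv} on ``(appropriate regularizations of) the Jacobian entries'' is not licensed by the assumptions: the Jacobian of an order-$k$ pseudo-Lipschitz $f^t$ is only bounded almost everywhere with polynomial growth, it need not be continuous, let alone pseudo-Lipschitz, and constructing and controlling regularizations uniformly in $N$ would require an additional argument that you do not supply. The paper closes this gap differently: since (as you correctly note) the additive perturbation does not change the derivatives, it uses the matrix Stein lemma (Lemma \ref{matrix-stein}) to rewrite the Onsager term as $\bb^t = \left((\boldsymbol{\kappa}^{t,t})^{-1}\E\bigl[\tfrac{1}{N}(\bZ^t)^\top f^t(\bZ^t)\bigr]\right)$ (and similarly for $\bb^{\epsilon\bY,t}$ with $\boldsymbol{\kappa}^{t,t}_{\epsilon}$), so that the function whose Gaussian expectation must be compared at the two covariances is $\bZ\mapsto \tfrac{1}{N}\bZ^\top f^t(\bZ)$, which \emph{is} pseudo-Lipschitz (of order $k+1$); then Lemma \ref{lemma:unif_conv} together with $\boldsymbol{\kappa}^{t,t}_{\epsilon}\to\boldsymbol{\kappa}^{t,t}$ (Lemma \ref{lemma:conv_SE}) and the invertibility of $\boldsymbol{\kappa}^{t,t}$ gives $\norm{\bb^{\epsilon\bY,t}-\bb^t}_F\to 0$ uniformly in $N$. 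With that substitution your recursion for $h_t,h'_t$ is exactly the paper's argument.
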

Combining these lemmas, we now prove Lemma \ref{lemma:symmetric_asy}. 

\subsection{Proof of Lemma \ref{lemma:symmetric_asy} and Theorem \ref{thm:symmetric}}
Theorem \ref{thm:symmetric} follows from Lemma \ref{lemma:symmetric_asy} similarly to the proof of Corollary 2 from \cite{berthier2020state}.
\begin{proof}[Proof of Lemma \ref{lemma:symmetric_asy}]
The lemmas presented in the previous section ensure the following: 
\begin{itemize}
\item Lemma \ref{lemma:pert_full_rank} and \ref{lemma:symmetric_asy} ensure the AMP iteration defined with randomly perturbed functions verifies the non-degeneracy assumptions and the perturbed state evolution equations, i.e.,
\begin{equation*}
  \Phi_N\left(\bX^{0}, \bX^{\epsilon ,1}, \dots, \bX^{\epsilon\mathbf{Y},t}\right) \approxP \E\left[ \Phi_N\left(\bZ^{\epsilon ,0}, \bZ^{\epsilon ,1}, \dots, \bZ^{\epsilon ,t}\right) \right] \, . 
\end{equation*}
for any sequence of pseudo-Lispchitz functions $\Phi_{N}$, where $\left(\bZ^{\epsilon ,0}, \bZ^{\epsilon ,1}, \dots, \bZ^{\epsilon ,t}\right)$ are defined as in Eq.\eqref{eq:pert_SE}.
\item We have shown that the perturbed state evolution converges to the original one for vanishing perturbations, i.e.,
\begin{equation*}
    \sup_{N \geqslant 1}\abs{\E\left[ \Phi_N\left(\bZ^0, \bZ^1, \dots, \bZ^t\right) \right]-\E\left[ \Phi_N\left(\bZ^{\epsilon,0}, \bZ^{\epsilon,1}, \dots, \bZ^{\epsilon,t}\right) \right]} \xrightarrow[\epsilon \to 0]{} 0
\end{equation*}
using Lemma \ref{lemma:unif_conv} and \ref{lemma:conv_SE}.
\item Lemma \ref{lemma:conv_AMP} ensures the AMP orbit   $\{\bX^{\epsilon\mathbf{Y},t},\bM^{\epsilon\mathbf{Y},t}\vert f_{\epsilon \bY}^{t},\bX^{0}\}$ uniformly approximates the $\{\bX^{t},\bM^{t}\vert f^{t},\bX^{0}\}$ one.
\end{itemize}
In light of these results, consider the following decomposition: for any $\eta \geqslant 0$:
\begin{align*}
 &\mathbb{P}\left(\abs{\Phi_{N}\left(\bX^{0},\bX^{1},...,\bX^{t}\right)-\mathbb{E}\left[\Phi_{N}\left(\bX^{0},\bZ^{1},...,\bZ^{t}\right)\right]}\geqslant \eta\right) \notag \\
 &\leqslant \mathbb{P}\left(\abs{\Phi_{N}\left(\bX^{0},\bX^{1},...,\bX^{t}\right)-\Phi_{N}\left(\bX^{0},\bX^{\epsilon \bY,1},...,\bX^{\epsilon \bY,t}\right)}\geqslant \frac{\eta}{3}\right) \notag \\
 &\hspace{1cm}+\mathbb{P}\left(\abs{\Phi_{N}\left(\bX^{0},\bX^{\epsilon \bY,1},...,\bX^{\epsilon \bY,t}\right)-\mathbb{E}\left[\Phi_{N}\left(\bX^{0},\bZ^{\epsilon,1},...,\bZ^{\epsilon,t}\right)\right]}\geqslant \frac{\eta}{3}\right) \notag \\
 &\hspace{1cm}+\mathbb{P}\left(\abs{\mathbb{E}\left[\Phi_{N}\left(\bX^{0},\bZ^{\epsilon,1},...,\bZ^{\epsilon,t}\right)\right]-\mathbb{E}\left[\Phi_{N}\left(\bX^{0},\bZ^{1},...,\bZ^{t}\right)\right]}\geqslant \frac{\eta}{3}\right)
\end{align*}
Starting with the first term of the r.h.s., the pseudo-Lipschitz property and the triangle inequality give
\begin{align*}
    &\abs{\Phi_{N}(\bX^{0},\bX^{1},...,\bX^{t})-\Phi_{N}(\bX^{0},\bX^{\epsilon \bY,1},...,\bX^{\epsilon \bY,t})}\leqslant \notag\\
    &L\bigg(1+2\frac{\norm{\bX^{0}}^{k-1}_{F}}{n^{k-1}}+\sum_{i=1}^{t}\frac{\norm{\bX^{i}}^{k-1}_{F}}{n^{(k-1)/2}}+\sum_{i=1}^{t}\frac{\norm{\bX^{\epsilon,i}}^{k-1}_{F}}{N^{(k-1)/2}}\bigg)\sum_{i=1}^{t}\frac{\norm{\bX^{\epsilon,i}-\bX^{i}}_{F}}{\sqrt{N}} \notag\\
    &\leqslant L\bigg(1+2\frac{\norm{\bX^{0}}^{k-1}_{F}}{n^{(k-1)/2}}+\sum_{i=1}^{t}\frac{\norm{\bX^{i}-\bX^{\epsilon,i}+\bX^{\epsilon,i}}^{k-1}_{F}}{n^{(k-1)/2}}+\sum_{i=1}^{t}\frac{\norm{\bX^{\epsilon,i}}^{k-1}_{F}}{n^{(k-1)/2}}\bigg)\sum_{i=1}^{t}\frac{\norm{\bX^{\epsilon,i}-\bX^{i}}_{F}}{\sqrt{N}} \notag\\
    &\leqslant L\bigg(1+2\frac{\norm{\bX^{0}}^{k-1}_{F}}{n^{(k-1)/2}}+\sum_{i=1}^{t}\frac{\norm{\bX^{i}-\bX^{\epsilon,i}}^{k-1}_{F}}{n^{(k-1)/2}}+2\sum_{i=1}^{t}\frac{\norm{\bX^{\epsilon,i}}^{k-1}_{F}}{n^{(k-1)/2}}\bigg)\sum_{i=1}^{t}\frac{\norm{\bX^{\epsilon,i}-\bX^{i}}_{F}}{\sqrt{N}} \notag\\
    &\leqslant L\bigg(1+2C_{0}^{k-1}+\sum_{i=1}^{t}h_{i}(\epsilon)^{k-1}+2\sum_{i=1}^{t}C_{\epsilon \bY,t}^{k-1}\bigg)\sum_{i=1}^{t}h_{i}(\epsilon) \quad \mbox{w.h.p.}
\end{align*}
where we used assumption \ref{it:ass-sym-3} for the convergence of $\norm{\bX_{0}}_{F}/\sqrt{N}$ to a finite constant, the well-defined state evolution of the perturbed orbit $\{\bX^{\epsilon\mathbf{Y},t},\bM^{\epsilon\mathbf{Y},t}\vert f_{\epsilon \bY}^{t},\bX^{0}\}$ for convergence of $\norm{\bX^{\epsilon,i}}/\sqrt{N}$ to finite constants $C_{\epsilon \bY,t}$ and Lemma \ref{lemma:conv_AMP} to replace the differences $\norm{\bX^{\epsilon,i}-\bX^{i}}_{F}$ by the functions $h_{i}(\epsilon)$ with high probability. This gives, for any $\eta>0$:
\begin{equation}
    \lim_{\epsilon \to 0}\limsup_{N \to \infty} \mathbb{P}\left(\abs{\Phi_{N}\left(\bX^{0},\bX^{1},...,\bX^{t}\right)-\Phi_{N}\left(\bX^{0},\bX^{\epsilon \bY,1},...,\bX^{\epsilon \bY,t}\right)}\geqslant \frac{\eta}{3}\right) = 0
\end{equation}
The state evolution for the perturbed AMP then gives
\begin{equation}
    \lim_{\epsilon \to 0}\limsup_{N \to \infty} \mathbb{P}\left(\abs{\Phi_{N}\left(\bX^{0},\bX^{\epsilon \bY,1},...,\bX^{\epsilon \bY,t}\right)-\mathbb{E}\left[\Phi_{N}\left(\bX^{0},\bZ^{\epsilon,1},...,\bZ^{\epsilon,t}\right)\right]}\geqslant \frac{\eta}{3}\right) = 0
\end{equation}
and Lemma \ref{lemma:unif_conv} guarantees:
\begin{equation}
   \lim_{\epsilon \to 0} \mathbb{P}\left(\abs{\mathbb{E}\left[\Phi_{N}\left(\bX^{0},\bZ^{\epsilon \bY,1},...,\bZ^{\epsilon \bY,t}\right)\right]-\mathbb{E}\left[\Phi_{N}\left(\bX^{0},\bZ^{1},...,\bZ^{t}\right)\right]}\geqslant \frac{\eta}{3}\right) = 0
\end{equation}
for all N. From this we deduce
\begin{equation}
 \mathbb{P}\left(\abs{\Phi_{N}\left(\bX^{0},\bX^{1},...,\bX^{t}\right)-\mathbb{E}\left[\Phi_{N}\left(\bX^{0},\bZ^{1},...,\bZ^{t}\right)\right]}\geqslant \eta\right) \xrightarrow[N \to \infty]{} 0
\end{equation}
which is the desired result.
\end{proof}
\subsection{Proof of intermediate lemmas}
Those proofs which are too close to the ones appearing in \cite{berthier2020state} are not reminded.
\begin{proof}[Proof of Lemma \ref{lemma:cond_dist_LoAMP}]
Recall the $\sigma$-algebra $\mathfrak{S}_{t} = \sigma(\bH^{1},\bH^{2},...,\bH^{t})$. The LongAMP iteration verifies:
\begin{align}
    \bH^{t+1} &= (\mathbf{Id}-\mathbf{P}_{\boldsymbol{\mathcal{Q}}_{t-1}})\mathbf{A}\mathbf{P}_{\boldsymbol{\mathcal{Q}}_{t-1}}^{\perp}\bQ^{t}+\boldsymbol{\mathcal{H}}_{t-1}\boldsymbol{\alpha}^{t} \\
    &= \mathbf{A}\bQ^{t}_{\perp}-\mathbf{P}_{\boldsymbol{\mathcal{Q}}_{t-1}}\mathbf{A}\bQ^{t}_{\perp}+\boldsymbol{\mathcal{H}}_{t-1}\boldsymbol{\alpha}^{t} 
\end{align}
where $\bQ^{t}_{\perp} = \mathbf{P}_{\boldsymbol{\mathcal{Q}}_{t-1}}^{\perp}\bQ^{t}$. We now show by an induction that conditioning on $\mathfrak{S}_{t}$ is equivalent to conditioning on the linear observations $\mathbf{A}\bQ^{0},\mathbf{A}\bQ^{1},...,\mathbf{A}\bQ^{t}$, and thus to conditioning on $\mathbf{A}\bQ_{t-1}$. Consider the first iteration which initializes the induction:
\begin{align}
    \bH^{1} = \mathbf{A}\bQ^{0}
\end{align}
thus $\bH^{1}$ is $\sigma(\mathbf{A}\bQ^{0})$-measurable. Suppose now that $\boldsymbol{\mathcal{H}}_{t-1}$ is $\sigma(\mathbf{A}\boldsymbol{\mathcal{Q}}_{t-1})$-measurable. The LongAMP iteration then gives, remembering that $\bQ^{t}_{\parallel}=P_{\bQ_{t-1}}\bQ^{t}$ :
\begin{equation}
    \bH^{t+1} = \mathbf{A}\bQ^{t}-\underbrace{\mathbf{A}\bQ^{t}_{\parallel}-\mathbf{P}_{\boldsymbol{\mathcal{Q}}_{t-1}}\mathbf{A}\bQ^{t}_{\perp}+\boldsymbol{\mathcal{H}}_{t-1}\boldsymbol{\alpha}^{t}}_{\sigma(\mathbf{A}\boldsymbol{\mathcal{Q}}_{t-1})-\mbox{measurable}}
\end{equation}
where the highlighted term is $\sigma(\mathbf{A}\boldsymbol{\mathcal{Q}}_{t-1})-\mbox{measurable}$ by definition of $\bQ^{t}_{\parallel}$ and the induction hypothesis. This gives that $\boldsymbol{\mathcal{H}}_{t}$ is $\sigma(\mathbf{A}\bQ_{t})$-measurable. We can now condition on the linear observation $\mathbf{A}\boldsymbol{\mathcal{Q}}_{t-1}$ at each iteration. We thus have:
\begin{equation}
     \bH^{t+1}\vert_{\mathfrak{S}_{t}} \stackrel{d}= \mathbf{A}\vert_{\mathfrak{S}_{t}}\bQ^{t}_{\perp}-\mathbf{P}_{\boldsymbol{\mathcal{Q}}_{t-1}}\mathbf{A}\bQ^{t}_{\perp}+\boldsymbol{\mathcal{H}}_{t-1}\boldsymbol{\alpha}^{t}
\end{equation}
which amounts to condition the Gaussian space generated by the entries of $\mathbf{A}$ on its subspace defined by the linear combinations $\mathbf{A}\boldsymbol{\mathcal{Q}}_{t-1}$. Conditioning in Gaussian spaces amounts to doing orthogonal projections, which gives
\begin{equation}
    \mathbf{A} \vert_{\mathfrak{S}_{t}} = \mathbb{E}\left[\mathbf{A}\vert\mathfrak{S}_{t}\right]+\mathcal{P}_{t}(\tilde{\mathbf{A}}) \\
\end{equation}
as shown in \cite{bayati2011dynamics},\cite{javanmard2013state}, where $\tilde{\mathbf{A}}$ is a copy of $\mathbf{A}$, independent of $\mathfrak{S}_{t}$ and $\mathcal{P}_{t}$ is the projector onto the subspace $\{\hat{\mathbf{A}} \in \mathbb{R}^{N \times N} \vert \hat{\mathbf{A}}\boldsymbol{\mathcal{Q}}_{t-1} = 0, \hat{\mathbf{A}} = \hat{\mathbf{A}}^{\top}\}$  :
\begin{align}
    \mathbb{E}\left[\mathbf{A}\vert\mathfrak{S}_{t}\right] &= \mathbf{A}-\mathbf{P}^{\perp}_{\boldsymbol{\mathcal{Q}}_{t-1}}\mathbf{A}\mathbf{P}^{\perp}_{\boldsymbol{\mathcal{Q}}_{t-1}} \\
    \mathcal{P}_{t}(\tilde{A}) &= \mathbf{P}^{\perp}_{\boldsymbol{\mathcal{Q}}_{t-1}}\tilde{\mathbf{A}}\mathbf{P}^{\perp}_{\boldsymbol{\mathcal{Q}}_{t-1}}
\end{align}
where $\tilde{\mathbf{A}}$ is an independent copy of $\mathbf{A}$. Replacing in the original LongAMP iteration, we get : 
\begin{equation}
    \bH^{t+1}\vert_{\mathfrak{S}_{t}} \stackrel{d}= \mathbf{P}^{\perp}_{\boldsymbol{\mathcal{Q}}_{t-1}}\tilde{\mathbf{A}}\mathbf{P}^{\perp}_{\boldsymbol{\mathcal{Q}}_{t-1}}\bQ^{t}+\boldsymbol{\mathcal{H}}_{t-1}\boldsymbol{\alpha}^{t}
\end{equation}
where we used $\mathbf{P}^{\perp}_{\boldsymbol{\mathcal{Q}}_{t-1}}\mathbb{E}\left[\mathbf{A}\vert\mathfrak{S}_{t}\right]\mathbf{P}^{\perp}_{\boldsymbol{\mathcal{Q}}_{t-1}} = 0$.
\end{proof}
\begin{proof}[Proof of Lemma \ref{lemma:LoAMP_SE}]
We proceed by induction over t. Let $S_{t}$ be the property at time $t$.
\paragraph{Initialization.}
\begin{enumerate}[label=\alph*)]
\item We have $\bH^{1} = \mathbf{A}\bQ^{0}$. Then:
\begin{align}
    \frac{1}{N}(\bH^{1})^{\top}\bH^{1} &= \frac{1}{N}(\mathbf{A}\bQ^{0})^{\top}(\mathbf{A}\bQ^{0}) \notag\\
    & \stackrel{P}\simeq \frac{1}{N}(\bQ^{0})^{\top}\bQ^{0}
\end{align}
using Lemma \ref{conv_lemmas_app}. We then define $\boldsymbol{\kappa}^{1,1} = \frac{1}{N}(\bQ^{0})^{\top}\bQ^{0}$.
\item We want to show that $\Phi_{N}(\bX^{0},\bH^{1}) \stackrel{P}\simeq \mathbb{E}\left[\Phi_{N}(\bX^{0},\bZ^{1})]\right]$ where $\bZ^{1} \sim \mathbf{N}(0,\boldsymbol{\kappa}^{1,1})$, where
\begin{align}
    \boldsymbol{\kappa}^{1,1} = \frac{1}{N}(\bQ^{0})^{\top}\bQ^{0}=\frac{1}{N}\left(f^{0}(\bX^{0})\right)^{\top}f^{0}(\bX^{0})
\end{align}
For any sequence $\{\Phi_{N}\}_{N\in \mathbb{N}}$ of order k pseudo-Lipschitz function
\begin{align}
    &\norm{\Phi_{N}(\bX_{0},\mathbf{A}\bQ^{0})-\mathbb{E}[\Phi_{N}(\bZ^{1})]}_{2} \leqslant \norm{\Phi_{N}(\mathbf{A}\bQ^{0})-\Phi_{N}(\bZ^{1})}_{2}+\norm{\Phi_{N}(\bZ^{1})-\mathbb{E}[\Phi_{N}(\bZ^{1})]}_{2} \notag\\
    &\leqslant L_{n}\left(1+\left(\frac{\norm{\mathbf{A}\bQ^{0}}_{2}}{\sqrt{N}}\right)^{k-1}+\left(\frac{\norm{\bZ^{1}}}{\sqrt{N}}\right)^{k-1}\right)\frac{\norm{\mathbf{A}\bQ^{0}-\bZ^{1}}_{2}}{\sqrt{N}}+\norm{\Phi_{N}(\bZ^{1})-\mathbb{E}[\Phi_{N}(\bZ^{1})]}_{2}
\end{align}
where the large $n$ limit of $\left(\frac{\norm{\mathbf{A}\bQ^{0}}_{2}}{\sqrt{N}}\right)^{k-1}+\left(\frac{\norm{\bZ^{1}}}{\sqrt{N}}\right)^{k-1}$ being bounded, $\frac{\norm{\mathbf{A}\bQ^{0}-\bZ^{1}}_{2}}{\sqrt{N}} \xrightarrow[n \to \infty]{a.s} 0$  and $\norm{\Phi_{N}(\bZ^{1})-\mathbb{E}[\Phi_{N}(\bZ^{1})]}_{2} \xrightarrow[n \to \infty]{P} 0$ follow from Lemmas \ref{pseudo-lip-conv} and \ref{conv_lemmas_app} .
\end{enumerate}
\paragraph{Induction.} 
Here we assume that $S_{0},S_{1},...,S_{t-1}$ are verified, and we prove $S_{t}$.
\begin{enumerate}[label=\alph*)]
\item Consider the case $s<t$. Since $\bH^{s+1}$ and $\langle \bQ^{s},\bQ^{r}\rangle$ are $\mathfrak{S}_{t}$ measurable, using the conditioning lemma, we have :
\begin{align}
    \left((\bH^{s+1})^{\top}\bH^{t+1}-(\bQ^{s})^{\top}\bQ^{t}\right) \vert_{\mathfrak{S}_{t}} \stackrel{d} = \left((\bH^{s+1})^{\top}\bH^{t+1}\vert_{\mathfrak{S}_{t}}-(\bQ^{s})^{\top}\bQ^{t}\right) \notag \\
    = (\bH^{s+1})^{\top}(\mathbf{P}^{\perp}_{\boldsymbol{\mathcal{Q}}_{t-1}}\tilde{\mathbf{A}}\mathbf{P}^{\perp}_{\boldsymbol{\mathcal{Q}}_{t-1}}\bQ^{t}+\boldsymbol{\mathcal{H}}_{t-1}\boldsymbol{\alpha}^{t})-(\bQ^{s})^{\top}\bQ^{t} \notag\\
    = (\bH^{s+1})^{\top}\mathbf{P}^{\perp}_{\boldsymbol{\mathcal{Q}}_{t-1}}\tilde{\mathbf{A}}\bQ^{t}_{\perp}+(\bH^{s+1})^{\top}\boldsymbol{\mathcal{H}}_{t-1}\boldsymbol{\alpha}^{t}-(\bQ^{s})^{\top}\bQ^{t}
\end{align}
We thus have :
\begin{align}
    \frac{1}{N}\norm{\left((\bH^{s+1})^{\top}\bH^{t+1}-(\bQ^{s})^{\top}\bQ^{t}\right) \vert_{\mathfrak{S}_{t}}}_{F} \leqslant &\frac{1}{N} \norm{(\bH^{s+1})^{\top}\mathbf{P}^{\perp}_{\boldsymbol{\mathcal{Q}}_{t-1}}\tilde{\mathbf{A}}\bQ^{t}_{\perp}}_{F} \notag\\
    &+\frac{1}{N}\norm{(\bH^{s+1})^{\top}\boldsymbol{\mathcal{H}}_{t-1}\boldsymbol{\alpha}^{t}-(\bQ^{s})^{\top}\bQ^{t}}_{F}
\end{align}
Starting with the term
\begin{equation}
    \frac{1}{N} \norm{(\bH^{s+1})^{\top}\mathbf{P}^{\perp}_{\boldsymbol{\mathcal{Q}}_{t-1}}\tilde{\mathbf{A}}\bQ^{t}_{\perp}}_{F} = \frac{1}{N} \norm{(\mathbf{P}^{\perp}_{\boldsymbol{\mathcal{Q}}_{t-1}}\bH^{s+1})^{\top}\tilde{\mathbf{A}}\bQ^{t}_{\perp}}_{F}
\end{equation}
the induction ensires that $\frac{1}{\sqrt{N}}\norm{\bH^{s+1}}_{F},\frac{1}{\sqrt{N}}\norm{\bQ^{t}_{\perp}}_{F}$ concentrate to finite values. Furthermore, $\norm{\mathbf{P}^{\perp}_{\boldsymbol{\mathcal{Q}}_{t-1}}\bH^{s+1}}_{F} \leqslant \norm{\bH^{s+1}}_{F}$, so according to Lemma \ref{conv_lemmas_app}, the first term on the right-hand-side will concentrate to zero.\\
Moving to the second term, since $s<t$,  $\mathbf{P}_{\boldsymbol{\mathcal{Q}}_{t-1}}\bQ^{s} = \bQ^{s}$. Then:
\begin{align}
    \frac{1}{N}\norm{(\bH^{s+1})^{\top}\boldsymbol{\mathcal{H}}_{t-1}\boldsymbol{\alpha}^{t}-(\bQ^{s})^{\top}\bQ^{t}}_{F} &= \frac{1}{N}\norm{(\bH^{s+1})^{\top}\boldsymbol{\mathcal{H}}_{t-1}\boldsymbol{\alpha}^{t}-(\mathbf{P}_{\boldsymbol{\mathcal{Q}}_{t-1}}\bQ^{s})^{\top}\bQ^{t}}_{F} \notag\\
    &=\frac{1}{N}\norm{(\bH^{s+1})^{\top}\boldsymbol{\mathcal{H}}_{t-1}\boldsymbol{\alpha}^{t}-(\bQ^{s})^{\top}\boldsymbol{\mathcal{Q}}_{t-1}(\boldsymbol{\mathcal{Q}}_{t-1}^{\top}\boldsymbol{\mathcal{Q}}_{t-1})^{-1}\boldsymbol{\mathcal{Q}}_{t-1}^{\top}\bQ^{t}}_{F} \notag\\
    &= \frac{1}{N}\norm{(\bH^{s+1})^{\top}\boldsymbol{\mathcal{H}}_{t-1}\boldsymbol{\alpha}^{t}-(\bQ^{s})^{\top}\boldsymbol{\mathcal{Q}}_{t-1}\boldsymbol{\alpha}_{t}}_{F} \notag\\
    &\leqslant \frac{1}{N}\norm{(\bH^{s+1})^{\top}\boldsymbol{\mathcal{H}}_{t-1}-(\bQ^{s})^{\top}\boldsymbol{\mathcal{Q}}_{t-1}}_{F}\norm{\boldsymbol{\alpha_{t}}}_{F} 
\end{align}
Here we consider $s<t$ thus $s+1\leqslant t$. Hence the induction hypothesis includes the concentration properties of $\bH^{s+1}$ and $\boldsymbol{\alpha}_{t}$. We then have $\lim_{N \to \infty}\frac{1}{N}\norm{(\bH^{s+1})^{\top}\boldsymbol{\mathcal{H}}_{t-1}-(\bQ^{s})^{\top}\boldsymbol{\mathcal{Q}}_{t-1}}_{F} \to 0$ and  $\norm{\boldsymbol{\alpha}_{t}}_{F}$ has a finite and well-defined limit using the non-degeneracy assumption. Indeed:
\begin{align}
   \norm{\boldsymbol{\alpha}_{t}}_{F} &= \norm{(\boldsymbol{\mathcal{Q}}_{t-1}^{\top}\boldsymbol{\mathcal{Q}}_{t-1})^{-1}\boldsymbol{\mathcal{Q}}_{t-1}^{\top}\bQ^{t}}_{F} \notag \\
   &\leqslant \frac{1}{Nc_{t}^{2}}\boldsymbol{\mathcal{Q}}_{t-1}^{\top}\bQ^{t}
\end{align}
using the induction hypothesis, $\lim_{n \to +\infty} \frac{1}{N}\boldsymbol{\mathcal{Q}}_{t-1}^{\top}\bQ^{t}$ is finite.
This proves the property for $s<t$.
Now consider the case $s=t$. We then have: 
\begin{align}
    &\left(\norm{\bH^{t+1}}_{F}^{2}-\norm{\bQ^{t}}_{F}^{2}\right)\vert_{\mathfrak{S}_{t}} = \left(\norm{\bH^{t+1}\vert_{\mathfrak{S}_{t}}}_{F}^{2}-\norm{\bQ^{t}}_{F}^{2}\right) \notag \\
    &= \norm{\mathbf{P}^{\perp}_{\boldsymbol{\mathcal{Q}}_{t-1}}\tilde{\mathbf{A}}\bQ^{t}_{\perp}}_{F}^{2}+2\mbox{Tr}\left(\left(\mathbf{P}^{\perp}_{\boldsymbol{\mathcal{Q}}_{t-1}}\tilde{\mathbf{A}}\bQ^{t}_{\perp}\right)^{\top}\boldsymbol{\mathcal{H}}_{t-1}\boldsymbol{\alpha}^{t}\right)+\norm{\boldsymbol{\mathcal{H}}_{t-1}\boldsymbol{\alpha}^{t}}_{F}^{2}-\norm{\bQ^{t}}_{F}^{2}
\end{align}
We then have
\begin{align}
\frac{1}{N}\norm{\mathbf{P}^{\perp}_{\boldsymbol{\mathcal{Q}}_{t-1}}\tilde{\mathbf{A}}\bQ^{t}_{\perp}}_{F}^{2} = \frac{1}{N}\norm{\tilde{\mathbf{A}}\bQ^{t}_{\perp}}_{F}^{2}-\frac{1}{N}\norm{\mathbf{P}_{\boldsymbol{\mathcal{Q}}_{t-1}}\tilde{\mathbf{A}}\bQ^{t}_{\perp}}_{F}^{2} \stackrel{P} \simeq \frac{1}{N}\norm{\bQ^{t}_{\perp}}_{F}^{2}
\end{align}
where we used
\begin{align}
    \frac{1}{N}\norm{\tilde{\mathbf{A}}\bQ^{t}_{\perp}}_{F}^{2} \stackrel{P} \simeq \frac{1}{N}\norm{\bQ^{t}_{\perp}}_{F}^{2} \quad \mbox{and} \quad 
    \frac{1}{N}\norm{\mathbf{P}_{\boldsymbol{\mathcal{Q}}_{t-1}}\tilde{\mathbf{A}}\bQ^{t}_{\perp}}_{F}^{2} \xrightarrow[n \to \infty]{P} 0
\end{align}
which follows from Lemma \ref{conv_lemmas_app} and the independence of $\tilde{\mathbf{A}}$.
The second term then reads
\begin{align}
    \left(\mathbf{P}^{\perp}_{\boldsymbol{\mathcal{Q}}_{t-1}}\tilde{\mathbf{A}}\bQ^{t}_{\perp}\right)^{\top}\boldsymbol{\mathcal{H}}_{t-1}\boldsymbol{\alpha}^{t} = (\bQ_{\perp}^{t})^{\top}\tilde{\mathbf{A}}\mathbf{P}_{\boldsymbol{\mathcal{Q}}_{t-1}}^{\perp}\boldsymbol{\mathcal{H}}_{t-1}\boldsymbol{\alpha}^{t}
\end{align}
From the induction hypothesis, we know that $\boldsymbol{\alpha}^{t}$ has finite norm when $N \to \infty$. Moreover, $\norm{\mathbf{P}_{\boldsymbol{\mathcal{Q}}_{t-1}}^{\perp}\boldsymbol{\mathcal{H}}_{t-1}\boldsymbol{\alpha}^{t}}_{F} \leqslant \norm{\boldsymbol{\mathcal{H}}_{t-1}\boldsymbol{\alpha}^{t}}_{F}$, and $\norm{\bQ^{t}_{\perp}}_{F} \leqslant \norm{\bQ^{t}}_{F}$. Also $\frac{1}{\sqrt{N}}\norm{\boldsymbol{\mathcal{H}}_{t-1}}_{F}$ and $\frac{1}{\sqrt{N}}\norm{\bQ^{t}}_{F}$ converge to finite constants, again according to the induction hypothesis. Using Lemma \ref{conv_lemmas_app}, we get 
\begin{equation}
    \frac{1}{N}\mbox{Tr}\left(\left(\mathbf{P}^{\perp}_{\boldsymbol{\mathcal{Q}}_{t-1}}\tilde{\mathbf{A}}\bQ^{t}_{\perp}\right)^{\top}\boldsymbol{\mathcal{H}}_{t-1}\boldsymbol{\alpha}^{t}\right) \xrightarrow[n\to \infty]{P}0
\end{equation}
Finally the third term can be decomposed
\begin{align}
    \norm{\boldsymbol{\mathcal{H}}_{t-1}\boldsymbol{\alpha}^{t}}_{F}^{2} &= \mbox{Tr}(\left(\boldsymbol{\mathcal{H}}_{t-1}\boldsymbol{\alpha}^{t}\right)^{\top}\boldsymbol{\mathcal{H}}_{t-1}\boldsymbol{\alpha}^{t}) \notag\\
    &= \mbox{Tr}((\boldsymbol{\alpha}^{t})^{\top}\boldsymbol{\mathcal{H}}_{t-1}^{\top}\boldsymbol{\mathcal{H}}_{t-1}\boldsymbol{\alpha}^{t}) \notag\\
    &= \mbox{Tr}((\boldsymbol{\alpha}^{t})^{\top}\left(\boldsymbol{\mathcal{H}}_{t-1}^{\top}\boldsymbol{\mathcal{H}}_{t-1}-\boldsymbol{\mathcal{Q}}_{t-1}^{\top}\boldsymbol{\mathcal{Q}}_{t-1}\right)\boldsymbol{\alpha}^{t})+\Tr((\boldsymbol{\alpha}^{t})^{\top}\boldsymbol{\mathcal{Q}}_{t-1}^{\top}\boldsymbol{\mathcal{Q}}_{t-1}\boldsymbol{\alpha}^{t}) \notag\\
    & \leqslant \norm{\boldsymbol{\mathcal{H}}_{t-1}^{\top}\boldsymbol{\mathcal{H}}_{t-1}-\boldsymbol{\mathcal{Q}}_{t-1}^{\top}\boldsymbol{\mathcal{Q}}_{t-1}}_{F}\norm{\boldsymbol{\alpha}_{t}}_{F}+\norm{\boldsymbol{\mathcal{Q}}_{t-1}\boldsymbol{\alpha}_{t}}_{F}
\end{align} 
Using the induction hypothesis and the non-degeneracy assumption, $\lim_{N \to \infty} \norm{\boldsymbol{\alpha}^{t}}_{F}$ is a finite constant, and $ \frac{1}{N}\norm{\boldsymbol{\mathcal{H}}_{t-1}^{\top}\boldsymbol{\mathcal{H}}_{t-1}-\boldsymbol{\mathcal{Q}}_{t-1}^{\top}\boldsymbol{\mathcal{Q}}_{t-1}}_{F} \xrightarrow[n \to \infty]{P} 0$. Furthermore, by definition of $\boldsymbol{\alpha}_{t}$, $\boldsymbol{\mathcal{Q}}_{t-1}\boldsymbol{\alpha}_{t} = \bQ_{t}^{\parallel}$.

Grouping all the terms, we get 
\begin{align}
\frac{1}{N}\left(\norm{\bH^{t+1}}_{F}^{2}-\norm{\bQ^{t}}_{F}^{2}\right)\vert_{\mathfrak{S}_{t}} &\stackrel{P} \simeq \frac{1}{N}\norm{\bQ^{t}_{\perp}}_{F}^{2}+\frac{1}{N}\norm{\bQ^{t}_{\parallel}}_{F}^{2}-\frac{1}{N}\norm{\bQ^{t}}_{F}^{2} \notag \\
&=0
\end{align}

\item Using the conditioning lemma :
\begin{align}
    \Phi_{N}\left(\bX^{0},\bH^{1},...,\bH^{t},\bH^{t+1}\right)\vert_{\mathfrak{S}_{t}} &\stackrel{d} = \Phi_{N}\left(\bX^{0},\bH^{1},...,\bH^{t},\mathbf{P}^{\perp}_{\boldsymbol{\mathcal{Q}}_{t-1}}\tilde{\mathbf{A}}\mathbf{P}^{\perp}_{\boldsymbol{\mathcal{Q}}_{t-1}}\bQ^{t}+\boldsymbol{\mathcal{H}}_{t-1}\boldsymbol{\alpha}^{t}\right) \notag\\
    &= \Phi_{N}\left(\bX^{0},\bH^{1},...,\bH^{t},\tilde{\mathbf{A}}\bQ^{t}_{\perp}-P_{\boldsymbol{\mathcal{Q}}_{t-1}}\tilde{\mathbf{A}}\bQ^{t}_{\perp}+\boldsymbol{\mathcal{H}}_{t-1}\boldsymbol{\alpha}^{t}\right)
\end{align}
Let $\Phi^{'}_{N}\left(\tilde{\mathbf{A}}\bQ^{t}_{\perp}-P_{\boldsymbol{\mathcal{Q}}_{t-1}}\tilde{\mathbf{A}}\bQ^{t}_{\perp}+\boldsymbol{\mathcal{H}}_{t-1}\boldsymbol{\alpha}^{t}\right) = \Phi_{N}\left(\bX^{0},\bH^{1},...,\bH^{t},\tilde{\mathbf{A}}\bQ^{t}_{\perp}-P_{\boldsymbol{\mathcal{Q}}_{t-1}}\tilde{\mathbf{A}}\bQ^{t}_{\perp}+\boldsymbol{\mathcal{H}}_{t-1}\boldsymbol{\alpha}^{t}\right)$ as a shorthand. Then, from the pseudo-Lipschitz property:
\begin{align}
    &\abs{\Phi^{'}_{N}\left(\tilde{\mathbf{A}}\bQ^{t}_{\perp}-P_{\boldsymbol{\mathcal{Q}}_{t-1}}\tilde{\mathbf{A}}\bQ^{t}_{\perp}+\boldsymbol{\mathcal{H}}_{t-1}\boldsymbol{\alpha}^{t}\right)-\Phi^{'}_{N}\left(\tilde{\mathbf{A}}\bQ^{t}_{\perp}+\boldsymbol{\mathcal{H}}_{t-1}\boldsymbol{\alpha}^{t}\right)} \notag\\
    &\hspace{1cm}\leqslant L_{N}C(k,t)\bigg(1+\left(\frac{\norm{\bX^{0}}_{F}}{\sqrt{N}}\right)^{k-1}+\sum_{s=1}^{t}\left(\frac{\norm{\bH^{s}}_{F}}{\sqrt{N}}\right)^{k-1} \notag \\
    &\hspace{1cm}+\left(\frac{\norm{\bH^{t+1}}_{F}}{\sqrt{N}}\right)^{k-1}+\left(\frac{\norm{\tilde{\mathbf{A}}\bQ^{t}_{\perp}}_{F}}{\sqrt{N}}\right)^{k-1}+\left(\frac{\norm{\boldsymbol{\mathcal{H}}_{t-1}\boldsymbol{\alpha}^{t}}_{F}}{\sqrt{N}}\right)^{k-1}\bigg)\frac{\norm{P_{\boldsymbol{\mathcal{Q}}_{t-1}}\tilde{\mathbf{A}}\bQ^{t}_{\perp}}_{F}}{\sqrt{N}}
\end{align}
where $C(k,t)$ is a constant depending only on k and t. The induction hypothesis ensures that $\left(\frac{\norm{\bX^{0}}_{F}}{\sqrt{N}}\right)^{k-1}+\sum_{s=1}^{t}\left(\frac{\norm{\bH^{s}}_{F}}{\sqrt{N}}\right)^{k-1}$ converges to a finite constant. Furthermore, 
\begin{align}
    \frac{1}{\sqrt{N}}\norm{\tilde{\mathbf{A}}}_{F} \leqslant \frac{1}{\sqrt{N}}\norm{\tilde{\mathbf{A}}}_{op}\norm{\bQ^{t}}_{F}
\end{align}
which, using Proposition \ref{op-norm-GOE} and the induction hypothesis, converges to a finite constant. Also, using the fact that rank$(\mathbf{P}_{\boldsymbol{\mathcal{Q}}_{t-1}})\leqslant tq$ with $t,q$ finite, and the independence of $\tilde{\mathbf{A}}$, Lemma \ref{conv_lemmas_app} gives
\begin{equation}
    \frac{1}{\sqrt{N}}\norm{P_{\boldsymbol{\mathcal{Q}}_{t-1}}\tilde{\mathbf{A}}\bQ^{t}_{\perp}}_{F} \xrightarrow[n \to \infty]{P} 0 \, .
\end{equation}
Ultimately, we obtain
\begin{align}
    \Phi^{'}_{N}\left(\tilde{\mathbf{A}}\bQ^{t}_{\perp}-P_{\boldsymbol{\mathcal{Q}}_{t-1}}\tilde{\mathbf{A}}\bQ^{t}_{\perp}+\boldsymbol{\mathcal{H}}_{t-1}\boldsymbol{\alpha}^{t}\right) \stackrel{P} \simeq \Phi^{'}_{N}\left(\tilde{\mathbf{A}}\bQ^{t}_{\perp}+\boldsymbol{\mathcal{H}}_{t-1}\boldsymbol{\alpha}^{t}\right) \notag\\
    \stackrel{P} \simeq \Phi^{'}_{N}\left(\tilde{\mathbf{A}}\bQ^{t}_{\perp}+\boldsymbol{\mathcal{H}}_{t-1}\boldsymbol{\alpha}^{t,*}\right)
\end{align}
where $\boldsymbol{\alpha}_{t}^{*} = \lim_{N \to \infty} \boldsymbol{\alpha}_{t}$ which are finite matrices, and $\boldsymbol{\alpha}_{t}^{*} \in \mathbb{R}^{tq \times q}$. We write :
\begin{equation}
    \begin{bmatrix}
    (\boldsymbol{\alpha}_{t}^{*})_{1} \\... \\(\boldsymbol{\alpha}_{t}^{*})_{t}
    \end{bmatrix}
\end{equation}
where $\forall 1 \leqslant i \leqslant t$, $(\boldsymbol{\alpha}_{t}^{*})_{i} \in \mathbb{R}^{q \times q}$. Then 
\begin{align}
    \Phi^{'}_{N}\left(\tilde{\mathbf{A}}\bQ^{t}_{\perp}+\boldsymbol{\mathcal{H}}_{t-1}\boldsymbol{\alpha}^{t,*}\right) &\stackrel{P} \simeq \Phi^{'}_{N}\left(\tilde{\mathbf{A}}\bQ^{t}_{\perp}+\boldsymbol{\mathcal{H}}_{t-1}\boldsymbol{\alpha}^{t,*}\right) \notag \\
    &\stackrel{P} \simeq \Phi(\bX^{0},\bH^{1},...,\bH^{t},\tilde{\mathbf{A}}\bQ^{t}_{\perp}+\boldsymbol{\mathcal{H}}_{t-1}\boldsymbol{\alpha}^{t,*})
\end{align}
Using Lemma \ref{pseudo-lip-conv}, there exists $\bZ^{t+1}_{\perp} \sim \mathbf{N}(0,\boldsymbol{\kappa}_{\perp}^{t+1} \otimes I_{N})$ independent of $\mathfrak{S}_{t}$, where
$\boldsymbol{\kappa}_{\perp}^{t+1} = \lim_{N \to \infty} \frac{1}{N}(\bQ^{t}_{\perp})^{\top}\bQ^{t}_{\perp}$, such that:
\begin{align}
    \Phi(\bX^{0},\bH^{1},...,\bH^{t},\tilde{\mathbf{A}}\bQ^{t}_{\perp}+\boldsymbol{\mathcal{H}}_{t-1}\boldsymbol{\alpha}^{t,*}) &\stackrel{P} \simeq \mathbb{E}_{\bZ}\left[\Phi(\bX^{0},\bH^{1},...,\bH^{t},\bZ^{t+1}_{\perp}+\boldsymbol{\mathcal{H}}_{t-1}\boldsymbol{\alpha}^{t,*})\right] \notag \\
    &\stackrel{P} \simeq \mathbb{E}\left[\Phi_{N}(\bX^{0},\bZ^{1},...,\bZ^{t},\bZ^{t+1}_{\perp}+\sum_{i=1}^{t}\bZ^{i}(\boldsymbol{\alpha}^{t,*})_{i})\right] 
\end{align}

We now need to match the covariance matrices defined by the prescription of $\bZ^{t+1}$ we obtained with the ones from the state evolution. Let $\bZ^{t+1} = \bZ^{t+1}_{\perp}+\sum_{i=1}^{t}\bZ^{i}(\boldsymbol{\alpha}^{t,*})_{i}) \in \mathbb{R}^{q\times q}$. We then write $\bZ^{t+1} \sim \mathbf{N}(0,\boldsymbol{\kappa}^{t+1,t+1} \otimes \mathbf{I}_{N})$ where $\boldsymbol{\kappa}^{t+1,t+1} = \lim_{N \to \infty} \frac{1}{N} (\bZ^{t+1})^{\top}\bZ^{t+1}$. Then, using the isometry proved above and remembering that, for any $1 \leqslant i \leqslant t, \bQ^{t} = f^{t}(\bH^{t})$:
\begin{equation}
    \frac{1}{N}(\bZ^{t+1})^{\top}\bZ^{t+1} \stackrel{P} \simeq \frac{1}{N}(\bH^{t+1})^{\top}\bH^{t+1} \stackrel{P} \simeq  \frac{1}{N}(\bQ^{t})^{\top}\bQ^{t} \xrightarrow[n \to \infty]{P} \boldsymbol{\kappa}^{t+1,t+1}
\end{equation}
similarly, for $s\geqslant 2$:
\begin{equation}
     \boldsymbol{\kappa}^{s} = \frac{1}{N}(\bZ^{s})^{\top}\bZ^{t+1} \stackrel{P} \simeq \frac{1}{N}(\bH^{s})^{\top}\bH^{t+1} \stackrel{P} \simeq  \frac{1}{N}(\bQ^{s-1})^{\top}\bQ^{t} \xrightarrow[n \to \infty]{P} \boldsymbol{\kappa}^{s,t+1}
\end{equation}

and for $s=1$:
\begin{equation}
     \boldsymbol{\kappa}^{s} = \frac{1}{N}(\bZ^{1})^{\top}\bZ^{t+1} \stackrel{P} \simeq \frac{1}{N}(\bH^{1})^{\top}\bH^{t+1} \stackrel{P} \simeq  \frac{1}{N}(\bQ^{0})^{\top}\bQ^{t} \xrightarrow[n \to \infty]{P} \boldsymbol{\kappa}^{1,t+1}
\end{equation}
\end{enumerate}
\end{proof}
\begin{proof}[Proof of Lemma \ref{lemma:LoAMP_approx1}]
This lemma is proven by induction.
\paragraph{Initialization.}
The first iterates read $\bH^{1} = \mathbf{A}\bQ^{0}$ and $\hat{\bH}^{1} = \mathbf{A}\bQ^{0}$. This concludes the initialization.
\paragraph{Induction.}
Assume the proposition is true up to time t. Define the $(t+1)q \times (t+1)q$ block-diagonal matrix $\boldsymbol{\boldsymbol{\mathcal{B}}}_{t} = \mbox{diag}\left(0_{q \times q},\bb^{1},...,\bb^{t}\right)$ and $\hat{\boldsymbol{\mathcal{H}}}_{t-1} = \left[\hat{\bH}^{1} \vert \hat{\bH}^{2} \vert ...\vert \hat{\bH}^{t}\right]$. We then have :
\begin{align}
    \bH^{t+1} &= \mathbf{P}^{\perp}_{\boldsymbol{\mathcal{Q}}_{t-1}}\mathbf{A}\mathbf{P}^{\perp}_{\boldsymbol{\mathcal{Q}}_{t-1}}\bQ^{t}+\boldsymbol{\mathcal{H}}_{t-1}\boldsymbol{\alpha}^{t} \notag \\
    &= \mathbf{A}\bQ^{t}_{\perp}-\mathbf{P}_{\boldsymbol{\mathcal{Q}}_{t-1}}\mathbf{A}\bQ^{t}_{\perp}+\boldsymbol{\mathcal{H}}_{t-1}\boldsymbol{\alpha}^{t}
\end{align}
and
\begin{align}
    \hat{\bH}^{t+1} &= \mathbf{A}\bQ^{t}-\bQ^{t-1}(\bb^{t})^{\top} \notag\\
    &= \mathbf{A}\bQ^{t}_{\perp}+\mathbf{A}\bQ^{t}_{\parallel}-\bQ^{t-1}(\bb^{t})^{\top} \notag\\
    \mbox{where} \quad \mathbf{A}\bQ^{t}_{\parallel} &= \mathbf{A}\boldsymbol{\mathcal{Q}}_{t-1}(\boldsymbol{\mathcal{Q}}_{t-1}^{\top}\boldsymbol{\mathcal{Q}}_{t-1})^{-1}\boldsymbol{\mathcal{Q}}_{t-1}^{\top}\bQ^{t} \notag\\
    &=\mathbf{A}\boldsymbol{\mathcal{Q}}_{t-1}\boldsymbol{\alpha}^{t}
\end{align}
which gives 
\begin{align}
    \hat{\bH}^{t+1}-\bH^{t+1} &= \mathbf{P}_{\boldsymbol{\mathcal{Q}}_{t-1}}\mathbf{A}\bQ^{t}_{\perp}-\bQ^{t-1}(\bb^{t})^{\top}+\mathbf{A}\boldsymbol{\mathcal{Q}}_{t-1}\boldsymbol{\alpha}^{t}-\boldsymbol{\mathcal{H}}_{t-1}\boldsymbol{\alpha}^{t}
\end{align}
using the definition of iteration (\ref{approx_AMP}), we have:
\begin{equation}
    \mathbf{A}\boldsymbol{\mathcal{Q}}_{t-1} = \hat{\boldsymbol{\mathcal{H}}}_{t-1}+\left[0_{N\times q}\vert \bQ^{0} \vert ... \vert \bQ^{t-2}\right]\boldsymbol{\mathcal{B}}_{t-1}^{\top}
\end{equation} 
\begin{align}
    \hat{\bH}^{t+1}-\bH^{t+1} &=\mathbf{P}_{\boldsymbol{\mathcal{Q}}_{t-1}}\mathbf{A}\bQ^{t}_{\perp}-\bQ^{t-1}(\bb^{t-1})^{\top}+\left[0_{N\times q}\vert \boldsymbol{\mathcal{Q}}_{t-2}\right]\boldsymbol{\mathcal{B}}_{t-1}^{\top}\boldsymbol{\alpha}^{t}+\left(\hat{\boldsymbol{\mathcal{H}}}^{t-1}-\boldsymbol{\mathcal{H}}^{t-1}\right)\boldsymbol{\alpha}^{t} \notag\\
    &= \boldsymbol{\mathcal{Q}}_{t-1}(\boldsymbol{\mathcal{Q}}_{t-1}^{\top}\boldsymbol{\mathcal{Q}}_{t-1})^{-1}\boldsymbol{\mathcal{Q}}_{t-1}^{\top}\mathbf{A}\bQ^{t}_{\perp}-\bQ^{t-1}(\bb_{t-1})^{\top}+\left[0_{N\times q}\vert \boldsymbol{\mathcal{Q}}_{t-2}\right]\boldsymbol{\mathcal{B}}_{t-1}^{\top}\boldsymbol{\alpha}^{t} \notag\\ &\hspace{6cm}+\left(\hat{\boldsymbol{\mathcal{H}}}^{t-1}-\boldsymbol{\mathcal{H}}^{t-1}\right)\boldsymbol{\alpha}^{t}
\end{align}

and
\begin{align}
    \boldsymbol{\mathcal{Q}}_{t-1}^{\top}\mathbf{A} &=(\mathbf{A}\boldsymbol{\mathcal{Q}}_{t-1})^{\top} \notag\\
    &= ((\hat{\boldsymbol{\mathcal{H}}}_{t-1}+\left[0_{N\times q} \vert \boldsymbol{\mathcal{Q}}_{t-2}\right]\boldsymbol{\mathcal{B}}_{t}^{\top}))^{\top} \notag\\
    &= \hat{\boldsymbol{\mathcal{H}}}_{t-1}^{\top}+\boldsymbol{\mathcal{B}}_{t}\left[0_{N\times q} \vert \boldsymbol{\mathcal{Q}}_{t-2}\right]^{\top}
\end{align}

since $\bQ^{t}_{\perp} = \mathbf{P}_{\bQ_{t-1}}^{\perp}\bQ^{t}$, it holds that:
\begin{align}
    \boldsymbol{\mathcal{Q}}_{t-1}^{\top}\mathbf{A}\bQ^{t}_{\perp} &= \left(\hat{\boldsymbol{\mathcal{H}}}_{t-1}^{\top}+\boldsymbol{\mathcal{B}}_{t}\left[0_{N\times q} \vert \boldsymbol{\mathcal{Q}}_{t-2}\right]^{\top}\right)\mathbf{P}_{\boldsymbol{\mathcal{Q}}_{t-1}}^{\perp}\bQ^{t} \notag \\
    &=\hat{\boldsymbol{\mathcal{H}}}_{t-1}^{\top}\mathbf{P}_{\boldsymbol{\mathcal{Q}}_{t-1}}^{\perp}\bQ^{t}
\end{align}
which in turn gives:
\begin{align}
    \hat{\bH}^{t+1}-\bH^{t+1} &= \boldsymbol{\mathcal{Q}}_{t-1}(\boldsymbol{\mathcal{Q}}_{t-1}^{\top}\boldsymbol{\mathcal{Q}}_{t-1})^{-1}\hat{\boldsymbol{\mathcal{H}}}_{t-1}^{\top}\bQ^{t}_{\perp}-\bQ^{t-1}(\bb^{t-1})^{\top}+\left[0_{N\times q}\vert \boldsymbol{\mathcal{Q}}_{t-2}\right]\boldsymbol{\mathcal{B}}_{t-1}^{\top}\boldsymbol{\alpha}^{t} \notag\\ &+\left(\hat{\boldsymbol{\mathcal{H}}}^{t-1}-\boldsymbol{\mathcal{H}}^{t-1}\right)\boldsymbol{\alpha}^{t} \notag\\
    &= \boldsymbol{\mathcal{Q}}_{t-1}(\boldsymbol{\mathcal{Q}}_{t-1}^{\top}\boldsymbol{\mathcal{Q}}_{t-1})^{-1}\boldsymbol{\mathcal{H}}_{t-1}^{\top}\bQ^{t}_{\perp}-\bQ^{t-1}(\bb^{t-1})^{\top}+\left[0_{N\times q} \vert \boldsymbol{\mathcal{Q}}_{t-2}\right]\boldsymbol{\mathcal{B}}_{t-1}^{\top}\boldsymbol{\alpha}^{t} \notag\\ &+\left(\hat{\boldsymbol{\mathcal{H}}}^{t-1}-\boldsymbol{\mathcal{H}}^{t-1}\right)\boldsymbol{\alpha}^{t}+\boldsymbol{\mathcal{Q}}_{t-1}(\boldsymbol{\mathcal{Q}}_{t-1}^{\top}\boldsymbol{\mathcal{Q}}_{t-1})^{-1}\left(\hat{\boldsymbol{\mathcal{H}}}_{t-1}-\boldsymbol{\mathcal{H}}_{t-1}\right)^{\top}\bQ^{t}_{\perp}
\end{align}
We now study the limiting behaviour of this quantity, starting with:
\begin{equation}
    \mathbf{C} = \boldsymbol{\mathcal{Q}}_{t-1}(\boldsymbol{\mathcal{Q}}_{t-1}^{\top}\boldsymbol{\mathcal{Q}}_{t-1})^{-1}\boldsymbol{\mathcal{H}}_{t-1}^{\top}\bQ^{t}_{\perp}-\bQ^{t-1}(\bb^{t-1})^{\top}+\left[0_{N\times q} \vert \boldsymbol{\mathcal{Q}}_{t-2}\right]\boldsymbol{\mathcal{B}}_{t-1}^{\top}\boldsymbol{\alpha}^{t} 
\end{equation}
We have :
\begin{align}
    \bQ^{t}_{\perp} &= \bQ^{t}-\bQ^{t}_{\parallel} \notag \\
    &= \bQ^{t}-\boldsymbol{\mathcal{Q}}_{t-1}\boldsymbol{\alpha}_{t}
\end{align}
and :
\begin{equation}
    \mathbf{C} = \boldsymbol{\mathcal{Q}}_{t-1}(\boldsymbol{\mathcal{Q}}_{t-1}^{\top}\boldsymbol{\mathcal{Q}}_{t-1})^{-1}\boldsymbol{\mathcal{H}}_{t-1}^{\top}(\bQ^{t}-\boldsymbol{\mathcal{Q}}_{t-1}\boldsymbol{\alpha}_{t} )-\bQ^{t-1}(\bb^{t-1})^{\top}+\left[0_{N\times q} \vert \boldsymbol{\mathcal{Q}}_{t-2}\right]\boldsymbol{\mathcal{B}}_{t-1}^{\top}\boldsymbol{\alpha}^{t}
\end{equation}

Using Lemma \ref{matrix-stein}, the state evolution, and the concentration properties of pseudo-Lipschitz functions Lemma \ref{pseudo-lip-conv}, we get, for all $1 \leqslant j \leqslant t-1$ and $1 \leqslant i \leqslant t$:
\begin{align}
    \frac{1}{N}(\bH^{i})^{\top}f^{j}(\bH^{j}) &\stackrel{P} \simeq \mathbb{E}\left[\frac{1}{N}(\bZ^{i})^{\top}f^{j}(\bZ^{j})\right] \notag\\
    &= \mathbf{K}_{i,j}\mathbb{E}\left[\frac{1}{N}\mbox{div}f^{j}(\bZ^{j})\right] \notag \\
    &\stackrel{P} \simeq \frac{1}{N}(\bQ^{i-1})^{\top}\bQ^{j-1}(\bb^{j})^{\top}
\end{align}
and for $j=0$ :
\begin{align}
    \frac{1}{N}(\bH^{i})^{\top}f(\bX^{0}) &\stackrel{P} \simeq \mathbb{E}\left[\frac{1}{N}(\bZ^{i})^{\top}f_{0}(\bX^{0})\right] \notag \\
    &= 0
\end{align}
which in turn gives
\begin{equation}
    \frac{1}{N}(\boldsymbol{\mathcal{H}}_{t-1}^{\top}\bQ^{t})=\frac{1}{N}\left[\bH^{1} \vert ... \vert \bH^{t}\right]^{\top}f_{t}(\bH^{t}) \stackrel{P} \simeq  \frac{1}{N}(\boldsymbol{\mathcal{Q}}_{t-1})^{\top}\bQ^{t-1}(\bb^{t-1})^{\top}
\end{equation}
and 
\begin{equation}
    \frac{1}{N}\boldsymbol{\mathcal{H}}_{t-1}^{\top}\boldsymbol{\mathcal{Q}}_{t-1} =\frac{1}{N}\left[\bH^{1} \vert ... \vert \bH^{t}\right]^{\top}\left[\bQ^{0} \vert f_{1}(\bH^{1}) ... \vert f_{t-1}(\bH^{t-1})\right] \stackrel{P} \simeq \frac{1}{N}\boldsymbol{\mathcal{Q}}_{t-1}^{\top}[0_{N\times q}\vert \boldsymbol{\mathcal{Q}}_{t-2}]\boldsymbol{\mathcal{B}}_{t-1}^{\top} 
\end{equation}
which gives :
\begin{align}
    \frac{1}{N}\mathbf{C} &\stackrel{P} \simeq \frac{1}{N}\bigg(\boldsymbol{\mathcal{Q}}_{t-1}(\boldsymbol{\mathcal{Q}}_{t-1}^{\top}\boldsymbol{\mathcal{Q}}_{t-1})^{-1}\boldsymbol{\mathcal{Q}}_{t-1}^{\top}(\bQ^{t-1}(\bb^{t-1})^{\top}-\left[0_{N\times q} \vert \boldsymbol{\mathcal{Q}}_{t-2}\right]\boldsymbol{\mathcal{B}}_{t-1}^{\top}\boldsymbol{\alpha}^{t} )\\
    &\hspace{6cm}-\bQ^{t-1}(\bb^{t-1})^{\top}+\left[0_{N\times q} \vert \boldsymbol{\mathcal{Q}}_{t-2}\right]\boldsymbol{\mathcal{B}}_{t-1}^{\top}\boldsymbol{\alpha}^{t}\bigg) \notag \\
    &=\frac{1}{N}\bigg(\boldsymbol{\mathcal{Q}}_{t-1}(\boldsymbol{\mathcal{Q}}_{t-1}^{\top}\boldsymbol{\mathcal{Q}}_{t-1})^{-1}\boldsymbol{\mathcal{Q}}_{t-1}^{\top}\underbrace{(\bQ^{t-1}(\bb^{t-1})^{\top}-\left[0_{N\times q} \vert \boldsymbol{\mathcal{Q}}_{t-2}\right]\boldsymbol{\mathcal{B}}_{t-1}^{\top}\boldsymbol{\alpha}^{t} )}_{\in \thickspace \mbox{span}(\boldsymbol{\mathcal{Q}}_{t-1})}\\
    &\hspace{6cm}-\bQ^{t-1}(\bb^{t-1})^{\top}+\left[0_{N\times q} \vert \boldsymbol{\mathcal{Q}}_{t-2}\right]\boldsymbol{\mathcal{B}}_{t-1}^{\top}\boldsymbol{\alpha}^{t}\bigg) \notag \\
    &=0
\end{align}
At this point, we have :
\begin{align}
     &\frac{1}{\sqrt{N}}\norm{\hat{\bH}^{t+1}-\bH^{t+1}}_{F} \leqslant \frac{1}{\sqrt{N}}\norm{\mathbf{C}}_{F}+\frac{1}{\sqrt{N}}\norm{\left(\hat{\boldsymbol{\mathcal{H}}}_{t-1}-\boldsymbol{\mathcal{H}}_{t-1}\right)\boldsymbol{\alpha}^{t} \notag \\
     &\hspace{3cm}+\boldsymbol{\mathcal{Q}}_{t-1}(\boldsymbol{\mathcal{Q}}_{t-1}^{\top}\boldsymbol{\mathcal{Q}}_{t-1})^{-1}\left(\hat{\boldsymbol{\mathcal{H}}}_{t-1}-\boldsymbol{\mathcal{H}}_{t-1}\right)^{\top}\bQ^{t}_{\perp}}_{F}
\end{align}
Where
\begin{align}
    \frac{1}{\sqrt{N}}\norm{\left(\hat{\boldsymbol{\mathcal{H}}}_{t-1}-\boldsymbol{\mathcal{H}}_{t-1}\right)\boldsymbol{\alpha}^{t}}_{F} \leqslant \frac{1}{\sqrt{N}}\norm{\hat{\boldsymbol{\mathcal{H}}}_{t-1}-\boldsymbol{\mathcal{H}}_{t-1}}_{F}\norm{\boldsymbol{\alpha}^{t}}_{F}
\end{align}
As previously discussed, $\norm{\boldsymbol{\alpha}^{t}}_{F}$ has a finite limit, and according to the induction hypothesis, $\frac{1}{\sqrt{N}}\norm{\hat{\boldsymbol{\mathcal{H}}}_{t-1}-\boldsymbol{\mathcal{H}}_{t-1}}_{F} \xrightarrow[N \to \infty]{P} 0$. Then 
\begin{align}
    \frac{1}{\sqrt{N}}\norm{\boldsymbol{\mathcal{Q}}_{t-1}(\boldsymbol{\mathcal{Q}}_{t-1}^{\top}\boldsymbol{\mathcal{Q}}_{t-1})^{-1}\left(\hat{\boldsymbol{\mathcal{H}}}_{t-1}-\boldsymbol{\mathcal{H}}_{t-1}\right)^{\top}\bQ^{t}_{\perp}}_{F} \leqslant \frac{1}{\sqrt{N}}\norm{\hat{\boldsymbol{\mathcal{H}}}_{t-1}-\boldsymbol{\mathcal{H}}_{t-1}}_{F}\frac{1}{Nc_{t}^{2}}\norm{\boldsymbol{\mathcal{Q}}_{t-1}}_{F}\norm{\bQ^{t}}_{F}
\end{align}
where $\frac{1}{Nc_{t}^{2}}\norm{\boldsymbol{\mathcal{Q}}_{t-1}}_{F}\norm{\bQ^{t}}_{F}$ converges to a finite limit due to the state evolution proved above. This ultimately shows that 
\begin{equation}
    \frac{1}{\sqrt{N}}\norm{\hat{\bH}^{t+1}-\bH^{t+1}}_{F} \xrightarrow[N \to \infty]{P}0
\end{equation}
and concludes the induction.
\end{proof}
\begin{proof}[Proof of Lemma \ref{lemma:LoAMP_approx2}]
This one is another induction. 
Let $S_{t}$ be the statement 
$\frac{1}{\sqrt{N}}\norm{\bQ^{t}-\bM^{t}}_{F} \xrightarrow[N \to \infty]{P} 0$ and $\frac{1}{\sqrt{N}}\norm{\bH^{t+1}-\bX^{t+1}}_{F} \xrightarrow[N \to \infty]{P} 0$.
\paragraph{Initialization.}
We have $\bQ^{0} = f^{0}(\bX^{0}) = \bM^{0}$ and $\bH^{1} = \mathbf{A}\bQ^{0}, \bX^{1}=\mathbf{A}\bM^{0}$.
\paragraph{Induction.}
We assume $S_{t-1}$ is true, and we prove $S_{t}$. We have
\begin{align}
    \frac{1}{\sqrt{N}}\norm{\bQ^{t}-\bM^{t}}_{F} &= \frac{1}{\sqrt{N}}\norm{f^{t}(\bH^{t})-f^{t}(\bX^{t})}_{F} \notag \\ &\leqslant L_{t}\left(1+\left(\frac{\norm{\bH^{t}}_{F}}{\sqrt{N}}\right)^{k-1}+\left(\frac{\norm{\bX^{t}}_{F}}{\sqrt{N}}\right)^{k-1}\right)\frac{\norm{\bH^{t}-\bX^{t}}_{F}}{\sqrt{N}}
\end{align}
which goes to zero as n goes to infinity from the induction hypothesis. We then prove that $\frac{1}{\sqrt{N}}\norm{\hat{\bH}^{t+1}-\bX^{t+1}}_{F} \xrightarrow[N \to \infty]{P}0$.
\begin{align}
    \hat{\bH}^{t+1}-\bX^{t+1} = \mathbf{A}\bQ^{t}-\bQ^{t-1}(\bb^{t})^{\top}-\mathbf{A}\bM^{t}+\bM^{t-1}(\bb^{t})^{\top}
\end{align}
and
\begin{equation}
    \frac{1}{\sqrt{N}}\norm{\hat{\bH}^{t+1}-\bX^{t+1}}_{F} \leqslant \norm{\mathbf{A}}_{op}\frac{1}{\sqrt{N}}\norm{\bQ^{t}-\bM^{t}}_{F}+\frac{1}{\sqrt{N}}\norm{\bQ^{t-1}-\bM^{t-1}}_{F}\norm{\bb^{t}}_{F}
\end{equation}
using Proposition \ref{op-norm-GOE}, $\norm{\mathbf{A}}_{op} \xrightarrow[N \to \infty]{P} 2$. Using the induction hypothesis, $\frac{1}{\sqrt{N}}\norm{\bQ^{t}-\bM^{t}}_{F}\xrightarrow[N \to \infty]{P} 0,\frac{1}{\sqrt{N}}\norm{\bQ^{t-1}-\bM^{t-1}}_{F}\xrightarrow[N \to \infty]{P} 0$, and $\norm{\bb^{t}}_{F}$ is finite. This concludes the induction step.
\end{proof}
\begin{proof}[Proof of Lemma \ref{lemma:SE_pert_def}]
In this proof, we will consider the $2q\times2q$ covariance matrix $\boldsymbol{\kappa} = \begin{bmatrix}
    \boldsymbol{\kappa}^{1,1} \thickspace \boldsymbol{\kappa}^{1,2} \\
    \boldsymbol{\kappa}^{1,2} \thickspace  \boldsymbol{\kappa}^{2,2}\end{bmatrix}$ and two matrices $\bZ^{1},\bZ^{2} \in (\mathbb{R}^{N\times q})^{2}$ following the distribution $\mathbf{N}\left(0,\boldsymbol{\kappa}\otimes \mathbf{I}_{N}\right)$  
, and we study the corresponding state evolution when the perturbed functions $f_{\epsilon \bY}^{t}$ are considered. We drop the $\epsilon$ exponent on the covariance matrices since we are just studying the well-definiteness of the perturbed SE as an induction. The link with the original SE will be studied in subsequent lemmas.
\begin{align*}
\mathbb{E}_{\bZ}\left[\frac{1}{N}(f^{s}_{\epsilon \bY}(\bZ^{s})^{\top}f^{t}_{\epsilon \bY}(\bZ^{t})\right] &= \mathbb{E}_{\bZ}\left[\frac{1}{N}(f^{s}(\bZ^{s})^{\top}f^{t}(\bZ^{t})\right]+\epsilon\mathbb{E}_{\bZ}\left[\frac{1}{N}(f^{s}(\bZ^{s}))^{\top}\bY^{t}\right]\\
&+\epsilon\mathbb{E}_{\bZ}\left[\frac{1}{N}(f^{t}(\bZ^{t}))^{\top}\bY^{s}\right]+\epsilon^{2}\frac{1}{N}(\bY^{s})^{\top}\bY^{t} \\
&= \mathbb{E}_{\bZ}\left[\frac{1}{N}(f^{s}(\bZ^{s})^{\top}f^{t}(\bZ^{t})\right]+\frac{\epsilon}{N}\mathbb{E}_{\bZ}\left[f^{s}(\bZ^{s})\right]^{\top}\bY^{t}\\
&+\frac{\epsilon}{N}\mathbb{E}_{\bZ}\left[f^{t}(\bZ^{t})\right]^{\top}\bY^{s}+\frac{\epsilon^{2}}{N}(\bY^{s})^{\top}\bY^{t}
\end{align*}
\begin{itemize}
    \item the first term does not depend on the perturbation and is deterministic. Using assumptions (A6), this quantity has a finite limit.
    \item second term is a $q \times q$ matrix where each element have zero mean and variance 
    \begin{equation}
        \mbox{Var}\left[\frac{1}{N}\left(\mathbb{E}\left[f^{s}(\bZ^{s})^{\top}\bY^{t}\right]_{j}^{i}\right)\right] = \frac{1}{N^{2}}\norm{\mathbb{E}\left[f^{s}(\bZ^{s})\right]}_{2}^{2} \leqslant \frac{C}{N}
    \end{equation}
Using the Gaussian tail and the Borel-Cantelli lemma, this term converges almost surely to zero.
\item the third term is treated in the same way as the second one

\item the last term follows from the strong law of large numbers:
\begin{equation}
\lim_{N \to \infty} \frac{1}{N}(\bY^{s})^{\top}\bY^{t} \xrightarrow[n \to \infty]{a.s.}\mathbf{I}_{q \times q}\delta_{s=t}
\end{equation}
\end{itemize}
Putting things together, we get, almost surely:
\begin{equation}
\label{eq:induc_pert_SE}
    \lim_{N \to \infty}\mathbb{E}_{\bZ}\left[ \frac{1}{N}(f^{s}_{\epsilon \bY}(\bZ^{s}))^{\top}f^{t}_{\epsilon \bY}(\bZ^{t})\right] = \lim_{N \to \infty}\mathbb{E}_{\bZ}\left[ \frac{1}{N}(f^{s}(\bZ^{s}))^{\top}f^{t}(\bZ^{t})\right]+\epsilon^{2}\mathbf{I}_{q \times q}\delta_{s=t}
\end{equation}
Verifying the initialization assumptions (A4-A5) is very similar to the previous steps, thus we directly give the result. The initialization reads: 
\begin{align}
\label{eq:pertubed-SE-init}
    &\lim_{N \to \infty} \frac{1}{N}(f^{0}_{\epsilon \bY}(\bX^{0}))^{\top}f^{0}_{\epsilon \bY}(\bX^{0}) = \lim_{N \to \infty} \frac{1}{N}(f^{0}(\bX^{0}))^{\top}f^{0}(\bX^{0})+\epsilon^{2}\mathbf{I}_{q \times q} \\
     &\lim_{N \to \infty} \frac{1}{N}\mathbb{E}\left[(f^{0}_{\epsilon \bY}(\bX^{0}))^{\top}f^{t}_{\epsilon \bY}(\bZ^{t})\right] = \lim_{N \to \infty} \frac{1}{N}\mathbb{E}\left[(f^{0}(\bX^{0}))^{\top}f^{t}(\bZ^{t})\right]
\end{align}
It follows straightforwardly from these equations and a short induction that the resulting state evolution is almost surely non-random.
\end{proof}
\begin{proof}[Proof of Lemma \ref{lemma:pert_full_rank}]
By definition, for any $t\in \mathbb{N}$ :
\begin{equation}
 \bQ^{t,\epsilon \bY} = \bQ^{t}+\epsilon \bY^{t}
\end{equation}
Then
\begin{equation}
    \bQ^{\epsilon \bY,t}_{\perp} = \mathbf{P}^{\perp}_{\boldsymbol{\mathcal{Q}}_{t-1}^{\epsilon \bY}}f^{t}(\bH^{\epsilon \bY,t})+\epsilon \mathbf{P}^{\perp}_{\boldsymbol{\mathcal{Q}}_{t-1}^{\epsilon \bY}}\bY^{t}
\end{equation}
with the parallel term a linear combination of the previous ones.
Denote $\mathcal{F}_{t}$ the $\sigma$-algebra generated by $\bH^{\epsilon \bY,1}, ...,\bH^{\epsilon \bY,t},\bY^{1},...,\bY^{t-1}$. Since $\bY^{t}$ is generated independently of $\mathcal{F}_{t}$, each column $j$ of $\bQ^{\epsilon \bY,t}$ obeys the distribution:
\begin{equation}
    (\bQ^{\epsilon \bY,t}_{\perp})_{j}\vert_{\mathcal{F}_{t}} \sim \mathbf{N}(\mathbf{P}^{\perp}_{\boldsymbol{\mathcal{Q}}_{t-1}^{\epsilon \bY}}(f^{t}(\bH^{\epsilon \bY,t}))_{j},\epsilon^{2}\mathbf{P}^{\perp}_{\boldsymbol{\mathcal{Q}}_{t-1}^{\epsilon \bY}})
\end{equation}
the variance of which is almost surely non-zero whenever $N \geqslant tq$. Thus, when $N\geqslant tq$, the matrix $\boldsymbol{\mathcal{Q}}_{t-1}$ has full column rank.
We now need to control the minimal singular value of $\boldsymbol{\mathcal{Q}}_{t-1}$. Following \cite{bayati2011dynamics}, Lemma 9, we only need to check that, for any column $j$, almost surely, for N sufficiently large, there exists a constant $c_{\epsilon}>0$ such that:
\begin{equation}
\frac{1}{N}\norm{(\bQ_{\perp}^{\epsilon \bY,t})_{j}}^{2} \geqslant c_{\epsilon}
\end{equation}
which follows in almost identical fashion to \cite{berthier2020state}, Lemma 9 using the moments of a $N-tq$ chi-square variable, instead of $N-t$ in the original proof, which extends straightforwardly since $q$ is kept finite.
\end{proof}
\begin{proof}[Proof of Lemma \ref{lemma:unif_conv}]
This result is proven for $q=1$ in \cite{berthier2020state} and the proof for the case of finite, integer $q$ is identical.   
\end{proof}
\begin{proof}[Proof of Lemma \ref{lemma:conv_SE}]
This lemma is proven by induction.
\paragraph{Initialization.}
From equation (\ref{eq:pertubed-SE-init}), it holds that
\begin{equation}
    \mathbf{K}_{1,1}^{\epsilon} = \mathbf{K}_{1,1}+\epsilon^{2} \xrightarrow[\epsilon \to 0]{} \mathbf{K}_{1,1}
\end{equation}
\paragraph{Induction.}
Let t be a non-negative integer. Assume that, for any $r,s\leqslant t$, $\boldsymbol{\kappa}_{\epsilon}^{r,s} \to \boldsymbol{\kappa}^{r,s}$. Then:
\begin{equation}
    \boldsymbol{\kappa}_{\epsilon}^{s+1,t+1} = \lim_{N \to \infty} \mathbb{E}\left[\frac{1}{N}(f_{\epsilon \bY}^{s}(\bZ_{\epsilon \bY}^{s}))^{\top}f^{t}_{\epsilon \bY}(\bZ_{\epsilon \bY}^{t})\right]
\end{equation}
where $\bZ_{\epsilon \bY}^{s},\bZ_{\epsilon \bY}^{t}$ are $n\times q$ Gaussian random matrices whose distributions are specified by $\boldsymbol{\kappa}_{\epsilon}^{s,s},\boldsymbol{\kappa}^{t,t}_{\epsilon}$ and $\boldsymbol{\kappa}_{\epsilon}^{s,t}$ which are $q\times q$ deterministic matrices. Then, from equation (\ref{eq:induc_pert_SE}), we have 
\begin{equation}
    \boldsymbol{\kappa}_{\epsilon}^{s+1,t+1} = \lim_{N \to \infty} \frac{1}{N} \mathbb{E}_{\bZ}\left[\frac{1}{N}(f_{s}(\bZ^{\epsilon,s}))^{\top}f_{t}(\bZ^{\epsilon, t})\right]+\epsilon^{2}\mathbf{I}_{q\times q}\delta_{s=t}
\end{equation}
From Lemma \ref{lemma:pseudo-lip_product}, the function $(\bZ^{s},\bZ^{t})\to \frac{1}{N}f_{s}(\bZ^{s})^{\top}f_{t}(\bZ^{t})$ is uniformly pseudo-Lipschitz. Moreover, from the induction hypothesis, we have :
\begin{equation}
    \lim_{\epsilon \to 0}\boldsymbol{\kappa}_{\epsilon}^{s,t} = \boldsymbol{\kappa}^{s,t}
\end{equation}
thus, using the uniform convergence Lemma \ref{lemma:unif_conv}, we get :
\begin{equation}
    \lim_{\epsilon \to 0}\lim_{N \to \infty} \frac{1}{N}\mathbb{E}\left[f_{s}(\bZ^{\epsilon,s})^{\top}f_{t}(\bZ^{\epsilon,t})\right] = \lim_{N \to \infty}\frac{1}{N}\mathbb{E}\left[f_{s}(\bZ^{s})^{\top}f_{t}(\bZ^{t})\right] = \boldsymbol{\kappa}^{s+1,t+1}
\end{equation}
where $(\bZ^{s},\bZ^{t}) \sim \mathbb{N}(0,\boldsymbol{\kappa}\otimes \mathbf{I}_{n})$ and $\boldsymbol{\kappa} = \begin{bmatrix}\boldsymbol{\kappa}^{s,s},\boldsymbol{\kappa}^{s,t} \\\boldsymbol{\kappa}^{t,s}, \boldsymbol{\kappa}^{t,t}\end{bmatrix}$. This shows that
\begin{equation}
    \boldsymbol{\kappa}^{s+1,t+1}_{\epsilon} \xrightarrow[\epsilon \to 0]{} \boldsymbol{\kappa}^{s+1,t+1}
\end{equation}
which concludes the induction. Similar reasoning proves the convergence of correlations with the initial vector
\begin{equation}
    \boldsymbol{\kappa}_{\epsilon}^{1,t+1} \xrightarrow[\epsilon \to 0]{}\boldsymbol{\kappa}^{1,t+1}
\end{equation}.
\end{proof}
\begin{proof}[Proof of Lemma \ref{lemma:conv_AMP}] This Lemma is proven by induction.
\paragraph{Initialization.}
\begin{align}
\frac{1}{\sqrt{N}}\norm{\bM^{\epsilon \bY,0}-\bM^{0}}_{F} = f^{0}_{\epsilon \bY}(\bX^{0})-f^{0}(\bX^{0}) = \frac{1}{\sqrt{N}}\epsilon\norm{\bY^{0}}_{F}
\end{align}
Using the bound from Lemma $\ref{prop:norm_Gauss}$, there exists an absolute constant $C_{\bY}$ independent of $N$ such that, with high probability:
\begin{equation}
 \frac{\epsilon}{\sqrt{N}}\norm{\bY^{0}}_{F} \leqslant C_{\bY} \epsilon
\end{equation}
Note that $C_{\bY}$ is the same for all $\bY^{t}$. We thus choose $h'_{0}(\epsilon) = C_{\bY}\epsilon$. Then
\begin{align}
    \frac{1}{\sqrt{N}}\norm{\bX^{\epsilon \bY,1}-\bX^{1}}_{F} \leqslant \norm{\mathbf{A}}_{op}\frac{\epsilon}{\sqrt{N}}\norm{\bY^{0}}_{F} \leqslant 2C_{\bY}\epsilon
\end{align}
using the bound on the operator norm of GOE matrices Proposition \ref{op-norm-GOE}, and we can choose $h_{0}(\epsilon) = 2C_{\bY}\epsilon$.  
\paragraph{Induction}
Assume the property is verified up to time $t$, i.e., the functions $h_{0}(\epsilon),h'_{0}(\epsilon),...,h_{t-1}(\epsilon), h'_{t-1}(\epsilon)$ exist and are known. We now need to show $h_{t}(\epsilon),h'_{t}(\epsilon)$ exist. By definition of the iteration:
\begin{align}
\frac{1}{\sqrt{N}}\norm{\bM^{\epsilon \bY,t}-\bM^{t}}_{F} &= \frac{1}{\sqrt{N}}\norm{f^{t}_{\epsilon \bY}(\bX^{\epsilon \bY})-f^{t}(\bX^{t})}_{F} \notag \\
&= \frac{1}{\sqrt{N}}\norm{f^{t}(\bX^{\epsilon \bY})-f^{t}(\bX^{t})+\epsilon \bY^{t}}_{F} \notag \\
&\leqslant L_{t}\left(1+\left(\frac{\norm{\bX^{\epsilon \bY,t}}_{F}}{\sqrt{N}}\right)^{k-1}+\left(\frac{\norm{\bX^{t}}_{F}}{\sqrt{N}}\right)^{k-1}\right)\frac{\norm{\bX^{\epsilon \bY,t}-\bX^{t}}_{F}}{\sqrt{N}}+\frac{1}{\sqrt{N}}\epsilon\norm{\bY^{t}}_{F} \notag \\
&\leqslant L_{t}\left(1+\left(\frac{\norm{\bX^{\epsilon \bY,t}}_{F}}{\sqrt{N}}\right)^{k-1}+\left(\frac{\norm{\bX^{t}}_{F}}{\sqrt{N}}\right)^{k-1}\right)h_{t-1}(\epsilon)+C_{\bY}\epsilon \notag \\
&\leqslant L_{t}\left(1+C_{\epsilon \bY}(k)+\left(\frac{\norm{\bX^{\epsilon \bY,t}}_{F}}{\sqrt{N}}+\frac{\norm{\bX^{\epsilon \bY,t}-\bX^{t}}_{F}}{\sqrt{N}}\right)^{k-1}\right)h_{t-1}(\epsilon)+C_{\bY}\epsilon \notag \\
&\leqslant L_{t}\left(1+C_{\epsilon \bY}(k)+2^{k-2}C_{\epsilon \bY}(k)^{k-1}+2^{k-2}h_{t-1}^{k-1}(\epsilon)\right)h_{t-1}(\epsilon)+C_{\bY}\epsilon \notag \\
\end{align}
where we used the state evolution of the perturbed AMP orbit to show that $\frac{\norm{\bX^{\epsilon \bY,t}}_{F}}{\sqrt{N}}$ has a finite limit and Hölder's inequality. We can thus choose
\begin{equation}
h'_{t}(\epsilon) = L_{t}\left(1+C_{\epsilon \bY}(k)+2^{k-2}C_{\epsilon \bY}(k)^{k-1}+2^{k-2}h_{t-1}^{k-1}(\epsilon)\right)h_{t-1}(\epsilon)+C_{\bY}\epsilon
\end{equation}
which goes to zero when $\epsilon$ goes to zero. Then
\begin{align}
    \frac{1}{\sqrt{N}}\norm{\bX^{\epsilon \bY,t+1}-\bX^{t+1}}_{F} &\leqslant \norm{\mathbf{A}}_{op}\frac{1}{\sqrt{N}}\norm{\bM^{\epsilon \bY,t}-\bM^{t}}_{F}+\frac{1}{\sqrt{N}}\norm{\bM^{\epsilon \bY,t-1}(\bb_{\epsilon \bY}^{t})^{\top}-\bM^{t-1}(\bb^{t})^{\top}}_{F} \notag \\
    & \leqslant 2h'_{t}(\epsilon)+\frac{1}{\sqrt{N}}\norm{\bM^{\epsilon \bY,t-1}(\bb^{\epsilon \bY}_{t})^{\top}-\bM^{t-1}(\bb^{t})^{\top}}_{F} \notag \\
    &\leqslant 2h'_{t}(\epsilon)+\frac{1}{\sqrt{N}}\norm{\bM^{\epsilon \bY,t-1}-\bM^{t-1}}_{F}\norm{\bb^{t}}_{F}+\frac{1}{\sqrt{N}}\norm{\bb_{\epsilon \bY}^{t}-\bb^{t}}_{F}\norm{\bM^{t-1}}_{F}
\end{align}
and
\begin{align}
    \norm{\bb^{t}}_{F} &= \norm{\mathbb{E}\left[\frac{1}{N} \sum_{i=1}^N \frac{\partial f^t_i}{\partial \bZ_i}(\bZ^t)\right]}_{F} \notag \\
    &\leqslant \mathbb{E}\left[\frac{1}{N}\sum_{i=1}^{N}\norm{\frac{\partial f^t_i}{\partial \bZ_i}(\bZ^t)}_{F}\right] 
\end{align}
where $\bZ^{t} \sim \mathbf{N}(0,\boldsymbol{\kappa}_{t,t}\otimes \mathbf{I}_{n})$. Since the function $f^{t}:\mathbb{R}^{N \times q}\to \mathbb{R}^{N \times q}$ is pseudo-Lipschitz of order $k$, the components $f^{t}_{i} : \mathbb{R}^{N \times q} \to \mathbb{R}^{q}$ are pseudo-Lipschitz of order $k$ as well. So are the functions $f_{i,j}^{t} : \mathbb{R}^{N \times q}\to \mathbb{R}$ for $1 \leqslant j \leqslant q$ generating each component of $f^{t}_{i}(\bZ^{t}) \in \mathbb{R}^{q}$ and their $\mathbb{R}^{q} \to \mathbb{R}$ restrictions to the $i-th$ line of $\bZ^{t}$. 
Then
\begin{align}
    \norm{\mathbf{b}^{t}}_{F} \leqslant \frac{1}{N}\sum_{i=1}^{N}q\max_{j}\left\{\mathbb{E}\norm{\nabla_{\bZ^{t}_{i}}f_{i,j}^{t}(\bZ^{t})}_{2}\right\}
\end{align}
where $\max_{j}\left\{\mathbb{E}\norm{\nabla_{\bZ^{t}_{i}}f_{i,j}^{t}(\bZ^{t})}_{2}\right\}$ is bounded using the pseudo-Lipschitz property and a similar argument to the proof of lemma \ref{pseudo-lip-conv}. Let $C_{J}$ be this upper bound, then
\begin{align}
    \frac{1}{\sqrt{N}}\norm{\bM^{\epsilon \bY,t-1}(\bb_{\epsilon \bY}^{t})^{\top}-\bM^{t-1}(\bb^{t})^{\top}}_{F} \leqslant qC_{J}h'_{t-1}(\epsilon)+\frac{1}{\sqrt{N}}\norm{\bM^{t-1}}_{F}\norm{\bb^{t}_{\epsilon \bY}-\bb^{t}}_{F}
\end{align}
Using the same decomposition as before
\begin{align}
    \frac{1}{\sqrt{N}}\norm{\bM^{t-1}}_{F}\norm{\bb^{t}_{\epsilon \bY}-\bb^{t}}_{F}&\leqslant \left(\frac{1}{\sqrt{N}}\norm{\bM^{\epsilon \bY,t-1}-\bM^{t-1}}+\frac{1}{\sqrt{N}}\norm{\bM^{\epsilon \bY,t-1}}\right)\norm{\bb^{t}_{\epsilon \bY}-\bb^{t}}_{F} \notag \\
    &\leqslant \left(h'_{t-1}(\epsilon)+C_{\epsilon \bY,t-1}\right)\norm{\bb^{t}_{\epsilon \bY}-\bb^{t}}_{F}
\end{align}
The definition of the Onsager correction terms gives
\begin{align}
    \norm{\bb^{t}_{\epsilon \bY}-\bb^{t}}_{F} &= \left\Vert {\mathbb{E}\left[\frac{1}{N} \sum_{i=1}^N \frac{\partial f^t_i}{\partial \tilde{\bZ}^{\epsilon \bY,t}_i}(\tilde{\bZ}^{\epsilon \bY,t})\right]-\mathbb{E}\left[\frac{1}{N} \sum_{i=1}^N \frac{\partial f^t_i}{\partial \tilde{\bZ}^{t}_i}(\tilde{\bZ}^{t})\right]} \right\Vert_{F}
\end{align}
where $\tilde{\bZ}^{\epsilon \bY,t} = \bZ(\boldsymbol{\kappa}_{t,t}^{\epsilon \bY})^{1/2}$ where $\bZ \in \mathbb{R}^{N \times q}$ is an i.i.d.~standard normal matrix. Similarly $\tilde{\bZ}^{t} = \bZ(\boldsymbol{\kappa}_{t,t})^{1/2}$. Using the positive definiteness of $\boldsymbol{\kappa}_{t,t}$ along with Lemma \ref{matrix-stein}, we can write, keeping in mind that the perturbation $\epsilon \bY$ doesn't change the derivatives in the Onsager correction:
\begin{align}
    \norm{\bb^{t}_{\epsilon \bY}-\bb^{t}}_{F} &=\norm{(\boldsymbol{\kappa}^{t,t}_{\epsilon \bY})^{-1}\mathbb{E}\left[\frac{1}{N}(\tilde{\bZ}^{\epsilon \bY,t})^{\top}f^{t}(\tilde{\bZ}^{\epsilon \bY,t})\right]-(\boldsymbol{\kappa}^{t,t})^{-1}\mathbb{E}\left[\frac{1}{N}(\bZ^{t})^{\top}f^{t}(\bZ^{t})\right]}_{F} \notag \\
    &\leqslant \norm{(\boldsymbol{\kappa}^{t,t}_{\epsilon \bY})^{-1}-(\boldsymbol{\kappa}^{t,t})^{-1}}_{F}\mathbb{E}\left[\frac{1}{N}(\tilde{\bZ}^{\epsilon \bY,t})^{\top}f^{t}(\tilde{\bZ}^{\epsilon \bY,t})\right]+\notag\\
    &\hspace{3cm}(\boldsymbol{(\kappa}^{t,t})^{-1})\norm{\mathbb{E}\left[\frac{1}{N}(\tilde{\bZ}^{\epsilon \bY,t})^{\top}f^{t}(\tilde{\bZ}^{\epsilon \bY,t})\right]-\mathbb{E}\left[\frac{1}{N}(\bZ^{t})^{\top}f^{t}(\bZ^{t})\right]}_{F}\notag
\end{align}
The function $\mathbb{R}^{N\times q}\to \mathbb{R}^{q\times q}, \bZ \to \bZ^{\top}f^{t}(\bZ)$ is pseudo-Lipschitz of order $k+1$. Moreover, from Lemma 8, $\boldsymbol{\kappa}_{\epsilon \bY}^{t,t}\xrightarrow[\epsilon \to 0]{}\boldsymbol{\kappa}^{t,t}$. Thus using Lemma \ref{lemma:unif_conv}, we get 
\begin{equation}
    \lim_{\epsilon \to 0}\norm{\mathbb{E}\left[\frac{1}{N}(\tilde{\bZ}^{\epsilon \bY,t})^{\top}f_{t}(\tilde{\bZ}^{\epsilon \bY,t})\right]-\mathbb{E}\left[\frac{1}{N}(\bZ^{t})^{\top}f_{t}(\bZ^{t})\right]}_{F}=0 \\
\end{equation}
and Lemma \ref{lemma:conv_SE} gives $\lim_{\epsilon \to 0} \norm{(\boldsymbol{\kappa}_{t,t}^{\epsilon \bY})^{-1}-(\boldsymbol{\kappa}_{t,t})^{-1}}_{F} = 0$, which concludes the induction.
\end{proof}
\section{Low-rank perturbations and projections}
\label{app:low_rank_pert}
As mentioned in Section \ref{sec:extensions}, AMP iterations associated to inference problems often present 
non-trivial dependencies between the non-linearities and the random matrices of the corresponding graph. These dependencies typically 
take the form of low-rank linear perturbations, or an additional argument in the non-linearities composed of a non-linear transform involving the random matrices of the graph, see the examples of Section \ref{sec:applications}. In this appendix, we propose a generic way of dealing with these dependencies by leveraging on the matrix-valued iteration Eq.(\ref{eq:sym-amp-iteration-1}-\ref{eq:sym-amp-iteration-2}), in the form of two lemmas.
\subsection{Additive low-rank perturbation}
\begin{lemma}
    \label{lemma:spike_SE}
    Let $\mathbf{V}_{0}\in \mathbb{R}^{N \times q}$ be a given matrix such that the quantity $\frac{1}{\sqrt{N}}\norm{\mathbf{V}_{0}}_{F}$ converges to a finite constant as $N \to \infty$. Define the matrix 
    \begin{equation}
        \hat{\mathbf{A}} = \mathbf{A}+\frac{1}{N}\mathbf{V}_{0}\mathbf{V}_{0}^{\top} \quad \in \mathbb{R}^{N \times N},
    \end{equation}
consider the AMP iteration initialized with $\mathbf{X}^{0} \in \mathbb{R}^{N \times q}$
\begin{align}
    \bX^{t+1} &= \hat{\mathbf{A}}\bM^{t}-\bM^{t-1}(\bb^{t})^{\top} && \in \R^{N\times q} \, ,  \\
    \bM^{t} &=f^{t}(\bX^{t}) && \in \R^{N\times q} \, , \\
    \bb^t &= \frac{1}{N} \sum_{i=1}^N \frac{\partial f^t_i}{\partial \bX_i}(\bX^t) && \in 
    \R^{q\times q}\, . 
\end{align}
and the following state evolution recursion, initialized with $\boldsymbol{\mu}_{0} = 0_{q \times q}$,
    \begin{align}
        \label{eq:spike_SE}
        \boldsymbol{\mu}_{0}, \thickspace \boldsymbol{\kappa}^{1,1} &= \lim_{N \to \infty} \frac{1}{N}  f^{0}(\mathbf{V}_{0}\boldsymbol{\mu}_{0}+\bX^{0})^{\top}f^{0}(\mathbf{V}_{0}\boldsymbol{\mu}_{0}+\bX^{0})\\
        \boldsymbol{\mu}^{s+1} &= \lim_{N \to +\infty} \frac{1}{N}\mathbb{E}\left[(\mathbf{V}_{0})^{\top}f^{s}\left(\mathbf{V}_{0}\boldsymbol{\mu}^{s}+\bZ^s\right)\right]\\
        \boldsymbol{\kappa}^{t+1, s+1} &= \boldsymbol{\kappa}^{s+1, t+1} = \lim_{N \to \infty} \frac{1}{N} \E\left[ f^s(\mathbf{V}_{0}\boldsymbol{\mu}^{s}+\bZ^s)^\top f^t(\mathbf{V}_{0}\boldsymbol{\mu}^{t}+\bZ^t) \right] \, , \qquad s \in \{ 0, \dots, t \} \, .
    \end{align}
    where $(\bZ^{1}, ..., \bZ^{t}) \sim \mathbf{N}\left(0,\left(\kappa^{s,r}\right)_{s,r\leqslant t} \otimes \mathbf{I}_{N}\right)$. Assume $\ref{it:ass-sym-1}-\ref{it:ass-sym-6}$ and that for any $t \in \mathbb{N}$,
    any $1 \leqslant i \leqslant N$, the derivative $\frac{\partial f^t_i}{\partial \bX_i}$ is pseudo-Lipschitz of order $k$.
    Then for any sequence $\phi_{N}:(\mathbb{R}^{N \times q})^{t+1} \to \mathbb{R}$ of pseudo-Lipschitz functions
    \begin{equation}
        \phi_{N}\left(\mathbf{X}^{0}, \mathbf{X}^{1}, ..., \mathbf{X}^{t}\right) \approxP \mathbb{E}\left[\phi_{N}\left(\mathbf{V}_{0}\boldsymbol{\mu}^{0}+\mathbf{Z}^{0},\mathbf{V}_{0}\boldsymbol{\mu}^{1}+\mathbf{Z}^{1},...,\mathbf{V}_{0}\boldsymbol{\mu}^{t}+\mathbf{Z}^{t}\right)\right]
    \end{equation}
\end{lemma}
\begin{proof}[Proof of Lemma \ref{lemma:spike_SE}]
    The proof follows a similar argument to that of Lemma 3.4 from \cite{deshpande2017asymptotic}.
    Consider the following iteration 
    \begin{align}
        \label{eq:spike_aux_it}
        \bS^{t+1} &= \mathbf{A}\tilde{\bM}^{t}-\tilde{\bM}^{t-1}(\tilde{\bb}^{t})^{\top} && \in \R^{N\times q} \, ,  \\
        \tilde{\bM}^{t} &=f^{t}(\mathbf{V}_{0}\boldsymbol{\mu}^{t}+\bS^{t}) && \in \R^{N\times q} \, , \\
        \tilde{\bb}^t &= \frac{1}{N} \sum_{i=1}^N \frac{\partial f^t_i}{\partial \bS_i}(\mathbf{V}_{0}\boldsymbol{\mu}^{t}+\bS^{t}) && \in 
        \R^{q\times q}\, . 
    \end{align}
    initialized with $\mathbf{S}^{0} = \mathbf{X}^{0}-\boldsymbol{\mu}_{0}\mathbf{V}_{0}$. Under assumptions $\ref{it:ass-sym-1}-\ref{it:ass-sym-6}$, the iterates $\mathbf{S}^{t}$ obey the state evolution equations
    Eq.\eqref{eq:spike_SE} owing to Theorem \ref{thm:symmetric}. We now prove the following statement by induction.
    \begin{align}
        \label{eq:spike_equiv}
        \forall t\in \mathbb{N} \quad \frac{1}{\sqrt{N}}\norm{\mathbf{X}^{t}-\mathbf{S}^{t}-\mathbf{V}_{0}\boldsymbol{\mu}^{t}}_{F} \xrightarrow[N \to \infty]{P} 0
    \end{align}
    The statement is true at $t=0$ owing to the initialization of the sequences. Assume the statement is true up to time $t$. We can then write
    \begin{align}
&\mathbf{X}^{t+1}-\mathbf{S}^{t+1}-\mathbf{V}_{0}\boldsymbol{\mu}^{t+1} = \hat{\mathbf{A}}\bM^{t}-\bM^{t-1}(\bb^{t})^{\top}-\mathbf{A}\tilde{\bM}^{t}+\tilde{\bM}^{t-1}(\tilde{\bb}^{t})^{\top}-\mathbf{V}_{0}\boldsymbol{\mu}^{t+1} \\
&=\mathbf{A}\left(f^{t}(\mathbf{X}^{t})-f^{t}\left(\mathbf{V}_{0}\boldsymbol{\mu}^{t}+\bS^{t}\right)\right)+\frac{1}{N}\mathbf{V}_{0}\mathbf{V}_{0}^{\top}f^{t}\left(\mathbf{X}^{t}\right)-\mathbf{V}_{0}\boldsymbol{\mu}^{t+1} \notag \\
&+\left(f^{t-1}(\mathbf{V}_{0}\boldsymbol{\mu}^{t-1}+\mathbf{S}^{t-1})-f^{t-1}(\mathbf{X}^{t-1}))\right)(\tilde{\mathbf{b}}^{t})^{\top}+f^{t-1}(\mathbf{X}^{t-1})(\tilde{\mathbf{b}}^{t}-\mathbf{b}^{t})^{\top}
    \end{align}
The triangle inequality then gives
\begin{align}
    \label{eq:inter_spike_tri}
    &\frac{1}{\sqrt{N}}\norm{\mathbf{X}^{t+1}-\mathbf{S}^{t+1}-\mathbf{V}_{0}\boldsymbol{\mu}^{t+1}}_{N} \leqslant \frac{1}{\sqrt{N}}\norm{\mathbf{A}}_{op}\norm{f^{t}(\mathbf{X}^{t})-f^{t}\left(\mathbf{V}_{0}\boldsymbol{\mu}^{t}+\bS^{t}\right)}_{F}\notag\\
    &+\frac{1}{\sqrt{N}}\norm{\frac{1}{N}\mathbf{V}_{0}\mathbf{V}_{0}^{\top}f^{t}\left(\mathbf{X}^{t}\right)-\mathbf{V}_{0}\boldsymbol{\mu}^{t+1}}_{F} \notag \\
    &+\frac{1}{\sqrt{N}}\norm{\left(f^{t-1}(\mathbf{V}_{0}\boldsymbol{\mu}^{t-1}+\mathbf{S}^{t-1})-f^{t-1}(\mathbf{X}^{t-1}))\right)(\tilde{\mathbf{b}}^{t})^{\top}}_{F}+\frac{1}{\sqrt{N}}\norm{f^{t-1}(\mathbf{X}^{t-1})(\tilde{\mathbf{b}}^{t}-\mathbf{b}^{t})^{\top}}_{F}
\end{align}
and, owing to the pseudo-Lipschitz property
\begin{align}
    \frac{1}{\sqrt{N}}\norm{f^{t}(\mathbf{X}^{t})-f^{t}\left(\mathbf{V}_{0}\boldsymbol{\mu}^{t}+\bS^{t}\right)}_{F} &\leqslant \notag \\
    &\hspace{-1cm}L\left(1+\left(\frac{\norm{\mathbf{X}^{t}}_{F}}{\sqrt{N}}\right)^{k-1}+\left(\frac{\norm{\mathbf{V}_{0}\boldsymbol{\mu}^{t}+\mathbf{S}^{t}}_{F}}{\sqrt{N}}\right)^{k-1}\right)\frac{\norm{\mathbf{X}^{t}-\mathbf{V}^{0}\boldsymbol{\mu^{t}}-\mathbf{S}^{t}}_{F}}{\sqrt{N}},
\end{align}
where the state evolution verified by iteration Eq.\eqref{eq:spike_aux_it} ensures that $\frac{\norm{\mathbf{V}_{0}\boldsymbol{\mu}^{t}+\mathbf{S}^{t}}_{F}}{\sqrt{N}}$ is bounded with high probability. The induction hypothesis then gives that  
$\frac{\norm{\mathbf{X}^{t}-\mathbf{V}^{0}\boldsymbol{\mu^{t}}-\mathbf{S}^{t}}_{F}}{\sqrt{N}} \xrightarrow[N \to \infty]{P} 0$, which, together with the previous statement ensures that $\frac{\norm{\mathbf{X}^{t}}_{F}}{\sqrt{N}}$ is also bounded with high probability. Combining this with proposition
\ref{op-norm-GOE} shows that 
\begin{equation}
    \frac{1}{\sqrt{N}}\norm{\mathbf{A}}_{op}\norm{f^{t}(\mathbf{X}^{t})-f^{t}\left(\mathbf{V}_{0}\boldsymbol{\mu}^{t}+\bS^{t}\right)}_{F} \xrightarrow[N \to \infty]{P} 0.
\end{equation} 
Then
\begin{align}
    \frac{1}{\sqrt{N}}\norm{\frac{1}{N}\mathbf{V}_{0}\mathbf{V}_{0}^{\top}f^{t}\left(\mathbf{V}_{0}\boldsymbol{\mu}^{t}+\mathbf{S}^{t}\right)-\mathbf{V}_{0}\boldsymbol{\mu}^{t+1}}_{F} \leqslant \frac{\norm{\mathbf{V}_{0}}_{F}}{\sqrt{N}}\norm{\frac{1}{N}\mathbf{V}_{0}^{\top}f^{t}\left(\bX^{t}\right)-\boldsymbol{\mu}^{t+1}}_{F}
\end{align}
where $\norm{\mathbf{V}_{0}}_{F}/\sqrt{N}$ is bounded with high probability by assumption. Since the function $\mathbf{V}_{0}^{\top}f^{t}(.)$ is pseudo-Lipschitz, we can use the induction hypothesis and SE equations together with the definition of $\boldsymbol{\mu}^{t}$ show that the r.h.s. goes to zero with high probability. The third term of the sum in the r.h.s. of Eq.\eqref{eq:inter_spike_tri} can be bounded in similar fashion
to the first one using the pseudo-Lipschitz property, the induction hypothesis and the boundedness of the norm of the Onsager term $\tilde{\mathbf{b}}^{t}$, which can be expressed as 
a pseudo-Lipschitz function of $\mathbf{S}^{t}$ using the SE property of iteration Eq.\eqref{eq:spike_aux_it} and Lemma \ref{matrix-stein}. The last term then verifies
\begin{align}
    \frac{1}{\sqrt{N}}\norm{f^{t-1}(\mathbf{X}^{t-1})(\tilde{\mathbf{b}}^{t}-\mathbf{b}^{t})^{\top}}_{F} \leqslant \frac{1}{\sqrt{N}}\norm{f^{t-1}(\mathbf{X}^{t-1})}_{F}\norm{\tilde{\mathbf{b}}^{t}-\mathbf{b}^{t}}_{F}
\end{align}
where $\frac{1}{\sqrt{N}}\norm{f^{t-1}(\mathbf{X}^{t-1})}_{F}$ is bounded w.h.p. owing to the induction hypothesis, pseudo-Lipschitz property of $f^{t-1}$ and the SE equations of iteration Eq.\eqref{eq:spike_aux_it}, and the difference in Onsager terms verifies
\begin{align}
    \norm{\tilde{\mathbf{b}}^{t}-\mathbf{b}^{t}}_{F} &= \frac{1}{N}\norm{\sum_{i=1}^{N}\left(\frac{\partial f^t_i}{\partial \bS_i}(\mathbf{V}_{0}\boldsymbol{\mu}^{t}+\bS^{t})-\frac{\partial f^t_i}{\partial \bX_i}(\bX^t)\right)}_{F} \notag \\
    &\leqslant \sup_{1\leqslant i \leqslant N} \norm{\frac{\partial f^t_i}{\partial \bS_i}(\mathbf{V}_{0}\boldsymbol{\mu}^{t}+\bS^{t})-\frac{\partial f^t_i}{\partial \bX_i}(\bX^t)}_{F}
\end{align}
where we remind that $f_{i}^{t}:\mathbb{R}^{N \times q} \to \mathbb{R}^{q}$ and is therefore a low-dimensional observable, for which the pseudo-Lipschitz assumption implies that there exists a constant $L$ such that
\begin{equation}
    \norm{\tilde{\mathbf{b}}^{t}-\mathbf{b}^{t}}_{F} \leqslant L\left(1+\left(\frac{\norm{\mathbf{X}^{t}}_{F}}{\sqrt{N}}\right)^{k-1}+\left(\frac{\norm{\mathbf{V}_{0}\boldsymbol{\mu}^{t}+\mathbf{S}^{t}}_{F}}{\sqrt{N}}\right)^{k-1}\right)\frac{\norm{\mathbf{X}^{t}-\mathbf{V}^{0}\boldsymbol{\mu^{t}}-\mathbf{S}^{t}}_{F}}{\sqrt{N}}
\end{equation}
which converges to zero with high probability for large N using the induction hypthesis and the SE equations of iteration $\eqref{eq:spike_aux_it}$. This concludes the induction
and proves the statement Eq.\eqref{eq:spike_equiv}. The proof of Lemma \ref{lemma:spike_SE} follows immediately from the pseudo-Lipschitz property, the property Eq.\eqref{eq:spike_equiv} and the SE equations of iteration Eq.\eqref{eq:spike_aux_it}.
\end{proof}
\subsection{Dependence on an additional linear observation}
\begin{lemma}
    \label{lemma:proj_SE}
    Let $\mathbf{W}_{0}\in \mathbb{R}^{N \times q}$ be a matrix such that $\frac{1}{N}\norm{\mathbf{W}_{0}^{\top}\mathbf{W}_{0}}_{F}$ converges to a finite constant as $N \to \infty$, and a given pseudo-Lipschitz function $\varphi : \mathbb{R}^{N \times q} \to \mathbb{R}^{N}$. Consider the AMP iteration initialized with $\mathbf{X}^{0} \in \mathbb{R}^{N \times q}$
\begin{align}
    \label{eq:proj_AMP_it}
    \bX^{t+1} &= \mathbf{A}\bM^{t}-\bM^{t-1}(\bb^{t})^{\top} && \in \R^{N\times q} \, ,  \\
    \bM^{t} &=f^{t}(\varphi\left(\mathbf{A}\mathbf{W}_{0}\right),\bX^{t}) && \in \R^{N\times q} \, , \\
    \bb^t &= \frac{1}{N} \sum_{i=1}^N \frac{\partial f^t_i}{\partial \bX_i}(\varphi\left(\mathbf{A}\mathbf{W}_{0}\right),\bX^t) && \in 
    \R^{q\times q}\, . 
\end{align}
where the functions $f^{t} : \mathbb{R}^{N \times (q+1)} \to \mathbb{R}^{N \times q}$ are pseudo-Lipschitz. Consider the following state evolution recursion, initialized with $\boldsymbol{\nu}^{0},\hat{\boldsymbol{\nu}}^{0} = 0_{q \times q}$,
    \begin{align}
        \label{eq:proj_SE}
        &\boldsymbol{\nu}^{0}, \hat{\boldsymbol{\nu}}^{0}, \boldsymbol{\kappa}^{1,1} = \frac{1}{N}f^{0}(\bX^{0})^{\top}f^{0}(\bX^{0}) \\
        &\boldsymbol{\nu}^{t+1} = \lim_{N \to \infty} \frac{1}{N}\mathbb{E}\left[\mathbf{W}_{0}^{\top}f^{t}\left(\varphi(\mathbf{Z}_{\mathbf{W}_{0}}), \mathbf{Z}_{\mathbf{W}_{0}}\rho_{\mathbf{W}_{0}}^{-1}\boldsymbol{\nu}^{t}+\mathbf{W}_{0}\hat{\boldsymbol{\nu}}^{t}+\mathbf{Z}^{t}\right)\right] \\
        &\hat{\boldsymbol{\nu}}^{t+1} = \lim_{N \to \infty} \frac{1}{N}\mathbb{E}\left[\sum_{i=1}^{N}\frac{\partial f_{i}^{t}}{\partial \mathbf{Z}_{\mathbf{W}_{0},i},\varphi}\left(\varphi(\mathbf{Z}_{\mathbf{W}_{0}}), \mathbf{Z}_{\mathbf{W}_{0}}\rho_{\mathbf{W}_{0}}^{-1}\boldsymbol{\nu}^{t}+\mathbf{W}_{0}\hat{\boldsymbol{\nu}}^{t}+\mathbf{Z}^{t}\right)\right] \\
        &\boldsymbol{\kappa}^{t+1, s+1} = \boldsymbol{\kappa}^{s+1, t+1} = \notag \\
        &\lim_{N \to \infty} \frac{1}{N}\mathbb{E}\bigg[\left(f^{s}\left(\varphi(\mathbf{Z}_{\mathbf{W}_{0}}), \mathbf{Z}_{\mathbf{W}_{0}}\rho_{\mathbf{W}_{0}}^{-1}\boldsymbol{\nu}^{s}+\mathbf{W}_{0}\hat{\boldsymbol{\nu}}^{s}+\mathbf{Z}^{s}\right)-\mathbf{W}_{0}\rho_{\mathbf{W}_{0}}^{-1}\boldsymbol{\nu}^{s+1}\right)^{\top} \notag \\
        &\hspace{6cm}\left(f^{t}\left(\varphi(\mathbf{Z}_{\mathbf{W}_{0}}), \mathbf{Z}_{\mathbf{W}_{0}}\rho_{\mathbf{W}_{0}}^{-1}\boldsymbol{\nu}^{t}+\mathbf{W}_{0}\hat{\boldsymbol{\nu}}^{t}+\mathbf{Z}^{t}\right)-\mathbf{W}_{0}\rho_{\mathbf{W}_{0}}^{-1}\boldsymbol{\nu}^{t+1}\right)\bigg]
    \end{align}
    where the notation $\partial \mathbf{Z}_{\mathbf{W}_{0,i},\varphi}$ denotes a derivatives w.r.t. the argument of $\varphi$, $\rho_{\mathbf{W}_{0}} = \frac{1}{N}\mathbf{W}_{0}^{\top}\mathbf{W}_{0}$, and $\mathbf{Z}_{\mathbf{W}_{0}} \sim \mathbf{N}(0, \rho_{\mathbf{W}_{0}}\otimes \mathbf{I}_{N})$ is independent from the $(\bZ^{1}, ..., \bZ^{t}) \sim \mathbf{N}\left(0,\left(\kappa^{s,r}\right)_{s,r\leqslant t} \otimes \mathbf{I}_{N}\right)$. Assume $\ref{it:ass-sym-1}-\ref{it:ass-sym-6}$ and that for any $t \in \mathbb{N}$,
    any $1 \leqslant i \leqslant N$, the derivative $\frac{\partial f^t_i}{\partial \bX_i}$ is pseudo-Lipschitz of order $k$.
    Then for any sequence $\phi_{N}:(\mathbb{R}^{N \times q})^{t+1} \to \mathbb{R}$ of pseudo-Lipschitz functions
    \begin{equation}
        \phi_{N}\left(\mathbf{X}^{0}, \mathbf{X}^{1}, ..., \mathbf{X}^{t}\right) \approxP \mathbb{E}\left[\phi_{N}\left(\mathbf{Z}_{\mathbf{W}_{0}}\rho_{\mathbf{W}_{0}}^{-1}\boldsymbol{\nu}^{0}+\mathbf{W}_{0}\hat{\boldsymbol{\nu}}^{0}+\mathbf{Z}^{0},...,\mathbf{Z}_{\mathbf{W}_{0}}\rho_{\mathbf{W}_{0}}^{-1}\boldsymbol{\nu}^{t}+\mathbf{W}_{0}\hat{\boldsymbol{\nu}}^{t}+\mathbf{Z}^{t}\right)\right]
    \end{equation}
\end{lemma}
\begin{proof}[Proof of lemma \ref{lemma:proj_SE}]
Consider the following iteration
\begin{align}
    \label{eq:proj_aux_it}
    \bS^{t+1} &= \tilde{\mathbf{A}}\tilde{\bM}^{t}-\tilde{\bM}^{t-1}(\tilde{\bb}^{t})^{\top} && \in \R^{N\times q} \, ,  \\
    \tilde{\bM}^{t} &=f^{t}\left(\varphi(\mathbf{A}\mathbf{W}_{0}), \mathbf{A}\mathbf{W}_{0}\rho_{\mathbf{W}_{0}}^{-1}\boldsymbol{\nu}^{t}+\mathbf{W}_{0}\hat{\boldsymbol{\nu}}^{t}+\mathbf{S}^{t}\right)-\mathbf{W}_{0}\rho_{\mathbf{W}_{0}}^{-1}\boldsymbol{\nu}^{t+1} && \in \R^{N\times q} \, , \\
    \tilde{\bb}^t &= \frac{1}{N} \sum_{i=1}^N \frac{\partial f^t_i}{\partial \bS_i}\left(\varphi(\mathbf{A}\mathbf{W}_{0}), \mathbf{A}\mathbf{W}_{0}\rho_{\mathbf{W}_{0}}^{-1}\boldsymbol{\nu}^{t}+\mathbf{W}_{0}\hat{\boldsymbol{\nu}}^{t}+\mathbf{S}^{t}\right) && \in 
    \R^{q\times q}\, . 
\end{align}
where $\tilde{\mathbf{A}}$ is a copy of $\mathbf{A}$ independent on $\mathbf{Z}_{\mathbf{W}_{0}}$. Under assumptions $\ref{it:ass-sym-1}-\ref{it:ass-sym-6}$ and conditionally on $\mathbf{A}\mathbf{W}_{0}$, the iterates $\mathbf{S}^{t}$ obey the state evolution equations
Eq.\eqref{eq:proj_SE} where the $\mathbf{Z}_{\mathbf{W}_{0}}$ are replaced by fixed $\mathbf{A}\mathbf{W}_{0}$, owing to Theorem \ref{thm:symmetric}. For any $t$, the composition of $f^{t}$ and $\varphi$ is pseudo-Lipschitz of order $k$, and owing to Lemma \ref{conv_lemmas_app}, $\frac{1}{\sqrt{N}}\norm{\mathbf{A}\mathbf{W}_{0}-\mathbf{Z}_{\mathbf{W}_{0}}}_{F}\xrightarrow[N \to +\infty]{P} 0$. Using the pseudo-Lipschitz property, the assumption on $\mathbf{W}_{0}$ to bound the norms of $\frac{1}{\sqrt{N}}\mathbf{A}\mathbf{W}_{0}$ and $\frac{1}{\sqrt{N}}\mathbf{W}_{0}$ w.h.p., and Lemma \ref{pseudo-lip-conv}, we obtain that iteration Eq.\eqref{eq:proj_aux_it} verifies the SE equations Eq.\eqref{eq:proj_SE}, where the expectations are taken w.r.t. $\mathbf{Z}_{\mathbf{W}_{0}}$ and all the $\mathbf{Z}^{s}$ for $0\leqslant s\leqslant t$. We now prove the following statement by induction
\begin{equation}
    \forall t \in \mathbb{N} \quad \frac{1}{\sqrt{N}}\norm{\mathbf{X}^{t}-\mathbf{A}\mathbf{W}_{0}\rho_{\mathbf{W}_{0}}^{-1}\boldsymbol{\nu}^{t}-\mathbf{W}_{0}\hat{\boldsymbol{\nu}}^{t}-\mathbf{S}^{t}}_{F} \xrightarrow[N \to \infty]{P} 0
\end{equation}
The property is true at $t=0$ owing to the initialization of both sequences. Assume the property is verified up to time $t$. Then, denoting the increment $\Delta^{t} = \mathbf{X}^{t}-\mathbf{A}\mathbf{W}_{0}\rho_{\mathbf{W}_{0}}^{-1}\boldsymbol{\nu}^{t}-\mathbf{W}_{0}\hat{\boldsymbol{\nu}}^{t}-\mathbf{S}^{t}$
\begin{align}
    \label{eq:increment_proj}
    \Delta^{t} = \mathbf{A}\bM^{t}-\bM^{t-1}(\bb^{t})^{\top}-\left(\tilde{\mathbf{A}}\tilde{\bM}^{t}-\tilde{\bM}^{t-1}(\tilde{\bb}^{t})^{\top}\right)-\mathbf{A}\mathbf{W}_{0}\rho_{\mathbf{W}_{0}}^{-1}\boldsymbol{\nu}^{t+1}-\mathbf{W}_{0}\hat{\boldsymbol{\nu}}^{t+1}
\end{align}
Consider then the iteration Eq.\eqref{eq:proj_AMP_it}, where we condition on the value of $\mathbf{A}\mathbf{W}_{0}$ at each iteration. A straightforward induction starting from the initialization then shows that, for any $t \in \mathbb{N}$
\begin{align}
    \mathbf{X}^{t+1}_{\vert \mathbf{A}\mathbf{W}_{0}} = \mathbf{A}_{\vert\mathbf{A}\mathbf{W}_{0}}f^{t}(\varphi\left(\mathbf{A}\mathbf{W}_{0}\right),\bX^{t}_{\vert\mathbf{A}\mathbf{W}_{0}})-f^{t-1}(\varphi\left(\mathbf{A}\mathbf{W}_{0}\right),\bX^{t-1}_{\vert \mathbf{A}\mathbf{W}_{0}})\left(\frac{1}{N} \sum_{i=1}^N \frac{\partial f^t_i}{\partial \bX_i}(\varphi\left(\mathbf{A}\mathbf{W}_{0}\right),\bX^t_{\vert\mathbf{A}\mathbf{W}_{0}})\right)^{\top}
\end{align}
Using the same 
lemma from \cite{bayati2011dynamics,javanmard2013state} used in the proof of Lemma \ref{lemma:cond_dist_LoAMP}, we may write
\begin{align}
    \mathbf{A}_{\vert \mathbf{A}\mathbf{W}_{0}} &= \mathbf{A}-\mathbf{P}_{\mathbf{W}_{0}}\mathbf{A}\mathbf{P}_{\mathbf{W}_{0}}+\mathbf{P}^{\perp}_{\mathbf{W}_{0}}\tilde{\mathbf{A}}\mathbf{P}^{\perp}_{\mathbf{W}_{0}} \\
    &= \mathbf{A}\mathbf{P}_{\mathbf{W}_{0}}+\mathbf{P}_{\mathbf{W}_{0}}\mathbf{A}-\mathbf{P}_{\mathbf{W}_{0}}\mathbf{A}\mathbf{P}_{\mathbf{W}_{0}}+\mathbf{P}^{\perp}_{\mathbf{W}_{0}}\tilde{\mathbf{A}}\mathbf{P}^{\perp}_{\mathbf{W}_{0}}
\end{align}
where $\tilde{\mathbf{A}}$ is an independent copy of $\mathbf{A}$ and $\mathbf{P}_{\mathbf{W}_{0}} = \mathbf{W}_{0}\left(\mathbf{W}_{0}^{\top}\mathbf{W}_{0}^{\top}\right)^{-1}\mathbf{W}_{0}^{\top} = \frac{1}{N}\mathbf{W}_{0}\rho_{\mathbf{W}_{0}}^{-1}\mathbf{W}_{0}^{\top} $ is always well-defined for $n \geqslant q$. We can then lift the conditioning by considering the 
distribution of $\mathbf{A}\mathbf{W}_{0}$ (which is straightforward since there is no correlation between $\mathbf{A}$ and $\mathbf{W}_{0}$) in all subsequent expressions. The increment Eq.\eqref{eq:increment_proj} becomes
\begin{align}
    &\left(\mathbf{A}\mathbf{P}_{\mathbf{W}_{0}}+\mathbf{P}_{\mathbf{W}_{0}}\mathbf{A}-\mathbf{P}_{\mathbf{W}_{0}}\mathbf{A}\mathbf{P}_{\mathbf{W}_{0}}+\mathbf{P}^{\perp}_{\mathbf{W}_{0}}\tilde{\mathbf{A}}\mathbf{P}^{\perp}_{\mathbf{W}_{0}}\right)\bM^{t}-\bM^{t-1}(\bb^{t})^{\top}-\left(\tilde{\mathbf{A}}\tilde{\bM}^{t}-\tilde{\bM}^{t-1}(\tilde{\bb}^{t})^{\top}\right) \notag \\
    &-\mathbf{A}\mathbf{W}_{0}\rho_{\mathbf{W}_{0}}^{-1}\boldsymbol{\nu}^{t+1}-\mathbf{W}_{0}\hat{\boldsymbol{\nu}}^{t+1}
\end{align}
where we chose the matrix $\tilde{\mathbf{A}}$ coming from the decomposition of $\mathbf{A}$ to define the iteration Eq.\eqref{eq:proj_aux_it}, and
\begin{align}
    &\Delta^{t} = \mathbf{A}\mathbf{P}_{\mathbf{W}_{0}}f^{t}\left(\varphi(\mathbf{A}\mathbf{W}_{0}),\mathbf{X}^{t}\right)+\mathbf{P}_{\mathbf{W}_{0}}\mathbf{A}f^{t}\left(\varphi(\mathbf{A}\mathbf{W}_{0}),\mathbf{X}^{t}\right)-\mathbf{P}_{\mathbf{W}_{0}}\mathbf{A}\mathbf{P}_{\mathbf{W}_{0}}f^{t}\left(\varphi(\mathbf{A}\mathbf{W}_{0}),\mathbf{X}^{t}\right) \notag \\
    &+\mathbf{P}^{\perp}_{\mathbf{W}_{0}}\tilde{\mathbf{A}}\mathbf{P}^{\perp}_{\mathbf{W}_{0}}f^{t}\left(\varphi(\mathbf{A}\mathbf{W}_{0}),\mathbf{X}^{t}\right)- f^{t-1}\left(\varphi(\mathbf{A}\mathbf{W}_{0}),\mathbf{X}^{t-1}\right)(\mathbf{b}^{t})^{\top}-\mathbf{A}\mathbf{W}_{0}\rho_{\mathbf{W}_{0}}^{-1}\boldsymbol{\nu}^{t+1}-\mathbf{W}_{0}\hat{\boldsymbol{\nu}}^{t+1} \notag \\
    &-\tilde{\mathbf{A}}\left(f^{t}\left(\varphi(\mathbf{A}\mathbf{W}_{0}), \mathbf{A}\mathbf{W}_{0}\rho_{\mathbf{W}_{0}}^{-1}\boldsymbol{\nu}^{t}+\mathbf{W}_{0}\hat{\boldsymbol{\nu}}^{t}+\mathbf{S}^{t}\right)-\mathbf{W}_{0}\rho_{\mathbf{W}_{0}}^{-1}\boldsymbol{\nu}^{t+1}\right) \notag \\
    &+\left(f^{t-1}\left(\varphi(\mathbf{A}\mathbf{W}_{0}), \mathbf{A}\mathbf{W}_{0}\rho_{\mathbf{W}_{0}}^{-1}\boldsymbol{\nu}^{t-1}+\mathbf{W}_{0}\hat{\boldsymbol{\nu}}^{t-1}+\mathbf{S}^{t-1}\right)-\mathbf{W}_{0}\rho_{\mathbf{W}_{0}}^{-1}\boldsymbol{\nu}^{t}\right)\left(\tilde{\mathbf{b}}^{t}\right)^{\top} \\
    &=\mathbf{A}\mathbf{P}_{\mathbf{W}_{0}}f^{t}\left(\varphi(\mathbf{A}\mathbf{W}_{0}),\mathbf{X}^{t}\right)-\mathbf{A}\mathbf{W}_{0}\rho_{\mathbf{W}_{0}}^{-1}\boldsymbol{\nu}^{t+1}+\mathbf{P}_{\mathbf{W}_{0}}\mathbf{A}f^{t}\left(\varphi(\mathbf{A}\mathbf{W}_{0}),\mathbf{X}^{t}\right)-\mathbf{W}_{0}\hat{\boldsymbol{\nu}}^{t+1}-\mathbf{W}_{0}\rho_{\mathbf{W}_{0}}^{-1}\boldsymbol{\nu}^{t}(\tilde{\mathbf{b}}^{t})^{\top} \notag \\
    &-f^{t-1}\left(\varphi(\mathbf{A}\mathbf{W}_{0}),\mathbf{X}^{t-1}\right)(\mathbf{b}^{t})^{\top}+f^{t-1}\left(\varphi(\mathbf{A}\mathbf{W}_{0}), \mathbf{A}\mathbf{W}_{0}\rho_{\mathbf{W}_{0}}^{-1}\boldsymbol{\nu}^{t-1}+\mathbf{W}_{0}\hat{\boldsymbol{\nu}}^{t-1}+\mathbf{S}^{t-1}\right)\left(\tilde{\mathbf{b}}^{t}\right)^{\top} \notag \\
    &-\tilde{\mathbf{A}}\left(f^{t}\left(\varphi(\mathbf{A}\mathbf{W}_{0}), \mathbf{A}\mathbf{W}_{0}\rho_{\mathbf{W}_{0}}^{-1}\boldsymbol{\nu}^{t}+\mathbf{W}_{0}\hat{\boldsymbol{\nu}}^{t}+\mathbf{S}^{t}\right)-\mathbf{W}_{0}\rho_{\mathbf{W}_{0}}^{-1}\boldsymbol{\nu}^{t+1}\right)+\mathbf{P}^{\perp}_{\mathbf{W}_{0}}\tilde{\mathbf{A}}\mathbf{P}^{\perp}_{\mathbf{W}_{0}}f^{t}\left(\varphi(\mathbf{A}\mathbf{W}_{0}),\mathbf{X}^{t}\right) \notag \\
    &-\mathbf{P}_{\mathbf{W}_{0}}\mathbf{A}\mathbf{P}_{\mathbf{W}_{0}}f^{t}\left(\varphi(\mathbf{A}\mathbf{W}_{0}),\mathbf{X}^{t}\right)
\end{align}
where, the second equality is only a reorganization of the terms. We now study the asymptotic behaviour of each component of the previous sum. We have
\begin{align}
    \label{eq:182}
    &\frac{1}{\sqrt{N}}\norm{\mathbf{A}\mathbf{P}_{\mathbf{W}_{0}}f^{t}\left(\varphi(\mathbf{A}\mathbf{W}_{0}),\mathbf{X}^{t}\right)-\mathbf{A}\mathbf{W}_{0}\rho_{\mathbf{W}_{0}}^{-1}\boldsymbol{\nu}^{t+1}}_{F} \leqslant \notag \\
    &\hspace{5cm}\norm{\mathbf{A}}_{op}\frac{1}{\sqrt{N}}\norm{\mathbf{W}_{0}\rho_{\mathbf{W}_{0}}^{-1}}_{F}\norm{\frac{1}{N}\mathbf{W}_{0}^{\top}f^{t}\left(\varphi(\mathbf{A}\mathbf{W}_{0}),\mathbf{X}^{t}\right)-\boldsymbol{\nu}^{t+1}}_{F}
\end{align}
where $\norm{\mathbf{A}}_{op}$ is bounded w.h.p. owing to lemma \ref{op-norm-GOE} and $\frac{1}{\sqrt{N}}\norm{\mathbf{W}_{0}\rho_{\mathbf{W}_{0}}^{-1}}_{F}$ is bounded w.h.p. by assumption. Then, using the pseudo-Lipschitz property, the induction hypothesis and Lemma \ref{pseudo-lip-conv}, it holds that 
\begin{align}
    \label{eq:inter_pseudo_lip}
    &\frac{1}{\sqrt{N}}\norm{f^{t}\left(\phi(\mathbf{A}\mathbf{W}_{0},\mathbf{X}^{t})\right)-f^{t}\left(\varphi(\mathbf{Z}_{\mathbf{W}_{0}}), \mathbf{Z}_{\mathbf{W}_{0}}\rho_{\mathbf{W}_{0}}^{-1}\boldsymbol{\nu}^{t}+\mathbf{W}_{0}\hat{\boldsymbol{\nu}}^{t}+\mathbf{S}^{t}\right)}_{F} \xrightarrow[N \to \infty]{P} 0
\end{align}
The triangle inequality then gives 
\begin{align}
    &\norm{\frac{1}{N}\mathbf{W}_{0}^{\top}f^{t}\left(\varphi(\mathbf{A}\mathbf{W}_{0}),\mathbf{X}^{t}\right)-\boldsymbol{\nu}^{t+1}}_{F} \leqslant \notag \norm{\frac{1}{N}\mathbf{W}_{0}^{\top}f^{t}\left(\varphi(\mathbf{Z}_{\mathbf{W}_{0}}), \mathbf{Z}_{\mathbf{W}_{0}}\rho_{\mathbf{W}_{0}}^{-1}\boldsymbol{\nu}^{t}+\mathbf{W}_{0}\hat{\boldsymbol{\nu}}^{t}+\mathbf{S}^{t}\right)-\boldsymbol{\nu}^{t+1}}_{F} \times \\
    &\frac{1}{\sqrt{N}}\norm{\mathbf{W}_{0}}_{F}\frac{1}{\sqrt{N}}\norm{f^{t}\left(\varphi(\mathbf{A}\mathbf{W}_{0},\mathbf{X}^{t})\right)-f^{t}\left(\varphi(\mathbf{Z}_{\mathbf{W}_{0}}), \mathbf{Z}_{\mathbf{W}_{0}}\rho_{\mathbf{W}_{0}}^{-1}\boldsymbol{\nu}^{t}+\mathbf{W}_{0}\hat{\boldsymbol{\nu}}^{t}+\mathbf{S}^{t}\right)}_{F}.
\end{align}
Using the definition of $\boldsymbol{\mu}^{t+1}$, the assumption on $\mathbf{W}_{0}$ and Eq.\eqref{eq:inter_pseudo_lip}, we conclude that, with high probability 
\begin{equation}
    \frac{1}{\sqrt{N}}\norm{\mathbf{A}\mathbf{P}_{\mathbf{W}_{0}}f^{t}\left(\varphi(\mathbf{A}\mathbf{W}_{0}),\mathbf{X}^{t}\right)-\mathbf{A}\mathbf{W}_{0}\rho_{\mathbf{W}_{0}}^{-1}\boldsymbol{\nu}^{t+1}}_{F} \xrightarrow[N \to \infty]{} 0
\end{equation}
The term
\begin{align}
    &\frac{1}{\sqrt{N}}\norm{f^{t-1}\left(\varphi(\mathbf{A}\mathbf{W}_{0}), \mathbf{A}\mathbf{W}_{0}\rho_{\mathbf{W}_{0}}^{-1}\boldsymbol{\nu}^{t-1}+\mathbf{W}_{0}\hat{\boldsymbol{\nu}}^{t-1}+\mathbf{S}^{t-1}\right)\left(\tilde{\mathbf{b}}^{t}\right)^{\top}-f^{t-1}\left(\varphi(\mathbf{A}\mathbf{W}_{0}),\mathbf{X}^{t-1}\right)(\mathbf{b}^{t})^{\top}}_{F} \notag \\
    &\leqslant \frac{1}{\sqrt{N}}\norm{\left(f^{t-1}\left(\varphi(\mathbf{A}\mathbf{W}_{0}), \mathbf{A}\mathbf{W}_{0}\rho_{\mathbf{W}_{0}}^{-1}\boldsymbol{\nu}^{t-1}+\mathbf{W}_{0}\hat{\boldsymbol{\nu}}^{t-1}+\mathbf{S}^{t-1}\right)-f^{t-1}\left(\varphi(\mathbf{A}\mathbf{W}_{0}),\mathbf{X}^{t-1}\right)\right)\left(\tilde{\mathbf{b}}^{t}\right)^{\top}}_{F} \notag \\
    &\hspace{1cm}+\frac{1}{\sqrt{N}}\norm{f^{t-1}\left(\varphi(\mathbf{A}\mathbf{W}_{0}),\mathbf{X}^{t-1}\right)\left(\tilde{\mathbf{b}}^{t}-\mathbf{b}^{t}\right)}_{F},
\end{align}
is similar to the third term of Eq.\eqref{eq:inter_spike_tri} in the proof of Lemma \ref{lemma:spike_SE} and converges to zero with high probability for large $N$ using similar arguments.
Then, letting
\begin{align}
    \Delta_{1}^{t} = \tilde{\mathbf{A}}\left(f^{t}\left(\varphi(\mathbf{A}\mathbf{W}_{0}), \mathbf{A}\mathbf{W}_{0}\rho_{\mathbf{W}_{0}}^{-1}\boldsymbol{\nu}^{t}+\mathbf{W}_{0}\hat{\boldsymbol{\nu}}^{t}+\mathbf{S}^{t}\right)-\mathbf{W}_{0}\rho_{\mathbf{W}_{0}}^{-1}\boldsymbol{\nu}^{t+1}\right)-\mathbf{P}^{\perp}_{\mathbf{W}_{0}}\tilde{\mathbf{A}}\mathbf{P}^{\perp}_{\mathbf{W}_{0}}f^{t}\left(\varphi(\mathbf{A}\mathbf{W}_{0}),\mathbf{X}^{t}\right),
\end{align}
the defintion of $\mathbf{P}^{\perp}_{\mathbf{W}_{0}} = \mathbf{I}-\mathbf{P}_{\mathbf{W}_{0}}$ and the triangle inequality yield
\begin{align}
    &\frac{1}{\sqrt{N}}\norm{\Delta_{1}^{t}}_{F} \leqslant \norm{\tilde{\mathbf{A}}}_{op}\frac{1}{\sqrt{N}}\norm{\mathbf{P}_{\mathbf{W}_{0}}f^{t}\left(\varphi(\mathbf{A}\mathbf{W}_{0}),\mathbf{X}^{t}\right)-\mathbf{W}_{0}\rho_{\mathbf{W}_{0}}^{-1}\boldsymbol{\nu}^{t+1}}_{F} \notag \\
    &+\norm{\tilde{\mathbf{A}}}_{op}\frac{1}{\sqrt{N}}\norm{f^{t}\left(\varphi(\mathbf{A}\mathbf{W}_{0}), \mathbf{A}\mathbf{W}_{0}\rho_{\mathbf{W}_{0}}^{-1}\boldsymbol{\nu}^{t}+\mathbf{W}_{0}\hat{\boldsymbol{\nu}}^{t}+\mathbf{S}^{t}\right)-f^{t}\left(\varphi(\mathbf{A}\mathbf{W}_{0}),\mathbf{X}^{t}\right)}_{F} \notag \\
    &+\frac{1}{\sqrt{N}}\norm{\mathbf{P}_{\mathbf{W}_{0}}\tilde{\mathbf{A}}\mathbf{P}^{\perp}_{\mathbf{W}_{0}}f^{t}\left(\varphi(\mathbf{A}\mathbf{W}_{0}),\mathbf{X}^{t}\right)}_{F}
\end{align}
where the first term converges to zero w.h.p. using the same argument as the one used for Eq.\eqref{eq:182}. For the second term, the operator norm of $\tilde{\mathbf{A}}$ is bounded w.h.p. using Lemma \ref{op-norm-GOE}, and the 
diffence goes to zero w.h.p. using the pseudo-Lipschitz property, the induction hypothesis and the SE equations Eq.\eqref{eq:proj_SE} of iteration Eq.\eqref{eq:proj_aux_it}. Finally, since $\mathbf{P}_{\mathbf{W}_{0}}$ has finite rank and $\frac{1}{\sqrt{N}}\norm{\mathbf{P}^{\perp}_{\mathbf{W}_{0}}f^{t}\left(\varphi(\mathbf{A}\mathbf{W}_{0}),\mathbf{X}^{t}\right)}_{F}$ is bounded w.h.p. using the induction hypothesis and SE equations of iteration Eq.\eqref{eq:proj_aux_it},the last term goes to zero w.h.p. using Lemma \ref{conv_lemmas_app}. \\
Moving to the term $\mathbf{P}_{\mathbf{W}_{0}}\mathbf{A}f^{t}\left(\varphi(\mathbf{A}\mathbf{W}_{0}),\mathbf{X}^{t}\right)-\mathbf{W}_{0}\hat{\boldsymbol{\nu}}^{t+1}-\mathbf{W}_{0}\rho_{\mathbf{W}_{0}}^{-1}\boldsymbol{\nu}^{t}(\tilde{\mathbf{b}}^{t})^{\top}$, which we denote $\Delta_{2}^{t}$, we may write
\begin{align}
    \label{eq:189}
    \mathbf{P}_{\mathbf{W}_{0}}\mathbf{A}f^{t}\left(\varphi(\mathbf{A}\mathbf{W}_{0}),\mathbf{X}^{t}\right) = \frac{1}{N}\mathbf{W}_{0}\rho_{\mathbf{W}_{0}}^{-1}(\mathbf{A}\mathbf{W}_{0})^{\top}f^{t}\left(\varphi\left(\mathbf{A}\mathbf{W}_{0}\right),\mathbf{X}^{t}\right)
\end{align}
since the function $\mathbf{A}\mathbf{W}_{0},\mathbf{X}^{t} \to (\mathbf{A}\mathbf{W}_{0})^{\top}f^{t}\left(\varphi\left(\mathbf{A}\mathbf{W}_{0}\right),\mathbf{X}^{t}\right)$ is pseudo-Lipschitz, Lemma \ref{conv_lemmas_app} and the induction hypothesis give 
\begin{equation}
    \label{eq:190}
    \norm{\frac{1}{N}\left(\mathbf{A}\mathbf{W}_{0}\right)^{\top}f^{t}\left(\varphi\left(\mathbf{A}\mathbf{W}_{0}\right),\mathbf{X}^{t}\right)-\frac{1}{N}\mathbf{Z}_{\mathbf{W}_{0}}^{\top}f^{t}\left(\varphi\left(\mathbf{Z}_{\mathbf{W}_{0}}\right),\mathbf{Z}_{\mathbf{W}_{0}}\rho_{\mathbf{W}_{0}}^{-1}\boldsymbol{\nu}^{t}+\mathbf{W}_{0}\hat{\boldsymbol{\nu}}^{t}+\mathbf{S}^{t}\right)}_{F} \xrightarrow[N \to \infty]{P} 0,
\end{equation}
where the SE equations for iteration Eq.\eqref{eq:proj_aux_it} yield
\begin{align}
   &\frac{1}{N}\mathbf{Z}_{\mathbf{W}_{0}}^{\top}f^{t}\left(\varphi\left(\mathbf{Z}_{\mathbf{W}_{0}}\right),\mathbf{Z}_{\mathbf{W}_{0}}\rho_{\mathbf{W}_{0}}^{-1}\boldsymbol{\nu}^{t}+\mathbf{W}_{0}\hat{\boldsymbol{\nu}}^{t}+\mathbf{S}^{t}\right) \approxP \notag \\ 
   &\hspace{4cm}\frac{1}{N}\mathbb{E}\left[\mathbf{Z}_{\mathbf{W}_{0}}^{\top}f^{t}\left(\varphi\left(\mathbf{Z}_{\mathbf{W}_{0}}\right),\mathbf{Z}_{\mathbf{W}_{0}}\rho_{\mathbf{W}_{0}}^{-1}\boldsymbol{\nu}^{t}+\mathbf{W}_{0}\hat{\boldsymbol{\nu}}^{t}+\mathbf{Z}^{t}\right)\right]
\end{align}
An application of Lemma \ref{matrix-stein} and the chain rule gives
\begin{align}
    &\frac{1}{N}\mathbb{E}\left[\mathbf{Z}_{\mathbf{W}_{0}}^{\top}f^{t}\left(\varphi\left(\mathbf{Z}_{\mathbf{W}_{0}}\right),\mathbf{Z}_{\mathbf{W}_{0}}\rho_{\mathbf{W}_{0}}^{-1}\boldsymbol{\nu}^{t}+\mathbf{W}_{0}\hat{\boldsymbol{\nu}}^{t}+\mathbf{Z}^{t}\right)\right] = \notag \\
    &\hspace{1.2cm}\frac{1}{N}\rho_{\mathbf{W}_{0}}\mathbb{E}\left[\sum_{i=1}^{N}\frac{\partial f_{i}^{t}}{\partial \mathbf{Z}_{\mathbf{W}_{0},i},\varphi}\left(\varphi(\mathbf{Z}_{\mathbf{W}_{0}}), \mathbf{Z}_{\mathbf{W}_{0}}\rho_{\mathbf{W}_{0}}^{-1}\boldsymbol{\nu}^{t}+\mathbf{W}_{0}\hat{\boldsymbol{\nu}}^{t}+\mathbf{Z}^{t}\right)\right] \notag \\
    &\hspace{2.5cm}+\frac{1}{N}\mathbf{m}^{t}\mathbb{E}\left[\sum_{i=1}^N \frac{\partial f^t_i}{\partial \bZ_i}\left(\varphi(\mathbf{Z}_{\mathbf{W}_{0}}), \mathbf{Z}_{\mathbf{W}_{0}}\rho_{\mathbf{W}_{0}}^{-1}\boldsymbol{\nu}^{t}+\mathbf{W}_{0}\hat{\boldsymbol{\nu}}^{t}+\mathbf{Z}^{t}\right)\right].
\end{align}
The SE equations of iteration Eq.\eqref{eq:proj_aux_it} and the pseudo-Lipschitz assumptions on the Jacobians of the $f^{t}$ then show that
\begin{equation}
    \tilde{\mathbf{b}}^{\top} \approxP \frac{1}{N}\mathbb{E}\left[\sum_{i=1}^N \frac{\partial f^t_i}{\partial \bZ_i}\left(\varphi(\mathbf{Z}_{\mathbf{W}_{0}}), \mathbf{Z}_{\mathbf{W}_{0}}\rho_{\mathbf{W}_{0}}^{-1}\boldsymbol{\nu}^{t}+\mathbf{W}_{0}\hat{\boldsymbol{\nu}}^{t}+\mathbf{Z}^{t}\right)\right],
\end{equation}
which, combined with the definition of $\hat{\boldsymbol{\nu}}^{t}$, shows that
\begin{equation}
    \frac{1}{N}\mathbb{E}\left[\mathbf{Z}_{\mathbf{W}_{0}}^{\top}f^{t}\left(\varphi\left(\mathbf{Z}_{\mathbf{W}_{0}}\right),\mathbf{Z}_{\mathbf{W}_{0}}\rho_{\mathbf{W}_{0}}^{-1}\boldsymbol{\nu}^{t}+\mathbf{W}_{0}\hat{\boldsymbol{\nu}}^{t}+\mathbf{Z}^{t}\right)\right] \approxP \rho_{\mathbf{W}_{0}}\hat{\boldsymbol{\nu}}^{t+1}+\boldsymbol{\nu}^{t}\left(\tilde{\mathbf{b}}^{t}\right)^{\top}
\end{equation}
combining this with Eq.\eqref{eq:189} and Eq.\eqref{eq:190}, a straightforward application of the triangle inequality allows to show that
\begin{equation}
    \frac{1}{\sqrt{N}}\norm{\Delta_{2}^{t}}_{F} \xrightarrow[N \to \infty]{P} 0.
\end{equation}
The only remaining term in $\Delta^{t}$ is $\mathbf{P}_{\mathbf{W}_{0}}\mathbf{A}\mathbf{P}_{\mathbf{W}_{0}}f^{t}\left(\varphi(\mathbf{A}\mathbf{W}_{0}),\mathbf{X}^{t}\right)$. Expanding the projectors, we obtain
\begin{align}
    &\frac{1}{\sqrt{N}}\norm{\mathbf{P}_{\mathbf{W}_{0}}\mathbf{A}\mathbf{P}_{\mathbf{W}_{0}}f^{t}\left(\varphi(\mathbf{A}\mathbf{W}_{0}),\mathbf{X}^{t}\right)}_{F} = \frac{1}{\sqrt{N}}\norm{\frac{1}{N}\mathbf{W}_{0}\rho_{\mathbf{W}_{0}}^{-1}\mathbf{W}_{0}^{\top}\mathbf{A}\frac{1}{N}\mathbf{W}_{0}\rho_{\mathbf{W}_{0}}^{-1}\mathbf{W}_{0}^{\top}f^{t}\left(\varphi(\mathbf{A}\mathbf{W}_{0}),\mathbf{X}^{t}\right)}_{F} \notag \\
    &\leqslant \frac{1}{\sqrt{N}}\norm{\mathbf{W}_{0}\rho_{\mathbf{W}_{0}}^{-1}}_{F}\frac{1}{N}\norm{\mathbf{W}_{0}^{\top}\mathbf{A}\mathbf{W}_{0}}_{F}\frac{1}{N}\norm{\mathbf{W}_{0}^{\top}f^{t}\left(\varphi(\mathbf{A}\mathbf{W}_{0}),\mathbf{X}^{t}\right)}_{F}\norm{\rho_{\mathbf{W}_{0}}^{-1}}_{F}.
\end{align}
Lemma \ref{conv_lemmas_app} then shows that $\frac{1}{N}\norm{\mathbf{W}_{0}^{\top}\mathbf{A}\mathbf{W}_{0}}_{F} \xrightarrow[N \to \infty]{P} 0$, and the other terms are bounded w.h.p. We have now treated all 
the terms in $\Delta^{t}$, and the triangle inequality gives
\begin{align}
    \frac{1}{\sqrt{N}}\norm{\Delta^{t}}_{F} \xrightarrow[N \to \infty]{P} 0.
\end{align}
which concludes the induction. Combining this with the pseudo-Lipschitz property and the SE equations to ensure all iterates have bounded scaled norms, we conclude the proof of Lemma \ref{lemma:proj_SE}.
\end{proof}
\paragraph{Application to graph-based AMP iterations : proof of Lemma \ref{lemma:teacher_SE}}
Consider the AMP iteration \eqref{eq:AMP_teach1}-\eqref{eq:AMP_teach2}.To obtain the SE equations for this iteration, 
we follow a similar argument as the proof of Theorem \ref{thm:graph-AMP} and embed the iteration indexed on the graph $G = (V,E)$ into a large, symmetric iteration of the form of that of Lemma \ref{lemma:spike_SE} and Lemma \ref{lemma:proj_SE}. We may then write the $N \times N$ GOE matrix 
corresponding to the symmetric AMP iteration
\begin{align*}
    \hat{\bA} &= 
    \begin{pmatrix}
    \hat{\bA}_{\overrightarrow{e}_1} & & & & & & & \\
    & \ddots& & & & & \ast & \\
    & & \hat{\bA}_{\overrightarrow{e}_l} & & & & & \\
    & & & \ast & \hat{\bA}_{\overrightarrow{e}_{l+1}} & & & \\
    & & & \hat{\bA}_{\overleftarrow{e}_{l+1}} & \ast & & & \\
    & & & & & \ddots & & \\
    & \ast& & & &  & \ast & \hat{\bA}_{\overrightarrow{e}_{m}} \\
    & & & & & & \hat{\bA}_{\overleftarrow{e}_{m}} & \ast 
    \end{pmatrix} \, \\
\end{align*}
where, using the definition of each $\hat{\bA}_{\overrightarrow{e}}$, we may write 
\begin{align*}
    &=\begin{pmatrix}
        \bA_{\overrightarrow{e}_1} & & & & & & & \\
        & \ddots& & & & & \ast & \\
        & & \bA_{\overrightarrow{e}_l} & & & & & \\
        & & & \ast & \bA_{\overrightarrow{e}_{l+1}} & & & \\
        & & & \bA_{\overleftarrow{e}_{l+1}} & \ast & & & \\
        & & & & & \ddots & & \\
        & \ast& & & &  & \ast & \bA_{\overrightarrow{e}_{m}} \\
        & & & & & & \bA_{\overleftarrow{e}_{m}} & \ast \\
        \end{pmatrix} \, \\
        &\hspace{8cm}+\begin{pmatrix}
            \frac{1}{N}\mathbf{v}_{\overrightarrow{e}_1}\mathbf{v}_{\overrightarrow{e}_1}^{\top} & & & & & & & \\
            & \ddots& & & & & 0 & \\
            & & \frac{1}{N}\mathbf{v}_{\overrightarrow{e}_l}\mathbf{v}_{\overrightarrow{e}_l}^{\top} & & & & & \\
            & & & 0 & 0 & & & \\
            & & & 0 & 0 & & & \\
            & & & & & \ddots & & \\
            & 0& & & &  & 0 & 0 \\
            & & & & & & 0 & 0
            \end{pmatrix} \,
\end{align*}
where the second term gives the form of the matrix $\mathbf{V}_{0}$ from Lemma \ref{lemma:spike_SE}, i.e.
\begin{equation}
    \mathbf{V}_{0} = \begin{pmatrix}
        \mathbf{v}_{\overrightarrow{e}_1}& & & & & & & \\
        & \ddots& & & & & 0 & \\
        & & \mathbf{v}_{\overrightarrow{e}_l} & & & & & \\
        & & & 0 & 0 & & & \\
        & & & 0 & 0 & & & \\
        & & & & & \ddots & & \\
        & 0& & & &  & 0 & 0 \\
        & & & & & & 0 & 0
        \end{pmatrix}
\end{equation}. \\
Furthermore, we may write the update function of the symmetric AMP iteration as 
\begin{align}
    &\tilde{f}^t
    \begin{pmatrix}
    \bx_{\overrightarrow{e}_1} & & & & & & & \\
    & \ddots & & & & &\ast  & \\
    & & \bx_{\overrightarrow{e}_l} & & & & & \\
    & & & \bx_{\overrightarrow{e}_{l+1}} & & & & \\
    & & & &\bx_{\overleftarrow{e}_{l+1}} & & & \\
    & & & & & \ddots & & \\
     & \ast & & & & & \bx_{\overrightarrow{e}_{m}}  & \\
     & & & & & & &\bx_{\overleftarrow{e}_{m}}
    \end{pmatrix} \label{eq:def-non-linearity}\\
    &\qquad= 
    \begin{pmatrix}
    \tilde{f}^t_{\overrightarrow{e}_1}\left(\left(\bx_{\overrightarrow{e}}\right)_{\overrightarrow{e}:\overrightarrow{e} \rightarrow \overrightarrow{e}_1}\right) & & & & & & & \\
    & \ddots & & & & & 0 & \\
    & & \tilde{f}^t_{\overrightarrow{e}_l}\left(\dots\right) & & & & &\\
    & & & 0 & \tilde{f}^t_{\overleftarrow{e}_{l+1}}(\dots) & & & \\
    & & & \tilde{f}^t_{\overrightarrow{e}_{l+1}}(\dots) & 0 & & & \\
    & & & & & \ddots & & \\
     & 0 & & & & &  0 & \tilde{f}^t_{\overleftarrow{e}_{m}}(\dots)\\
     & & & & & & \tilde{f}^t_{\overrightarrow{e}_{m}}(\dots) & 0
    \end{pmatrix} \nonumber \\
    &\qquad= f^t\left(\Phi\left(\mathbf{A}\mathbf{W}_{0}\right),
    \begin{pmatrix}
    \bx_{\overrightarrow{e}_1} & & & & & & & \\
    & \ddots & & & & &\ast  & \\
    & & \bx_{\overrightarrow{e}_l} & & & & & \\
    & & & \bx_{\overrightarrow{e}_{l+1}} & & & & \\
    & & & &\bx_{\overleftarrow{e}_{l+1}} & & & \\
    & & & & & \ddots & & \\
     & \ast & & & & & \bx_{\overrightarrow{e}_{m}}  & \\
     & & & & & & &\bx_{\overleftarrow{e}_{m}}
    \end{pmatrix}\right)
\end{align}
where 
\begin{equation}
    \mathbf{W}_{0} = 
        \begin{pmatrix}
        0 & & & & & & & \\
        & \ddots& & & & & 0 & \\
        & & 0 & & & & & \\
        & & & 0 & \mathbf{w}_{\overleftarrow{e}_{l+1}} & & & \\
        & & & \mathbf{w}_{\overrightarrow{e}_{l+1}} & 0 & & & \\
        & & & & & \ddots & & \\
        & 0 & & & &  & 0 & \mathbf{w}_{\overleftarrow{e}_{m}} \\
        & & & & & & \mathbf{w}_{\overrightarrow{e}_{m}} & 0 \\
        \end{pmatrix}
\end{equation}
and the function $\Phi$ contains the functions $\varphi_{\overrightarrow{e}}$
\begin{align}
    &\Phi\begin{pmatrix}
        0 & & & & & & & \\
        & \ddots & & & & &0  & \\
        & & 0 & & & & & \\
        & & & \bA_{\overrightarrow{e}_{l+1}}\mathbf{w}_{\overrightarrow{e}_{l+1}} & & & & \\
        & & & &\bA_{\overleftarrow{e}_{l+1}}\mathbf{w}_{\overleftarrow{e}_{l+1}} & & & \\
        & & & & & \ddots & & \\
         & 0 & & & & & \bA_{\overrightarrow{e}_{m}}\mathbf{w}_{\overrightarrow{e}_{m}}  & \\
         & & & & & & &\bA_{\overleftarrow{e}_{m}}\mathbf{w}_{\overleftarrow{e}_{m}} 
        \end{pmatrix}= \notag \\
        &\begin{pmatrix}
            0 & & & & & & & \\
            & \ddots & & & & &0  & \\
            & & 0 & & & & & \\
            & & & \varphi_{\overrightarrow{e}_{l+1}}\left(\bA_{\overrightarrow{e}_{l+1}}\mathbf{w}_{\overrightarrow{e}_{l+1}}\right) & & & & \\
            & & & &\varphi_{\overleftarrow{e}_{l+1}}\left(\bA_{\overleftarrow{e}_{l+1}}\mathbf{w}_{\overleftarrow{e}_{l+1}}\right) & & & \\
            & & & & & \ddots & & \\
             & 0 & & & & & \varphi_{\overrightarrow{e}_{m}}\left(\bA_{\overrightarrow{e}_{m}}\mathbf{w}_{\overrightarrow{e}_{m}}\right)  & \\
             & & & & & & & \varphi_{\overleftarrow{e}_{m}}\left(\bA_{\overleftarrow{e}_{m}}\mathbf{w}_{\overleftarrow{e}_{m}}\right)
        \end{pmatrix}
\end{align}
Under the condition that the matrices $\mathbf{V}_{0}, \mathbf{W}_{0}$ and the function $\Phi$ verify the assumptions of Lemma \ref{lemma:spike_SE} and Lemma \ref{lemma:proj_SE}, 
we may use those results to obtain the SE equations for the iteration Eq.\eqref{eq:AMP_teach1}-\eqref{eq:AMP_teach2}. Evaluating the matrix products defining the parameters $\boldsymbol{\mu}^{t}, \boldsymbol{\nu}^{t}, \hat{\boldsymbol{\nu}}^{t}$ then leads to the SE equations of 
Lemma \ref{lemma:teacher_SE}.
\section{Useful definitions and probability lemmas}
\label{app:sec_conc}
In this section, we compile useful definitions and lemmas that appear throughout the proof. Most of those results are finite-width matrix generalizations of those appearing in \cite{berthier2020state} and some are the same.
\begin{proposition}(Norm of matrices with Gaussian entries \cite{vershynin2018high})
\label{prop:norm_Gauss}
Let $\bY$ be an $M\times N$ random matrix with independent $\mathbf{N}(0,1)$ entries. Then, for any $t>0$, we have:
\begin{equation}
\mathbb{P}\left(\norm{\bY}_{F}\leqslant C\left(\sqrt{M}+\sqrt{N}+t\right)\right) \geqslant 1-2\exp(-t^{2})
\end{equation}
where $C$ is an absolute constant.
\end{proposition}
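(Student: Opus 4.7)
The plan is to deduce this bound from the two standard ingredients of Gaussian measure concentration: a Lipschitz property of the relevant matrix norm and an a priori estimate on its expectation. View $\bY$ as a vector in $\R^{MN}$ whose coordinates are i.i.d.~$\mathbf{N}(0,1)$. Any unitarily invariant matrix norm $\bY \mapsto \|\bY\|$ is $1$-Lipschitz with respect to the Frobenius norm on this ambient space, so the Gaussian concentration inequality for Lipschitz functions yields, for every $u>0$,
\begin{equation*}
    \mathbb{P}\bigl(\|\bY\| \geqslant \mathbb{E}\|\bY\| + u\bigr) \leqslant \exp(-u^{2}/2),
\end{equation*}
together with the analogous lower-tail estimate.

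The remaining step is to control $\mathbb{E}\|\bY\|$ by a quantity of the form $C(\sqrt{M}+\sqrt{N})$. For the operator norm this is the classical Gordon/Slepian comparison estimate; a short proof proceeds by writing $\|\bY\|_{op} = \sup_{u \in S^{M-1}, v \in S^{N-1}} \langle \bY u, v\rangle$, comparing this Gaussian process to the simpler process $\langle g, u\rangle + \langle h, v\rangle$ (with $g,h$ standard Gaussian vectors of the appropriate dimensions) via Slepian's inequality, and integrating out. Substituting this bound into the Gaussian concentration inequality with $u = Ct$ and adjusting the absolute constant yields the stated probability bound.

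The main obstacle is really just the a priori expectation estimate: Gaussian concentration by itself only tells us that $\|\bY\|$ sits within a window of width $O(t)$ around its mean, so without the Gordon-style bound on $\mathbb{E}\|\bY\|$ one cannot get the $\sqrt{M}+\sqrt{N}$ scale. Since the statement is quoted directly from \cite{vershynin2018high}, I would simply invoke the comparison inequality as a black box rather than rederive it. Everything else (checking $1$-Lipschitzness in the coordinates of $\bY$, combining upper and lower tails into one two-sided bound, absorbing universal factors into $C$) is routine.
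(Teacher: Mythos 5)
The paper offers no proof of this proposition---it is imported verbatim (up to a typo, see below) from \cite{vershynin2018high}---so the comparison is with the standard argument in that reference. Your route is exactly that argument for the Gaussian case: Gaussian concentration for $1$-Lipschitz functions of the i.i.d.\ entries gives a tail of width $O(t)$ around the mean, and the Slepian/Gordon comparison bound $\E\norm{\bY}_{op}\leqslant \sqrt{M}+\sqrt{N}$ supplies the missing expectation estimate; taking $u$ proportional to $t$ and absorbing constants gives the stated $1-2\exp(-t^{2})$ form. Invoking the comparison inequality as a black box is reasonable for a quoted auxiliary result, so the proposal is essentially complete.

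Two corrections are in order. First, your claim that \emph{any} unitarily invariant norm is $1$-Lipschitz with respect to the Frobenius norm is false (the nuclear norm, for instance, is only $\sqrt{\min(M,N)}$-Lipschitz); what you actually use, and what is true, is that the operator norm satisfies $\abs{\norm{\bA}_{op}-\norm{\bB}_{op}}\leqslant\norm{\bA-\bB}_{op}\leqslant\norm{\bA-\bB}_{F}$, and the Frobenius norm itself is trivially $1$-Lipschitz. Second, be aware that your proof---correctly---establishes the bound for the \emph{operator} norm, whereas the proposition as printed is stated for $\norm{\bY}_{F}$. With the Frobenius norm and an absolute constant $C$ the statement is false when $M$ and $N$ both grow, since $\norm{\bY}_{F}$ concentrates around $\sqrt{MN}\gg C(\sqrt{M}+\sqrt{N})$; the result in \cite{vershynin2018high} is the spectral-norm bound, and that is the version one should read here. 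This is also all the paper needs: in its single application (the initialization step of Lemma \ref{lemma:conv_AMP}) the matrix has $N$ rows and a fixed finite number $q$ of columns, so one can either use $\norm{\bY}_{F}\leqslant\sqrt{q}\,\norm{\bY}_{op}$ together with your bound, or bound $\norm{\bY}_{F}$ directly by a $\chi^{2}$ tail; either way $\norm{\bY}_{F}/\sqrt{N}$ is bounded by a constant with high probability.
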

\begin{proposition}(Operator norm of GOE(N) \cite{boucheron2013concentration}) \\
\label{op-norm-GOE}
Consider a sequence of matrices $\mathbf{A} \sim$ GOE(N). Then $\norm{\mathbf{A}}_{op} \to 2$ almost surely as $N \to \infty$.
\end{proposition}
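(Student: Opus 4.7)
The plan is to prove matching bounds $\limsup_{N}\norm{\bA}_{op}\le 2$ and $\liminf_{N}\norm{\bA}_{op}\ge 2$ almost surely and combine them.

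For the upper bound, I would combine Gaussian concentration with the moment method. Writing $\bA=(\tilde{\bG}+\tilde{\bG}^{\top})/\sqrt{2N}$ with $\tilde G_{ij}$ i.i.d.\ standard normal, the operator norm is $1$-Lipschitz in Frobenius distance and the symmetrization map $\tilde{\bG}\mapsto\bA$ is $\sqrt{2/N}$-Lipschitz (since $\norm{\mathbf{M}+\mathbf{M}^\top}_F\le 2\norm{\mathbf{M}}_F$), so the composition $(\tilde G_{ij})\mapsto\norm{\bA}_{op}$ is $\sqrt{2/N}$-Lipschitz on $\R^{N\times N}$. The Borell--TIS inequality then yields
\begin{equation*}
\mathbb{P}\bigl(|\norm{\bA}_{op}-\E\norm{\bA}_{op}|>t\bigr)\le 2\exp(-Nt^2/4),
\end{equation*}
summable in $N$ for every $t>0$, so by Borel--Cantelli $\norm{\bA}_{op}-\E\norm{\bA}_{op}\to 0$ almost surely. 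It then suffices to show $\E\norm{\bA}_{op}\to 2$. For this I would use the elementary bound $\E\norm{\bA}_{op}^{2k}\le\E[\Tr\bA^{2k}]$ together with the classical Wigner closed-walk expansion
\begin{equation*}
\E[\Tr\bA^{2k}]=N\,C_k\,\bigl(1+o_N(1)\bigr),
\end{equation*}
where $C_k$ is the $k$-th Catalan number (the $2k$-th moment of the semicircle on $[-2,2]$), valid as long as $k$ does not grow too fast with $N$. Since $C_k^{1/(2k)}\to 2$, choosing $k=k(N)\to\infty$ slowly enough that both $C_k^{1/(2k)}\to 2$ and $N^{1/(2k)}\to 1$ (for instance $k=\lfloor(\log N)^2\rfloor$) gives $\limsup_N\E\norm{\bA}_{op}\le 2$.

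For the lower bound, I would invoke Wigner's semicircle law: under our normalization, the empirical spectral distribution $\hat\mu_N=\frac{1}{N}\sum_i\delta_{\lambda_i(\bA)}$ converges weakly almost surely to the semicircle law $\mu_{sc}$ supported on $[-2,2]$. Since $\mu_{sc}([2-\varepsilon,2])>0$ for every $\varepsilon>0$, weak convergence forces $\hat\mu_N([2-\varepsilon,2])>0$ eventually almost surely, so $\lambda_{\max}(\bA)\ge 2-\varepsilon$ eventually and hence $\norm{\bA}_{op}\ge 2-\varepsilon$ eventually. Intersecting over a countable sequence $\varepsilon\downarrow 0$ gives $\liminf_N\norm{\bA}_{op}\ge 2$ almost surely, which combined with the upper bound completes the proof.

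The main technical obstacle is the quantitative Wigner combinatorics: one must track the subleading contribution of crossing pair-partitions in the trace expansion carefully enough to obtain an error of the form $\mathrm{poly}(k)/N$, so that $k$ can be allowed to grow with $N$ without destroying the estimate. A cleaner alternative that bypasses this combinatorics is to apply Sudakov--Fernique's comparison inequality to the Gaussian process $v\mapsto\langle v,\bA v\rangle$ on the unit sphere, combined with an $\varepsilon$-net discretisation, yielding $\E\norm{\bA}_{op}\le 2+o(1)$ directly; either route, combined with the Gaussian concentration step and the semicircle law, yields the claim.
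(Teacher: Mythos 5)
The paper does not actually prove this proposition: it is quoted as a classical fact with a citation to the concentration-inequalities literature, so there is no internal argument to compare against. Your outline is a correct and standard self-contained proof, and under the paper's normalization ($\bA=\bG+\bG^{\top}$ with $G_{ij}\sim\mathbf{N}(0,1/(2N))$, i.e.\ off-diagonal variance $1/N$, diagonal variance $2/N$) your Lipschitz constant $\sqrt{2/N}$ and the resulting Borell--TIS bound $2\exp(-Nt^2/4)$ are right, so the a.s.\ statement correctly reduces to $\E\norm{\bA}_{op}\to 2$. The two halves are also soundly divided: the semicircle law alone can only give the lower bound $\liminf\norm{\bA}_{op}\ge 2$ (weak convergence says nothing about outlier eigenvalues), and you use it only for that, while the upper bound needs either the Füredi--Komlós-type control of $\E[\Tr\bA^{2k}]$ with $k=k(N)\to\infty$ slowly (e.g.\ $k\asymp(\log N)^2$, for which the $\mathrm{poly}(k)/N$ error you flag is indeed the only real work) or the Gaussian comparison argument. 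Two small remarks on the latter: Sudakov--Fernique applies directly to the sup of $v\mapsto\langle v,\bA v\rangle$ over the unit sphere, so the $\varepsilon$-net step you mention is unnecessary (nets belong to the cruder union-bound route); and since $\norm{\bA}_{op}=\max\bigl(\lambda_{\max}(\bA),\lambda_{\max}(-\bA)\bigr)$, one should either compare $\sup_v|\langle v,\bA v\rangle|$ or combine the bound $\E\lambda_{\max}\le 2$ for $\pm\bA$ with the same concentration step to conclude $\E\norm{\bA}_{op}\le 2+o(1)$. With those details filled in, your argument establishes the proposition; in the context of this paper one could equally well leave it as the cited black-box result, as the authors do.
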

\begin{proposition}(Gaussian Poincaré inequality \cite{boucheron2013concentration}) \\
\label{prop:gauss_poinc}
Let $\mathbf{\bZ} \in \mathbb{R}^{N}$ be a $\mathbf{N}(0, \mathbf{I}_{N})$ random vector. Then for any continuous, weakly differentiable $\varphi$, there exists a constant $c\geqslant 0$ such that:
\begin{equation}
    \mbox{Var}[\varphi(\mathbf{\bZ})] \leqslant c\mathbb{E}\left[\norm{\nabla \varphi(\bZ)}_{2}^{2}\right]
\end{equation}
\end{proposition}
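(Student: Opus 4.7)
The strategy is to reduce to the one-dimensional case by tensorization of the variance and then handle the one-dimensional inequality via Hermite polynomial decomposition (or equivalently the Ornstein--Uhlenbeck semigroup). Since the constant $c$ is not required to be sharp in the statement, either approach suffices; I would aim for $c=1$.

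First, using the independence of the coordinates $Z_1,\dots,Z_N$, I would establish the tensorization bound
\begin{equation*}
\mbox{Var}[\varphi(\bZ)] \;\leq\; \sum_{i=1}^N \mathbb{E}\bigl[\mbox{Var}_i[\varphi(\bZ)]\bigr],
\end{equation*}
where $\mbox{Var}_i[\,\cdot\,]$ denotes the conditional variance taken with respect to $Z_i$ alone, the other coordinates being frozen. This is a standard martingale/telescoping identity on the filtration $\mathcal{F}_i=\sigma(Z_1,\dots,Z_i)$: write $\varphi(\bZ)-\mathbb{E}\varphi(\bZ)=\sum_i (\mathbb{E}[\varphi(\bZ)\mid\mathcal{F}_i]-\mathbb{E}[\varphi(\bZ)\mid\mathcal{F}_{i-1}])$, note that the martingale increments are orthogonal in $L^2$, and identify each squared increment in expectation with $\mathbb{E}[\mbox{Var}_i[\varphi(\bZ)]]$ via independence.

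Second, I would prove the one-dimensional inequality $\mbox{Var}[\varphi(Z)]\leq \mathbb{E}[\varphi'(Z)^2]$ for $Z\sim\mathbf{N}(0,1)$. The cleanest route is spectral: let $\{H_k\}_{k\geq 0}$ be the orthonormal Hermite basis of $L^2(\gamma)$, $\gamma$ being the standard Gaussian measure, satisfying the key identity $H_k'=\sqrt{k}\,H_{k-1}$. Expanding $\varphi=\sum_k a_k H_k$ yields $\mbox{Var}[\varphi(Z)] = \sum_{k\geq 1} a_k^2$ while $\mathbb{E}[\varphi'(Z)^2]=\sum_{k\geq 1} k\,a_k^2 \geq \sum_{k\geq 1} a_k^2$. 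This gives the claim with sharp constant $c=1$. Applying this bound to each conditional $\mbox{Var}_i[\varphi(\bZ)]$ and plugging into the tensorization step produces $\mbox{Var}[\varphi(\bZ)]\leq \mathbb{E}[\norm{\nabla\varphi(\bZ)}_2^2]$.

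The main obstacle is the regularity assumption: the statement only requires continuity and weak differentiability, whereas the Hermite expansion is natural for smooth, $L^2(\gamma)$ functions. I would bridge the gap by Gaussian mollification, setting $\varphi_\epsilon=P_\epsilon\varphi$ with $P_\epsilon$ the Ornstein--Uhlenbeck semigroup, applying the inequality to the smooth $\varphi_\epsilon$, and then letting $\epsilon\to 0$. The left-hand side passes to the limit by $L^2$ continuity of the semigroup, while the right-hand side is controlled by Mehler's formula $\nabla P_\epsilon\varphi=e^{-\epsilon}P_\epsilon(\nabla\varphi)$ combined with Jensen, which yields $\mathbb{E}[\norm{\nabla\varphi_\epsilon(\bZ)}_2^2]\leq \mathbb{E}[\norm{\nabla\varphi(\bZ)}_2^2]$ and thus the limiting inequality, possibly after a preliminary truncation to ensure $\varphi\in L^2(\gamma)$.
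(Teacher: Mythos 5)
Your proof is correct, but note that the paper does not prove this proposition at all: it is imported as a standard result with a citation to the concentration-inequalities literature (Boucheron--Lugosi--Massart), so there is no internal argument to compare against. Your route---tensorization of the variance over independent coordinates, the one-dimensional inequality via the Hermite identity $H_k'=\sqrt{k}\,H_{k-1}$ (giving the sharp constant $c=1$), and Ornstein--Uhlenbeck mollification with the commutation $\nabla P_\epsilon\varphi=e^{-\epsilon}P_\epsilon(\nabla\varphi)$ to cover merely weakly differentiable $\varphi$---is one of the two classical proofs; the cited book instead combines the same tensorization (Efron--Stein) step with a central-limit-theorem approximation of the one-dimensional Gaussian by normalized Rademacher sums, avoiding any spectral theory. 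The spectral argument buys the sharp constant and a cleaner treatment of equality cases, while the Efron--Stein/CLT argument is more elementary and generalizes more readily to non-Gaussian product measures. Two small points to make explicit if you write this out in full: the tensorization step applies the one-dimensional inequality to almost every section $z_i\mapsto\varphi(z_1,\dots,z_i,\dots,z_N)$, so you should invoke the Fubini-type fact that a.e.\ one-dimensional sections of a weakly differentiable function are weakly differentiable with the expected partial derivative; and the preliminary truncation you mention is indeed needed, since the Hermite expansion and the $L^2(\gamma)$ limit both presuppose $\varphi\in L^2(\gamma)$ (when $\mathbb{E}\left[\norm{\nabla\varphi(\bZ)}_{2}^{2}\right]=\infty$ the claim is vacuous, so this is only a bookkeeping step). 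Neither point is a gap in the plan.
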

The next result is a matrix version of Gaussian integration by parts, or Stein's lemma.
\begin{lemma}(Stein's lemma, matrix version)
\label{matrix-stein}
Let $(\mathbf{\bZ}_{1},\mathbf{\bZ}_{2}) \in \left(\mathbb{R}^{N \times q}\right)^{2}$ be two $\mathbf{N}(0, \boldsymbol{\kappa} \otimes \mathbf{I}_{N})$ random vectors, where $\boldsymbol{\kappa}\in \mathbb{R}^{(2q) \times (2q)}$.
\begin{equation}
    \boldsymbol{\kappa} = \begin{bmatrix}
    \boldsymbol{\kappa}_{11} \thickspace \boldsymbol{\kappa}_{12} \\
    \boldsymbol{\kappa}_{12} \thickspace  \boldsymbol{\kappa}_{22}\end{bmatrix}
\end{equation}
Consider an almost everywhere differentiable function $f:\mathbb{R}^{N \times q} \to \mathbb{R}^{N \times q}$. For any $\mathbf{\bZ} \in \mathbb{R}^{N \times q}$ we can write:
\begin{equation}
    f\left(\begin{bmatrix}
    \bZ_{11}, ..., \bZ_{1q} \\
    ...
    \\
    \bZ_{n1}, ...,\bZ_{nq}
    \end{bmatrix}\right) = \begin{bmatrix}
    f_1(\mathbf{\bZ}) \\
    ... \\
    f_{n}(\mathbf{\bZ})\end{bmatrix} = \begin{bmatrix}
    f_{1}^{1}(\mathbf{\bZ}), ...f_{1}^{q}(\mathbf{\bZ})\\
    ...\\
    f_{n}^{1}(\mathbf{\bZ}), ..., f_{n}^{q}(\mathbf{\bZ})
    \end{bmatrix}
\end{equation}
Then 
\begin{equation}
    \mathbb{E}\left[(\mathbf{\bZ}_{1})^{\top}f(\mathbf{\bZ}_{2})\right] = \boldsymbol{\kappa}_{1,2}\left(\sum_{k=1}^{N}\mathbb{E}\left[\frac{\partial f_{k}(\mathbf{\bZ}_{2})}{\partial \bZ_{k}}\right]\right)^{\top}
\end{equation}
where $\frac{\partial f_{k}(\mathbf{\bZ}_{2})}{\partial \bZ_{k}} \in \mathbb{R}^{q \times q}$ is the Jacobian containing the partial derivatives of $f_{k}$ w.r.t. the line $\mathbf{\bZ}_{k} \in \mathbb{R}^{q}$.
\end{lemma}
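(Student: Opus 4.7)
The plan is to reduce the matrix-valued identity to a coordinate-wise application of the standard (scalar) Gaussian integration-by-parts formula, using the Kronecker structure of the covariance to restrict the relevant partial derivatives to a single row.

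First I would expand both sides entry by entry. For fixed $i,j\in\{1,\dots,q\}$,
\begin{equation*}
\bigl(\mathbb{E}[\bZ_1^\top f(\bZ_2)]\bigr)_{ij}
=\sum_{k=1}^{N}\mathbb{E}\bigl[(\bZ_1)_{ki}\,f_k^{\,j}(\bZ_2)\bigr],
\end{equation*}
so it suffices to analyse $\mathbb{E}[(\bZ_1)_{ki}\,f_k^{\,j}(\bZ_2)]$ for each $k$. The crucial structural observation is that, under the covariance $\boldsymbol{\kappa}\otimes \mathbf{I}_N$, the concatenated matrix $[\bZ_1\mid \bZ_2]\in\mathbb{R}^{N\times 2q}$ has rows that are i.i.d.\ $\mathbf{N}(0,\boldsymbol{\kappa})$ in $\mathbb{R}^{2q}$. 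In particular, $(\bZ_1)_{ki}$ is jointly Gaussian with the whole matrix $\bZ_2$, and its cross-covariances are
\begin{equation*}
\mathbb{E}\bigl[(\bZ_1)_{ki}(\bZ_2)_{k'j'}\bigr]
=(\boldsymbol{\kappa}_{12})_{ij'}\,\delta_{kk'}.
\end{equation*}

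Next I would invoke multivariate Gaussian integration by parts: if $X$ is centered Gaussian and jointly Gaussian with a vector $Y$, and $g$ is weakly differentiable with at most polynomial growth, then $\mathbb{E}[X g(Y)]=\sum_\alpha \mathrm{Cov}(X,Y_\alpha)\,\mathbb{E}[\partial_\alpha g(Y)]$. Applied with $X=(\bZ_1)_{ki}$, $Y=\bZ_2$ (flattened) and $g=f_k^{\,j}$, the $\delta_{kk'}$ in the cross-covariance collapses the sum to indices in row $k$ only:
\begin{equation*}
\mathbb{E}\bigl[(\bZ_1)_{ki}\,f_k^{\,j}(\bZ_2)\bigr]
=\sum_{j'=1}^{q}(\boldsymbol{\kappa}_{12})_{ij'}\,
\mathbb{E}\!\left[\frac{\partial f_k^{\,j}(\bZ_2)}{\partial (\bZ_2)_{kj'}}\right]
=\left(\boldsymbol{\kappa}_{12}\,\mathbb{E}\!\left[\frac{\partial f_k}{\partial (\bZ_2)_k}\right]^{\!\top}\right)_{ij},
\end{equation*}
where the last equality uses precisely the convention \eqref{eq:Ons_jacob} for the row-wise Jacobian.

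Finally, summing over $k=1,\dots,N$ and recognizing the left factor $\boldsymbol{\kappa}_{12}$ as common,
\begin{equation*}
\bigl(\mathbb{E}[\bZ_1^\top f(\bZ_2)]\bigr)_{ij}
=\left(\boldsymbol{\kappa}_{12}\left(\sum_{k=1}^{N}\mathbb{E}\!\left[\frac{\partial f_k}{\partial (\bZ_2)_k}\right]\right)^{\!\top}\right)_{ij},
\end{equation*}
which is the claimed identity entry-wise. The only non-routine point is justifying the use of Stein's lemma when $f$ is merely almost-everywhere differentiable: one would handle this by a standard mollification argument (convolve $f$ with a smooth Gaussian kernel, apply the smooth version of Stein's lemma, and pass to the limit using the pseudo-Lipschitz/polynomial growth bounds already assumed in the surrounding proofs to control the expectations). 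This is the main -- but entirely standard -- technical step; everything else is bookkeeping driven by the Kronecker structure of the covariance.
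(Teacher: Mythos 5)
Your proposal is correct and follows essentially the same route as the paper: expand $\mathbb{E}[\bZ_1^\top f(\bZ_2)]$ entrywise, use the fact that under $\boldsymbol{\kappa}\otimes\mathbf{I}_N$ the cross-covariances carry a $\delta_{kk'}$ so Gaussian integration by parts only picks up derivatives with respect to the $k$-th row, and then regroup the resulting sums into $\boldsymbol{\kappa}_{12}\bigl(\sum_k\mathbb{E}[\partial f_k/\partial\bZ_k]\bigr)^{\top}$. The only cosmetic difference is that you invoke the multivariate Stein identity directly (plus a mollification remark for a.e.\ differentiability), whereas the paper obtains the same step by conditioning on the entries of $\bZ_2$ and applying one-dimensional Gaussian integration by parts.
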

\begin{proof}
\begin{align}
\mathbb{E}\left[(\mathbf{\bZ}_{1})^{\top}f(\mathbf{\bZ}_{2})\right]_{ij}&= \sum_{k=1}^{N}\mathbb{E}\left[((\bZ_{1})_{ki}f_{kj}(\mathbf{\bZ}_{2})\right] \notag\\
&=\sum_{k=1}^{N}\sum_{l=1}^{q}\mathbb{E}[\bZ^{1}_{ki}\bZ^{2}_{kl}]\mathbb{E}\left[\frac{\partial f_{kj}}{\partial (\bZ_{2})_{kl}}(\mathbf{\bZ}_{2})\right] \quad \mbox{since} \thickspace (\mathbf{\bZ}_{1}, \mathbf{\bZ}_{2}) \sim \mathbf{N}(0, \boldsymbol{\kappa} \otimes I_{N}) \notag\\
&= \sum_{l=1}^{q}(\boldsymbol{\kappa}_{12})_{il}\sum_{k=1}^{N}\mathbb{E}\left[\frac{\partial f_{kj}}{\partial (\bZ_{2})_{kl}}(\mathbf{\bZ}_{2})\right] \notag\\
&=\sum_{l=1}^{q}(\boldsymbol{\kappa}_{12})_{il}\left(\sum_{k=1}^{N}\mathbb{E}\left[\frac{\partial f_{k}(\mathbf{\bZ}_{2})}{\partial \bZ_{k}}\right]\right)_{jl} \notag\\
&= \left(\boldsymbol{\kappa}_{12}\left(\sum_{k=1}^{N}\mathbb{E}\left[\frac{\partial f_{k}(\mathbf{\bZ}_{2})}{\partial \bZ_{k}}\right]\right)^{\top}\right)_{ij}
\end{align}
where the second step is obtained by iteratively conditioning on the entries of $\bZ_{2}$ and applying one dimensional Gaussiaan integration by parts, see e.g. \cite{vershynin2018high} Lemma 7.2.5.
\end{proof}
\begin{definition}[pseudo-Lipschitz function]
\label{def:pseudo-lip}
For $k \in \mathbb{N}^{*}$ and any $N,m \in \mathbb{N}^{*}$, a function $\Phi : \mathbb{R}^{N \times q} \to \mathbb{R}^{m \times q}$ is said to be \emph{pseudo-Lipschitz of order k} if there exists a constant L such that for any $\mathbf{x},\mathbf{y} \in \mathbb{R}^{N \times q}$, 
\begin{equation}
    \frac{\norm{\Phi(\mathbf{x})-\Phi(\mathbf{y})}_{F}}{\sqrt{m}} \leqslant L \left(1+\left(\frac{\norm{\mathbf{x}}_{F}}{\sqrt{N}}\right)^{k-1}+\left(\frac{\norm{\mathbf{y}}_{F}}{\sqrt{N}}\right)^{k-1}\right)\frac{\norm{\mathbf{x}-\mathbf{y}}_{F}}{\sqrt{N}}
\end{equation}

A family of pseudo-Lipschitz functions is said to be \emph{uniformly} pseudo-Lipschitz if all functions of the family are pseudo-Lipschitz with the same order $k$ and the same constant $L$. 
\end{definition}
We now remind useful properties of pseudo-Lipschitz functions from \cite{berthier2020state}.
\begin{lemma}
\label{lemma:pseudo-lip_product}
Let k be any positive integer. Consider two sequences $f : \mathbb{R}^{N} \to \mathbb{R}^{N}, N \geqslant 1$ and $g:\mathbb{R}^{N} \to \mathbb{R}^{N}, N \geqslant 1$ of uniformly pseudo-Lipschitz functions of order k. The sequence of functions $\Phi_{N}: \mathbb{R}^{N} \times \mathbb{R}^{N} \to \mathbb{R},N\geqslant 1$ such that $\Phi_{N}(\mathbf{x},\mathbf{y}) = \langle f(\mathbf{x}), g(\mathbf{y})\rangle$ is uniformly pseudo-Lipschitz of order 2k.
\end{lemma}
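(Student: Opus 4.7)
The plan is to bound $\Phi_N(\bx_1, \by_1) - \Phi_N(\bx_2, \by_2)$ by a telescoping decomposition, applying Cauchy--Schwarz on each piece, and then combining the pseudo-Lipschitz bounds multiplicatively, which naturally produces an order-$2k$ bound.

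First, I would write
\begin{equation*}
\Phi_N(\bx_1, \by_1) - \Phi_N(\bx_2, \by_2) = \langle f(\bx_1) - f(\bx_2), g(\by_1)\rangle + \langle f(\bx_2), g(\by_1) - g(\by_2)\rangle,
\end{equation*}
and apply Cauchy--Schwarz to each inner product. The four resulting norms split into two categories: the \emph{differences} $\|f(\bx_1)-f(\bx_2)\|_2$ and $\|g(\by_1)-g(\by_2)\|_2$, which are controlled directly by the uniform pseudo-Lipschitz hypothesis, and the \emph{pointwise norms} $\|g(\by_1)\|_2$ and $\|f(\bx_2)\|_2$, which need to be bounded by a polynomial in $\|\by_1\|/\sqrt{N}$ (resp.\ $\|\bx_2\|/\sqrt N$) of degree $k$. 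For the latter, I would apply the pseudo-Lipschitz bound against the reference point $0$:
\begin{equation*}
\frac{\|g(\by_1)\|_2}{\sqrt N} \leq \frac{\|g(0)\|_2}{\sqrt N} + L\left(1 + \left(\tfrac{\|\by_1\|_2}{\sqrt N}\right)^{k-1}\right)\frac{\|\by_1\|_2}{\sqrt N},
\end{equation*}
and similarly for $f(\bx_2)$. Here I would note that uniformity in $N$ requires $\|f(0)\|_2/\sqrt N$ and $\|g(0)\|_2/\sqrt N$ to be uniformly bounded in $N$, which is implicit in the notion of a uniformly pseudo-Lipschitz family (otherwise the constants for the composed bound would blow up).

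Combining these gives terms of the form
\begin{equation*}
\left(1 + \tfrac{\|\bx_1\|^{k-1}}{N^{(k-1)/2}} + \tfrac{\|\bx_2\|^{k-1}}{N^{(k-1)/2}}\right)\left(1 + \tfrac{\|\by_1\|^{k}}{N^{k/2}}\right)\frac{\|\bx_1-\bx_2\|_2}{\sqrt N},
\end{equation*}
plus a symmetric contribution from the other inner product. The product of polynomial factors has total degree $\leq 2k-1$ in $\|\bx\|/\sqrt N$ and $\|\by\|/\sqrt N$, and I would then use the elementary inequality $a^p b^q \leq (a+b)^{p+q}$ (for $a,b \geq 0$) together with $\|\bx\|^2 + \|\by\|^2 = \|(\bx,\by)\|_F^2$ to rewrite the bound in terms of $\|(\bx_i,\by_i)\|_F/\sqrt{2N}$. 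Finally $\|\bx_1-\bx_2\|_2 \leq \|(\bx_1,\by_1)-(\bx_2,\by_2)\|_F$ (and likewise for $\by$) converts the increment factor into the correct form required by Definition~\ref{def:pseudo-lip} at order $2k$.

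The only substantive step is the bookkeeping that matches polynomial degrees to the target order $2k$; there is no analytic subtlety, just the need to check that the constant $L'$ produced depends only on $L$, $k$, and the uniform bound on $\|f(0)\|_2/\sqrt N, \|g(0)\|_2/\sqrt N$, and is therefore independent of $N$. This gives uniform pseudo-Lipschitzness of the family $\{\Phi_N\}$ at order $2k$, as claimed.
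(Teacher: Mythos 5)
Since the paper itself does not prove Lemma \ref{lemma:pseudo-lip_product} (it is imported from \cite{berthier2020state}), your argument can only be judged on its own terms, and the route you take --- splitting the increment of the inner product into two pieces, Cauchy--Schwarz, bounding the difference norms by the pseudo-Lipschitz hypothesis and the pointwise norms by comparison with the value at $0$, then matching degrees --- is the standard argument and is structurally sound. However, a factor of $N$ disappears in your ``combining these gives terms of the form'' display. For $f:\R^N\to\R^N$, Definition \ref{def:pseudo-lip} (with $m=N$) gives $\norm{f(\mathbf{x}_1)-f(\mathbf{x}_2)}_2\leq L(1+\cdots)\norm{\mathbf{x}_1-\mathbf{x}_2}_2$, while $\norm{g(\mathbf{y}_1)}_2\leq C\sqrt{N}\,\bigl(1+(\norm{\mathbf{y}_1}_2/\sqrt{N})^{k}\bigr)$, so the first Cauchy--Schwarz term is of size $N$ times the expression you wrote, not equal to it. This is not a repairable bookkeeping slip for the statement read literally: with $f=g=\mathrm{id}$ and $k=1$, taking $\mathbf{x}_1=\mathbf{x}_2=\mathbf{y}_1=\sqrt{N}\,\mathbf{e}_1$ and $\mathbf{y}_2=0$, the increment of $\langle\mathbf{x},\mathbf{y}\rangle$ equals $N$ while the right-hand side of Definition \ref{def:pseudo-lip} is $O(L)$, so no $N$-independent constant can work. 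The statement that is true --- and the one actually invoked in the paper, e.g.\ in the proof of Lemma \ref{lemma:conv_SE}, where it is applied to $(\bZ^{s},\bZ^{t})\mapsto\frac{1}{N}f^{s}(\bZ^{s})^{\top}f^{t}(\bZ^{t})$ --- carries a $1/N$ normalization, $\Phi_N(\mathbf{x},\mathbf{y})=\frac{1}{N}\langle f(\mathbf{x}),g(\mathbf{y})\rangle$, and your displayed bound is precisely the correct one for that normalized quantity. You should prove the normalized version and keep the $1/N$ visible throughout rather than absorbing it silently.

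Your second observation --- that one needs $\sup_N\norm{f(0)}_2/\sqrt{N}<\infty$ and $\sup_N\norm{g(0)}_2/\sqrt{N}<\infty$ --- is correct and genuinely necessary even for the normalized statement (take $f$ constant equal to a vector of norm $N$ and $g=\mathrm{id}$), but it is not ``implicit'' in Definition \ref{def:pseudo-lip}, which constrains only increments. It must be added as an explicit hypothesis; it is harmless in the paper because the non-linearities to which the lemma is applied satisfy it through the assumptions on the initialization and the state evolution iterates. With these two amendments, your degree bookkeeping (mixed terms handled by $a^{p}b^{q}\leq(a+b)^{p+q}$ together with $\norm{\mathbf{x}}_2,\norm{\mathbf{y}}_2\leq\norm{(\mathbf{x},\mathbf{y})}_2$) does yield uniform pseudo-Lipschitzness of order $2k$, with a constant depending only on $L$, $k$ and the bound at $0$.
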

\begin{lemma}
Let t,s and k be any three positive integers. Consider a sequence (in N) of $\mathbf{x}_{1},\mathbf{x}_{2}, ...,\mathbf{x}_{s} \in \mathbb{R}^{N}$ such that $\frac{1}{\sqrt{N}}\norm{\mathbf{x}_{j}} \leqslant c_{j}$ for some constant $c_{j}$ independent of N, for $j=1,...,s$ and a sequence of order-k uniformly pseudo-Lipschitz functions $\varphi_{N}:(\mathbb{R}^{N})^{t+s} \to \mathbb{R}$. The sequence of functions $\phi_{N}(.) = \varphi_{N}(.,\mathbf{x}_{1},\mathbf{x}_{2},...,\mathbf{x}_{s})$ is also uniformly pseudo-Lipschitz of order k.
\end{lemma}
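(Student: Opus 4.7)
The plan is to apply the uniform pseudo-Lipschitz property of $\varphi_{N}$ directly to the concatenated inputs obtained by appending the fixed vectors $\mathbf{x}_{1},\dots,\mathbf{x}_{s}$, and then exploit the uniform bound $\norm{\mathbf{x}_{j}}/\sqrt{N}\leqslant c_{j}$ to reduce every occurrence of a ``$(t+s)N$-normalized'' norm into an expression that involves only the free variables plus an $N$-independent additive constant. No machinery beyond the definition of pseudo-Lipschitz and an elementary power inequality is needed.

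Concretely, for arbitrary $\mathbf{y},\mathbf{y}'\in(\R^{N})^{t}$, I would form the concatenated vectors $\mathbf{z}=(\mathbf{y},\mathbf{x}_{1},\dots,\mathbf{x}_{s})$ and $\mathbf{z}'=(\mathbf{y}',\mathbf{x}_{1},\dots,\mathbf{x}_{s})$ in $\R^{(t+s)N}$. The uniform pseudo-Lipschitz hypothesis on $\varphi_{N}$, with some constant $L$ independent of $N$, gives
\begin{equation*}
|\phi_{N}(\mathbf{y})-\phi_{N}(\mathbf{y}')|
=|\varphi_{N}(\mathbf{z})-\varphi_{N}(\mathbf{z}')|
\leqslant L\!\left(1+\left(\tfrac{\norm{\mathbf{z}}}{\sqrt{(t+s)N}}\right)^{k-1}\!+\left(\tfrac{\norm{\mathbf{z}'}}{\sqrt{(t+s)N}}\right)^{k-1}\right)\frac{\norm{\mathbf{z}-\mathbf{z}'}}{\sqrt{(t+s)N}}.
\end{equation*}
Since $\mathbf{z}-\mathbf{z}'=(\mathbf{y}-\mathbf{y}',0,\dots,0)$, the distance factor satisfies $\norm{\mathbf{z}-\mathbf{z}'}/\sqrt{(t+s)N}=\sqrt{t/(t+s)}\cdot\norm{\mathbf{y}-\mathbf{y}'}/\sqrt{tN}$, which contributes only the constant $\sqrt{t/(t+s)}$. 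For the multiplicative prefactor, I would use $\norm{\mathbf{z}}^{2}=\norm{\mathbf{y}}^{2}+\sum_{j=1}^{s}\norm{\mathbf{x}_{j}}^{2}\leqslant\norm{\mathbf{y}}^{2}+N\sum_{j=1}^{s}c_{j}^{2}$ together with subadditivity of $\sqrt{\,\cdot\,}$ to get
\begin{equation*}
\frac{\norm{\mathbf{z}}}{\sqrt{(t+s)N}}\leqslant\frac{\norm{\mathbf{y}}}{\sqrt{(t+s)N}}+C_{\mathbf{x}},\qquad C_{\mathbf{x}}:=\sqrt{\tfrac{1}{t+s}\textstyle\sum_{j=1}^{s}c_{j}^{2}},
\end{equation*}
and likewise for $\mathbf{z}'$.

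Then I would apply the elementary inequality $(u+v)^{k-1}\leqslant 2^{k-2}(u^{k-1}+v^{k-1})$ (trivial when $k=1$) to split the power, yielding
\begin{equation*}
\left(\tfrac{\norm{\mathbf{z}}}{\sqrt{(t+s)N}}\right)^{k-1}\leqslant 2^{k-2}\!\left(\tfrac{\norm{\mathbf{y}}}{\sqrt{tN}}\right)^{k-1}+2^{k-2}C_{\mathbf{x}}^{k-1},
\end{equation*}
where the $\sqrt{t/(t+s)}$ factor has been absorbed into a constant. The additive $2^{k-2}C_{\mathbf{x}}^{k-1}$ term and the symmetric one from $\mathbf{z}'$ can be absorbed into the ``$1+$'' inside the prefactor, producing a new $N$-independent constant $L'=L'(L,k,t,s,c_{1},\dots,c_{s})$ such that
\begin{equation*}
|\phi_{N}(\mathbf{y})-\phi_{N}(\mathbf{y}')|\leqslant L'\!\left(1+\left(\tfrac{\norm{\mathbf{y}}}{\sqrt{tN}}\right)^{k-1}+\left(\tfrac{\norm{\mathbf{y}'}}{\sqrt{tN}}\right)^{k-1}\right)\frac{\norm{\mathbf{y}-\mathbf{y}'}}{\sqrt{tN}},
\end{equation*}
which is exactly the uniform order-$k$ pseudo-Lipschitz property of the family $\{\phi_{N}\}$.

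There is essentially no genuine obstacle here: the argument is pure bookkeeping of the dimensional normalizations $\sqrt{N}$, $\sqrt{tN}$, $\sqrt{(t+s)N}$ and constants depending only on $k,t,s$ and the bounds $c_{j}$. The only place demanding a sliver of care is handling the $(k-1)$-power of a sum, which is what the inequality $(u+v)^{k-1}\leqslant 2^{k-2}(u^{k-1}+v^{k-1})$ is tailor-made for; once applied, the fact that $\mathbf{x}_{1},\dots,\mathbf{x}_{s}$ have $1/\sqrt{N}$-bounded norms uniformly in $N$ is what guarantees the new constant $L'$ remains $N$-independent.
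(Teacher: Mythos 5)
Your proof is correct and is essentially the standard argument: the paper states this lemma without proof (it is recalled from the cited reference), and the argument there is the same one you give — apply the pseudo-Lipschitz bound of $\varphi_{N}$ to the concatenated inputs, note the difference vector only involves the free block, and absorb the uniformly bounded norms $\norm{\mathbf{x}_{j}}/\sqrt{N}\leqslant c_{j}$ and the fixed factors depending on $t,s,k$ into a new $N$-independent constant via $(u+v)^{k-1}\leqslant 2^{k-2}(u^{k-1}+v^{k-1})$. No gaps; the bookkeeping of the $\sqrt{N}$ versus $\sqrt{(t+s)N}$ normalizations is handled correctly since $t$ and $s$ are fixed.
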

\begin{lemma}
    \label{lemma:extra_exp}
Let t be any positive integer. Consider a sequence of uniformly pseudo-Lipschitz functions $\varphi_{N} :(\mathbb{R}^{N})^{t} \to \mathbb{R}$ of order k. The sequence of functions $\Phi_{N} : (\mathbb{R}^{N})^{t}\to \mathbb{R}$ such that $\Phi_{N}(\mathbf{x}_{1},\mathbf{x}_{2},...,\mathbf{x}_{t)} = \mathbb{E}\left[\varphi_{N}(\mathbf{x}_{1},...,\mathbf{x}_{t-1},\mathbf{x}_{t}+\mathbf{\bZ})\right]$, in which $\mathbf{\bZ} \sim \mathbf{N}(0,a\mathbf{I}_{N})$ and $a\leqslant0$, is also uniformly pseudo-Lipschitz of order k.
\end{lemma}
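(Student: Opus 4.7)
The plan is to push the pseudo-Lipschitz inequality for $\varphi_N$ inside the expectation and then control the Gaussian moments that appear in the resulting polynomial growth factor. The key structural observation is that shifting only the last argument by the same $\bZ$ on both sides preserves the difference exactly: for inputs $\mathbf{x} = (x_1,\dots,x_t)$ and $\mathbf{y} = (y_1,\dots,y_t)$ we have
\begin{equation*}
\|(x_1-y_1,\dots,x_{t-1}-y_{t-1},x_t+\bZ-y_t-\bZ)\| \;=\; \|\mathbf{x}-\mathbf{y}\|,
\end{equation*}
so the difference $\|\cdot\|/\sqrt{N}$ factor in the pseudo-Lipschitz bound does not depend on $\bZ$ and can be taken out of the expectation.

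Concretely, I would start from
\begin{equation*}
|\Phi_N(\mathbf{x})-\Phi_N(\mathbf{y})| \;\leq\; L\,\E\!\left[1+\left(\tfrac{\|(x_1,\dots,x_{t-1},x_t+\bZ)\|}{\sqrt{N}}\right)^{k-1}+\left(\tfrac{\|(y_1,\dots,y_{t-1},y_t+\bZ)\|}{\sqrt{N}}\right)^{k-1}\right]\tfrac{\|\mathbf{x}-\mathbf{y}\|}{\sqrt{N}},
\end{equation*}
and then show that the bracket is bounded by a constant multiple of $1+(\|\mathbf{x}\|/\sqrt N)^{k-1}+(\|\mathbf{y}\|/\sqrt N)^{k-1}$ uniformly in $N$. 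To do so I would use $\|x_t+\bZ\|^2\leq 2\|x_t\|^2+2\|\bZ\|^2$ so that $\|(x_1,\dots,x_{t-1},x_t+\bZ)\| \leq \sqrt{2}\,(\|\mathbf{x}\|+\|\bZ\|)$, followed by the elementary inequality $(u+v)^{k-1}\leq 2^{k-2}(u^{k-1}+v^{k-1})$ valid for $k\geq 2$ and $u,v\geq 0$ (the case $k=1$ being trivial).

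It then remains to control $\E[\|\bZ\|^{k-1}]/N^{(k-1)/2}$. Since $\bZ\sim\mathbf{N}(0,a\,\mathbf{I}_N)$, Jensen's inequality gives $\E[\|\bZ\|^{k-1}]\leq (\E[\|\bZ\|^{2(k-1)}])^{1/2}$, and the $2(k-1)$-th moment of the Euclidean norm of an $N$-dimensional Gaussian scales like $a^{k-1} N^{k-1}$ up to a constant depending only on $k$; hence $\E[\|\bZ\|^{k-1}]/N^{(k-1)/2}\leq C_k\, a^{(k-1)/2}$, a constant independent of $N$. Absorbing this constant into the Lipschitz factor yields
\begin{equation*}
|\Phi_N(\mathbf{x})-\Phi_N(\mathbf{y})| \;\leq\; L'\left(1+\left(\tfrac{\|\mathbf{x}\|}{\sqrt{N}}\right)^{k-1}+\left(\tfrac{\|\mathbf{y}\|}{\sqrt{N}}\right)^{k-1}\right)\tfrac{\|\mathbf{x}-\mathbf{y}\|}{\sqrt{N}},
\end{equation*}
with $L'$ depending only on $L$, $k$, $a$, and $t$, which is exactly the required uniform pseudo-Lipschitz property.

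There is no real obstacle here: the argument is a clean application of Minkowski-type inequalities and Gaussian moment bounds, and the only point requiring care is verifying that the constant $L'$ genuinely does not depend on $N$, which follows because the Gaussian moment bound is dimension-free after the $N^{(k-1)/2}$ normalization. The same scheme would accommodate the matrix-valued or multi-argument variants used elsewhere in the appendix with only notational changes.
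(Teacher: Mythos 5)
Your proof is correct: shifting both arguments by the same $\bZ$ makes the difference term independent of $\bZ$, and the Gaussian moment bound renders the growth factor dimension-free after the $N^{(k-1)/2}$ normalization, which is exactly the standard argument — the paper itself states this lemma without proof, as a reminder of a property taken from \cite{berthier2020state}, where the proof proceeds along the same lines. The only caveat is that the condition ``$a\leqslant 0$'' in the statement is evidently a typo for $a\geqslant 0$ (a variance), which your argument implicitly and correctly assumes when writing $a^{(k-1)/2}$.
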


We now state a result on Gaussian concentration of matrix-valued pseudo-Lipschitz functions. This is an extension to the matrix case (of finite width) of Lemma C.8 from \cite{berthier2020state}.

\begin{lemma}
\label{pseudo-lip-conv}
Let $\bZ \sim \mathbf{N}(0, \boldsymbol{\kappa} \otimes \mathbf{I}_{N})$ where $\boldsymbol{\kappa} \in \cS_{q}^{+}$. Let $\Phi_{N} : \mathbb{R}^{N\times q} \to \mathbb{R}$ be a sequence of random functions, independent of $\bZ$, such that $\mathbb{P}(\mathcal{E}_{N}) \to 1$ as $N \to \infty$, where $\mathcal{E}_{N}$ is the event that $\Phi_{N}$ is pseudo-Lipschitz of (deterministic) order $k$ with (deterministic) pseudo-Lipschitz constant $L$. Then $\Phi_{N}(\bZ) \stackrel{P} \simeq \mathbb{E}[\Phi_{N}(\bZ)]$. 
\end{lemma}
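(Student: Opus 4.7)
The plan is a standard concentration-by-Poincar\'e argument, adapted to accommodate (i) the matrix covariance structure of $\bZ$, (ii) the pseudo-Lipschitz (rather than globally Lipschitz) regularity, and (iii) the fact that $\Phi_N$ is random. First, I would reduce to the isotropic case by the decomposition $\bZ = \widetilde{\bZ}\boldsymbol{\kappa}^{1/2}$, where $\widetilde{\bZ}\in\R^{N\times q}$ has i.i.d.~$\mathbf{N}(0,1)$ entries and $\boldsymbol{\kappa}^{1/2}$ is the PSD square root. Since $\boldsymbol{\kappa}$ is a fixed $q\times q$ matrix, right-multiplication by $\boldsymbol{\kappa}^{1/2}$ has bounded operator norm uniformly in $N$, so composing with $\Phi_N$ preserves the pseudo-Lipschitz property on the event $\mathcal{E}_N$ (with a new constant $L' = L\,\|\boldsymbol{\kappa}^{1/2}\|_{op}^{k}$, say). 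This lets me work with i.i.d.~standard Gaussian entries throughout.

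The next step is to freeze the randomness of $\Phi_N$. Let $\mathfrak{F}$ denote the $\sigma$-algebra generated by $\Phi_N$. Since $\bZ$ is independent of $\Phi_N$, I interpret $\E[\Phi_N(\bZ)]$ as the conditional expectation $\E[\Phi_N(\bZ)\mid \mathfrak{F}]$. On the event $\mathcal{E}_N$, $\Phi_N$ is (deterministically, once conditioned) pseudo-Lipschitz of order $k$ with constant $L$; in particular it is locally Lipschitz and hence weakly differentiable, so the Gaussian Poincar\'e inequality (Proposition~\ref{prop:gauss_poinc}) applies after vectorizing the $Nq$ entries of $\widetilde{\bZ}$. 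The pseudo-Lipschitz bound implies the pointwise gradient estimate
\begin{equation*}
\|\nabla \Phi_N(\bz)\|_2 \;\leqslant\; \frac{L'}{\sqrt{N}}\left(1 + 2\Bigl(\tfrac{\|\bz\|_F}{\sqrt{N}}\Bigr)^{k-1}\right),
\end{equation*}
and standard chi-square moment bounds give $\E\bigl[(\|\widetilde{\bZ}\|_F/\sqrt{N})^{2(k-1)}\bigr]\leqslant C_k$ uniformly in $N$. Combining these, Poincar\'e yields
\begin{equation*}
\mathrm{Var}\bigl[\Phi_N(\bZ)\mid \mathfrak{F}\bigr] \;\leqslant\; c\,\E\bigl[\|\nabla \Phi_N(\bZ)\|_2^{2}\mid \mathfrak{F}\bigr] \;\leqslant\; \frac{C(L,k,\boldsymbol{\kappa})}{N} \qquad \text{on } \mathcal{E}_N.
\end{equation*}

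Finally, I would close the argument via Chebyshev and a union bound. For any $\eta>0$,
\begin{equation*}
\mathbb{P}\bigl(|\Phi_N(\bZ)-\E[\Phi_N(\bZ)\mid\mathfrak{F}]|>\eta\bigr)
\;\leqslant\; \mathbb{P}(\mathcal{E}_N^c) + \E\!\left[\mathbf{1}_{\mathcal{E}_N}\,\frac{\mathrm{Var}[\Phi_N(\bZ)\mid\mathfrak{F}]}{\eta^{2}}\right]
\;\leqslant\; \mathbb{P}(\mathcal{E}_N^c) + \frac{C}{N\eta^{2}},
\end{equation*}
and both terms vanish as $N\to\infty$ by assumption and the variance bound respectively. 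This delivers $\Phi_N(\bZ)\approxP \E[\Phi_N(\bZ)]$.

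The only delicate point I anticipate is the interplay between the randomness of $\Phi_N$ and the use of Poincar\'e: the inequality must be applied conditionally on $\mathfrak{F}$, which is legitimate because $\bZ \perp \mathfrak{F}$ makes the conditional law of $\bZ$ still Gaussian, and on $\mathcal{E}_N$ the realization of $\Phi_N$ is a (deterministic) pseudo-Lipschitz function of the required form. A minor technicality is verifying weak differentiability of pseudo-Lipschitz functions; this follows from Rademacher's theorem applied to local Lipschitz continuity, with the a.e.~gradient controlled by the local Lipschitz constant supplied by Definition~\ref{def:pseudo-lip}. Everything else is routine moment computation.
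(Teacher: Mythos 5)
Your proposal is correct and follows essentially the same route as the paper: the reduction $\bZ = \tilde{\bZ}\boldsymbol{\kappa}^{1/2}$ to i.i.d.\ entries, the gradient bound from the pseudo-Lipschitz property combined with the Gaussian Poincar\'e inequality, and the Chebyshev-plus-$\mathbb{P}(\bar{\mathcal{E}}_{N})$ decomposition are exactly the steps of the paper's proof. Your explicit conditioning on the $\sigma$-algebra generated by $\Phi_{N}$ is a slightly more careful phrasing of what the paper does implicitly, but it is not a different argument.
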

\begin{proof}
First, it is straightforward to see that
\begin{align}
    \Phi_{N}(\bZ) = \Phi_{N}(\tilde{\bZ}\boldsymbol{\kappa}^{1/2})=\tilde{\Phi}_{N}(\tilde{\bZ})
\end{align}
where $\tilde{\bZ} \in \mathbb{R}^{N\times q}$ is an i.i.d.~standard normal matrix, and $\tilde{\Phi}_{N} = \Phi_{N}(.\boldsymbol{\kappa}^{1/2})$. Since $\norm{\boldsymbol{\kappa}}_{op}$ is bounded for all N, $\tilde{\Phi}$ is also pseudo-Lipschitz of order k, with constant $L\max(\norm{\boldsymbol{\kappa}}_{op}^{1/2},\norm{\boldsymbol{\kappa}}_{op}^{k/2})$.
$\tilde{\Phi}_{N}$ can then be considered as a function acting on a vector of size $N \times q$ with i.i.d. standard normal components. The proof is then identical to that of Lemma C.8 from \cite{berthier2020state} with an additional finite factor $q$. We remind this proof for completeness. Under $\mathcal{E}_{N}$, using the definition of pseudo-Lipschitz functions and proposition \ref{prop:gauss_poinc}:
\begin{equation}
\label{eq:var_bound}
    \mathbb{E}_{\bZ}\left[\norm{\nabla \Phi_{N} (\bZ)}_{2}^{2}\right] \leqslant \frac{L^{2}}{Nq}\mathbb{E}_{\bZ}\left[\left(1+2\left(\frac{1}{\sqrt{Nq}}\norm{\bZ}_{2}\right)^{k-1}\right)^{2}\right]\leqslant \frac{L^{2}}{Nq}C(k)
\end{equation}
for a constant $C(k)$ that only depends on k. Then for any $\epsilon >0$, there exists a constant $c>0$, independent of $N$, such that:
\begin{align}
    \mathbb{P}\{\abs{\Phi_{N}(\bZ)-\mathbb{E}_{\bZ}[\Phi_{N}(\bZ)]}>\epsilon\} &\leqslant \mathbb{E}\{    \mathbb{P}\{\abs{\Phi_{N}(\bZ)-\mathbb{E}_{\bZ}[\Phi_{N}(\bZ)]}>\epsilon\}\mathbb{I}_{\mathcal{E}_{N}}\}+\mathbb{P}(\bar{\mathcal{E}}_{N}) \notag \\
    &\leqslant \frac{\mbox{Var}\left[\Phi_{N}(\bZ)\right]}{\epsilon^{2}}+\mathbb{P}(\bar{\mathcal{E}}_{N}) \notag\\
    & \leqslant \frac{L^{2}C(k)}{Nq\epsilon^{2}}+\mathbb{P}(\bar{\mathcal{E}}_{N})
\end{align}
where the second and third line are obtained by applying Chebyshev's inequality and proposition \ref{prop:gauss_poinc} with the variance bound evaluated at Eq.\eqref{eq:var_bound}.
\end{proof}
The next lemmas are matrix generalizations of the ones used in \cite{berthier2020state}.
\begin{lemma}
\label{conv_lemmas_app}
Consider a sequence of matrices $\mathbf{A} \sim GOE(N)$ and two sequences of non-random matrices, $\bU,\bV \in \mathbb{R}^{N \times q}$ such that the columns of $\bU$ and $\bV$ verify $\norm{\bU^{i}}_{2} = \norm{\bV^{i}}_{2} = \sqrt{N}$. Under this hypothesis, define the finite quantity $\mathbf{G} = \lim_{N \to \infty}\frac{1}{N}\bU^{\top}\bU$, the
limiting Gram matrix of the columns of $\bU$. We then have: 
\begin{enumerate}[label=\alph*)]
\item $\frac{1}{N}\bV^{\top}\mathbf{A}\bU \xrightarrow[N \to \infty]{P} 0_{q \times q}$ and $\frac{1}{N}\norm{\bV^{\top}\mathbf{A}\bU}_{F} \xrightarrow[N \to \infty]{P} 0$.
\item Let $\mathbf{P} \in \mathbb{R}^{N \times N}$ be a sequence of non-random projection matrices such that there exists a constant t that satisfies, for all N, k=rank($\mathbf{P}$)$\leqslant t$. Then $\frac{1}{N}\norm{ \mathbf{P}\mathbf{A}\bU}_{F}^{2} \xrightarrow[N \to \infty]{P} 0$.
\item There exists a sequence of random matrices $\bZ \in \mathbb{R}^{N \times q}$, such that $\frac{1}{N} \norm{\mathbf{A}\bU-\bZ}^{2}_{F} \xrightarrow[N \to \infty]{P} 0$ where $\bZ \sim \mathbf{N}(0,\mathbf{G}\otimes \mathbf{I}_{N})$.
\item $\frac{1}{N} (\mathbf{A}\bU)^{\top}\mathbf{A}\bU  \xrightarrow[N \to \infty]{P} \mathbf{G}$.
\end{enumerate}
\end{lemma}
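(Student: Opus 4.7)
The shared tool for all four parts is the GOE bilinear covariance identity
\[
\Cov\bigl(v^\top\mathbf{A} u,\; (v')^\top\mathbf{A} u'\bigr) \;=\; \tfrac{1}{N}\bigl(\langle u,u'\rangle\langle v,v'\rangle + \langle u,v'\rangle\langle u',v\rangle\bigr)
\]
for deterministic $u,u',v,v'\in\R^N$, which follows from $\E[A_{ka}A_{lb}]=\tfrac{1}{N}(\delta_{kl}\delta_{ab}+\delta_{kb}\delta_{la})$. In particular $\mathrm{Var}(v^\top\mathbf{A} u)\le \tfrac{2}{N}\|u\|^2\|v\|^2$. Since $\mathbf{A}$ is Gaussian, every linear functional of $\mathbf{A}$ is Gaussian, so controlling first and second moments is enough to obtain convergence in probability via Chebyshev (and Markov for positive quadratic forms).

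Parts (a), (b) and (d) are then short applications. For (a), each entry $\tfrac{1}{N}(V^i)^\top\mathbf{A} U^j$ is centred Gaussian of variance $\tfrac{1}{N^2}\cdot O(N) = O(1/N)$ using $\|V^i\|=\|U^j\|=\sqrt N$, hence converges in probability to $0$; the Frobenius-norm statement follows because $q$ is fixed. For (b), fix a non-random orthonormal basis $p_1,\dots,p_r$ of $\mathrm{range}(\mathbf{P})$ with $r\le t$; then $\|\mathbf{P}\mathbf{A}\bU\|_F^2 = \sum_{k,j}(p_k^\top\mathbf{A} U^j)^2$ has expectation at most $2tq$, so $\tfrac{1}{N}\E\|\mathbf{P}\mathbf{A}\bU\|_F^2 \to 0$ and Markov concludes. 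For (d), the identity gives $\E\bigl[\tfrac{1}{N}((\mathbf{A}\bU)^\top\mathbf{A}\bU)_{ij}\bigr] = (1+1/N)\,\mathbf{G}_{N,ij} \to \mathbf{G}_{ij}$, while the variance of each scalar entry is $O(1/N)$ by Wick's theorem (or a Hanson--Wright-type bound) applied to the quadratic form in $\mathbf{A}$.

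Part (c) is the subtle one, because the covariance of $\mathbf{A}\bU$ carries a spurious rank-$q$ term, $\Cov((\mathbf{A} U^i)_k, (\mathbf{A} U^j)_l) = \mathbf{G}_{N,ij}\delta_{kl} + \tfrac{1}{N}U^j_k U^i_l$. The plan is to discard this excess by retaining only the component of $\mathbf{A}\bU$ orthogonal to $\mathrm{range}(\bU)$ and re-inserting an independent Gaussian for the lost rank-$q$ piece. Concretely, enlarge the probability space to carry $\mathbf{W}\in\R^{N\times q}$ with i.i.d.\ $\mathbf{N}(0,1)$ entries independent of $\mathbf{A}$, set $\mathbf{M}=\mathbf{G}_N^{-1/2}\mathbf{G}^{1/2}$, and define
\[
\bZ \;=\; \mathbf{P}_\bU^\perp\mathbf{A}\bU\,\mathbf{M} \;+\; \mathbf{P}_\bU\mathbf{W}\,\mathbf{G}^{1/2}.
\]
A direct covariance computation---using $\mathbf{P}_\bU^\perp\bU=0$ to annihilate the $\tfrac{1}{N}U^j_k U^i_l$ contribution, and $\mathbf{M}^\top\mathbf{G}_N\mathbf{M}=\mathbf{G}$---shows $\bZ\sim\mathbf{N}(0,\mathbf{G}\otimes\mathbf{I}_N)$ exactly. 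The discrepancy splits as
\[
\mathbf{A}\bU-\bZ \;=\; \mathbf{P}_\bU\mathbf{A}\bU \;+\; \mathbf{P}_\bU^\perp\mathbf{A}\bU(\mathbf{I}_q-\mathbf{M}) \;-\; \mathbf{P}_\bU\mathbf{W}\,\mathbf{G}^{1/2},
\]
and its three summands each vanish in $\tfrac{1}{N}\|\cdot\|_F^2$-probability: the first by part (b); the second because $\tfrac{1}{N}\|\mathbf{A}\bU\|_F^2 = \Tr(\mathbf{G})+o_P(1)$ by (d) and $\mathbf{M}\to\mathbf{I}_q$; the third because $\E\|\mathbf{P}_\bU\mathbf{W}\mathbf{G}^{1/2}\|_F^2 \le q\,\Tr(\mathbf{G})=O(1)$. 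The main obstacle is the degenerate case where $\mathbf{G}$ is singular and $\mathbf{G}_N^{-1/2}$ blows up; this is handled either by regularising $\mathbf{G}_N\mapsto\mathbf{G}_N+\varepsilon\mathbf{I}_q$ and sending $\varepsilon\to 0$ after the error bounds (which are uniform in $\varepsilon$), or by splitting off $\mathrm{ker}(\mathbf{G})$, on which $\mathbf{A}\bU$ itself already has negligible Frobenius norm.
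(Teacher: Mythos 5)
Your proof is correct, but parts (c) and (d) take a genuinely different route from the paper. For (a) and (b) you do essentially what the paper does (entrywise Gaussian variance bounds, then a reduction over a bounded-rank basis of the range of $\mathbf{P}$). For (c), the paper stacks the columns of $\mathbf{A}\bU$ into an $Nq$-vector, writes its covariance as $\frac{1}{N}\bU^{\top}\bU\otimes\mathbf{I}_{N}+\frac{1}{N}\tilde{\bU}\tilde{\bU}^{\top}$, and couples $\mathbf{A}\bU$ with the sum of two \emph{independent} Gaussians carrying these two covariance pieces; the rank-one piece is then shown to be $o_P(\sqrt{N})$ in norm. This requires no matrix inverses and works verbatim when $\mathbf{G}$ (or the finite-$N$ Gram matrix $\mathbf{G}_N$) is singular, at the mild price that the constructed $\bZ$ has covariance $\mathbf{G}_N\otimes\mathbf{I}_N$ rather than $\mathbf{G}\otimes\mathbf{I}_N$. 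The paper then deduces (d) from (c) together with the Gaussian concentration lemma for pseudo-Lipschitz functions applied to $\bX\mapsto\frac{1}{N}\bX^{\top}\bX$. You instead prove (d) first, by a direct Wick/second-moment computation (more elementary, no concentration machinery needed), and build (c) by projecting onto $\mathrm{range}(\bU)^{\perp}$ and re-injecting an independent Gaussian through $\mathbf{M}=\mathbf{G}_N^{-1/2}\mathbf{G}^{1/2}$; your covariance check is correct and, unlike the paper, yields \emph{exactly} the covariance $\mathbf{G}\otimes\mathbf{I}_N$ stated in the lemma, but it is precisely this choice that creates the degeneracy issue you only sketch. Of your two fixes, the kernel-splitting one is sound (for $c\in\ker\mathbf{G}$, your part (d) gives $\frac{1}{N}\|\mathbf{A}\bU c\|_2^2\to 0$ and $\bZ c\equiv 0$ almost surely, while on $\ker(\mathbf{G})^{\perp}$ the restricted $\mathbf{G}_N$ is eventually uniformly invertible), whereas the $\varepsilon$-regularisation as stated needs an extra step, since $\mathbf{M}_\varepsilon^{\top}\mathbf{G}_N\mathbf{M}_\varepsilon$ no longer equals $\mathbf{G}$ exactly and you would have to couple two Gaussians with nearby covariances before letting $\varepsilon\to 0$. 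In short: the paper's coupling handles singular $\mathbf{G}$ for free; your route gives a more elementary proof of (d) and the exact limiting covariance in (c), provided you spell out the kernel-splitting argument.
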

\begin{proof} 
In this proof, the $i$-th line of a given matrix $\bZ$ is denoted $\bZ_{i}$ and its $j$-th column $\bZ^{j}$.
\begin{enumerate}[label=\alph*)]
\item For any $1 \leqslant i,j \leqslant q$, the $i$-th element of the $j$-th column verifies:
\begin{align}
    \frac{1}{N}(\bV^{\top}\mathbf{A}\bU)_{i}^{j} &= \frac{1}{N}(\bV^{i})^{\top}\mathbf{A}\bU^{j} \notag \\
    &= \frac{1}{N}(\bV^{i})^{\top}\mathbf{H}\bU^{j} + \frac{1}{N}(\bV^{i})^{\top}\mathbf{H}^{\top}\bU^{j}
\end{align}\\
where $\mathbf{H}$ is a matrix with i.i.d. $\mathbf{N}(0,\frac{1}{2N})$ elements. The random variable $\frac{1}{N}(\bV^{i})^{\top}\mathbf{H}\bU^{j}$ is centered Gaussian with variance 
\begin{equation}
    \frac{1}{N^{2}} \sum_{k,l=1}^{N}(\bV^{i}_{k})^{2}(\bU^{j}_{l})^{2}\frac{1}{2N} = \frac{\norm{\bV^{i}}_{2}^{2}\norm{\bU^{j}}_{2}^{2}}{2N^{3}} = \frac{1}{2N} \to 0
\end{equation}
which shows that $\frac{1}{N}(\bV^{i})^{\top}\mathbf{H}\bU^{j}$ converges in probability to zero. A similar argument shows that $\frac{1}{N}(\bV^{i})^{\top}\mathbf{H}^{\top}\bU^{j}$ also converges in probability to zero. The union bound then immediately gives that $\frac{1}{N}(\bV^{\top}\mathbf{A}\bU)_{i}^{j}  \xrightarrow[N \to \infty]{P} 0$. Thus each element of the finite size $q \times q$ matrix $\frac{1}{N}\bV^{\top}\mathbf{A}\bU$ goes to zero. Since $q$ is finite, the union bound then gives the desired result on the Frobenius norm.
\item For any $1 \leqslant i \leqslant q$:
\begin{equation}
    \frac{1}{N}(\mathbf{P}\mathbf{A}\bU)^{i} = \frac{1}{N}(\mathbf{P}\mathbf{A}\bU^{i})
\end{equation}
Now let $\mathbf{v}_{1}, ..., \mathbf{v}_{k}$ be an orthogonal basis of the image of $\mathbf{P}$, such that $\norm{\mathbf{v}_{1}} = ... = \norm{\mathbf{v}_{k}} = \sqrt{N}$, and $\bV \in \mathbb{R}^{N \times t}$ the matrix of concatenated $\bv$. Note that k can depend on N, but k is uniformly bounded by t. Then, using point (a) and the fact that $q$ and $k$ are finite for all $N$:
\begin{equation}
    \frac{1}{N}\norm{\mathbf{P}\mathbf{A}\bU}_{F}^{2} = \frac{1}{N}\norm{\bV^{\top}\bA\bU} \xrightarrow[N\to \infty]{P} 0
\end{equation}
This proves point $(b)$.
\item The matrix $\mathbf{A}\bU$ is a $\mathbb{R}^{N \times q}$ correlated Gaussian matrix. For any two columns $\bU^{l},\bU^{m}$, the vector  $(\mathbf{A}\bU^{l},\mathbf{A}\bU^{m})$ is a Gaussian vector with zero mean, whose covariance matrix has elements:
\begin{align}
    \mathbb{E}\left[\left(\mathbf{A}\bU^{l}\right)\left(\mathbf{A}\bU^{m}\right)^{\top}\right]_{i}^{j} &= \mathbb{E}\left[\left(\mathbf{A}\bU^{l}\right)_{i}\left(\mathbf{A}\bU^{m}\right)_{j}\right] \notag\\
    &=\mathbb{E}\left[\sum_{k=1}^{N}\bA_{i}^{k}\bU^{l}_{k}\sum_{k'=1}^{N}\bA_{j}^{k'}\bU^{m}_{k'}\right] \notag\\
    &= \mathbb{E}\bigg[\sum_{k,k'}\bH_{i}^{k}\bH_{j}^{k'}\bU^{l}_{k}\bU^{m}_{k'}+\bH_{i}^{k}\bH_{k'}^{j}\bU^{l}_{k}\bU^{m}_{k'}+\bH_{k}^{i}\bH_{j}^{k'}\bU^{l}_{k}\bU^{m}_{k'}+\bH_{k}^{i}\bH_{k'}^{j}\bU^{l}_{k}\bU^{m}_{k'}\bigg] \notag \\
    &=\frac{1}{N}\left(\delta_{ij}\sum_{k}\bU^{l}_{k}\bU^{m}_{k}+\bU^{l}_{i}\bU^{m}_{j}\right)
\end{align}
    which gives the block
    \begin{align}
    &\mathbb{E}\left[\left(\mathbf{A}\bU^{l}\right)\left(\mathbf{A}\bU^{m}\right)^{\top}\right] = \frac{1}{N}(\bU^{l})^{\top}\bU^{m}\mathbf{I}_{N}+\frac{1}{N}\bU^{l}(\bU^{m})^{\top}
    \end{align}
    and the covariance matrix
    \begin{align}
    \Sigma &= \begin{bmatrix}\mathbf{I}_{N}+\frac{1}{N}\bU^{l}(\bU^{l})^{\top} & \frac{(\bU^{l})^{\top}\bU^{m}}{N}\mathbf{I}_{N}+\frac{1}{N}\bU^{l}(\bU^{m})^{\top} \\ \frac{(\bU^{l})^{\top}\bU^{m}}{N}\mathbf{I}_{N}+\frac{1}{N}\bU^{m}(\bU^{l})^{\top} & \mathbf{I}_{N}+\frac{1}{N}\bU^{m}(\bU^{m})^{\top} \end{bmatrix} 
\end{align}
and in turn the following covariance matrix for the joint law of the $q$ vectors $\mathbf{A}\bU^{1}, ..., \mathbf{A}\bU^{q}$. 
\begin{tiny}
\begin{align*} \begin{bmatrix}\mathbf{I}_{N}+\frac{1}{N}\bU^{1}(\bU^{1})^{\top} & ...& ... & ... & \frac{(\bU^{1})^{\top}\bU^{q}}{n}\mathbf{I}_{N}+\frac{1}{N}\bU^{1}(\bU^{q})^{\top} \\
    ... &... &...&...&... \\
    ... &\frac{(\bU^{i-1})^{\top}\bU^{i}}{N}\mathbf{I}_{N}+\frac{1}{N}\bU^{i}(\bU^{i-1})^{\top}& \mathbf{I}_{N}+\frac{1}{N}\bU^{i}(\bU^{i})^{\top} &\frac{(\bU^{i})^{\top}\bU^{i+1}}{N}\mathbf{I}_{N}+\frac{1}{N}\bU^{i}(\bU^{i+1})^{\top}& ...
    \\
    ... &... &...&...&... \\
    \frac{(\bU^{q})^{\top}\bU^{1}}{N}\mathbf{I}_{N}+\frac{1}{N}\bU^{q}(\bU^{1})^{\top} & ... &... &...  &\mathbf{I}_{N}+\frac{1}{N}\bU^{q}(\bU^{q})^{\top} \end{bmatrix} \\
\end{align*}
\end{tiny}
which can be rewritten
\begin{align}
    \Sigma&=\frac{1}{N}\bU^{\top}\bU\otimes\mathbf{I}_{N}+\frac{1}{N}\begin{bmatrix}\bU^{1}(\bU^{1})^{\top} & ...& ... & ... & \bU^{1}(\bU^{q})^{\top} \\
    ... &... &...&...&... \\
    ... &\bU^{i}(\bU^{i-1})^{\top}& \bU^{i}(\bU^{i})^{\top} &\bU^{i}(\bU^{i+1})^{\top}& ...
    \\
    ... &... &...&...&... \\
    \bU^{q}(\bU^{1})^{\top} & ... &... &...  &\bU^{q}(\bU^{q})^{\top} \end{bmatrix} \notag\\
    &= \frac{1}{N}\bU^{\top}\bU\otimes\mathbf{I}_{N}+\frac{1}{N}\tilde{\bU}\tilde{\bU}^{\top}
\end{align}
where $\tilde{\bU} \in \mathbb{R}^{Nq}$ is the vector of vertically concatenated columns of $\bU$.
Now consider two independent  $\mathbf{N}(0,\mathbf{I}_{Nq})$ vectors $\tilde{\bZ}^{1},\tilde{\bZ}^{2}$ and $\tilde{\bV}\in \mathbb{R}^{Nq}$ the vector of vertically concatenated columns of $\bA\bU$. We can write that the quantity:
\begin{equation}
    \frac{\norm{\tilde{\bV}-\left(\frac{1}{N}\bU^{\top}\bU\otimes\mathbf{I}_{N}\right)^{1/2}\tilde{\bZ}^{1}}_{2}}{\sqrt{N}}
\end{equation}
is distributed as
\begin{align}
\frac{\norm{(\frac{1}{N}\bU^{\top}\bU\otimes\mathbf{I}_{N})^{1/2}\tilde{\bZ}^{1}+(\frac{1}{N}\tilde{\bU}\tilde{\bU}^{\top})^{1/2}\tilde{\bZ}^{2}-(\frac{1}{N}\bU^{\top}\bU\otimes\mathbf{I}_{N})^{1/2}\tilde{\bZ}^{1}}_{2}}{\sqrt{N}}&= \frac{1}{N\sqrt{N}}\norm{\tilde{\bU}\tilde{\bU}^{\top}\tilde{\bZ}^{2}}_{2} \notag\\
    &= \frac{\sqrt{q}}{N}\abs{\tilde{\bU}^{\top}\tilde{\bZ}^{2}} \xrightarrow[N \to \infty]{P} 0
\end{align}
where the last convergence follows from the fact that $\frac{1}{N}\tilde{\bU}^{\top}\tilde{\bZ}^{2}$ is a centered Gaussian random variable with variance $\norm{\tilde{\bU}}_{2}^{2}/N^{2} = q/N$, where $q$ is kept finite. This concludes the proof of point (c).
\item The function $\Phi : \mathbb{R}^{N\times q} \to \mathbb{R},\bX \to \frac{1}{N}\bX^{\top}\bX$ is pseudo-Lipschitz of order 2. A straightforward calculation shows that, for any $\bZ \sim \mathbf{N}(0,\mathbf{G}\otimes\mathbf{I}_{N})$, we have $\mathbb{E}[\phi(\bZ)] = \mathbf{G}$. Then :
\begin{align}
\mathbb{P}\left(\norm{\Phi(\bA\bU)-\mathbb{E}[\Phi(\bZ)]}_{F} \geqslant \epsilon\right) \leqslant \mathbb{P}\left(\norm{\Phi(A\bU)-\Phi(\bZ)}_{F} \geqslant \epsilon\right)+\mathbb{P}\left(\norm{\Phi(\bZ)-\mathbb{E}[\Phi(\bZ)]}_{F} \geqslant \epsilon\right)
\end{align}

the second term on the right-hand side vanishes as $N \to \infty$ using the Gaussian concentration of matrix-valued pseudo-Lipschitz functions Lemma \ref{pseudo-lip-conv}, and the first term vanishes using the definition of pseudo-Lipschitz function and the statement (c) proven above. This concludes the proof of statement (d).
\end{enumerate}
\end{proof}

\end{document}